\tikzset{/tikz/commutative diagrams/background color=bg}
\newcommand{\nord}[1]{\mathbf{:} #1 \mathbf{:}}
\DeclareSymbolFont{symbols2}{LS1}{stixfrak}{m}{n}
\DeclareMathSymbol{\typecolon}{\mathbin}{symbols2}{"25}
\title{Chirality in 2d pAQFT}
\author{
    \small Sam Crawford \\
    \small\textit{University of York,}  \\
	\small\textit{Department of Mathematics,}\\
    \small\href{mailto:sam.crawford@york.ac.uk}{\texttt{sam.crawford@york.ac.uk}}
        \and
    \small Kasia Rejzner  \\
    \small\textit{University of York,}  \\
	\small\textit{Department of Mathematics,}\\
    \small\href{mailto:kasia.rejzner@york.ac.uk}{\texttt{kasia.rejzner@york.ac.uk}}
        \and
    \small Beno\^{\i}t Vicedo \\
    \small \textit{University of York,}  \\
	\small \textit{Department of Mathematics,}\\
    \small \href{mailto:benoit.vicedo@york.ac.uk}{\texttt{benoit.vicedo@york.ac.uk}}
}
\date{\today}
\begin{document}

\maketitle


\definecolor{todoKasia}{RGB}{207, 58, 174}
\definecolor{todoBenoit}{RGB}{87, 159, 222}
\newcommand{\commentKasia}[1]{\todo[color=todoKasia]{#1}}
\newcommand{\commentBenoit}[1]{\todo[color=todoBenoit]{#1}}



\ifdraft{
\noindent
{\Large\bfseries Key:}
\begin{itemize}
    \item[] \hspace{-2.7em} \tikz[baseline=0.5ex]{\path[draw=black,fill=todogreen ] (0, 0) rectangle (0.4cm, 0.4cm);} :
                Boring things for Sam
    \item[] \hspace{-2.7em} \tikz[baseline=0.5ex]{\path[draw=black,fill=todoKasia ] (0, 0) rectangle (0.4cm, 0.4cm);} :
                Comments by Kasia
    \item[] \hspace{-2.7em} \tikz[baseline=0.5ex]{\path[draw=black,fill=todoBenoit] (0, 0) rectangle (0.4cm, 0.4cm);} :
                Comments by Benoit
    \item[] \hspace{-2.7em} \tikz[baseline=0.5ex]{\path[draw=black,fill=todoorange] (0, 0) rectangle (0.4cm, 0.4cm);} :
                Format/layout queries
    \item[] \hspace{-2.7em} \tikz[baseline=0.5ex]{\path[draw=black,fill=todored   ] (0, 0) rectangle (0.4cm, 0.4cm);} :
                Mathematical queries/concerns
\end{itemize}
Comments can be made using the macro
\verb| \commentKasia{Your comment here} | or \\
\verb| \commentBenoit{Your comment here}|
}{}


\listoftodos

\tableofcontents

\begin{abstract}
In this article, which builds upon the work done in \cite{crawfordLorentzian2dCFT2021},
the chiral aspects of 2\textsc{dcft} on globally hyperbolic Lorentzian manifolds
are developed and explored within the perturbative algebraic quantum field theory ({\smaller{p}}\textsc{aqft}) framework.
In the example of the massless scalar field on globally hyperbolic 2-dimensional spacetimes, we identify the subalgebras of a given theory comprising only chiral (or anti-chiral) observables. These subalgebras are constructed explicitly, with the help of structures naturally associated to a Cauchy surface, for both the classical and the quantised theory,
and it is shown that they then embed naturally into the algebra of the full theory. 
Finally, it is demonstrated that the construction of these subalgebras is independent of the choice
of Cauchy surface and that they unambiguously define a covariant theory on the spaces of null-geodesics.
\end{abstract}

\section{Introduction}

An almost universal assumption in the study of conformal field theory
in 2 dimensions is that chiral fields `live' on Riemann surfaces.
This is a powerful assumption, as it enables one to use the tools
of complex analysis to tackle certain problems.
For instance, the tricky functional analysis of distributions
is replaced by either complex analysis or the algebraic manipulation of Laurent distributions.

The use of Riemann surfaces in chiral field theory
is typically justified in one of two ways:
One can perform a Wick rotation from
2\textsc{d} Minkowski space to Euclidean space.
By introducing the complex variable $z = x + it$,
this space is then identified with the complex plane.
Conformal symmetry of the original theory
suggests that only the \textit{conformal class} of the Riemannian metric
$dz d\overline{z}$ matters.
Thus, a more general theory might be defined on an arbitrary Riemann surface.
The fact that this Wick rotated theory still describes what we began with
is then a consequence of the Osterwalder-Schrader theorem \cite{osterwalderAxiomsEuclideanGreen1973},
which gives sufficient conditions for the $n$-point correlator functions of
the Wick rotated theory to be analytically continued back to the original spacetime.

An alternative justification,
found, for example in~\cite[Chapter~1]{kacVertexAlgebrasBeginners1998},
begins by describing a spacetime event using the null coordinates
$(u, v) = (t - x, t + x)$.
One may complexify each of these coordinates independently.
This is possible in a conformally symmetric Wightman \textsc{qft} because
the null momentum operators $P_u, P_v$ have a jointly positive spectrum,
allowing the translation operator
$e^{i(q P_u + r P_v)}$ to be defined for
values of $q$ and $r$ lying in the upper complex half-plane
$\mathbb{H} \subset \mathbb{C}$.
This leads to an embedding of $\mathbb{M}^2$ as the boundary of
$\mathbb{H}^2 \subset \mathbb{C}^2$.
We then identify the chiral/anti-chiral fields as those which depend on
only one copy of $\mathbb{H}$.
By restricting attention to one or the other,
the result is a conformally symmetric theory defined over $\mathbb{H}$.

However, neither of the approaches described above can be applied easily to Lorentzian manifolds more general than Minkowski spacetime.
On the other hand, methods for constructing $*$-algebras of quantum observables on a large class of spacetimes
(namely the \textit{globally hyperbolic} ones%
\footnote{
  Even this requirement may be relaxed in certain circumstances,
  e.g.\ if the spacetime is \emph{locally} globally hyperbolic in a particular sense
  \cite{kayPrincipleLocalityQuantum1992}.}%
)
exist and such theories are now very-well studied.
In particular, there is the locally-covariant approach of  \textsc{aqft},
as proposed in \cite{brunettiGenerallyCovariantLocality2003}.
A crucial feature of \textsc{aqft} is that the algebra of observables
is constructed \textit{before} a Hilbert space of states.
This is preferable because,
distinguished 'vacuum-like' states on generic globally hyperbolic spacetimes
cannot be obtained in a systematic manner
\cite[Theorem 5.4]{fewsterLocallyCovariantQuantum2015}.

This is the approach that we follow in this paper,
while we also try to stay close to the more traditional methods of studying 2\textsc{dcft}.
We give a characterisation of chirality
on any two-dimensional globally hyperbolic Lorentzian manifold $\mathcal{M}$.
To begin, we consider why the two 
ways of introducing Riemann surfaces into the study of  2\textsc{dcft} described 
above are ill-suited
for this generalisation.

The idea of Wick rotating the time coordinate seems appealing,
however it is well known that it does not generalize to curved spacetimes (see e.g. \cite[Chapter 5]{brunettiQuantumFieldTheory2009} for review)
An attempt to analytically continue the translation operators also suffers from
the lack of a preferred state on curved spacetimes.
In fact, by employing the joint spectrum of momentum operators,
this approach assumes not only the existence of a preferred state,
but in particular the existence of a \textit{highly symmetric} preferred state.

However, there is one aspect of this second approach that
\textit{can} be generalised to an arbitrary spacetime:
as we shall see,
it is possible to identify a chiral/anti-chiral sector
of a conformally covariant \textsc{qft} which is defined on a one-dimensional manifold,
without the use of a preferred system of coordinate.

\section{Chiral Configuration Spaces}

The primary example of a chiral field arises from the theory of
a massless scalar field on 2\textsc{d} Minkowski spacetime.
Recall that, in null coordinates,
the equation of motion of this theory is simply
\begin{equation}
  \label{eq:eom-null}
    \partial_u \partial_v \phi =: P \phi = 0.
\end{equation}
The solution of this equation is famously the sum of two terms:
one which depends only on the $u$ coordinate and one solely dependent on $v$.
To understand the geometry of the spaces the decoupled coordinates describe,
consider the Einstein cylinder $\mathscr{E}$,
which we recall is defined as the quotient of $\mathbb{M}^2$
under the equivalence relation
\begin{equation}
    \label{eq:cylinder-def}
    (t, x) \sim (t, x + 2\pi).
\end{equation}
In null coordinates, all functions on the cylinder satisfy the
periodicity condition $\phi(u - 2\pi, v + 2\pi) = \phi(u, v)$.
We can write a general solution of \eqref{eq:eom-null} as
\begin{align}
  \label{eq:cylinder-gen-soln}
  \phi(u, v) = \phi_\ell(u) + \phi_r(v) + \frac{p}{2\pi}(u + v),
\end{align}
where $p \in \mathbb{R}$ is a constant and each term $\phi_{\ell/r}$ must be $2\pi$ periodic.
In other words, a general solution to the wave equation on the cylinder
is determined by a pair of functions
$\phi_{\ell / r} \in \mathfrak{E}(S^1)$ and the constant $p$.

We can describe this phenomenon without coordinates in the following manner:
consider the map $\pi_\ell$ which sends an event $(u, v) \in \mathbb{M}^2$
to the line $\left\{ (u, v') \in \mathbb{M}^2 \right\}_{v' \in \mathbb{R}}$.
Each such line is determined uniquely by the choice of $u$,
and the space $\mathbb{M}_\ell$ of all such lines
is then diffeomorphic to $\mathbb{R}$.
Defining the analogous map on the cylinder,
we again have that $\pi_\ell [u, v] = \pi_\ell [u, v']$ for all $v, v'$,
but we have a further identification that
$\pi_\ell [u, v] = \pi_\ell [u + 2 \pi, v]$,
which tells us that the space $\mathscr{E}_\ell$ of all such lines
is actually one-to-one with $\mathbb{R} / 2 \pi \mathbb{Z} \simeq S^1$.

Thus we can describe solutions to the wave equation on
either Minkowski space or the cylinder in the same language:
for $\mathcal{M} \in \left\{ \mathbb{M}^2, \mathscr{E} \right\}$,
let $\mathcal{M}_{\ell / r}$ denote respectively the spaces of
right / left moving null rays, for instance
\begin{align}
  \mathbb{M}^2_{\ell} \ni \gamma = \left\{ (u_0, v) \in \mathbb{M}^2 \,|\, v \in \mathbb{R} \right\}
\end{align}
for some fixed $u \in \mathbb{R}$.
We also denote by $\pi_{\ell / r} : \mathcal{M} \to \mathcal{M}_{\ell / r}$
the obvious surjections onto these spaces.
We can then write a class of solutions to the wave equation on $\mathcal{M}$ as
$\phi = \pi_\ell^* \phi_\ell + \pi_r^* \phi_r$.
\footnote{
  Note that for $\mathcal{M} = \mathscr{E}$,
  this method only generates the solutions for $p = 0$.
  As we are only hoping to find subalgebras of observables,
  it is not necessary to generate \emph{all} the solutions to the equation of motion.
  However, the missing solutions would have to be found if one wished to
  reconstruct the full spacetimes' observables from the chiral subalgebras.
}

We can define these spaces also on arbitrary spacetimes.
We follow \cite{fewsterLocallyCovariantQuantum2015},
and define a \emph{spacetime} as a tuple $\mathcal{M} = (M, g, \mathfrak{o}, \mathfrak{t})$
comprising, in order,
a smooth manifold $M$ of a fixed dimension (in our case $2$),
a Lorentzian metric $g$ on $M$,
an orientation $\mathfrak{o}$
(i.e. an equivalence class of nowhere-vanishing $2$-forms)
and a time-orientation $\mathfrak{t}$
(which may be defined as
an equivalence class of time-like vector fields which are all future-pointing).

\begin{definition}
  Consider an inextensible null geodesic $\gamma: \mathbb{R} \to \mathcal{M}$
  which is everywhere future-directed according to $\mathfrak{t}$.
  We can call $\gamma$ \emph{left-moving} if,
  for $o \in \mathfrak{o}$ and $t \in \mathfrak{t}$,
  $o(t \otimes \dot{\gamma}) < 0$,
  otherwise we call $\gamma$ \emph{right-moving}.
\end{definition}

Note, we never have $o(t \otimes \dot{\gamma}) = 0$,
because $\gamma$ being null implies $\dot{\gamma}$ is never colinear with $t$,
and $o$ is nowhere-vanishing, and hence non-degenerate.

If we then identify reparametrisations
(i.e. we consider only the image of a geodesic),
we can define a set $\mathcal{M}_\ell$ of \textit{right}-moving,
inextensible null geodesics and similarly $\mathcal{M}_r$,
the space of \textit{left}-moving geodesics.
Note the apparent mismatch is so that elements of $\mathcal{M}_\ell$
are identified by `left-moving' coordinates on $\mathcal{M}$ and
\textit{vice-versa}.
For $\mathcal{M} \in \{\mathbb{M}^2, \mathscr{E}\}$,
these spaces are clearly analogous to the spaces of null rays introduced above.
By a slight abuse of the terminology, we shall refer to elements of
$\mathcal{M}_{\ell / r}$ as `geodesics'.
The maps $\pi_{\ell/r}$ from before generalise to this setting by defining
$\pi_{\ell/r}(x) \in \mathcal{M}_{\ell/r}$ as the equivalence class of the
inextensible right/left moving null geodesic $\gamma$ such that $\gamma(0) = x$.

As we are studying conformal field theories,
it is necessary to know how these spaces behave under
the appropriate spacetime morphisms.
As in \cite{crawfordLorentzian2dCFT2021},
we shall be working in the category $\mathsf{CLoc}$,
whose objects are 2\textsc{d}, globally hyperbolic spacetimes
and whose morphisms are \emph{conformally admissible embeddings},
i.e. smooth embeddings of manifolds preserving orientation, time-orientations,
and the conformal class of the metric.
A useful characterisation of conformal embeddings is that they preserve null geodesics.
The admissibility (i.e. orientation-preserving) property further indicates that
left-moving geodesics are mapped to left-moving geodesics and right to right.
More precisely, we have the following proposition:

\begin{proposition}
  \label{prop:null-projection-functorial}
    Let $\chi: \mathcal{M} \to \widetilde{\mathcal{M}}$
    be a conformally admissible embedding,
    then there exist natural maps
    $
        \chi_{\ell / r}:
        \mathcal{M}_{\ell / r} \to
        \widetilde{\mathcal{M}}_{\ell / r}
    $
    such that
    \begin{align}
      \label{eq:null-geodesic-naturality}
        \widetilde{\pi}_{\ell / r} \circ \chi
            =
        \chi_{\ell / r} \circ \pi_{\ell / r}.
    \end{align}
    Hence the maps $\pi_{\ell / r}$ define a functor $\mathsf{CLoc} \rightarrow \mathsf{Man}_+^1$,
    where $\mathsf{Man_+^1}$ is the category of smooth, oriented $1$-manifolds
    with smooth, oriented embeddings as morphisms.
\end{proposition}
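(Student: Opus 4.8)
The plan is to construct $\chi_{\ell/r}$ directly from the geometric characterisation of elements of $\mathcal{M}_{\ell/r}$ as images of inextensible, future-directed null geodesics, and then verify naturality and functoriality. First I would note that since $\chi$ is a conformal embedding, it maps (unparametrised) null curves to null curves; in fact, because $\chi$ is an embedding and the metric class is preserved, the image under $\chi$ of any null geodesic of $\mathcal{M}$ is an (in general, not inextensible) null geodesic segment in $\widetilde{\mathcal{M}}$. Given a right-moving inextensible null geodesic $\gamma$ representing a point of $\mathcal{M}_\ell$, I would define $\chi_\ell([\gamma])$ to be the unique inextensible null geodesic of $\widetilde{\mathcal{M}}$ containing $\chi\circ\gamma(\mathbb{R})$; uniqueness holds because two inextensible geodesics sharing a segment coincide, and the orientation/time-orientation preservation of $\chi$ guarantees this extended geodesic is again future-directed and right-moving (the sign of $o(t\otimes\dot\gamma)$ is preserved, using the remark after the definition that this quantity never vanishes). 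Independence of the chosen representative $\gamma$ within its reparametrisation class is immediate since only the image is used.

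The naturality square \eqref{eq:null-geodesic-naturality} then follows essentially by unwinding definitions: for $x \in \mathcal{M}$, $\pi_\ell(x)$ is the class of the inextensible right-moving null geodesic through $x$, and both $\widetilde{\pi}_\ell(\chi(x))$ and $\chi_\ell(\pi_\ell(x))$ are the class of the inextensible right-moving null geodesic of $\widetilde{\mathcal{M}}$ through $\chi(x)$ — the first by definition, the second because $\chi$ maps the geodesic through $x$ into the geodesic through $\chi(x)$. Functoriality (identity goes to identity, composition to composition) is then formal, since the construction is manifestly compatible with composing the underlying maps of geodesic images.

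To upgrade $\chi_{\ell/r}$ to morphisms in $\mathsf{Man}^1_+$, I would use the structures already implicit in the setup: each $\mathcal{M}_{\ell/r}$ carries a smooth structure and orientation — locally, a null coordinate $u$ (resp. $v$) on $\mathcal{M}$ descends to a chart on $\mathcal{M}_\ell$ (resp. $\mathcal{M}_r$), and global hyperbolicity ensures these spaces are Hausdorff $1$-manifolds (indeed, a Cauchy surface parametrises them). In such charts $\chi_\ell$ is expressed by the component of $\chi$ in the corresponding null direction, which is smooth; it is an immersion because $\chi$ is a conformal embedding (so it does not collapse null directions), and injective because distinct right-moving geodesics through points of $\chi(\mathcal{M})$ have distinct images. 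Orientation-preservation reduces to the sign computation already recorded. I expect the main obstacle to be the $1$-manifold bookkeeping rather than the naturality: one must check that $\mathcal{M}_{\ell/r}$ is genuinely a (second-countable, Hausdorff) smooth oriented $1$-manifold for an arbitrary $2$d globally hyperbolic $\mathcal{M}$ — this is where global hyperbolicity is essential, and I would handle it by fixing a smooth spacelike Cauchy surface $\Sigma$ and showing $\pi_{\ell/r}|_\Sigma : \Sigma \to \mathcal{M}_{\ell/r}$ is a diffeomorphism onto its image, which is all of $\mathcal{M}_{\ell/r}$ since every inextensible causal (in particular null) curve meets $\Sigma$ exactly once.
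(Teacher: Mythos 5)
Your proposal is correct and follows essentially the same route as the paper: both define $\chi_{\ell/r}$ by sending the class of a geodesic through $x$ to the class of the (extended) image geodesic through $\chi(x)$, check well-definedness and chirality preservation from conformal admissibility, obtain naturality by unwinding definitions, and defer the smooth/oriented $1$-manifold structure on $\mathcal{M}_{\ell/r}$ to the parametrisation by a Cauchy surface. Your extra care about uniqueness of the inextensible extension and the Hausdorff/second-countability of $\mathcal{M}_{\ell/r}$ only makes explicit what the paper leaves implicit.
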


\begin{proof}
  From the definition of $\pi_{\ell / r}$, we have that $\pi_{\ell / r}(x) = \pi_{\ell / r}(x')$
  if and only if there is a right/left-moving null geodesic $\gamma$ connecting $x$ with $x'$.
  As $\chi$ is conformally admissible, $\chi \circ \gamma$ will also be a right/left-moving null geodesic
  connecting $\chi(x)$ with $\chi(x')$ hence $\widetilde{\pi}_{\ell / r} \circ \chi (x) = \widetilde{\pi}_{\ell / r} \circ \chi (x')$.
  This means we can define $\chi_{\ell / r}(\gamma)$ to be
  $\pi_{\ell / r} \circ \chi (x)$ for any $x$ such that $\pi_{\ell / r}(x) = \gamma$,
  as this definition does not depend on our choice of $x$.
  Clearly this definition makes $\chi_{\ell / r}$ injective,
  and satisfies \eqref{eq:null-geodesic-naturality}.

  Statements about smoothness and orientability can be verified once we have defined
  the smooth structure and orientation on $\mathcal{M}_{\ell / r}$ in the following discussion.
  For now, we shall simply comment that, given a sequence
  of conformally admissible embeddings
  $\mathcal{M}_1 \overset{\chi_1}{\rightarrow} \mathcal{M}_2 \overset{\chi_2}{\rightarrow} \mathcal{M}_{3}$,
  if we denote the corresponding maps to null geodesics by
  $\pi_{i, \ell/r}: \mathcal{M}_i \rightarrow (\mathcal{M}_i)_{\ell/r}$,
  and the maps we have just defined by
  $\chi_{i, \ell/r}: (\mathcal{M}_i)_{\ell / r} \rightarrow (\mathcal{M}_{i+1})_{\ell / r}$,
  then 
  the following diagram commutes.

  \begin{equation*}
    \begin{tikzcd}
      \mathcal{M}_1
        \ar[r, "\chi_1"] \ar[d, two heads, "\pi_{1, \ell / r}"]
        &
      \mathcal{M}_2
        \ar[r, "\chi_2"] \ar[d, two heads, "\pi_{2, \ell / r}"]
        &
      \mathcal{M}_3
        \ar[d, two heads, "\pi_{3, \ell / r}"]
        \\
      (\mathcal{M}_1)_{\ell/r}
        \ar[r, "\chi_{1, \ell/r}"]
        &
      (\mathcal{M}_2)_{\ell/r}
        \ar[r, "\chi_{2, \ell/r}"]
        &
      (\mathcal{M}_3)_{\ell/r}
    \end{tikzcd}
  \end{equation*}
From which we see that
  \begin{align*}
    \pi_{3, \ell/r} \circ (\chi_2 \circ \chi_1)
    =:
    (\chi_2 \circ \chi_1)_{\ell/r} \circ \pi_{1, \ell/r}.
    =
    (\chi_{2, \ell/r} \circ \chi_{1, \ell/r}) \circ \pi_{1, \ell/r}
  \end{align*}
  Hence $(\chi_2 \circ \chi_1)_{\ell/r} = \chi_{2, \ell/r} \circ \chi_{1, \ell/r}$.
\end{proof}

The spaces $\mathcal{M}_{\ell / r}$ are
somewhat awkward to work with directly.
Instead it is easier to note that,
given a Cauchy surface $\Sigma \subset \mathcal{M}$,
the restriction $\pi_{\ell / r}|_\Sigma$ becomes a bijection.
Because the elements of $\mathcal{M}_{\ell / r}$ are in particular inextensible causal curves
this is true even on an arbitrary globally hyperbolic spacetime%
\footnote{
  We assume throughout that Cauchy surfaces are smooth and spacelike,
  hence the intersection of any causal curve with any Cauchy surface is necessarily transversal.
  The existence of such surfaces on globally hyperbolic spacetimes is
  \cite[Theorem 1]{bernalSmoothCauchyHypersurfaces2003}.
}.

These maps endow $\mathcal{M}_{\ell / r}$ with differentiable structures
as well as orientations, independent of the choice of $\Sigma$.
This is well-defined, as any pair of Cauchy surfaces $\Sigma, \Sigma' \subset \mathcal{M}$
of the same spacetime are diffeomorphic to one another,
as a consequence of \cite[Theorem 1.1]{bernalSmoothCauchyHypersurfaces2003}.
In particular, we shall actually define the orientation of $\mathcal{M}_{\ell}$
such that each diffeomorphism $\pi_{\ell}|_{\Sigma}: \Sigma \overset{\simeq}{\rightarrow} \mathcal{M}_{\ell}$
\textit{reverses} orientation, this is motivated by the example of the
$t = 0$ Cauchy surface $\Sigma_0 \subset \mathbb{M}^2$, where the spatial coordinate $x \in \Sigma_0$
corresponds to the $u = -x$ null ray in $\mathbb{M}^2_{\ell}$.
Given $\chi: \mathcal{M} \rightarrow \widetilde{\mathcal{M}}$ from before we have that,
for any Cauchy surface $\Sigma \subset \mathcal{M}$
\begin{align}
  \chi_{\ell / r} = \widetilde{\pi}_{\ell / r} \circ \chi \circ \pi_{\ell / r}|_{\Sigma}^{-1}.
\end{align}
From which we can deduce that each map is smooth and oriented.

However, despite each $\Sigma$ inheriting a Riemannian metric from $\mathcal{M}$,
it is clear that different Cauchy surfaces would yield
different metrics on $\mathcal{M}_{\ell / r}$,
hence we must be able to show that any algebra constructed on a Cauchy surface $\Sigma$
is in some sense independent of this metric.

In the sections that follow,
we shall build a configuration space,
and hence both classical and quantum algebras of observables,
using arbitrarily selected Cauchy surfaces as the underlying space.
Clearly, it will be important to establish that our constructions
do not depend in any significant way upon this choice.

Not only should we expect
diffeomorphic Cauchy surfaces to yield isomorphic algebras,
but we should also expect a `reparametrisation invariance',
where any algebra constructed over a surface $\Sigma$
should carry an action of $\mathrm{Diff}_+(\Sigma)$ by automorphisms.

It is well known that every 2\textsc{d} Lorentzian manifold is conformally flat,
i.e.\ they are locally conformally isometric to $\mathbb{M}^2$.
However, this does not mean that every spacetime is `the same as' Minkowski from the perspective of a \textsc{cft}.
A recent work by Benini, Giorgetti and Schenkel~\cite{beniniSkeletalModel2d2021}
explains in detail the manner in which the category $\mathsf{CLoc}$
can be replaced by a \emph{skeletal} category, the objects of which are just $\mathbb{M}^2$ and $\mathscr{E}$.

Central to this discussion is the extension of the conformal flatness result,
which shows that all \emph{globally hyperbolic} 2\textsc{d} Lorentzian manifolds can be embedded into one of these two spacetimes in a particular way.

\begin{theorem}
  \label{thm:cauchy-extension-skeletal}
  Let $\Sigma_0 \subset \mathcal{M}_0$ denote the $t = 0$ Cauchy surface of either 2\textsc{d} Minkowski space or the Einstein cylinder
  (i.e.\ $\mathcal{M}_0 \in \{\mathbb{M}^2, \mathscr{E}\}$).
  Then, for any orientation-preserving diffeomorphism $\Sigma \overset{\sim}{\rightarrow} \Sigma_0$ where
  $\Sigma \subset \mathcal{M}$ is a Cauchy surface of a 2\textsc{d} globally hyperbolic spacetime,
  there exists a $\mathsf{CLoc}$ morphism $\mathcal{M} \rightarrow \mathcal{M}_0$ such that the following diagram commutes.
  \begin{equation}
    \begin{tikzcd}
      \Sigma \ar[r, "\sim"] \ar[d, hookrightarrow] &
      \Sigma_0 \ar[d, hookrightarrow] \\
      \mathcal{M} \ar[r] &
      \mathcal{M}_0
    \end{tikzcd}
  \end{equation}
\end{theorem}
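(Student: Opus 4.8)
The plan is to build the $\mathsf{CLoc}$-morphism $\chi$ directly out of the null-projections $\pi_{\ell/r}$, carrying out the construction on the universal cover so that the cases $\Sigma \cong \mathbb{R}$ (forced when $\mathcal{M}_0 = \mathbb{M}^2$) and $\Sigma \cong S^1$ (forced when $\mathcal{M}_0 = \mathscr{E}$, since $\Sigma$ is diffeomorphic to the connected $1$-manifold $\Sigma_0$) are handled uniformly. The geometric input I would rely on --- available in \cite{crawfordLorentzian2dCFT2021} and, in the form of a conformal embedding, in \cite{beniniSkeletalModel2d2021} --- is that on a \emph{simply connected} $2$d globally hyperbolic spacetime $\mathcal{N}$ the map $(\pi_\ell, \pi_r): \mathcal{N} \to \mathcal{N}_\ell \times \mathcal{N}_r$ is a conformally admissible open embedding, with $\mathcal{N}_{\ell/r} \cong \mathbb{R}$; after fixing such identifications by the null coordinates $u, v$, this realises $\mathcal{N}$ as an open subset of $\mathbb{M}^2 = \mathbb{R}_u \times \mathbb{R}_v$. (Being an immersion is immediate, since $\ker d\pi_\ell$ and $\ker d\pi_r$ are the two transverse null line fields; injectivity is the assertion that a right- and a left-moving inextensible null geodesic meet at most once, which is the one place where global hyperbolicity and the absence of null geodesic focusing in two dimensions are genuinely used --- and which, note, fails outright on $\mathscr{E}$ itself, forcing the detour through the cover.)

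Let $p: \widetilde{\mathcal{M}} \to \mathcal{M}$ be the universal cover, which is globally hyperbolic with Cauchy surface $\widetilde{\Sigma} := p^{-1}(\Sigma) \cong \mathbb{R}$, and let $\widetilde{f}: \widetilde{\Sigma} \to \widetilde{\Sigma}_0$ be a lift of $f$, where $\widetilde{\Sigma}_0 = \{t=0\} \subset \mathbb{M}^2$ is the line $\{v = -u\}$, coordinatised by $x \mapsto (-x, x)$. Since $\widetilde{\pi}_{\ell}|_{\widetilde{\Sigma}}$ and $\widetilde{\pi}_{r}|_{\widetilde{\Sigma}}$ are bijections, set $a := -\widetilde{f} \circ (\widetilde{\pi}_\ell|_{\widetilde{\Sigma}})^{-1}$ and $b := \widetilde{f} \circ (\widetilde{\pi}_r|_{\widetilde{\Sigma}})^{-1}$ --- diffeomorphisms $\widetilde{\mathcal{M}}_{\ell/r} \to \mathbb{R}$ --- and define $\widetilde{\chi} := (a \circ \widetilde{\pi}_\ell,\, b \circ \widetilde{\pi}_r): \widetilde{\mathcal{M}} \to \mathbb{M}^2$. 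By the input above $\widetilde{\chi}$ is a conformally admissible open embedding: it is the composite of $(\widetilde{\pi}_\ell,\widetilde{\pi}_r)$ with the diffeomorphism $a \times b$, and it sends each $\ell$- (resp. $r$-) geodesic of $\widetilde{\mathcal{M}}$ into a line of constant $u$ (resp. $v$), hence respects both null foliations. For $y \in \widetilde{\Sigma}$ one has $(\widetilde{\pi}_{\ell/r}|_{\widetilde{\Sigma}})^{-1}(\widetilde{\pi}_{\ell/r}(y)) = y$, so $\widetilde{\chi}(y) = (-\widetilde{f}(y), \widetilde{f}(y))$; thus $\widetilde{\chi}|_{\widetilde{\Sigma}}$ is the composite $\widetilde{\Sigma} \xrightarrow{\widetilde{f}} \widetilde{\Sigma}_0 \hookrightarrow \mathbb{M}^2$, and $\widetilde{\chi}(\widetilde{\Sigma}) = \widetilde{\Sigma}_0$ because $\widetilde{f}$ is onto.

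It then remains to descend $\widetilde{\chi}$. The deck group $\pi_1(\mathcal{M})$ --- trivial for $\Sigma \cong \mathbb{R}$, infinite cyclic for $\Sigma \cong S^1$ --- acts on $\widetilde{\mathcal{M}}$ by conformally admissible automorphisms preserving $\widetilde{\Sigma}$; since $f$ is orientation-preserving we may choose a generator $T$ with $\widetilde{f} \circ T|_{\widetilde{\Sigma}} = \widetilde{f} + 2\pi$. By naturality of $\pi_{\ell/r}$ (Proposition~\ref{prop:null-projection-functorial}), $T$ induces orientation-preserving diffeomorphisms $T_{\ell/r}$ of $\widetilde{\mathcal{M}}_{\ell/r}$ that are fixed-point-free (a $T_\ell$-fixed $\ell$-geodesic would meet $\widetilde{\Sigma}$ in a $T$-fixed point) and satisfy $(\widetilde{\pi}_{\ell/r}|_{\widetilde{\Sigma}})^{-1} \circ T_{\ell/r} = T|_{\widetilde{\Sigma}} \circ (\widetilde{\pi}_{\ell/r}|_{\widetilde{\Sigma}})^{-1}$; substituting into the definitions of $a$ and $b$ gives $a \circ T_\ell = a - 2\pi$ and $b \circ T_r = b + 2\pi$, hence $\widetilde{\chi} \circ T = \widetilde{\chi} + (-2\pi, 2\pi)$. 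So $\widetilde{\chi}$ is equivariant for the translation action on $\mathbb{M}^2$ generated by $(u,v) \mapsto (u - 2\pi, v + 2\pi)$, whose quotient is $\mathscr{E}$, and it descends to $\chi: \mathcal{M} \to \mathcal{M}_0$; this $\chi$ is again a conformally admissible open embedding (injectivity follows from that of $\widetilde{\chi}$ together with equivariance), and both $\chi(\Sigma) = \Sigma_0$ and the commutativity of the square descend from the corresponding statements for $\widetilde{\chi}$.

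Finally one must verify that $\chi$ preserves $\mathfrak{o}$ and $\mathfrak{t}$. Orientation-preservation is local, so it suffices that $\widetilde{\chi}$ preserve $\mathfrak{o}$; as $(\widetilde{\pi}_\ell, \widetilde{\pi}_r)$ is orientation-preserving onto the product and $a \times b$ is a product of diffeomorphisms, this reduces to $a$ and $b$ being orientation-preserving, which follows from the orientation conventions on $\mathcal{M}_{\ell/r}$ (chosen, as in the excerpt's motivating example, so that $\pi_\ell|_\Sigma$ reverses and $\pi_r|_\Sigma$ preserves orientation) together with the orientation-preservation of $f$. Given this, $\chi$ preserves $\mathfrak{t}$ at the Cauchy surface, by the standard relation expressing $\mathfrak{o}$ along $\Sigma$ in terms of the time-orientation and the induced orientation of $\Sigma$: since $\chi|_\Sigma = \iota_0 \circ f$ preserves the latter and $\chi$ preserves $\mathfrak{o}$, it must carry future-pointing normals of $\Sigma$ to future-pointing normals of $\Sigma_0$; and since $\Sigma$ is a Cauchy surface of the connected spacetime $\mathcal{M}$, preservation of $\mathfrak{t}$ then holds everywhere. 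The main obstacle I anticipate is exactly the injectivity clause of the imported embedding lemma --- that oppositely-moving inextensible null geodesics intersect at most once on a simply connected globally hyperbolic surface --- which is the only genuinely global input, everything else being formal manipulation or bookkeeping of the sign conventions.
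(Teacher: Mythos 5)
Your construction is correct and is essentially the paper's own argument in a slightly repackaged form: the map you write down, $\widetilde{\chi} = (-\widetilde{f}\circ(\widetilde{\pi}_\ell|_{\widetilde{\Sigma}})^{-1}\circ\widetilde{\pi}_\ell,\ \widetilde{f}\circ(\widetilde{\pi}_r|_{\widetilde{\Sigma}})^{-1}\circ\widetilde{\pi}_r)$, is exactly the formula~\eqref{eq:cauchy-extension} that the paper derives after composing the embedding imported from \cite{beniniSkeletalModel2d2021,finsterLorentzianSpectralGeometry2016} with a reparametrising map, and your descent to $\mathscr{E}$ via deck-group equivariance ($\widetilde{\chi}\circ T = d_0\circ\widetilde{\chi}$ with $d_0(u,v)=(u-2\pi,v+2\pi)$) is the same argument the paper gives for the cylindrical case. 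Both proofs ultimately rest on the same external input (conformal embeddability of the simply connected case, which you phrase as injectivity of $(\pi_\ell,\pi_r)$), so the difference is only one of bookkeeping.
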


\begin{proof}
  In~\cite[Theorem 3.3]{beniniSkeletalModel2d2021}, it was shown,
  using~\cite[Proposition 4.2]{finsterLorentzianSpectralGeometry2016}
  and~\cite[Theorem 2.2]{monclairIsometriesLorentzSurfaces2014} for planar and cylindrical spacetimes respectively,
  that there exist \textsf{CLoc} morphisms $\chi_0: \mathcal{M} \rightarrow \mathcal{M}_0$.
  In particular, in~\cite{finsterLorentzianSpectralGeometry2016} an arbitrary Cauchy surface $\Sigma \subset \mathcal{M}$
  is selected such that
  $\Sigma \rightarrow \Sigma_{(0, 1)} := \{(-x, x) \in \mathbb{M}^2 \,|\, x \in (0, 1) \}$, expressed in null coordinates.
  As the image of $\mathcal{M}$ must be causally convex,
  it must be contained within the diamond $U = (-1, 0) \times (0, 1)$.
  Given any choice of oriented diffeomorphism $\rho_1: (0, 1) \overset{\sim}{\rightarrow} \mathbb{R}$,
  one can construct a map $\chi_1: U \rightarrow \mathbb{M}^2$ by sending $(u, v) \mapsto (-\rho_1(-u), \rho_1(v))$.
  This map is clearly conformally admissible,
  as both $\rho_1(v)$ and $-\rho_1(-u)$ are orientation preserving diffeomorphisms.
  A point of the form $(-x, x)$ is mapped to $(-\rho_1(x), \rho_1(x))$, hence $\Sigma_{(0, 1)} \overset{\sim}{\rightarrow} \Sigma_0$.
  Going back to our original Cauchy surface $\Sigma$,
  if we are given an arbitrary diffeomorphism $\rho: \Sigma \overset{\sim}{\rightarrow} \Sigma_0$,
  then we can construct $\chi_1$ from $\rho_1 = \rho \circ \chi_0|_{\Sigma_{(0, 1)}}^{-1}$.
  The desired embedding $\mathcal{M} \rightarrow \mathbb{M}^2$ is then simply $\chi_1 \circ \chi_0$.

  An important property of this embedding is that it can be expressed entirely in terms of $\rho$.
  We can define maps $\pi_{\ell/r}^{\Sigma}: \mathcal{M} \rightarrow \Sigma$ such that $\Sigma \cap \pi_{\ell/r}(x) = \{\pi_{\ell/r}(x)\}$.
  One then has that
  \begin{align}
    \label{eq:cauchy-extension}
    \chi(x) = (-\rho \circ \pi_{\ell}^{\Sigma} (x), \rho \circ \pi_r^{\Sigma} (x)),
  \end{align}
  using null coordinates on $\mathbb{M}^2$.
  This is because $\chi(\Sigma \cap \pi_{\ell/r}(x)) = \rho(\Sigma) \cap \tilde{\pi}_{\ell/r}(\chi(x))$,
  where $\tilde{\pi}_{\ell}(u, v) = \{(u, v') \in \mathbb{M}^2\}_{v' \in \mathbb{R}}$ \emph{etc}.
  This correspondence is represented visually in~\cref{fig:extension}.

  We now consider the case where $\rho: \Sigma \overset{\sim}{\rightarrow} \Sigma_0 \subset \mathscr{E}$ is a compact Cauchy surface of a spacetime $\mathcal{M}$.
  Let $p_{\Sigma}: \overline{\Sigma} \rightarrow \Sigma$ be the universal cover of $\Sigma$.
  Because $\mathcal{M} \simeq \Sigma \times \mathbb{R}$,
  this also defines a universal cover $p: \overline{\mathcal{M}} \rightarrow \mathcal{M}$ which,
  given the pullback metric along $p$, is also globally hyperbolic, with $\overline{\Sigma}$ as a Cauchy surface.
  If we now denote by $\overline{\Sigma}_0$ the $t = 0$ Cauchy surface of Minkowski space,
  which is also the universal cover of $\Sigma_0 \subset \mathscr{E}$,
  and by $p_0: \mathbb{M}^2 \rightarrow \mathscr{E}$ and $p_{\Sigma_0}: \overline{\Sigma}_0 \rightarrow \Sigma_0$
  the canonical projections, we can construct the commutative diagram
  \begin{equation}
    \begin{tikzcd}
      \overline{\Sigma} \ar[r, two heads, swap, "p_{\Sigma}"] \ar[d, hook] \ar[rrr, bend left=20, dashed, "\bar{\rho}"]&
      \Sigma \ar[r, "\rho"] \ar[d, hook] &
      \Sigma_0 \ar[d, hook] &
      \overline{\Sigma}_0 \ar[l, two heads, "p_{\Sigma_0}"] \ar[d, hook]\\
      \overline{\mathcal{M}} \ar[r, two heads, "p"] \ar[rrr, bend right=20, swap, dashed, "\overline{\chi}"]&
      \mathcal{M} &
      \mathscr{E} &
      \mathbb{M}^2 \ar[l, two heads, swap, "p_0"]
    \end{tikzcd}
  \end{equation}
  where $\bar{\rho}$ is the lift of $\rho$,
  and $\overline{\chi}$ is the \textsf{CLoc} morphism corresponding to $\bar{\rho}$ by the previous argument.
  We would like to define $\chi: \mathcal{M}$ $\rightarrow \mathscr{E}$ by $\chi(x) = p_0 \circ \overline{\chi}(\bar{x})$ for some $p(\bar{x}) = x$.
  Clearly this map is well defined if an only if $\overline{\chi}$
  is equivariant with respect to the automorphisms of the covering maps,
  i.e.\ for every $d: \overline{\mathcal{M}} \rightarrow \mathcal{M}$ such that $p \circ d = p$,
  there must be $d_0: \mathbb{M}^2 \rightarrow \mathbb{M}^2$ such that $p_0 \circ d_0 = p_0$,
  and $\overline{\chi} \circ d = d_0 \circ \overline{\chi}$.

  By definition $\bar{\rho}$ is equivariant with respect to the automorphisms of $p_{\Sigma}$ and $p_{\Sigma_0}$,
  where for both covering maps, such automorphisms are restrictions of the aforementioned $d$ and $d_0$.
  Spelling it out, this means that, for every $d_\Sigma: \overline{\Sigma} \rightarrow \overline{\Sigma}$ such that $p_{\Sigma} d_{\Sigma} = p_{\Sigma}$
  there exists $d_0: \overline{\Sigma}_0 \rightarrow \overline{\Sigma}_0$ such that $p_{\Sigma_0} d_{\Sigma_0} = p_{\Sigma_0}$ and
  \begin{align*}
    \bar{\rho} \circ d_{\Sigma} = d_{\Sigma_0} \circ \bar{\rho},
  \end{align*}
  moreover, $d_{\Sigma}$ extends to a map $d: \overline{\mathcal{M}} \rightarrow \overline{\mathcal{M}}$
  and likewise for $d_{\Sigma_0}$.
  (In particular $d_0(u, v) = (u - 2 \pi n, v + 2 \pi n)$,
  hence $d_{\Sigma_0}(x) = x + 2 \pi n$ for some $n \in \mathbb{Z}$.)
  Moreover, the manner in which $\overline{\mathcal{M}}$ is constructed means that $d$ is always an isometry preserving $\Sigma$,
  hence $\pi_{\ell/r}^{\overline{\Sigma}} \circ d = d \circ \pi_{\ell/r}^{\overline{\Sigma}}$.
  We can then use~\eqref{eq:cauchy-extension} to show that
  \begin{align}
    \chi \circ d(x) &= ( -\bar{\rho} \circ \pi_{\ell}^{\overline{\Sigma}} \circ d (x), \bar{\rho} \circ \pi_r^{\overline{\Sigma}} \circ d (x)) \nonumber \\
             &= (- d_{\Sigma_0} \circ \bar{\rho} \circ \pi_{\ell}^{\overline{\Sigma}} (x), d_{\Sigma_0} \circ \bar{\rho} \circ \pi_{r}^{\overline{\Sigma}} (x)) \nonumber \\
             &= d_0 \circ \chi (x),
  \end{align}
  as required.
\end{proof}

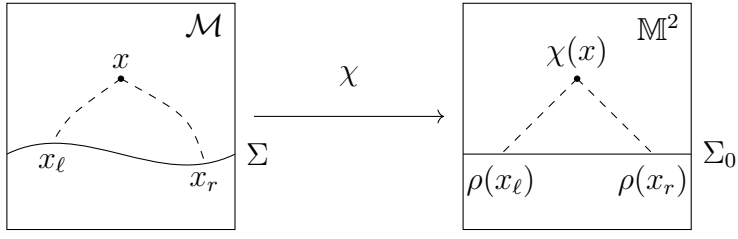
\begin{figure}[ht]
  \centering
  \begin{tikzpicture}
    \draw (-3, -1) rectangle (0, 2) node[anchor=north east] {$\mathcal{M}$};
    \draw (-3, 0) .. controls (-2, 0.5) and (-1, -0.5) .. (0, 0) node[anchor=west] {$\Sigma$};
    \filldraw (-1.5, 1) circle (1pt) node[anchor=south] (x) {$x$};
    \draw[dashed] (-1.5, 1) .. controls (-2.2, 0.5) .. (-2.4, .15) node[anchor=north] {$x_{\ell}$};
    \draw[dashed] (-1.5, 1) .. controls (-0.6, 0.5) .. (-0.4, -0.1) node[anchor=north] {$x_{r}$};

    \node[anchor=south] (map) at (1.5, 0.75) {$\chi$};
    \draw[->] (0.25, 0.5) -- (2.75, 0.5);

    \draw (3, -1) rectangle (6, 2) node[anchor=north east] {$\mathbb{M}^2$};
    \draw (3, 0) -- (6, 0) node[anchor=west] {$\Sigma_0$};
    \filldraw (4.5, 1) circle (1pt) node[anchor=south] (chi) {$\chi(x)$};
    \draw[dashed] (4.5, 1) -- (3.5, 0) node[anchor=north] {$\rho(x_{\ell})$};
    \draw[dashed] (4.5, 1) -- (5.5, 0) node[anchor=north] {$\rho(x_{r})$};
  \end{tikzpicture}
  \caption{\label{fig:extension} A diagrammatic representation of~\eqref{eq:cauchy-extension},
  where we have used the shorthand $x_{\ell/r} = \pi_{\ell/r}^{\Sigma}(x)$.
  The dashed lines represent null geodesics, which are necessarily preserved by the conformally admissible embedding $\chi$.}
\end{figure}

One way of phrasing this result is that,
any diffeomorphism $\Sigma \overset{\sim}{\rightarrow} \Sigma_0$, where $\Sigma_0$ is a Cauchy surface of $\mathbb{M}^2$ of $\mathscr{E}$,
can be extended such that its domain is the entirety of $\mathcal{M}$.
If we instead have a diffeomorphism
$\Sigma \overset{\sim}{\rightarrow} \widetilde{\Sigma}$ to a Cauchy surface of some other spacetime $\widetilde{\mathcal{M}}$,
then we can use~\cref{thm:cauchy-extension-skeletal} to prove a weaker, but more general extension as follows.

\begin{corollary}
  \label{cor:embedding-extension}
  Let $\mathcal{M}, \widetilde{\mathcal{M}}$ be
  a pair of 2\textsc{d} globally hyperbolic spacetimes with Cauchy surfaces $\Sigma, \widetilde{\Sigma}$ respectively.
  For any orientation-preserving embedding $\rho: \Sigma \hookrightarrow \widetilde{\Sigma}$,
  there exists an open, causally convex subset $\mathcal{N} \subseteq \mathcal{M}$ such that $\Sigma$ is also a Cauchy surface of $\mathcal{N}$
  and $\rho$ extends to a \textsf{CLoc} morphism
  $\chi: \mathcal{N} \rightarrow \widetilde{\mathcal{M}}$ such that the following diagram commutes.
  \begin{equation}
    \begin{tikzcd}
      \label{eq:ccauchy-diagram}
      \Sigma \ar[r, "\rho", "\sim"'] \ar[d, hook] &
      \widetilde{\Sigma} \ar[d, hook] \\
      \mathcal{N} \ar[r, "\chi"] &
      \widetilde{\mathcal{M}}
    \end{tikzcd}
  \end{equation}
  Moreover, if $\rho$ is a diffeomorphism, then $\chi$ is Cauchy.
\end{corollary}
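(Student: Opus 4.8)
The strategy is to funnel both sides through a common skeletal model $\mathcal{M}_0\in\{\mathbb{M}^2,\mathscr{E}\}$ using Theorem~\ref{thm:cauchy-extension-skeletal}, and to obtain $\chi$ as a composite of two such extensions. First I would cut the target down to size: since $\rho$ is an embedding of one-manifolds it is an open map, so $\rho(\Sigma)$ is an open, achronal subset of $\widetilde{\Sigma}\subset\widetilde{\mathcal{M}}$, and I would set $\widetilde{\mathcal{N}}:=\mathrm{int}\,D_{\widetilde{\mathcal{M}}}(\rho(\Sigma))$, the interior of its Cauchy development. One checks that $\widetilde{\mathcal{N}}$ is an open, causally convex, globally hyperbolic subspacetime of $\widetilde{\mathcal{M}}$ having $\rho(\Sigma)$ as a Cauchy surface, and that $\widetilde{\mathcal{N}}=\widetilde{\mathcal{M}}$ exactly when $\rho$ is onto $\widetilde{\Sigma}$. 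Because $\mathsf{CLoc}$ morphisms preserve null geodesics and the orientation induced on a Cauchy surface, $\rho\colon\Sigma\to\rho(\Sigma)$ is an orientation-preserving diffeomorphism of Cauchy surfaces. I would then choose $\mathcal{M}_0$ so that its standard Cauchy surface $\Sigma_0$ is diffeomorphic to $\Sigma$, fix an orientation-preserving diffeomorphism $\tau\colon\Sigma\xrightarrow{\sim}\Sigma_0$, and put $\widetilde{\tau}:=\tau\circ\rho^{-1}\colon\rho(\Sigma)\xrightarrow{\sim}\Sigma_0$. Applying Theorem~\ref{thm:cauchy-extension-skeletal} to $(\mathcal{M},\tau)$ and to $(\widetilde{\mathcal{N}},\widetilde{\tau})$ gives $\mathsf{CLoc}$ morphisms $\Psi\colon\mathcal{M}\to\mathcal{M}_0$ and $\widetilde{\Psi}\colon\widetilde{\mathcal{N}}\to\mathcal{M}_0$ restricting to $\tau$ and $\widetilde{\tau}$; each is a $\mathsf{CLoc}$ isomorphism onto its (open, causally convex) image, and both images contain $\Sigma_0$ as a Cauchy surface.

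I would then define $\chi$. Set $W:=\Psi(\mathcal{M})\cap\widetilde{\Psi}(\widetilde{\mathcal{N}})$, an open, causally convex subset of $\mathcal{M}_0$ containing $\Sigma_0=\Psi(\Sigma)=\widetilde{\Psi}(\rho(\Sigma))$, let $\mathcal{N}:=\Psi^{-1}(W)\subseteq\mathcal{M}$, and define
\[
  \chi \;:=\; \widetilde{\Psi}^{-1}\circ\Psi|_{\mathcal{N}}\;\colon\;\mathcal{N}\longrightarrow\widetilde{\mathcal{N}}\hookrightarrow\widetilde{\mathcal{M}}.
\]
This is a composite of $\mathsf{CLoc}$ morphisms, hence a $\mathsf{CLoc}$ morphism; $\mathcal{N}$ is open and causally convex in $\mathcal{M}$ because $\Psi$ is an isomorphism onto the causally convex $\Psi(\mathcal{M})$ and $W$ is causally convex. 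On $\Sigma$ we have $\chi=\widetilde{\Psi}^{-1}\circ\tau=\widetilde{\Psi}^{-1}\circ\widetilde{\Psi}\circ\rho=\rho$ (viewing $\rho(\Sigma)\subseteq\widetilde{\Sigma}\subseteq\widetilde{\mathcal{M}}$); in particular $\Sigma\subseteq\mathcal{N}$ and the square~\eqref{eq:ccauchy-diagram} commutes.

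It remains to check that $\Sigma$ is a Cauchy surface of $\mathcal{N}$, and to settle the final sentence. Via the isomorphism $\Psi$ this reduces to showing that $\Sigma_0$ is a Cauchy surface of $W=W_1\cap W_2$ given that it is one of each $W_i$: for an inextensible causal curve in $W$, extend it to an inextensible causal curve in the globally hyperbolic $W_1$, which therefore meets $\Sigma_0\subseteq W_2$, and then use causal convexity of $W_2$ in $\mathcal{M}_0$ to see that the intersection point already lies on the original curve. For the last sentence, if $\rho$ is a diffeomorphism then $\widetilde{\mathcal{N}}=\widetilde{\mathcal{M}}$, so $\chi(\mathcal{N})=\widetilde{\Psi}^{-1}(W)\supseteq\widetilde{\Psi}^{-1}(\Sigma_0)=\rho(\Sigma)=\widetilde{\Sigma}$, which is a Cauchy surface of $\widetilde{\mathcal{M}}$; hence $\chi$ is Cauchy. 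The main obstacle is the causal-geometric bookkeeping: verifying that $\mathrm{int}\,D_{\widetilde{\mathcal{M}}}(\rho(\Sigma))$ really does admit $\rho(\Sigma)$ as a Cauchy surface---the delicate point being that the arc $\rho(\Sigma)$ has non-empty edge in $\widetilde{\Sigma}$ whenever $\rho$ is not onto---together with the intersection lemma for Cauchy surfaces used in the last step; both are standard but should be spelled out.
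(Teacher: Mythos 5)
Your proposal is correct and follows essentially the same route as the paper: both arguments funnel $\mathcal{M}$ and (the Cauchy development of $\rho(\Sigma)$ inside) $\widetilde{\mathcal{M}}$ into the skeletal model $\mathcal{M}_0$ via Theorem~\ref{thm:cauchy-extension-skeletal}, take $\mathcal{N}$ to be the preimage of the intersection of the two images, and define $\chi$ as the composite through $\mathcal{M}_0$, reducing the non-surjective case to the diffeomorphism case by restricting to the Cauchy development. Your write-up is somewhat more careful than the paper's on two points the paper leaves implicit --- taking the interior of the Cauchy development and verifying via the intersection lemma that $\Sigma$ remains a Cauchy surface of $\mathcal{N}$ --- but the underlying strategy is the same.
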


\begin{proof}
  Suppose first that $\rho$ is invertible.
  Choose a diffeomorphism $\rho_0: \widetilde{\Sigma} \overset{\sim}{\rightarrow} \Sigma_0$,
  where $\Sigma_0$ is a Cauchy surface for the appropriate choice of $\mathcal{M}_0 \in \{\mathbb{M}^2, \mathscr{E}\}$
  (it is implicit that this and all following diffeomorphisms are orientation-preserving).
  This also provides a diffeomorphism $\rho_1 = \rho_0 \circ \rho: \Sigma \overset{\sim}{\rightarrow} \Sigma_0$.
  Applying~\cref{thm:cauchy-extension-skeletal} to both, we obtain $\chi_1, \chi_0$ in the following diagram.
  \begin{equation}
    \begin{tikzcd}
      \Sigma \ar[r, "\rho", "\sim"'] \ar[dd, hook] &
      \widetilde{\Sigma} \ar[r, "\rho_0"] \ar[d, hook] &
      \Sigma_0 \ar[d, hook] \\
      & \widetilde{\mathcal{M}} \ar[r, "\chi_0"] &
      \mathcal{M}_0 \\
      \mathcal{M} \ar[rru, bend right=20, swap, "\chi_1"] &&&
    \end{tikzcd}
  \end{equation}

  We then consider the space
  $\mathcal{N} = \chi_1^{-1}(\chi_0(\widetilde{\mathcal{M}}) \cap \chi(\mathcal{M}))$.
  This space is open and causally convex as both $\chi_1(\mathcal{M})$ and $\chi_0(\widetilde{\mathcal{M}})$ are,
  and each property is preserved by intersection.
  We can also see that $\mathcal{N}$ contains $\Sigma$,
  as $\chi_1(\Sigma) = \rho_1(\Sigma) = \rho_0(\widetilde{\Sigma}) \subset \chi_0(\widetilde{\mathcal{M}})$.
  Given that $\chi_1(\mathcal{N}) \subseteq \chi_0(\widetilde{\mathcal{M}})$,
  we can also define the map $\chi := \chi_0^{-1} \circ \chi_1: \mathcal{N} \rightarrow \widetilde{\mathcal{M}}$,
  because all \textsf{CLoc} morphisms are diffeomorphisms onto their images.
  Adding this into the above diagram, we obtain
    \begin{equation}
        \begin{tikzcd}
        \Sigma \ar[r, "\rho"] \ar[d, hook] &
        \widetilde{\Sigma} \ar[r, "\rho_0"] \ar[d, hook] &
        \Sigma_0 \ar[d, hook] \\
        \mathcal{N} \ar[r, "\chi"] \ar[d, hook] &
        \widetilde{\mathcal{M}} \ar[r, "\chi_0"] &
        \mathbb{M}^2 \\
        \mathcal{M} \ar[rru, bend right=20, swap, "\chi_1"] &&&
        \end{tikzcd}
    \end{equation}
  the commutativity of which demonstrates that $\chi$ is indeed a Cauchy morphism
  $\mathcal{N} \rightarrow \widetilde{\mathcal{M}}$.

  In the case where $\rho$ is only an embedding, take $\widetilde{\mathcal{N}} \subseteq \widetilde{\mathcal{M}}$
  to be the Cauchy development of $\rho(\Sigma)$,
  which is the set of points $\widetilde{x} \in \widetilde{\mathcal{M}}$ such that
  every inextensible causal curve through $\widetilde{x}$ intersects $\rho(\Sigma)$.
  Clearly $\widetilde{\mathcal{N}}$ is open and causally convex, and hence is a sub-spacetime, with $\rho(\Sigma)$ as a Cauchy surface.
  Then the preceding argument applies by replacing $\widetilde{\mathcal{M}}$ with $\widetilde{\mathcal{N}}$.
\end{proof}

The reason that diffeomorphisms $\mathcal{M} \supset \Sigma \overset{\sim}{\rightarrow} \Sigma_0 \subset \mathbb{M}^2$ can be extended to
the entirety of $\mathcal{M}$ is that,
for any pair of points $x, y \in \Sigma_0$,
one can produce a null geodesic from each such that
they intersect precisely once in $\mathbb{M}^2$.
For example, if we express $\Sigma_0$ in null coordinates as the set $\Sigma_0 = \{(-x, x) \in \mathbb{M}^2\}_{x \in \mathbb{R}}$,
then a right-moving null geodesic from $x$ will intersect a left-moving geodesic from $y$ at the point $(-x, y) \in \mathbb{M}^2$.
If we truncate $\mathbb{M}^2$ to events in the past of some Cauchy surface,
say $t = T$ for some $T > 0$,
then this is no longer the case.
If we consider $\Sigma_0$ to also be a Cauchy surface of the truncated Minkowski spacetime $\mathbb{M}^2_{t<T}$,
the above theorem can extend the identity map in one direction,
resulting in the inclusion $\mathbb{M}^2_{t < T} \to \mathbb{M}^2$,
but there exists \textit{no} conformally admissible embedding
$\mathbb{M}^2 \to \mathbb{M}^2_{t < T}$
which acts as identity on the $t = 0$ Cauchy surface,
as demonstrated by \cref{fig:impossible_extension}.

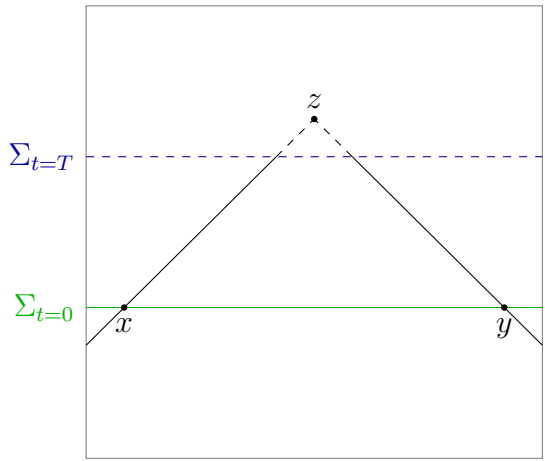
\begin{figure}[htbp]
    \centering
        \begin{tikzpicture}
            \draw[gray] (0, 0) rectangle (6, 6);
            \draw[green!70!black]
                (0, 2) node[anchor=east] {$\Sigma_{t = 0}$} -- (6, 2);
            \draw[blue!60!black, dashed]
                (0, 4) node[anchor=east] {$\Sigma_{t = T}$} -- (6, 4);
            \filldraw[black] (0.5, 2) circle (1pt)
                node[anchor=north] {$x$};
            \filldraw[black] (5.5, 2) circle (1pt)
                node[anchor=north] {$y$};
            \filldraw[black] (3, 4.5) circle (1pt)
                node[anchor=south] {$z$};
            \draw (0, 1.5) -- (2.5, 4);
            \draw (6, 1.5) -- (3.5, 4);
            \draw[dashed] (2.5, 4) -- (3, 4.5);
            \draw[dashed] (3.5, 4) -- (3, 4.5);
        \end{tikzpicture}
    \caption{
        \label{fig:impossible_extension}
        The null geodesics originating from $x$ and $y$
        intersect in $\mathbb{M}^2$, but not in $\mathbb{M}^2_{t < T}$.
        Thus any map $\mathbb{M}^2 \to \mathbb{M}^2_{t < T}$
        which restricts to the identity on $\Sigma_{t = 0}$ cannot preserve
        null geodesics, and hence cannot be conformal.
    }
\end{figure}

We have thus established that oriented diffeomorphisms $\rho: \Sigma \overset{\sim}{\rightarrow} \widetilde{\Sigma}$
extend at least partially to \textsf{CLoc} morphisms $\mathcal{N} \rightarrow \mathcal{M}$.
Notably, these morphisms are always \emph{Cauchy}, as the image of $\mathcal{N}$ always contains $\widetilde{\Sigma}$.
This is significant, because it means that for theories satisfying the time-slice axiom,
diffeomorphisms of Cauchy surfaces yield \textit{isomorphisms} of the corresponding algebras of observables.
In the special cases of $\mathbb{M}^2$ and $\mathscr{E}$,
the time-slice axiom is not even necessary,
as diffeomorphisms of Cauchy surfaces extend fully to \textsf{CLoc} isomorphisms of each spacetime.
Finally, by considering the group $\mathrm{Diff}_+(\Sigma)$,
we shall later see that, by associating a chiral subalgebra to a particular Cauchy surface $\Sigma$ of a given spacetime,
invariance of the algebra under \emph{reparametrisations} of $\Sigma$ comes as
a natural consequence of conformal covariance of the full spacetime algebra.

In the following, we 
construct algebras on Cauchy surfaces which 
capture the chiral (or anti-chiral) sector of the full algebra in question.
In order to show that our constructions are `natural',
we will 
show that for every diagram of the form~\eqref{eq:ccauchy-diagram},
there is a corresponding diagram of algebras.
We lay the groundwork for a precise formulation of this principle with the following definition.

\begin{definition}
  \label{def:ccauchy}
  The category $\mathsf{CCauchy}$ has as
  \begin{itemize}
    \item \emph{Objects:} Pairs $(\Sigma, \mathcal{M})$ such that $\Sigma$ is a Cauchy surface of $\mathcal{M} \in \mathsf{CLoc}$.
    \item \emph{Morphisms:} Pairs $(\rho, \chi)$ such that $\rho: \Sigma \rightarrow \widetilde{\Sigma}$ is a smooth oriented embedding,
          $\chi \in \mathsf{CLoc}(\mathcal{M}, \widetilde{\mathcal{M}})$ such that~\eqref{eq:ccauchy-diagram} commutes.
  \end{itemize}
\end{definition}

\begin{remark}
  There are two natural functors out of $\mathsf{CCauchy}$.
  Clearly we have $\Pi_2: \mathsf{CCauchy} \longrightarrow \mathsf{CLoc}$,
  which sends $(\Sigma, \mathcal{M})$ to $\mathcal{M}$ and $(\rho, \chi)$ to $\chi$.
  The target category for $\Pi_1$, which forgets about $\mathcal{M}$ and $\chi$,
  is $\mathsf{CRie}_1$, the category of Riemannian 1-manifolds with
  smooth oriented embeddings (which are necessarily conformal) as morphisms.
  Note that \emph{both} functors are surjective.
  This is obvious for $\Pi_2$, as every globally hyperbolic spacetime possesses a Cauchy surface.
  Given a Riemannian 1-manifold $\Sigma$, we can also easily construct a globally hyperbolic spacetime
  $\mathcal{M}_{\Sigma} := \Sigma \times \mathbb{R}$ with orientations defined in the obvious way and
  a metric $ds^2 = dt^2 - dx^2$, where $dt$ is the coordinate one-form on $\mathbb{R}$,
  and $dx$ is the metric volume form of $\Sigma$.
\end{remark}

\subsection{Covariant Parametrisation of Chiral Solutions}

In this section we construct a model for the
configuration space of chiral fields.
Similarly to the full spacetime, we refer to elements of this space as \emph{chiral field configurations}.
The term `chiral field' will instead be used for
objects 
analogous to
\emph{locally covariant fields}
which we discuss in \cref{sec:chiral-primary-fields}.

Given a field configuration $\phi \in \mathfrak{E}(\mathbb{M}^2)$
satisfying the equation of motion $\partial_u\partial_v \phi = 0$,
if we take its derivative with respect to $u$,
we obtain a function that is independent of $v$ and \emph{vice-versa}.
This allows us to separate the left-moving term of d'Alembert's solution from the right-moving term. Morover, building our configuration space from derivatives of scalar field on $\mathbb{M}^2$
allows us to avoid the well-known problems
which arise when trying to find a vacuum state for the massless scalar field
(see however \cite{bahnsLocalNetsNeumann2017} for a different way to circumvent these problems).

We would like to formulate statements such as
`$\partial_u \phi$ depends only on $u$'
without explicit reference to our choice of coordinates.
The first issue is that the operator $\partial_u$ depends on a choice of frame.
In~\cite{crawfordLorentzian2dCFT2021} we included such a choice as part of our
background spacetime data.
Here we take a different approach, which does not require the use of frames and also guaranties that the resulting function depends only on one variable.

The Lorentzian metric on a spacetime $\mathcal{M}$ allows for a natural decomposition of the cotangent bundle
$T^* \mathcal{M} = T^*_\ell \mathcal{M} \oplus T^*_r \mathcal{M}$,
where, in null coordinates $u, v$,
the fibres of $T^*_\ell \mathcal{M}$ and $T^*_r \mathcal{M}$
are spanned by $du$ and $dv$ respectively.
Let $\Pi_{\ell / r}: T^*\mathcal{M} \to T^*_{\ell / r} \mathcal{M}$
be the projections onto each subbundle,
the operation $\Pi_\ell d$ then sends $\phi \mapsto \partial_u \phi du$.

We now have a $1$-form on $\mathcal{M}$,
but we would like a function on $\mathcal{M}_\ell$.
Recalling our discussion in the previous section,
we may use a Cauchy surface $\Sigma \subset \mathcal{M}$ as a proxy for
$\mathcal{M}_\ell$.
Restricting $\Pi_\ell d \phi$ to $\Sigma$,
we may then map this to a smooth function by using the Hodge star
$*_\Sigma$ associated to the Riemannian metric on $\Sigma$.
Finally, noting that $\Sigma$ has the opposite orientation to $\mathcal{M}_{\ell}$,
we multiply the resulting function by a factor of $-1$ to account for this
(see the example below).

Thus, altogether we have a map
\begin{equation}
  \label{eq:del-sigma-derivative}
    (-1) \cdot *_\Sigma i_\Sigma^* \Pi_\ell d,
        =:
    \partial_\Sigma: \mathfrak{E}(\mathcal{M}) \longrightarrow \mathfrak{E}(\Sigma)
\end{equation}
where $i_\Sigma: \Sigma \hookrightarrow \mathcal{M}$ is the inclusion map.
We shall henceforth refer to $\partial_{\Sigma}$ as
the \emph{chiral derivative} corresponding to the Cauchy surface $\Sigma \subset \mathcal{M}$.
Similarly, we may also define the \emph{anti-chiral derivative}
$\bar{\partial}_{\Sigma} = *_{\Sigma} i^{*}_{\Sigma} \Pi_{r} d$,
though we will rarely use this, as most statements concerning $\partial_{\Sigma}$ are readily generalised.

As an example,
consider the Cauchy surface in Minkowski space expressed in null coordinates as
$\Sigma = \left\{ (-s, \gamma(s)) \right\}_{s \in \mathbb{R}}$ for some
$\gamma \in \mathrm{Diff}_+(\mathbb{R})$.
As we noted above, given an arbitrary configuration
$\phi \in \mathfrak{E}(\mathbb{M}^2)$,
we have that $\Pi_\ell d \phi = (\partial_u \phi) d u$.
After a quick computation one can verify that,
using the parametrisation of $\Sigma$ given above,
we have that $i_\Sigma^* du = - ds$,
and the induced volume form on $\Sigma$ may be expressed as
$\mathrm{d}V_\Sigma = \sqrt{\gamma'(s)} ds$.
Thus, altogether we have
\begin{equation}
    \label{eq:del-sigma-minkowski}
    (-1) \cdot *_\Sigma i^*_\Sigma \Pi_\ell d \phi(s)
        =
    (\partial_\Sigma \phi)(s)
        =
    \frac{1}{\sqrt{\gamma'(s)}} (\partial_u \phi)(-s, \gamma(s)).
\end{equation}

From this one can also show that for $\Sigma \subset \mathbb{M}^2$,
$\partial_{\Sigma}$ is surjective.
In fact, this is true for arbitrary $\mathcal{M}$. Before we prove that, we first address the question of
how these maps interact with the morphisms from \cref{def:ccauchy}.

In~\cite[\S 4.1]{crawfordLorentzian2dCFT2021},
we introduced \emph{weighted pullbacks} to describe how classical and quantum fields transform under conformal isometries.
Even though we are now considering one-dimensional Riemannian manifolds,
the definition \cite[definition 4.2]{crawfordLorentzian2dCFT2021} of a weighted pullback carries over unchanged.
Given a $\mathsf{CCauchy}$ morphism $(\rho, \chi)$ from $(\Sigma, \mathcal{M})$ to $(\widetilde{\Sigma}, \widetilde{\mathcal{M}})$
such that $\chi^{*} \widetilde{g} = \Omega^2 g$,
the restriction $\rho$ has conformal factor $\left( \Omega|_\Sigma \right)^2$, in the sense that
$\rho^* \widetilde{g}|_{\widetilde{\Sigma}} = \left( \Omega|_\Sigma \right)^2 g|_\Sigma$.

\begin{definition}
  For $\rho \in \mathsf{CRie}_1(\Sigma, \widetilde{\Sigma})$ such that $\rho^{*} g_{\widetilde{\Sigma}} = \omega^2 g_{\Sigma}$,
  we define the \emph{weighted pullback} $\rho^{*}_{(\mu)}: \mathfrak{E}(\widetilde{\Sigma}) \longrightarrow \mathfrak{E}(\Sigma)$ by
  \begin{align}
    \rho^{*}_{(\mu)} \psi := \omega^{\mu} \rho^{*}\psi
  \end{align}
\end{definition}

It turns out that, for $\mu = 1$,
these are precisely the maps required to preserve the images of the
$\partial_\Sigma$ operators, as demonstrated by the following proposition.

\begin{proposition}
  \label{prop:config-covariance}
  Define the contravariant functor $\mathfrak{E}_{(1)}: \mathsf{CRie}_1 \rightarrow \mathsf{Vec}$ such that
  $\mathfrak{E}_{(1)}(\Sigma) = \mathfrak{E}(\widetilde{\Sigma})$,
  and $\mathfrak{E}_{(1)} \rho = \rho^{*}_{(1)}: \mathfrak{E}(\widetilde{\Sigma}) \rightarrow \mathfrak{E}(\Sigma)$.
  Then the morphisms $\partial_{\Sigma}: \mathfrak{E}(\mathcal{M}) \rightarrow \mathfrak{E}(\Sigma)$
  constitute a natural transformation $\partial: \mathfrak{E} \circ \Pi_2 \Rightarrow \mathfrak{E}_{(1)} \circ \Pi_1$,
  i.e.\ for every $\mathsf{CCauchy}$ morphism $(\rho, \chi): (\Sigma, \mathcal{M}) \rightarrow (\widetilde{\Sigma}, \widetilde{\mathcal{M}})$
  the following diagram commutes.
    \begin{equation}
        \begin{tikzcd}
            \mathfrak{E}(\Sigma)
                &
            \mathfrak{E}({\widetilde{\Sigma}})
                \ar[l, "\rho^*_{(1)}"]
                \\
            \mathfrak{E}(\mathcal{M})
                \ar[u, "\partial_\Sigma"]
                &
            \mathfrak{E}(\widetilde{\mathcal{M}})
                \ar[u, "\partial_{\widetilde{\Sigma}}"]
                \ar[l, "\chi^*"]
        \end{tikzcd}
    \end{equation}
\end{proposition}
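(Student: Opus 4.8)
The plan is to verify the square commutes by unwinding the definition $\partial_\Sigma = (-1)\cdot *_\Sigma\, i_\Sigma^*\, \Pi_\ell\, d$ and pushing $\chi^*$ through it factor by factor. Fix $\psi \in \mathfrak{E}(\widetilde{\mathcal{M}})$; the goal is to show $\partial_\Sigma(\chi^*\psi) = \rho^*_{(1)}(\partial_{\widetilde{\Sigma}}\psi)$. Three of the four factors are handled by formal facts. First, $d$ commutes with pullback, so $d(\chi^*\psi) = \chi^* d\psi$. Second, because $\chi$ is conformally admissible it maps left-moving null geodesics to left-moving null geodesics (here the orientation-preserving hypothesis is what prevents the $\ell$ and $r$ sectors from being swapped); consequently the splitting $T^*\mathcal{M} = T^*_\ell\mathcal{M}\oplus T^*_r\mathcal{M}$ is natural with respect to $\chi$, i.e. $\Pi_\ell\circ\chi^* = \chi^*\circ\widetilde{\Pi}_\ell$ on one-forms. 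Third, since $(\rho,\chi)$ is a $\mathsf{CCauchy}$ morphism the square $\chi\circ i_\Sigma = i_{\widetilde{\Sigma}}\circ\rho$ commutes, hence $i_\Sigma^*\circ\chi^* = \rho^*\circ i_{\widetilde{\Sigma}}^*$. Stringing these together gives
\begin{equation*}
  \partial_\Sigma(\chi^*\psi) \;=\; -\,*_\Sigma\, \rho^*\, i_{\widetilde{\Sigma}}^*\, \widetilde{\Pi}_\ell\, d\psi,
\end{equation*}
so the whole content of the statement reduces to the single identity $*_\Sigma\circ\rho^* = \omega\cdot(\rho^*\circ *_{\widetilde{\Sigma}})$ as maps from one-forms on $\widetilde{\Sigma}$ to functions on $\Sigma$, where $\omega = \Omega|_\Sigma$ is the conformal factor of $\rho$.

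This last identity is precisely what forces the weight $\mu = 1$, and it is the only genuine computation. On an oriented one-dimensional Riemannian manifold the Hodge star sends a one-form $f\,dx$, in an oriented chart with metric $g = h\,dx^2$, to the function $f/\sqrt{h}$; writing $\rho$ in oriented charts for $\Sigma$ and $\widetilde{\Sigma}$ and using $\rho^* g_{\widetilde{\Sigma}} = \omega^2 g_\Sigma$, a short calculation shows that $*_\Sigma(\rho^*\widetilde\alpha)$ and $\rho^*(*_{\widetilde{\Sigma}}\widetilde\alpha)$ differ exactly by the factor $\omega$ — morally, the Hodge star on one-forms scales like (length)$^{-1}$ under a conformal rescaling of the metric, so one power of $\omega$ compensates it. Substituting this into the displayed expression and cancelling the two explicit signs yields
\begin{equation*}
  \partial_\Sigma(\chi^*\psi) \;=\; \omega\cdot\rho^*\!\left(-\,*_{\widetilde{\Sigma}}\,i_{\widetilde{\Sigma}}^*\,\widetilde{\Pi}_\ell\,d\psi\right) \;=\; \omega\,\rho^*(\partial_{\widetilde{\Sigma}}\psi) \;=\; \rho^*_{(1)}(\partial_{\widetilde{\Sigma}}\psi),
\end{equation*}
which is exactly the commutativity claimed.

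To complete the picture I would also check that $\mathfrak{E}_{(1)}$ is indeed a (contravariant) functor, i.e. $(\rho_2\circ\rho_1)^*_{(1)} = (\rho_1)^*_{(1)}\circ(\rho_2)^*_{(1)}$; this follows from the cocycle rule for conformal factors, $\omega_{\rho_2\circ\rho_1} = (\rho_1^*\omega_{\rho_2})\,\omega_{\rho_1}$, together with the ordinary functoriality of $\rho\mapsto\rho^*$. The main obstacle is bookkeeping rather than anything deep: one must pin down the conformal-weight behaviour of $*_\Sigma$ on one-forms and keep the orientation-reversal sign built into $\partial_\Sigma$ consistent on both sides of the square. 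Once the identity $*_\Sigma\circ\rho^* = \omega\cdot\rho^*\circ *_{\widetilde{\Sigma}}$ is established, the rest is the purely formal chain of equalities above.
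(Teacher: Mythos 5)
Your proposal is correct and follows essentially the same route as the paper: factor $\partial_\Sigma$ into $d$, $\Pi_\ell$, restriction, and Hodge star, commute the first three through $\chi^*$ formally, and isolate the conformal weight of $*_\Sigma$ on one-forms as the source of the single factor of $\omega$ (the paper cites the general identity $*_X\circ\rho^* = \Omega^{2p-n}\rho^*\circ *_Y$ at $p=n=1$ where you compute it in a chart, which is the same content). Your added check of functoriality of $\mathfrak{E}_{(1)}$ and your corrected orientation of the restriction identity $i_\Sigma^*\circ\chi^* = \rho^*\circ i_{\widetilde{\Sigma}}^*$ are both fine.
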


\begin{proof}
  The exterior derivative $d$ commutes with the pullback along any smooth map,
  and, by the definition of a $\mathsf{CCauchy}$ morphism,
  $i_{\widetilde{\Sigma}}^{*} \chi^{*} = \rho^{*} i_{\Sigma}^{*}$.
  Similarly, $\Pi_{\ell}$ commutes with all conformally admissible embeddings,
  leaving only the Hodge dual to check.
  A standard result of Riemannian geometry states that,
  if $\rho: X \rightarrow Y$ is a conformal embedding of Riemannian $n$-manifolds with
  $\rho^{*}g_Y = \Omega^2 g_X$,
  then the Hodge operator on $p$-forms behaves as
  $*_X \circ \rho^{*} = \Omega^{2p - n} \rho^{*} \circ *_{Y}$.
  For $p = n = 1$ we then get the necessary factor to make the diagram commute.
\end{proof}

\begin{remark}
  As a simple example, one can consider the case where $\widetilde{\mathcal{M}} = \Lambda \mathcal{M}$,
  i.e.\ the underlying manifold is held fixed and
  the metric is scaled by some constant $\Lambda^2 \in \mathbb{R}_{>0}$.
  We may then take $\chi$ to be the identity map of the underlying manifold,
  whereupon the map $\mathfrak{E}(\widetilde{\Sigma}) \rightarrow \mathfrak{E}(\Sigma)$
  in the above proposition becomes $\psi \mapsto \Lambda \psi$.
  As such, the physical interpretation of the above proposition is that
  the chiral boson $\partial\phi$ has a \emph{scaling dimension} of $1$.
\end{remark}

Finally, so that we may be sure that the chiral configuration space is not a proper subspace of $\mathfrak{E}(\Sigma)$,
we have the following result.

\begin{proposition}
  For every $(\Sigma, \mathcal{M}) \in \mathsf{CCauchy}$, $\partial_{\Sigma}$ is surjective.
  Moreover, for every $\psi \in \mathfrak{E}(\Sigma)$, there is a \emph{solution}
  $\phi \in \mathrm{Ker} P_{\mathcal{M}} \subset \mathfrak{E}(\mathcal{M})$ such that
  $\partial_{\Sigma} \phi = \psi$.
\end{proposition}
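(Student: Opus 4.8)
The plan is to reduce the statement to a concrete construction on one of the two skeletal spacetimes, using the covariance already established in Proposition~\ref{prop:config-covariance} together with the extension results \cref{thm:cauchy-extension-skeletal} and \cref{cor:embedding-extension}. First I would observe that surjectivity of $\partial_\Sigma$ follows from the stronger second claim (given a solution $\phi$ with $\partial_\Sigma\phi=\psi$, that same $\phi$ witnesses surjectivity), so it suffices to produce, for arbitrary $\psi\in\mathfrak{E}(\Sigma)$, a genuine solution $\phi\in\operatorname{Ker}P_{\mathcal{M}}$ restricting correctly. The natural strategy is to solve the problem first on $\mathbb{M}^2$ (and $\mathscr{E}$), where the chiral derivative has the explicit form \eqref{eq:del-sigma-minkowski}, and then transport the solution back along a $\mathsf{CLoc}$ morphism.

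Concretely, for $\mathcal{M}=\mathbb{M}^2$ with a Cauchy surface parametrised as $\Sigma=\{(-s,\gamma(s))\}_{s\in\mathbb{R}}$, equation \eqref{eq:del-sigma-minkowski} tells us that $\partial_\Sigma\phi=\psi$ amounts to $(\partial_u\phi)(-s,\gamma(s)) = \sqrt{\gamma'(s)}\,\psi(s)$. Since $\gamma$ is an orientation-preserving diffeomorphism of $\mathbb{R}$, I can set $f(u) := -\int_0^{-u}\sqrt{\gamma'(\sigma)}\,\psi(\sigma)\,d\sigma$ (so that $f'(u) = \sqrt{\gamma'(-u)}\,\psi(-u)$), and then $\phi(u,v):=f(u)$ is manifestly smooth, lies in $\operatorname{Ker}P_{\mathbb{M}^2}$ since $\partial_u\partial_v f(u)=0$, and satisfies $(\partial_u\phi)(-s,\gamma(s)) = f'(-s) = \sqrt{\gamma'(s)}\,\psi(s)$ as required. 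The cylinder case is handled by the same formula once one checks the periodicity condition, which holds precisely when $\psi$ corresponds to a genuine function on the compact $\Sigma$; alternatively one can pass to the universal cover as in the proof of \cref{thm:cauchy-extension-skeletal}. This settles the claim for $\mathcal{M}_0\in\{\mathbb{M}^2,\mathscr{E}\}$.

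For general $(\Sigma,\mathcal{M})$, I would pick an orientation-preserving diffeomorphism $\rho:\Sigma\overset{\sim}{\to}\Sigma_0$ onto a Cauchy surface of the appropriate $\mathcal{M}_0\in\{\mathbb{M}^2,\mathscr{E}\}$ and invoke \cref{thm:cauchy-extension-skeletal} to extend it to a $\mathsf{CLoc}$ morphism $\chi:\mathcal{M}\to\mathcal{M}_0$, with conformal factor $\Omega$. Given $\psi\in\mathfrak{E}(\Sigma)$, transport it to $\psi_0 := (\rho^{-1})^*_{(1)}\psi \in\mathfrak{E}(\Sigma_0)$, use the skeletal case to obtain $\phi_0\in\operatorname{Ker}P_{\mathcal{M}_0}$ with $\partial_{\Sigma_0}\phi_0=\psi_0$, and set $\phi := \chi^*\phi_0$. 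Then $\phi\in\operatorname{Ker}P_{\mathcal{M}}$ because $\chi$ is conformal and hence preserves the (conformally invariant) wave operator's kernel — concretely $P_{\mathcal{M}}\chi^*\phi_0 = \Omega^{-2}\chi^*P_{\mathcal{M}_0}\phi_0 = 0$, using that in 2d the d'Alembertian rescales by $\Omega^{-2}$ — and Proposition~\ref{prop:config-covariance} gives $\partial_\Sigma\phi = \partial_\Sigma\chi^*\phi_0 = \rho^*_{(1)}\partial_{\Sigma_0}\phi_0 = \rho^*_{(1)}\psi_0 = \psi$.

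The main obstacle I anticipate is the cylinder case: not every $\psi\in\mathfrak{E}(\Sigma_0)$ for $\Sigma_0=S^1\subset\mathscr{E}$ arises as $\partial_{\Sigma_0}$ of a globally defined solution on $\mathscr{E}$ — the naive antiderivative $f(u)=\int_0^u(\cdots)$ is periodic only if $\psi$ integrates to zero against the volume form, reflecting exactly the ``missing $p\neq 0$ solutions'' flagged in the footnote after \eqref{eq:cylinder-gen-soln}. The resolution is that the statement only asserts existence of \emph{some} $\phi\in\operatorname{Ker}P_{\mathcal{M}}\subset\mathfrak{E}(\mathcal{M})$, and the zero-mode term $\tfrac{p}{2\pi}(u+v)$ of \eqref{eq:cylinder-gen-soln} is itself a smooth solution on $\mathscr{E}$ with a nonzero, constant chiral derivative; adding an appropriate multiple of it absorbs the obstruction, so that $\phi = f(u) + \tfrac{p}{2\pi}(u+v)$ with $p$ chosen to match the mean of $\psi$ does the job. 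I would take care to verify that this still lands in $\mathfrak{E}(\mathscr{E})$ (it does, being invariant under $(u,v)\mapsto(u-2\pi,v+2\pi)$) and that its chiral derivative is computed correctly from the definition \eqref{eq:del-sigma-derivative}.
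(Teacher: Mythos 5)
Your proposal is correct and follows essentially the same route as the paper's proof: an explicit antiderivative on $\mathbb{M}^2$, the zero-mode correction $\tfrac{p}{2\pi}(u+v)$ to absorb the periodicity obstruction on $\mathscr{E}$, and transport to general $(\Sigma,\mathcal{M})$ via \cref{thm:cauchy-extension-skeletal} together with \Cref{prop:config-covariance} and the fact that $\chi^{*}$ preserves solutions. The only cosmetic difference is that you solve the problem for an arbitrary Cauchy surface of $\mathbb{M}^2$ directly, whereas the paper works only with the $t=0$ surface and lets the covariance argument handle the rest.
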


\begin{proof}
  For $\overline{\Sigma}_0 \subset \mathbb{M}^2$, we can write the solution of $\partial_{\overline{\Sigma}_0} \phi = \psi$ explicitly as
  \begin{align}
    \label{eq:minkowski-chiral-derivative-surjective}
    \phi(u, v) = \int_0^u \psi(-u') \,\mathrm{d}u'.
  \end{align}
  As $\phi$ only depends on $u$, this is clearly a solution to the equations of motion.
  By using $\mathbb{M}^2$ as the universal covering space of $\mathscr{E}$,
  we get the corresponding result for the Cauchy surface $\Sigma_0 \subset \mathscr{E}$,
  however, we must add an extra step.
  If $\psi$ in \eqref{eq:minkowski-chiral-derivative-surjective} is $2\pi$-periodic
  (i.e.\ it corresponds to a function in $\mathfrak{E}(\Sigma_0)$)
  then $\phi(u - 2\pi, v + 2\pi) = \phi(u, v) + \int_0^{2\pi}\psi(x) dx$.
  In other words, $\phi$ only defines a function on $\mathscr{E}$ if $\psi$ is exact.

  To solve this, we choose for the solution of $\partial_{\Sigma_0} \phi = \psi$ on the cylinder
  \begin{align}
    \phi(u, v) = \int_0^u \psi_0(-u') \,\mathrm{d}u' + \frac{1}{2\pi} \left( \int_0^{2\pi} \psi(x) \,\mathrm{d}x \right)(u + v),
  \end{align}
  where $\psi_0(x) = \psi(x) - \tfrac{1}{2\pi}\int_0^{2 \pi} \psi(x') \,\mathrm{d}x'$ is the `exact part' of $\psi$.
  This is then clearly in the form \eqref{eq:cylinder-gen-soln} of a general solution to the wave equation on a cylinder.

  For an arbitrary element $(\Sigma, \mathcal{M}) \in \mathsf{CCauchy}$,
  we take a diffeomorphism $\rho: \Sigma \overset{\sim}{\rightarrow} \Sigma_0$,
  where as before $\Sigma_0$ is the $t=0$ Cauchy surface of $\mathcal{M}_0 \in \{\mathbb{M}^2, \mathscr{E}\}$ as appropriate.
  We can then solve $\partial_{\Sigma} \phi = \psi$ for any $\psi \in \mathfrak{E}(\Sigma)$ using
  the corresponding embedding $\chi: \mathcal{M} \rightarrow \mathcal{M}_0$.
  In particular, suppose that $\partial_{\Sigma_0} \phi_0 = (\rho^{-1})^{*}_{(1)} \psi$
  is one of the solutions constructed above, then
  \begin{align}
    \partial_{\Sigma} \chi^{*} \phi_0 = \rho^{*}_{(1)} \partial_{\Sigma_0} \phi_0 = \psi.
  \end{align}
  Moreover, as $\chi^{*}$ maps $\mathrm{Ker}\, P_{\mathcal{M}_0} \rightarrow \mathrm{Ker}\, P_{\mathcal{M}}$
  (c.f. \cite[proposition 4.4]{crawfordLorentzian2dCFT2021}),
  $\phi$ is also a solution to the equations of motion as desired.
\end{proof}

\section{Classical Observables}

Now that we have identified our configuration space,
and how it transforms under appropriate morphisms,
we can begin to discuss the algebras of observables,
and from there the dynamics, of the massless scalar field.

\subsection{Classical Chiral Algebra}

There are multiple approaches to defining the chiral algebra of observables.
We begin by looking at a space common to all definitions: the regular, linear observables.

We denote by $\{\Psi_{\Sigma}(f) \,|\, f \in \mathfrak{D}(\mathcal{M})\}$
the family of linear observables
\begin{equation}
  \label{eq:chiral-boson}
    \mathfrak{E}(\Sigma) \to \mathbb{R}, \quad
    \psi \mapsto \int_{\Sigma} f \psi \mathrm{d}V_{\Sigma}.
\end{equation}
Naturally, we can also push these forward to maps
$\partial_{\Sigma}^{*}\Psi_{\Sigma}(f): \mathfrak{E}(\mathcal{M}) \to \mathbb{R}$,
by $\phi \mapsto \Psi_{\Sigma}(f)[\partial_{\Sigma}\phi]$.
It is easy to see that $\partial_{\Sigma}^{*}\Psi_{\Sigma}(f)$ is both linear and continuous on
$\mathfrak{E}(\mathcal{M})$, i.e.\ it is a distribution.
However, even though $\Psi_{\Sigma}(f)$ is regular,
$\partial_{\Sigma}^{*}\Psi_{\Sigma}(f)$ fails to be even microcausal:
because its support lies entirely within $\Sigma$,
its wavefront set lies normal to $\Sigma$,
which in particular means that it contains timelike covectors.
Thur arguments which showed the microcausal functionals to form a closed
Poisson algebra for the full spacetime will not be directly applicable here.


However, for these observables, a direct computation of the Peierls bracket yields
\begin{equation}
    \left\{ \partial_\Sigma^*\Psi(f), \partial_\Sigma^*\Psi(g) \right\}
        =
    \left\langle
        (\partial_\Sigma \otimes \partial_\Sigma)E,
        f \otimes g
    \right\rangle_{\Sigma^2}.
\end{equation}
In other words, the commutator function for chiral observables is simply
$(\partial_\Sigma \otimes \partial_\Sigma)E$,
where we must verify that the pullback in the definition \eqref{eq:del-sigma-derivative} of $\partial_\Sigma$
is well defined on $E$.
This is the case because $\mathrm{WF}(E)$ has a vanishing intersection
with the conormal bundle of $\Sigma^2 \subset \mathcal{M}^2$.
We shall henceforth denote this distribution by the shorthand $E_{\Sigma} \in \mathfrak{D}'(\Sigma^2)$.
Note that this would not be possible if we considered pullbacks of $E^{R/A}$ independently.

On Minkowski space,
the integral kernel of the Pauli-Jordan function is expressed
in null coordinates as
\begin{equation}
  \label{eq:pauli-jordan-minkowski}
    E(u, v, u', v')
        =
    -\frac{1}{4} \left(
        \mathrm{sgn}(u - u') + \mathrm{sgn}(v - v')
    \right).
\end{equation}
Taking $\Sigma_0$ to be the $t = 0$ Cauchy surface,
we use \eqref{eq:del-sigma-minkowski}
to compute the chiral commutator function as
\begin{equation}
  \label{eq:chiral-commutator-minkowski}
    E_{\Sigma_0}(s, s')
        =
    \frac{1}{2} \delta'(s - s').
\end{equation}
This agrees with~\cite[(3.10)]{crawfordLorentzian2dCFT2021}
though now the reduced number of coordinates is expressed in a more geometric manner by
taking the pullback of $(\partial_u \otimes \partial_u)E$ along an embedding
of the $1$-manifold $\Sigma$ into the $2$-manifold $\mathbb{M}^2$.
By the method of images \cite{crawfordLorentzian2dCFT2021}, we can deduce that
$E_{\Sigma_0}$ for $\Sigma_0 \subset \mathscr{E}$ must be of the same form.

It will be useful to note that
We then transfer the result from these two explicit examples to arbitrary spacetimes.
Let $(\Sigma, \mathcal{M}) \in \mathsf{CCauchy}$, and choose a diffeomorphism
$\rho: \Sigma \overset{\sim}{\rightarrow} \Sigma_0 \subset \mathcal{M}_0 \in \{\mathbb{M}^2, \mathscr{E}\}$.
From \cref{prop:config-covariance}, we have that
$\rho^{*}_{(1)} \partial_{\Sigma_0} = \partial_{\Sigma} \chi^*$.
Given the conformal covariance of the causal propagator (\cite[Proposition 4.3]{crawfordLorentzian2dCFT2021}),
this implies
\begin{align}
  \label{eq:chiral-commutator-function}
  (\rho^{*}_{(1)} \otimes \rho^{*}_{(1)}) E_{\Sigma_0}
  &= (\rho^{*}_{(1)} \otimes \rho^{*}_{(1)}) (\partial_{\Sigma_0} \otimes \partial_{\Sigma_0}) E_{\mathcal{M}_0} \nonumber \\
  &= (\partial_{\Sigma} \otimes \partial_{\Sigma}) (\chi^{*} \otimes \chi^{*}) E_{\mathcal{M}_0} \nonumber \\
  &=: E_{\Sigma}.
\end{align}
This tells us that $E_{\Sigma}$ is a well-defined distribution in $\mathfrak{D}'(\Sigma^2)$ as desired,
and that $\mathrm{WF}(E_{\Sigma}) = (\rho^{\otimes 2})^{*} \mathrm{WF}(E_{\Sigma_0})$.

Now that we have a bi-distribution on $\Sigma$,
we can define a binary operation, for $F, G \in \mathfrak{F}_{\mathrm{reg}}(\Sigma)$,
and $\psi \in \mathfrak{E}(\Sigma)$ by
\begin{equation}
\label{eq:chiral-bracket-def}
    \left\{ F, G \right\}_{\ell}^{\Sigma}[\psi]
        =
    \left\langle E_{\Sigma}, F^{(1)}[\psi] \otimes G^{(1)}[\psi] \right\rangle.
\end{equation}
Similarly to the construction of $\mathfrak{P}(\mathcal{M})$,
we may now ask if there exists a space of functionals which is closed under this operation.

\begin{proposition}
\label{prop:chiral-poisson}
    Let $\mathfrak{F}_c(\Sigma)$ be the space comprising functionals $F: \mathfrak{E}(\Sigma) \rightarrow \mathbb{R}$
    such that
    \begin{enumerate}
        \item $F$ is Bastiani smooth with respect to the Fréchet topology on $\mathfrak{E}(\Sigma)$
        \item $\mathrm{WF}(F^{(n)}[\psi]) \cap \left( \Xi_{+}^{n} \cup \Xi_{-}^{n} \right) = \emptyset$ where
        \begin{align*}
            \Xi_\pm^{n}
                =
          \left\{
            (s_1, \ldots, s_n; \xi_1, \ldots, \xi_n) \in T^{*} \Sigma^n
                \,|\,
            \pm \xi_i \ge 0, 0 \leq i \leq 1
          \right\},
        \end{align*}
        and the sign of a covector is defined with respect to an arbitrary oriented coordinate on $\Sigma$.
    \end{enumerate}
then $\left\{ \cdot, \cdot \right\}_{\ell}^{\Sigma}$
is a Poisson bracket on $\mathfrak{F}_{c}(\Sigma)$.
We denote the resulting Poisson algebra $\mathfrak{P}_{\ell}(\Sigma, \mathcal{M})$
\end{proposition}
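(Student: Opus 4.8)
The plan is to recognise that \eqref{eq:chiral-bracket-def} has precisely the structure of the Peierls bracket of the full massless scalar, with the spacetime $\mathcal{M}$ replaced by the one-manifold $\Sigma$, the causal propagator replaced by $E_{\Sigma}$, and microcausality replaced by condition~2 of the Proposition; the Poisson axioms then reduce to standard manipulations once the one geometric input --- the precise form of $\mathrm{WF}(E_{\Sigma})$ --- has been recorded. To begin, note that by \eqref{eq:chiral-commutator-function} and $\mathrm{WF}(E_{\Sigma_0}) = \mathrm{WF}(\delta') = \{(s,s;\xi,-\xi) \mid \xi \neq 0\}$, every element of $\mathrm{WF}(E_{\Sigma})$ is of the form $(s,s;\xi,-\xi)$ with $\xi \neq 0$; in particular its two covector components always have strictly opposite signs with respect to any oriented coordinate. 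Well-definedness of the bracket is then immediate: condition~2 at $n=1$ forces $\mathrm{WF}(F^{(1)}[\psi])$ and $\mathrm{WF}(G^{(1)}[\psi])$ to be empty, i.e.\ $F^{(1)}[\psi]$ and $G^{(1)}[\psi]$ are smooth, compactly supported densities, so $\langle E_{\Sigma}, F^{(1)}[\psi] \otimes G^{(1)}[\psi]\rangle$ is just the evaluation of $E_{\Sigma} \in \mathfrak{D}'(\Sigma^2)$ on a smooth, compactly supported test density. Bilinearity is clear, antisymmetry follows from the antisymmetry of $E_{\Sigma}$ under $s \leftrightarrow s'$ (inherited from that of the Pauli--Jordan function, or read off from $E_{\Sigma_0} = \frac{1}{2}\delta'(s-s')$), and the derivation property is inherited from the Leibniz rule $(FG)^{(1)}[\psi] = F[\psi]\, G^{(1)}[\psi] + G[\psi]\, F^{(1)}[\psi]$ for functional derivatives, using that $\mathfrak{F}_c(\Sigma)$ is closed under pointwise multiplication (itself a consequence of condition~2).

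The substantive step is closure, $\{F,G\}_{\ell}^{\Sigma} \in \mathfrak{F}_c(\Sigma)$. Bastiani smoothness of $\psi \mapsto \{F,G\}_{\ell}^{\Sigma}[\psi]$ follows from smoothness of $F^{(1)}, G^{(1)} \colon \mathfrak{E}(\Sigma) \to \mathcal{E}'(\Sigma)$ together with continuity of the pairing against $E_{\Sigma}$ in the Hörmander topology, exactly as in the full-spacetime case. That the derivatives $(\{F,G\}_{\ell}^{\Sigma})^{(n)}[\psi]$ are compactly supported follows from the Leibniz rule: since $E_{\Sigma}$ is $\psi$-independent, $(\{F,G\}_{\ell}^{\Sigma})^{(n)}[\psi]$ is a finite sum, over $k+l=n$, of partial contractions of $E_{\Sigma}$ with the distinguished first arguments of $F^{(k+1)}[\psi]$ and $G^{(l+1)}[\psi]$, and as $E_{\Sigma}$ is supported on the diagonal of $\Sigma^2$ while $F^{(k+1)}[\psi], G^{(l+1)}[\psi]$ are compactly supported, each such contraction is compactly supported in the remaining $n$ variables.

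The wavefront set condition~2 for the bracket is where I expect the main work, and the main obstacle, to lie. Applying Hörmander's calculus for products and pushforwards to the contractions above, $\mathrm{WF}\big((\{F,G\}_{\ell}^{\Sigma})^{(n)}[\psi]\big)$ is controlled by $\mathrm{WF}(E_{\Sigma})$ together with the $\mathrm{WF}(F^{(k+1)}[\psi])$ and $\mathrm{WF}(G^{(l+1)}[\psi])$. In each contributing term the covector that $E_{\Sigma}$ hands to the distinguished slot of $F^{(k+1)}[\psi]$ has the \emph{opposite} sign to the one it hands to the distinguished slot of $G^{(l+1)}[\psi]$, by the opposite-sign structure of $\mathrm{WF}(E_{\Sigma})$ noted above. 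Since $\mathrm{WF}(F^{(k+1)}[\psi]) \cap (\Xi_+^{k+1} \cup \Xi_-^{k+1}) = \emptyset$ and $\mathrm{WF}(G^{(l+1)}[\psi]) \cap (\Xi_+^{l+1} \cup \Xi_-^{l+1}) = \emptyset$, in each factor at least one of the remaining ``external'' covectors is forced to have a sign incompatible with the total covector on $\Sigma^n$ lying in $\Xi_+^n$; running the same argument with signs reversed rules out $\Xi_-^n$. One must check this bookkeeping on every piece produced by the wavefront-set calculus, including the degenerate pieces in which a zero covector is handed to a contracted slot (there the ``not all $\geq 0$'' consequence of condition~2 again supplies a strictly-signed external covector); this case analysis, rather than any single estimate, is the delicate part.

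Finally, the Jacobi identity: because $E_{\Sigma}$ does not depend on $\psi$, the bracket is of the generic form $\langle K, F^{(1)} \otimes G^{(1)}\rangle$ with $K$ an antisymmetric, field-independent bi-distribution, and the standard computation applies. Expanding $\{\{F,G\}_{\ell}^{\Sigma}, H\}_{\ell}^{\Sigma}$ by the Leibniz rule yields a sum of doubly-contracted terms built from two copies of $E_{\Sigma}$ and the derivatives $F^{(2)}, G^{(1)}, H^{(1)}$ (and permutations thereof); summing over the cyclic permutations of $(F,G,H)$, these cancel in pairs by the antisymmetry of $E_{\Sigma}$. The only point beyond this formal cancellation is that every pairing occurring in the expansion be well defined, which reduces once more to the opposite-sign structure of $\mathrm{WF}(E_{\Sigma})$ against the condition~2 estimates --- the same verification as in the closure step --- so one may invoke the corresponding argument for the full theory, e.g.\ \cite{crawfordLorentzian2dCFT2021}, once this structural input is in place.
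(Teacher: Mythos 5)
Your proposal is essentially correct, and for the two substantive points of the paper's argument --- well-definedness of the bracket (noting that $\Xi^1_+ \cup \Xi^1_- = \dot{T}^{*}\Sigma$ forces $F^{(1)}[\psi]$ to be regular) and closure of $\mathfrak{F}_c(\Sigma)$ under the bracket via H\"ormander's composition theorem, driven by the opposite-sign structure $\mathrm{WF}(E_{\Sigma}) = \{(r,r;\eta,-\eta)\}$ played off against condition~2 --- you take the same route as the paper, differing only in that the paper carries out the sign bookkeeping explicitly for $\Sigma_0 \subset \mathbb{M}^2$ (including the degenerate zero-covector cases you flag as delicate) whereas you leave it as a sketch. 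Where you genuinely diverge is the Jacobi identity: you prove it directly, expanding the cyclic sum and cancelling the second-derivative terms pairwise using the antisymmetry of the field-independent kernel $E_{\Sigma}$ (together, implicitly, with the symmetry of second Bastiani derivatives), with well-definedness of the double contractions reduced to the same wavefront estimate as closure. The paper instead defers Jacobi entirely, deriving it from the injective Poisson homomorphism $\mathfrak{P}_{\ell}(\Sigma,\mathbb{M}^2) \rightarrow \mathfrak{P}(\mathcal{M})$ established later in Proposition~\ref{prop:classical-chiral-homo}. Your route is self-contained and avoids the forward reference; the paper's route avoids repeating the distributional verification and makes the chiral bracket's compatibility with the Peierls bracket do double duty. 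Both are sound; if you keep the direct argument, do spell out that the cancellation uses $F^{(2)}[\psi]$ being a \emph{symmetric} bi-distribution, since antisymmetry of $E_{\Sigma}$ alone does not pair the terms.
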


\begin{proof}
  Note that $\Xi^1_+ \cup \Xi^1_- = \dot{T}^{*} \Sigma$,
  hence $F^{(1)}[\psi]$ is a regular distribution $\forall F \in \mathfrak{F}_c(\Sigma), \psi \in \mathfrak{E}(\Sigma)$.
  This means that, for $F, G \in \mathfrak{F}_c(\Sigma)$,
  the bracket \eqref{eq:chiral-bracket-def} is well-defined.
  As such, it remains to show that $\mathfrak{F}_c(\Sigma)$ is closed under these operations,
  and that $\left\{ \cdot, \cdot \right\}_{\ell}^{\Sigma}$ has all the properties of a Poisson bracket.

  The fact that $\left\{ \cdot, \cdot \right\}_{\ell}^{\Sigma}$ is a skew-symmetric bilinear form
  follows immediately from the definition,
  as does the fact that it is a derivation in each of its arguments.
  Rather than directly proving that the Jacobi identity is satisfied,
  we shall later prove that there is an injective homomorphism
  $\mathfrak{P}_{\ell}(\Sigma, \mathbb{M}^2) \rightarrow \mathfrak{P}(\mathcal{M})$,
  thus the Jacobi identity on $\mathfrak{P}_{\ell}(\Sigma, \mathbb{M}^2)$ follows from
  the same identity on $\mathfrak{P}(\mathcal{M})$.

  Thus all that remains is to show $\mathfrak{F}_c(\Sigma)$ is closed under
  $\left\{ \cdot, \cdot \right\}_{\ell}^{\Sigma}$.
  The argument here proceeds along the same lines as
  the closure proof in~\cite[Appendix B]{crawfordLorentzian2dCFT2021},
  but we shall outline the steps explicitly here.

  It is sufficient to show that,
  $\forall F, G \in \mathfrak{F}_c(\Sigma), \psi \in \mathfrak{E}(\Sigma)$, and $k, m \in \mathbb{N}$,
  \begin{equation}
    \label{eq:classical-testimate}
    \mathrm{WF}\left( \left\langle E_{\Sigma}, F^{(k + 1)}[\psi] \otimes G^{(m + 1)}[\psi] \right\rangle \right)
    \cap
    (\Xi^{(k + m)}_+ \cup \Xi^{(k + m)}_-) = \emptyset,
  \end{equation}
  where $\left\langle \cdot, \cdot \right\rangle$ pairs the first variable of $E_{\Sigma}$ with the first variable of
  $F^{(k + 1)}[\psi]$ and the second variable of $E_{\Sigma}$ with the first variable of $G^{(m + 1)}[\psi]$
  according to~\cite[Theorem~8.2.14]{hormanderAnalysisLinearPartial2015}.

  For simplicity, we will suppress the $\psi$ dependence of $F^{(k + 1)}[\psi]$ and $G^{(m + 1)}[\psi]$
  for the rest of this proof, and we shall also restrict our attention to $\Sigma_0 \subset \mathbb{M}^2$.
  We will also use $(\underline{s}_F; \underline{\xi}_F)$ as short hand for an element of
  $T^{*}\Sigma^{k + 1} \simeq \mathbb{R}^{2(k + 1)}$ \emph{etc}.
  As $F$ and $G$ are Bastiani smooth, $F^{(k + 1)} \otimes G^{(m + 1)}$ is compactly supported,
  we just have to consider the set
  \begin{equation*}
    \begin{split}
    \mathrm{WF}(F^{(k+1)} \otimes G^{(m+1)}) \circ \overline{\mathrm{WF}(E_{\Sigma})} &:= \\
    \big\{
        (\underline{s}_F, \underline{s}_G ; \underline{\xi}_F, \underline{\xi}_G) \in T^{*}\Sigma^{k+m}
        &\,|\,
        \exists (r_1, r_2; \eta_1, \eta_2) \in \overline{\mathrm{WF}(E_{\Sigma})}, \\
        &(r_1, \underline{s}_F, r_2, \underline{s}_G; \eta_1, \underline{\xi}_F, \eta_2, \underline{\xi}_G)
        \in \mathrm{WF}(F^{(k+1)} \otimes G^{(m+1)})
    \big\},
    \end{split}
  \end{equation*}
  where $\overline{\mathrm{WF}(E_{\Sigma})} := \mathrm{WF}(E_{\Sigma}) \cup \underline{0}_{\Sigma^2}$,
  and $\underline{0}_{X}$ denotes the zero section of $T^{*} X$.
  Firstly, if this set avoids $\underline{0}_{\Sigma^{k+m}}$,
  then the distribution in \eqref{eq:classical-testimate} is well-defined,
  in which case $\mathrm{WF}(F^{(k+1)} \otimes G^{(m+1)}) \circ \overline{\mathrm{WF}(E_{\Sigma})}$
  contains its wavefront set.

  We need firstly the estimate~%
  \cite[Theorem~8.2.9]{hormanderAnalysisLinearPartial2015}
  \begin{equation*}
    \mathrm{WF}(F^{(k + 1)} \otimes G^{(m + 1)})
    \subseteq
    \overline{\mathrm{WF}(F^{(k+1)})} \times
    \overline{\mathrm{WF}(G^{(m+1)})} \setminus \underline{0}_{\Sigma^{k + m + 2}}
  \end{equation*}
  and secondly the wavefront set of our commutator function
  \begin{equation*}
    \mathrm{WF}(E_{\Sigma}) =
    \{ (r, r; \eta, -\eta) \in \dot{T}\Sigma^2 \},
  \end{equation*}
  which is readily obtained by inspection of~\eqref{eq:chiral-commutator-minkowski}.
  Suppose
  \begin{equation*}
    (\underline{s}_F, \underline{s}_G; \underline{\xi}_{F}, \underline{\xi}_{G})
        \in
    \Xi^{k+n}_{\pm}
  \end{equation*}
  Unpacking the notation, this means there exists some
  $(r_1, r_2; \eta, -\eta) \in \overline{\mathrm{WF}(E_{\Sigma})}$ such that
  \begin{equation*}
    (r_1, \underline{s}_F;  \eta, \underline{\xi}_F) \in \overline{\mathrm{WF}(F^{(k+1)})}, \quad
    (r_2, \underline{s}_G; -\eta, \underline{\xi}_G) \in \overline{\mathrm{WF}(G^{(m+1)})}
  \end{equation*}
  with at least one of these covectors being non-zero.
  Suppose in particular that
  $
  (\underline{s}_{F}, \underline{s}_{G}; \underline{\xi}_{F}, \underline{\xi}_{G}) \in \Xi^{k+n}_+
  $
  Because $(r_1, \underline{s}_F; \eta, \underline{\xi}_F) \notin \Xi^{k+1}_+ \setminus \underline{0}_{\Sigma^{k+1}}$,
  we see that $\eta \le 0$.
  But then $(r_2, \underline{s}_G; -\eta, \underline{\xi}_G) \notin \mathrm{WF}(G^{(m+1)})$,
  hence $(r_2, \underline{s}_G, -\eta, \underline{\xi}_G) \in \underline{0}_{\Sigma^{m+1}}$.
  This in turn implies that $(r_1, \underline{s}_F; \eta, \underline{\xi}_F)$
  cannot belong to $\mathrm{WF}(F^{(k+1)})$, hence
  $
    (r_1, \underline{s}_F, r_2, \underline{s}_G; \eta_1, \underline{\xi}_F, \eta_2, \underline{\xi}_G)
    \in \underline{0}_{\Sigma^{k+m+2}},
  $
  which is disjoint from
  $ \mathrm{WF}(F^{(k + 1)} \otimes G^{(m + 1)}) \circ \overline{\mathrm{WF}(E_{\Sigma})}$.
  As $\underline{0}_{\Sigma^{k+m}} \subset \Xi^{k+m}_+$, this automatically tells us that
  Hörmander's criterion is satisfied, hence the distribution in~\eqref{eq:classical-testimate}
  is well-defined.
  Similar reasoning to the above shows that $\Xi_-^{k+m}$ is also disjoint from
  $ \mathrm{WF}(F^{(k + 1)} \otimes G^{(m + 1)}) \circ \overline{\mathrm{WF}(E_{\Sigma})}$,
  hence~\eqref{eq:classical-testimate} holds.
\end{proof}

\begin{remark}
    The equation \eqref{eq:chiral-bracket-def}
    can be expressed without coordinates as

    \begin{equation}
      \label{eq:geometric-chiral-bracket}
        \left\langle
            (\partial_\Sigma \otimes \partial_\Sigma)E,
            f \otimes g
        \right\rangle
            =
        - \frac{1}{2} \int_\Sigma
            f (*d_\Sigma g) \, \mathrm{d}V_\Sigma.
    \end{equation}

    Hence we may instead express the commutator function for the chiral bracket as
    $\frac{1}{2} * d_\Sigma$.
    This is consistent with \eqref{eq:chiral-commutator-function} which,
    due to the non-uniqueness of $\rho: \Sigma \overset{\sim}{\rightarrow} \Sigma_0$,
    implies that $E_{\Sigma}$ must be invariant under $\mathrm{Diff}_+(\Sigma)$.
\end{remark}

Finally, we examine how the algebras we have defined behave under conformally admissible embeddings.

\begin{proposition}
  \label{prop:cauchy-algebra-natural}
  Let $(\rho, \chi): \mathcal{M} \rightarrow \widetilde{\mathcal{M}}$ be a $\mathsf{CCauchy}$ morphism.
  Define
  \begin{align}
    \mathfrak{P}_{\ell} (\rho, \chi):
    \mathfrak{P}_{\ell}(\Sigma, \mathcal{M}) &\rightarrow
    \mathfrak{P}_{\ell}(\widetilde{\Sigma}, \widetilde{\mathcal{M}}) \nonumber \\
    F &\mapsto F \circ \rho^{*}_{(1)}.
  \end{align}
  Then $\mathfrak{P}_{\ell}$ defines a functor $\mathsf{CCauchy} \rightarrow \mathsf{Poi}$.
\end{proposition}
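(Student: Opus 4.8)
The plan is to check three things: that $\mathfrak{P}_{\ell}(\rho,\chi)$ really takes values in $\mathfrak{F}_c(\widetilde{\Sigma})$, that it is a homomorphism of Poisson algebras, and that $\mathfrak{P}_{\ell}$ respects identities and composition. The Poisson-algebra structure on the codomains is already supplied by \cref{prop:chiral-poisson}, so only these points remain.

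First I would establish well-definedness. Since $\rho^{*}_{(1)}\colon\mathfrak{E}(\widetilde{\Sigma})\to\mathfrak{E}(\Sigma)$ is linear and continuous, $F\circ\rho^{*}_{(1)}$ is Bastiani smooth whenever $F$ is, and the chain rule identifies its functional derivatives: $(F\circ\rho^{*}_{(1)})^{(n)}[\psi]$ is obtained from the compactly supported distribution $F^{(n)}[\rho^{*}_{(1)}\psi]$ on $\Sigma^n$ by multiplying with the smooth factor $\omega^{\otimes n}$ and pushing it forward along the open embedding $\rho^{\times n}\colon\Sigma^n\hookrightarrow\widetilde{\Sigma}^n$ (an embedding of $1$-manifolds is a diffeomorphism onto an open subset). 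Hence its wavefront set is the image of $\mathrm{WF}(F^{(n)}[\rho^{*}_{(1)}\psi])$ under the cotangent lift of $\rho^{\times n}$, and because $\rho$ is orientation-preserving this lift preserves the sign of each covector and so maps the complement of $\Xi^{n}_{\pm}$ into the complement of $\Xi^{n}_{\pm}$. Thus $F\circ\rho^{*}_{(1)}\in\mathfrak{F}_c(\widetilde{\Sigma})$.

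Next I would verify that $\mathfrak{P}_{\ell}(\rho,\chi)$ preserves the Poisson structure. Linearity, multiplicativity and unitality are immediate from the formula $F\mapsto F\circ\rho^{*}_{(1)}$, so only the bracket \eqref{eq:chiral-bracket-def} needs work. The crucial input is the identity
\begin{equation*}
  (\rho^{*}_{(1)}\otimes\rho^{*}_{(1)})\,E_{\widetilde{\Sigma}}=E_{\Sigma},
\end{equation*}
proved by repeating the computation behind \eqref{eq:chiral-commutator-function}: it uses only $\rho^{*}_{(1)}\partial_{\widetilde{\Sigma}}=\partial_{\Sigma}\chi^{*}$ from \cref{prop:config-covariance}, the definition $E_{\widetilde{\Sigma}}=(\partial_{\widetilde{\Sigma}}\otimes\partial_{\widetilde{\Sigma}})E_{\widetilde{\mathcal{M}}}$, and conformal covariance $(\chi^{*}\otimes\chi^{*})E_{\widetilde{\mathcal{M}}}=E_{\mathcal{M}}$ of the causal propagator (\cite[Proposition 4.3]{crawfordLorentzian2dCFT2021}), none of which requires $\rho$ to be invertible, so it holds for an arbitrary $\mathsf{CCauchy}$ morphism. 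Inserting the chain-rule description of $(F\circ\rho^{*}_{(1)})^{(1)}$ from the previous paragraph into \eqref{eq:chiral-bracket-def} and transferring the push-forwards across the pairing onto $E_{\widetilde{\Sigma}}$---legitimate because $\rho(\Sigma)$ is open, so pulling back $E_{\widetilde{\Sigma}}$ introduces no conormal directions---then yields
\begin{align*}
  \{F\circ\rho^{*}_{(1)},G\circ\rho^{*}_{(1)}\}^{\widetilde{\Sigma}}_{\ell}[\psi]
  &=\big\langle (\rho^{*}_{(1)}\otimes\rho^{*}_{(1)})E_{\widetilde{\Sigma}},\,F^{(1)}[\rho^{*}_{(1)}\psi]\otimes G^{(1)}[\rho^{*}_{(1)}\psi]\big\rangle \\
  &=\big(\{F,G\}^{\Sigma}_{\ell}\circ\rho^{*}_{(1)}\big)[\psi].
\end{align*}

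Finally I would confirm functoriality: the identity morphism has conformal factor $1$, so $\mathrm{id}^{*}_{(1)}=\mathrm{id}$ and $\mathfrak{P}_{\ell}$ sends it to the identity, while for a composable pair the conformal factors satisfy $\omega_{21}=\omega_1\cdot\rho_1^{*}\omega_2$, so $(\rho_2\circ\rho_1)^{*}_{(1)}=(\rho_1)^{*}_{(1)}\circ(\rho_2)^{*}_{(1)}$ and hence $\mathfrak{P}_{\ell}\big((\rho_2,\chi_2)\circ(\rho_1,\chi_1)\big)=\mathfrak{P}_{\ell}(\rho_2,\chi_2)\circ\mathfrak{P}_{\ell}(\rho_1,\chi_1)$. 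I expect the only genuinely delicate step to be the wavefront-set bookkeeping in the first part, where it matters both that $\rho$ is an \emph{open} embedding (so that dualising $\rho^{*}_{(1)}$ produces a push-forward along an open embedding rather than a restriction to a submanifold, which could generate conormal covectors) and that it is orientation-preserving (so that the cones $\Xi^{n}_{\pm}$ are stable); everything else is routine once \cref{prop:config-covariance} and \eqref{eq:chiral-commutator-function} are in hand.
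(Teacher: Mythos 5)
Your proposal is correct and follows essentially the same route as the paper's proof: well-definedness via preservation of the cones $\Xi^n_{\pm}$ under the oriented map $\rho$, and the homomorphism property via the chain rule for $(F\circ\rho^{*}_{(1)})^{(1)}$ together with the key identity $(\rho^{*}_{(1)}\otimes\rho^{*}_{(1)})E_{\widetilde{\Sigma}}=E_{\Sigma}$ deduced from \cref{prop:config-covariance} and conformal covariance of the causal propagator. You are somewhat more careful than the paper in two minor respects---treating $\rho$ as an open embedding rather than a diffeomorphism, and explicitly checking identities and composition---but these are routine additions, not a different argument.
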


\begin{proof}
  As $\rho$ is an oriented diffeomorphism,
  one can fairly quickly convince themselves that the sets $\Xi^n_{\pm}$ are preserved by the induced maps
  $\rho_{*}: T^{*} \Sigma^n \rightarrow T^{*} \widetilde{\Sigma}^n$,
  thus $\mathfrak{F}_c(\Sigma) \rightarrow \mathfrak{F}_c(\widetilde{\Sigma})$ under this map.

  To show this map is a Poisson algebra homomorphism, consider,
  for $F, G \in \mathfrak{F}_c(\Sigma)$, $\widetilde{\psi} \in \mathfrak{E}(\widetilde{\Sigma})$
  \begin{align}
    \label{eq:p-ell-naturality-1}
    \left\{ \mathfrak{P}_{\ell} \chi F, \mathfrak{P}_{\ell} \chi G \right\}_{\ell}^{\widetilde{\Sigma}} [\widetilde{\psi}]
    =
    \left\langle
        E_{\widetilde{\Sigma}},
        \left( \mathfrak{P}_{\ell} \chi F \right)^{(1)}[\widetilde{\psi}] \otimes
        \left( \mathfrak{P}_{\ell} \chi G \right)^{(1)}[\widetilde{\psi}]
    \right\rangle.
  \end{align}
  A quick calculation shows that
  $
    \left\langle
      \left( \mathfrak{P}_{\ell} \chi F \right)^{(1)}[\widetilde{\psi}],
      f
    \right\rangle
    =
    \left\langle
      F^{(1)} [\rho^{*}_{(1)} \widetilde{\psi}],
      \rho^{*}_{(1)} f
    \right\rangle
  $.
  Then, recalling \Cref{prop:config-covariance},
  we have $\rho^{*}_{(1)} \partial_{\widetilde{\Sigma}} = \partial_{\Sigma} \chi^{*}$,
  from which we can deduce that
  \begin{align}
    \label{eq:cauchy-commutator-pullback}
    ( \rho^{*}_{(1)} \otimes \rho^{*}_{(1)} ) E_{\widetilde{\Sigma}} = (\partial_{\Sigma} \otimes \partial_{\Sigma}) E_{\mathcal{M}} = E_{\Sigma},
  \end{align}
  thus \eqref{eq:p-ell-naturality-1} becomes simply
  $\left\{ F, G \right\}_{\ell}^{\widetilde{\Sigma}}[\rho^{*}_{(1)} \widetilde{\psi}]$ as desired.
\end{proof}

\subsection{Comparison with Peierls Algebra}

As we have already seen,
direct comparison between chiral observables and observables of the full spacetime algebra
is complicated by the fact that $\partial_{\Sigma}^{*}: \mathfrak{D}'(\Sigma) \rightarrow \mathfrak{D}'(\mathcal{M})$
fails to send regular distributions on $\Sigma$ to microcausal distributions on $\mathcal{M}$.
This is due to the fact that the restriction $i_{\Sigma}^{*}: \Omega^1(\mathcal{M}) \rightarrow \Omega^1(\Sigma)$
in \eqref{eq:del-sigma-derivative} is `too sharp'.

In order to make comparisons, we therefore wish to find a more regular map,
which coincides with $\partial_{\Sigma}$ on-shell.
We begin again with the example of the $t = 0$ Cauchy surface, $\Sigma_0 \subset \mathbb{M}^2$.
In null coordinates, and for $\epsilon > 0$
we define the family of maps
\begin{align*}
  \partial_{\Sigma_0, \epsilon}: \mathfrak{E}(\mathbb{M}^2) &\rightarrow \mathfrak{E}(\Sigma_0)\\
  \phi &\mapsto \int_{\mathbb{R}} (\partial_u\phi)(-s, v) \delta_{\epsilon}(\tfrac{-s+v}{2}) \, \mathrm{d}v
\end{align*}
where the family $\left\{ \delta_{\epsilon} \right\}_{\epsilon > 0}$ constitute a \emph{nascent delta},
i.e.\ each function is smooth, integrates to $1$, and satisfies $\mathrm{supp} \, \delta_{\epsilon} = [-\epsilon, \epsilon]$.
In the limit as $\epsilon \rightarrow 0$, these maps weakly converge to $\partial_{\Sigma_0}$
in the sense that, for all $f \in \mathfrak{D}(\Sigma)$,
$\left\langle \partial_{\Sigma_0, \epsilon}\phi, f \right\rangle_{\Sigma_0} \rightarrow \left\langle \partial_{\Sigma_0}\phi, f \right\rangle_{\Sigma_0}$.
Moreover, if $\phi \in \mathrm{Ker}\,P$, then $\partial_{\Sigma_0, \epsilon} \phi = \partial_{\Sigma} \phi$ for every $\epsilon > 0$.
Define the natural transformation $\Pi_{\mathrm{on}}: \mathfrak{P} \Rightarrow \mathfrak{P}_{\mathrm{on}}$ by
\begin{align}
  (\Pi_{\mathrm{on}})_{\mathcal{M}}:
    \mathfrak{F}_{\mu c}(\mathcal{M}) &\rightarrow \mathfrak{F}_{\mu c}(\mathcal{M}) / \mathfrak{I}_S(\mathcal{M})\\
    \mathcal{F} &\mapsto [\mathcal{F}],
\end{align}
we obtain $(\Pi_{\mathrm{on}})_{\mathbb{M}^2} \circ \partial_{\Sigma_0, \epsilon}^{*} = (\Pi_{on})_{\mathbb{M}^2} \circ \partial_{\Sigma_0, \epsilon'}^{*}$,
$\forall \epsilon, \epsilon' > 0$.

\begin{proposition}
  \label{prop:classical-chiral-homo}
  For every $\epsilon>0$, the map $\partial_{\Sigma_0, \epsilon}^{*}: \mathfrak{F}(\Sigma_0) \rightarrow \mathfrak{F}(\mathbb{M}^2)$
  defined such that $\partial_{\Sigma_0, \epsilon}^{*} F[\phi] = F[\partial_{\Sigma_0, \epsilon} \phi]$
  yields an injective Poisson algebra homomorphism
  $\mathfrak{P}_{\ell}(\Sigma_0, \mathbb{M}^2) \rightarrow \mathfrak{P}(\mathbb{M}^2)$.
\end{proposition}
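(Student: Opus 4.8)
The plan is to check three things in turn: that $\partial_{\Sigma_0,\epsilon}^{*}$ maps the chiral functionals $\mathfrak{F}_c(\Sigma_0)$ into the microcausal functionals $\mathfrak{F}_{\mu c}(\mathbb{M}^2)$, that the resulting map intertwines the chiral bracket with the Peierls bracket (and the pointwise products, which is immediate), and that it is injective. The conceptual point throughout is that convolving with the nascent delta $\delta_\epsilon$ mollifies, in the $v$-direction, the ``too sharp'' restriction to $\Sigma_0$ that obstructed $\partial_\Sigma^{*}$, removing the offending (null and timelike) covector directions, while on-shell nothing changes.

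First I would address well-definedness. Since $\partial_{\Sigma_0,\epsilon}\colon\mathfrak{E}(\mathbb{M}^2)\to\mathfrak{E}(\Sigma_0)$ is linear and continuous, $\partial_{\Sigma_0,\epsilon}^{*}F=F\circ\partial_{\Sigma_0,\epsilon}$ is Bastiani smooth with $(\partial_{\Sigma_0,\epsilon}^{*}F)^{(n)}[\phi]=(\partial_{\Sigma_0,\epsilon}^{T})^{\otimes n}\big(F^{(n)}[\partial_{\Sigma_0,\epsilon}\phi]\big)$, where $\partial_{\Sigma_0,\epsilon}^{T}$ is the transpose; compact support of $F^{(n)}$ together with the fact that the Schwartz kernel of $\partial_{\Sigma_0,\epsilon}$ is supported on $\{u=-s,\ |v-s|\le 2\epsilon\}$ gives compact support of $(\partial_{\Sigma_0,\epsilon}^{*}F)^{(n)}[\phi]$. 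The key estimate is on wavefront sets: writing out the kernel of $\partial_{\Sigma_0,\epsilon}^{T}$ explicitly, it is a finite sum of terms of the shape $(\text{distribution in } u \text{ alone})\times(\text{smooth function of }(u,v))$, the smooth factors being built from $\delta_\epsilon$ and $\delta_\epsilon'$, so that $\mathrm{WF}(\partial_{\Sigma_0,\epsilon}^{T}w)\subseteq\{(u,v;\xi_u,0)\mid (-u;-\xi_u)\in\mathrm{WF}(w)\}$ for every $w\in\mathfrak{E}'(\Sigma_0)$ --- the $dv$-component always vanishes because $\delta_\epsilon$ is smooth. Iterating over the $n$ tensor slots and applying H\"ormander's composition rule (whose zero-section terms cause no trouble, since the $\Sigma_0$-covector of the kernel is never zero), every covector in $\mathrm{WF}\big((\partial_{\Sigma_0,\epsilon}^{*}F)^{(n)}[\phi]\big)$ has all of its $dv$-components vanishing. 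But a covector of $(\mathbb{M}^2)^n$ with all $dv$-components zero lies in $\overline{V}^{\,n}_{+}$ (resp.\ $\overline{V}^{\,n}_{-}$) only if all of its $du$-components are $\ge 0$ (resp.\ $\le 0$), and such a covector pulls back, under the correspondence above, to one with all $\Sigma_0$-components of a single sign, i.e.\ to an element of $\Xi^n_{-}$ (resp.\ $\Xi^n_{+}$); since $\mathrm{WF}(F^{(n)}[\partial_{\Sigma_0,\epsilon}\phi])$ is disjoint from $\Xi^n_{+}\cup\Xi^n_{-}$ by the definition of $\mathfrak{F}_c(\Sigma_0)$, this forces $\partial_{\Sigma_0,\epsilon}^{*}F\in\mathfrak{F}_{\mu c}(\mathbb{M}^2)$.

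For the bracket, the chain rule gives $(\partial_{\Sigma_0,\epsilon}^{*}F)^{(1)}[\phi]=\partial_{\Sigma_0,\epsilon}^{T}F^{(1)}[\partial_{\Sigma_0,\epsilon}\phi]$, so that
\begin{equation*}
  \big\{\partial_{\Sigma_0,\epsilon}^{*}F,\,\partial_{\Sigma_0,\epsilon}^{*}G\big\}_{\mathbb{M}^2}[\phi]
  =\big\langle (\partial_{\Sigma_0,\epsilon}\otimes\partial_{\Sigma_0,\epsilon})E,\ F^{(1)}[\partial_{\Sigma_0,\epsilon}\phi]\otimes G^{(1)}[\partial_{\Sigma_0,\epsilon}\phi]\big\rangle ,
\end{equation*}
and it remains to prove the identity $(\partial_{\Sigma_0,\epsilon}\otimes\partial_{\Sigma_0,\epsilon})E=E_{\Sigma_0}$, after which the right-hand side is precisely $\{F,G\}_\ell^{\Sigma_0}[\partial_{\Sigma_0,\epsilon}\phi]=\partial_{\Sigma_0,\epsilon}^{*}\{F,G\}_\ell^{\Sigma_0}[\phi]$. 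For this I would show that $\partial_{\Sigma_0,\epsilon}-\partial_{\Sigma_0}$ factors through the wave operator: a short computation, using $\int\delta_\epsilon=1$ and the elementary identity $(\partial_u\phi)(-s,v)-(\partial_u\phi)(-s,s)=\int_s^v(P\phi)(-s,v')\,\mathrm{d}v'$, exhibits $\partial_{\Sigma_0,\epsilon}-\partial_{\Sigma_0}=R_\epsilon\circ P$ for a continuous linear $R_\epsilon$. Since $E=E^{R}-E^{A}$ with $PE^{R/A}=\delta$, we have $PE=0$ in each argument, so every term of $(\partial_{\Sigma_0,\epsilon}\otimes\partial_{\Sigma_0,\epsilon})E-(\partial_{\Sigma_0}\otimes\partial_{\Sigma_0})E$ carrying a factor $R_\epsilon P$ in one of the two slots vanishes, and the surviving term is $(\partial_{\Sigma_0}\otimes\partial_{\Sigma_0})E=E_{\Sigma_0}$. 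Note that this also discharges the Jacobi identity that was deferred in \cref{prop:chiral-poisson}, at least for $\mathfrak{P}_\ell(\Sigma_0,\mathbb{M}^2)$, since it now sits inside $\mathfrak{P}(\mathbb{M}^2)$ as a sub-Poisson-algebra.

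Finally, injectivity follows once we observe that $\partial_{\Sigma_0,\epsilon}$ is surjective onto $\mathfrak{E}(\Sigma_0)$: given $\psi$, the preceding proposition produces a solution $\phi\in\mathrm{Ker}\,P$ with $\partial_{\Sigma_0}\phi=\psi$ (explicitly $\phi(u,v)=\int_0^u\psi(-u')\,\mathrm{d}u'$), and $\partial_{\Sigma_0,\epsilon}$ agrees with $\partial_{\Sigma_0}$ on $\mathrm{Ker}\,P$, so $\partial_{\Sigma_0,\epsilon}\phi=\psi$. Hence $\partial_{\Sigma_0,\epsilon}^{*}F=0$ forces $F[\psi]=F[\partial_{\Sigma_0,\epsilon}\phi]=0$ for all $\psi\in\mathfrak{E}(\Sigma_0)$, i.e.\ $F=0$. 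I expect the wavefront estimate of the first step --- in particular the careful bookkeeping of the sign correspondence between covectors on $\Sigma_0$ and on $\mathbb{M}^2$, and of the zero-section contributions in H\"ormander's composition rule --- to be the only genuinely delicate part; the other two steps are the chain rule and the factorisation through $P$.
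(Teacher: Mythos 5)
Your proposal is correct and follows essentially the same route as the paper's proof: a Schwartz-kernel/H\"ormander composition argument showing the image covectors have vanishing $dv$-components and hence cannot be uniformly causal (using the $\Xi^n_\pm$ condition on $F^{(n)}$), the identity $(\partial_{\Sigma_0,\epsilon}\otimes\partial_{\Sigma_0,\epsilon})E=E_{\Sigma_0}$ via the fact that $E$ is a bisolution of $P$, and injectivity from surjectivity of $\partial_{\Sigma_0,\epsilon}$. Your explicit factorisation $\partial_{\Sigma_0,\epsilon}-\partial_{\Sigma_0}=R_\epsilon\circ P$ and the on-shell surjectivity detail merely spell out steps the paper leaves as remarks.
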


\begin{proof}
  Firstly, we must show that the image of $\mathfrak{F}_c(\Sigma_0)$ under $\partial_{\Sigma_0, \epsilon}^{*}$
  lies within $\mathfrak{F}_{\mu c}(\mathbb{M}^2)$.
  If we restrict our attention to the linear, regular observables on $\Sigma$,
  we initially have a map $\partial_{\Sigma_0, \epsilon}^{*}: \mathfrak{D}(\Sigma_0) \rightarrow \mathfrak{D}'(\mathbb{M}^2)$.
  This map is linear and continuous, hence it has an associated Schwartz kernel
  $K \in \mathfrak{D}'(\Sigma_0 \times \mathbb{M}^2)$.
  We can then use this kernel to compute the wavefront sets of
  functionals in the image of $\partial_{\Sigma_0, \epsilon}^{*}$,
  as $\forall \phi \in \mathfrak{E}(\mathbb{M}^2), h \in \mathfrak{D}((\mathbb{M}^2)^n)$
  \begin{align}
    \left\langle (\partial_{\Sigma_0, \epsilon}^{*}F)^{(n)}[\phi], h \right\rangle
    &=
    \left\langle K^{\otimes n}, h \otimes F^{(n)}[\partial_{\Sigma_0, \epsilon} \phi] \right\rangle
  \end{align}
  where the first variable of each copy of $K$ is paired with
  a variable of $h$ and so the rest with $F^{(n)}[\partial_{\Sigma_0, \epsilon} \phi]$.
  We can then use~\cite[Theorem~8.2.14]{hormanderAnalysisLinearPartial2015} once again
  to estimate $\mathrm{WF}((\partial_{\Sigma_0, \epsilon}^{*}F)^{(n)}[\phi])$ given estimates for
  $\mathrm{WF}(K^{\otimes n})$ and $\mathrm{WF}(F^{(n)}[\partial_{\Sigma_0, \epsilon} \phi])$.

  By inspection, and using the appropriate coordinates, the integral kernel $K$ may be written as
  \begin{equation}
      K(s, u, v) = -\partial_u \left( \delta(u + s) \delta_{\epsilon}(\tfrac{u+v}{2}) \right),
  \end{equation}
  from which we may deduce that
  \begin{equation}
      \mathrm{WF}(K) = \left\{ (-u, u, v; \xi, \xi, 0) \in \dot{T}^{*} (\Sigma_0 \times \mathbb{M}^2) \right\}.
  \end{equation}
  After some work, the corresponding estimate is then
  \begin{equation}
    \begin{split}
      \mathrm{WF}((\partial_{\Sigma_0, \epsilon}^{*}F)^{(n)}[\phi])
        \subseteq
      \big\{
        (u_1, v_1, \ldots, &u_n, v_n; \xi_1, 0, \ldots, \xi_n, 0) \in \dot{T}^{*} (\mathbb{M}^2)^n
        \,|\,\\
        &(-u_1, \ldots, -u_n; \xi_1, \ldots, \xi_n) \in \mathrm{WF}(F^{(n)}[\partial_{\Sigma_0, \epsilon} \phi])
      \big\}.
    \end{split}
  \end{equation}
  The wavefront set condition on $F^{(n)}[\partial_{\Sigma_0, \epsilon} \phi]$ then precludes the option that
  $\xi_i$ all have the same sign, which is precisely what we need to conclude that
  $\mathrm{WF}((\partial_{\Sigma_0, \epsilon}^{*}F)^{(n)}[\phi]) \cap (V^n_+ \cup V^n_-) = \emptyset$,
  i.e.\ $\partial_{\Sigma_0, \epsilon}^{*}: \mathfrak{F}_c(\Sigma_0) \rightarrow \mathfrak{F}_{\mu c}(\mathbb{M}^2)$.

  Having established this, the more straightforward task is to show
  that this map preserves Poisson brackets.
  Spelling it out, we need to demonstrate that
  $\forall F, G \in \mathfrak{F}_c(\Sigma_0), \phi \in \mathfrak{E}(\mathbb{M}^2)$
  \begin{equation}
    \left\langle E, (\partial_{\Sigma_0, \epsilon}^{*}F)^{(1)}[\phi] \otimes (\partial_{\Sigma_0, \epsilon}^{*}G)^{(1)}[\phi] \right\rangle
    =
    \left\langle E_{\Sigma_0}, F^{(1)}[\partial_{\Sigma_0, \epsilon} \phi] \otimes G^{(1)}[\partial_{\Sigma_0, \epsilon} \phi] \right\rangle.
  \end{equation}
  This simply amounts to the statement that
  $(\partial_{\Sigma_0, \epsilon} \otimes \partial_{\Sigma_0, \epsilon}) E = (\partial_{\Sigma_0} \otimes \partial_{\Sigma_0}) E$.
  This is easily verified by looking at the precise form~\eqref{eq:pauli-jordan-minkowski} of $E$.
  However, note also that, as a map, the image of $E$ is in the kernel of the wave-operator $P$,
  hence $\partial_{\Sigma_0, \epsilon} E = \partial_{\Sigma_0} E$.
  By skew-symmetry, we can also verify that acting on the second argument of $E$
  with $\partial_{\Sigma_0, \epsilon}$ behaves the same way.

  Lastly, the fact that this map is injective is a direct consequence of the fact that
  $\partial_{\Sigma_0, \epsilon}$ is surjective.
\end{proof}

Finally, we use our embedding theorems to create analogous embeddings
for the chiral algebra of an arbitrary element $(\Sigma, \mathcal{M}) \in \mathsf{CCauchy}$.
Suppose $(\rho, \chi): (\Sigma, \mathcal{M}) \rightarrow (\Sigma_0, \mathbb{M}^2)$ is a $\mathsf{CCauchy}$ morphism.
Starting from the equation $\partial_{\Sigma} \chi^{*} = \rho^{*}_{(1)} \partial_{\Sigma_0}$,
we might try to define a regularised chiral derivative for $\Sigma$ by
$\partial_{\Sigma, \epsilon} = \rho^{*}_{(1)} \partial_{\Sigma_0} \chi_{*}$,
where we make sense of the pushforward by only asking for
$\chi_{*} \phi (x) = \phi(\chi^{-1}(x))$ to hold in some neighbourhood of $\Sigma_0$.
As $\partial_{\Sigma_0} \phi_0$ only depends on $d \phi_0|_{\Sigma_0}$,
this map is well defined.
However, by smoothing out $\partial_{\Sigma_0}$ to $\partial_{\Sigma_0, \epsilon}$,
we increased the region it is sensitive to.
In order to make sense of $\partial_{\Sigma_0, \epsilon} \chi_{*}$,
we need to make sure that the support of $\partial_{\Sigma_0, \epsilon}$ is contained within the image of $\chi$.
For the particular way we have constructed $\partial_{\Sigma_0, \epsilon}$,
this means that $\mathrm{Img} \chi$ must contain the $t=\pm \epsilon$ Cauchy surfaces $\Sigma_{\pm \epsilon}$.
This is because $\partial_{\Sigma_0, \epsilon}$ defines a map $\mathfrak{E}(U) \rightarrow \mathfrak{E}(\Sigma_0)$
for any open, causally-convex neighbourhood $U$ of $\mathcal{J}^-(\Sigma_{\epsilon}) \cap \mathcal{J}^+(\Sigma_{-\epsilon})$.
Hence, for any embedding $\chi: \mathcal{M} \rightarrow \mathbb{M}^2$ such that $\chi(\mathcal{M}) \supset U$,
we can define $\partial_{\Sigma_0, \epsilon} \circ \chi_* := \partial_{\Sigma_0, \epsilon} (\chi^{-1})^{*}$,
where $\chi^{-1}: \chi(\mathcal{M}) \overset{\sim}{\rightarrow} \mathcal{M}$.

This might seem simple, as we can make $\epsilon$ arbitrarily small.
However, if we consider the case where $\mathcal{M} \subset \mathbb{M}^2$
is the space in between the Cauchy surfaces expressed in $(t, x)$ coordinates as
$\Sigma_{\pm} = \{ ( \pm e^{-x^2}, x ) \}_{x \in \mathbb{R}}$,
then clearly there is no $\epsilon>0$ such that $\Sigma_{\pm \epsilon} \subset \mathcal{M}$.

The solution in this case is to find a new conformal embedding $\mathcal{M} \hookrightarrow \mathbb{M}^2$,
which `expands' $\mathcal{M}$ to contain these Cauchy surfaces.
We accomplish this using the following lemma.%

\begin{lemma}
  \label{lem:dilation}
  Let $\mathcal{M} \subset \mathbb{M}^2$ be an open, causally-convex neighbourhood of the $t=0$ Cauchy surface $\Sigma_0$,
  and let $\Sigma_{\pm}$ be a pair of Cauchy surfaces of $\mathcal{M}$ such that $\Sigma_- \prec \Sigma_0 \prec \Sigma_+$.
  Then there exists a $\mathsf{CLoc}$ morphism $\chi: \mathcal{M} \rightarrow \mathbb{M}^2$,
  such that $\chi(\Sigma_0) = \Sigma_0$ and $\chi(\Sigma_-) \prec \Sigma_{-1} \prec \Sigma_1 \prec \chi(\Sigma_+)$.
\end{lemma}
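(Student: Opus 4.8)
The plan is to realise $\chi$ as a product map $\chi(u,v)=\bigl(-\beta(-u),\,\beta(v)\bigr)$ in the null coordinates $(u,v)=(t-x,t+x)$ on $\mathbb{M}^2$, for a well-chosen orientation-preserving diffeomorphism $\beta\colon\mathbb{R}\to\mathbb{R}$. Any map of this shape is a conformal, orientation- and time-orientation-preserving embedding, and since $\alpha(u):=-\beta(-u)$ and $\beta$ are increasing homeomorphisms of $\mathbb{R}$, the product $\chi=\alpha\times\beta$ carries the coordinate rectangles that characterise causal convexity in $\mathbb{M}^2$ to such rectangles, so $\chi(\mathcal{M})$ is again causally convex; as $\beta$ is onto, $\chi(\Sigma_0)=\Sigma_0$. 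Hence $\chi\in\mathsf{CLoc}(\mathcal{M},\mathbb{M}^2)$ fixing $\Sigma_0$ setwise for \emph{any} such $\beta$, and the whole problem reduces to a one-variable estimate. First I would record the geometry: $\mathcal{M}$ is connected (every point is joined to $\Sigma_0$ by a vertical timelike segment lying in $\mathcal{M}$ by causal convexity), both null projections of $\mathcal{M}$ equal $\mathbb{R}$, and each $\Sigma_\pm$, being a connected spacelike Cauchy surface of $\mathcal{M}$, is the graph $v=\phi_\pm(u)$ of a strictly decreasing diffeomorphism $\phi_\pm\colon\mathbb{R}\to\mathbb{R}$ — surjectivity in both variables following by applying the Cauchy property to the inextensible null geodesics $\{u=u_0\}\cap\mathcal{M}$ and $\{v=v_0\}\cap\mathcal{M}$, which are open intervals by causal convexity. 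The hypotheses $\Sigma_-\prec\Sigma_0\prec\Sigma_+$ become $\phi_-(u)<-u<\phi_+(u)$ for all $u$.

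Writing $a(w):=\phi_+(-w)$ and $b(w):=\phi_-(-w)$, so that $b(w)<w<a(w)$ for every $w$, a direct computation shows that the point of $\chi(\Sigma_+)$ with $u'$-coordinate $-\beta(w)$ has $u'+v'=\beta(a(w))-\beta(w)$, and the point of $\chi(\Sigma_-)$ with $u'$-coordinate $-\beta(w)$ has $u'+v'=-\bigl(\beta(w)-\beta(b(w))\bigr)$. Thus it suffices to find an increasing diffeomorphism $\beta\colon\mathbb{R}\to\mathbb{R}$ with $\beta(a(w))-\beta(w)\ge 3$ and $\beta(w)-\beta(b(w))\ge 3$ for all $w$: then every point of $\chi(\Sigma_+)$ satisfies $u'+v'\ge 3>2$ and every point of $\chi(\Sigma_-)$ satisfies $u'+v'\le -3<-2$, which is exactly $\chi(\Sigma_-)\prec\Sigma_{-1}\prec\Sigma_1\prec\chi(\Sigma_+)$.

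I would construct such a $\beta$ as $\beta(w)=\int_0^w F$ for a smooth positive $F$ made large precisely where the gaps $a(w)-w$ and $w-b(w)$ are small. Concretely, set $m_+(t)=\min\{a(w)-w : a^{-1}(t)\le w\le t\}$ and $m_-(t)=\min\{w-b(w) : t\le w\le b^{-1}(t)\}$; both are strictly positive and bounded below by a positive constant on every compact set, so one may choose a smooth positive $F$ with $F\ge 3/m_+$ and $F\ge 3/m_-$ everywhere. For $t\in[w,a(w)]$ the point $w$ lies in the interval defining $m_+(t)$, hence $F(t)\ge 3/(a(w)-w)$ there and $\int_w^{a(w)}F\ge 3$; the estimate for $b$ is identical. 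Finally $\beta$ is a bijection of $\mathbb{R}$: the forward orbit $w_0=0<w_1=a(w_0)<\cdots$ of $0$ under $a$ has no finite limit (it would be fixed by $a$), so $w_n\to+\infty$, and $\int_0^{+\infty}F\ge\sum_n\int_{w_n}^{a(w_n)}F\ge\sum_n 3=+\infty$; the orbit of $b$ gives $\int_{-\infty}^0 F=+\infty$. I expect the genuinely delicate point to be exactly this last balance: $\Sigma_\pm$ may approach $\Sigma_0$ arbitrarily fast near spatial infinity — as in the example $\Sigma_\pm=\{(\pm e^{-x^2},x)\}_{x\in\mathbb{R}}$ mentioned just before the lemma — forcing $F$ to blow up, while $\beta$ must still be a diffeomorphism of all of $\mathbb{R}$; the local-minimum recipe for $F$ together with the orbit argument for surjectivity is what makes both demands compatible.
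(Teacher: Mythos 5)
Your proof is correct, but it takes a genuinely different route from the paper's. The paper parametrises $\Sigma_{\pm}$ as spacelike graphs $t = \pm t_{\pm}(x)$ over the $t=0$ surface, so that the bound $|t'_{\pm}|<1$ is available; it then writes down the closed-form reparametrisation $\rho(x) = 2\int_0^x\bigl(1/t_+ + 1/t_-\bigr)\,\mathrm{d}x'$ and verifies the gap condition $\rho(x+t_{\pm}(x))-\rho(x-t_{\pm}(x))>2$ in two lines via the mean value theorem for integrals, using $t_{\pm}(x+c)<2t_{\pm}(x)$; surjectivity of $\rho$ comes for free since $|t'_{\pm}|<1$ forces $t_{\pm}(x)\le t_{\pm}(0)+|x|$ and hence $\int 1/t_{\pm}=\infty$. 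You instead parametrise $\Sigma_{\pm}$ as monotone graphs over a null coordinate, realise $\chi$ explicitly as the product map $(u,v)\mapsto(-\beta(-u),\beta(v))$, and reduce everything to stretching the intervals $[w,a(w)]$ and $[b(w),w]$ to length $\ge 3$ — which you solve by a bespoke choice of density $F$ built from local minima of the gap functions, plus a separate orbit argument ($w_{n+1}=a(w_n)\to+\infty$ because $a$ has no fixed point) to force $\beta$ to be onto. What your version buys is independence from the Lipschitz bound (you only use that the surfaces are achronal graphs over the null projection, which you derive directly from the Cauchy property) and a completely explicit form for the extension $\chi$ rather than an appeal to~\eqref{eq:cauchy-extension}; what it costs is the extra work in constructing $F$ and in proving surjectivity, both of which the paper's parametrisation makes automatic. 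One shared caveat: like the paper (which writes $t_{\pm}(x)>0$), you assume the strict separation $b(w)<w<a(w)$; if $\Sigma_{\pm}$ were allowed to touch $\Sigma_0$ both constructions would degenerate, so this is not a defect of your argument relative to the paper's.
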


\begin{remark}
  The relation $\prec$ between subsets of a spacetime $\mathcal{M}$
  is defined such that $U \prec V$ if and only if there is no future-directed causal curve in $\mathcal{M}$ from
  $v \in V$ to $u \in U$, in which case we say that $U$ \emph{is no later than} V.
  For Cauchy surfaces $\Sigma \prec \Sigma'$, this necessarily implies that
  every point $\Sigma$ is either in the past, or is spacelike separated from, every point in $\Sigma'$.
\end{remark}

\begin{proof}
  For the purposes of this proof, it is convenient to work in
  $(t, x)$ coordinates rather than our usual null $(u, v)$ system.
  Given this, we express each Cauchy surface as $\Sigma_{\pm} = \{(\pm t_{\pm}(x), x)\}_{x \in \mathbb{R}}$
  for a pair of smooth functions $t_{\pm}$ satisfying
  $t_{\pm}(x) > 0, |t'_{\pm}(x)| < 1 \forall x \in \mathbb{R}$.

  The goal is to find some $\rho \in \mathrm{Diff}_+(\Sigma_0)$ such that its extension to an embedding
  $\chi: \mathcal{M} \rightarrow \mathbb{M}^2$ takes $\Sigma_+$ to the future of $\Sigma_1$ and $\Sigma_-$ to the past of $\Sigma_{-1}$.
  If we consider only $\Sigma_+$ (i.e.\ we look at the special case where $t_+ = t_-$),
  then we need to show that $t > 1$ for every $(t, x) \in \chi(\Sigma_+)$.
  Using~\eqref{eq:cauchy-extension}, and noting that in our case $\pi_{\ell / r}^{\Sigma_0}(t, x) = (0, x \pm t)$,
  we can formulate the equivalent condition for $\rho$ as
  \begin{align}
    \label{eq:dilation-cond}
    \rho(x + t_+(x)) - \rho(x - t_+(x)) > 2.
  \end{align}

  One choice of $\rho$ that satisfies this inequality is
  \begin{align}
    \rho(x) := 2 \int_{x' = 0}^x \frac{\mathrm{d} x'}{t_+(x')}.
  \end{align}
  Note that $\rho$ is well defined as we always have that $t_+(x) > 0$,
  and from $\rho'(x) = \frac{2}{t_{+}(x)}$, we see that $\rho \in \mathrm{Diff}_+(\Sigma_0)$.
  To see that $\rho$ satisfies the inequality~\eqref{eq:dilation-cond},
  we substitute our choice into the above expression to find
  \begin{align}
    \rho(x + t_+(x)) - \rho(x - t_+(x))
    = 2 \int_{x - t_+(x)}^{x + t_+(x)} \frac{\mathrm{d} x'}{t_+(x')}
    = \frac{4 t_+(x)}{t_+(x + c)},
  \end{align}
  for some $c \in (-t_+(x), t_+(x))$, using the intermediate value theorem.
  Finally, using $|t'_+(x)| < 1$, we see that
  $t_+(x + c) \in (t_+(x) - c, t_+(x) + c) \subset (0, 2 t_+(x))$,
  hence $4 t_+(x) / t_+(x + c) > 2$ as required.

  In the case where $t_- \neq t_+$, one can use similar arguments to show that
  \begin{align}
    \rho(x) = 2 \int_{x' = 0}^x \left( \frac{1}{t_{+}(x)} + \frac{1}{t_{-}(x)} \right) \mathrm{d} x'
  \end{align}
  is an element of $\mathrm{Diff}_{+}(\mathbb{R})$ which satisfies both the necessary inequalities:
  $| \rho(x + t_{\pm}(x)) - \rho(x - t_{\pm}(x)) | > 2$.
\end{proof}

With that, we are finally able to establish how $\mathfrak{P}_{\ell}(\Sigma, \mathcal{M})$ embeds into
$\mathfrak{P}(\mathcal{M})$ in the general case.

\begin{theorem}
  \label{thm:classical-natural}
  For every $(\Sigma, \mathcal{M}) \in \mathsf{CCauchy}$,
  there exists an injective Poisson algebra homomorphism
  $\partial_{\Sigma, \epsilon}^{*}: \mathfrak{P}_{\ell}(\Sigma, \mathcal{M}) \rightarrow \mathfrak{P}(\mathcal{M})$.
  Moreover, by selecting such a map for each object in $\mathsf{CCauchy}$,
  we obtain a natural transformation
  \begin{align}
    \partial^{*}_{\epsilon}: \mathfrak{P}_{\ell} \Rightarrow \mathfrak{P}_{\mathrm{on}} \circ \Pi_2.
  \end{align}
\end{theorem}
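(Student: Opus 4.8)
The plan is to reduce the theorem to \cref{prop:classical-chiral-homo} by transporting the regularised chiral derivative along a conformal embedding into one of the skeletal spacetimes, having first used \cref{lem:dilation} to make the image of that embedding large enough for the regularisation to be defined.

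\emph{Construction of the map.} Given $(\Sigma,\mathcal{M})$, I choose $\mathcal{M}_0 \in \{\mathbb{M}^2,\mathscr{E}\}$ according to whether $\Sigma$ is non-compact or compact, together with an oriented diffeomorphism $\Sigma \overset{\sim}{\rightarrow} \Sigma_0$ onto its $t=0$ Cauchy surface; by \cref{thm:cauchy-extension-skeletal} this extends to a $\mathsf{CLoc}$ morphism $\chi_0 : \mathcal{M} \to \mathcal{M}_0$. The image $\chi_0(\mathcal{M})$ is an open, causally convex neighbourhood of $\Sigma_0$, so pushing $\Sigma_0$ slightly to the future and to the past within it gives Cauchy surfaces $\Sigma_- \prec \Sigma_0 \prec \Sigma_+$ of $\chi_0(\mathcal{M})$, and \cref{lem:dilation} supplies a further $\mathsf{CLoc}$ morphism $\chi_{\mathrm{dil}} : \chi_0(\mathcal{M}) \to \mathcal{M}_0$ fixing $\Sigma_0$ setwise and sending $\Sigma_\pm$ beyond $\Sigma_{\pm1}$. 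With $\chi := \chi_{\mathrm{dil}}\circ\chi_0$ and $\rho := \chi|_\Sigma$ (still an oriented diffeomorphism $\Sigma \to \Sigma_0$), one checks that $\chi(\mathcal{M})$ is an open, causally convex set containing $\mathcal{J}^-(\Sigma_1)\cap\mathcal{J}^+(\Sigma_{-1})$, so $\partial_{\Sigma_0,1}$ defines a map $\mathfrak{E}(\chi(\mathcal{M})) \to \mathfrak{E}(\Sigma_0)$; fixing $\epsilon=1$, the discussion preceding the theorem then defines $\partial_{\Sigma,1} := \rho^*_{(1)}\,\partial_{\Sigma_0,1}\,\chi_* : \mathfrak{E}(\mathcal{M}) \to \mathfrak{E}(\Sigma)$, where $\chi_* = (\chi^{-1})^*$, and I set $\partial^*_{\Sigma,1}F[\phi] := F[\partial_{\Sigma,1}\phi]$. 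For $\mathcal{M}_0 = \mathscr{E}$ one uses the evident cylindrical analogues of \cref{prop:classical-chiral-homo} and \cref{lem:dilation}, both of which descend from the planar statements along the covering $p_0 : \mathbb{M}^2 \to \mathscr{E}$.

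\emph{The map is an injective Poisson homomorphism $\mathfrak{P}_\ell(\Sigma,\mathcal{M}) \to \mathfrak{P}(\mathcal{M})$.} Here I follow the proof of \cref{prop:classical-chiral-homo}: restricted to linear regular observables, $\partial^*_{\Sigma,1}$ has a Schwartz kernel obtained from the Minkowski kernel by the diffeomorphisms $\chi^{-1}$ and $\rho$ (up to the smooth conformal factor in $\rho^*_{(1)}$), and since these preserve orientation and time-orientation the wavefront-set estimate of that proposition carries over to give $\mathrm{WF}\big((\partial^*_{\Sigma,1}F)^{(n)}[\phi]\big) \cap (V^n_+ \cup V^n_-) = \emptyset$ for every $F \in \mathfrak{F}_c(\Sigma)$, so $\partial^*_{\Sigma,1}$ maps $\mathfrak{F}_c(\Sigma)$ into $\mathfrak{F}_{\mu c}(\mathcal{M})$. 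That it preserves brackets reduces to $(\partial_{\Sigma,1}\otimes\partial_{\Sigma,1})E_{\mathcal{M}} = E_\Sigma$: since $\chi_*$ sends $P_{\mathcal{M}}$-solutions to $P_{\mathbb{M}^2}$-solutions (c.f.~\cite[proposition~4.4]{crawfordLorentzian2dCFT2021}) and $\partial_{\Sigma_0,1}$ agrees with the sharp $\partial_{\Sigma_0}$ on-shell, $\partial_{\Sigma,1}$ agrees on $\mathrm{Ker}\,P_{\mathcal{M}}$ with the sharp $\partial_\Sigma$ of \eqref{eq:del-sigma-derivative} (using locality of both and \cref{prop:config-covariance}); as the kernel of $E_{\mathcal{M}}$ is a bi-solution, $\partial_{\Sigma,1}$ acts on it exactly as $\partial_\Sigma$, whence $(\partial_{\Sigma,1}\otimes\partial_{\Sigma,1})E_{\mathcal{M}} = (\partial_\Sigma\otimes\partial_\Sigma)E_{\mathcal{M}} = E_\Sigma$ by \eqref{eq:chiral-commutator-function}. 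Finally $\partial^*_{\Sigma,1}$ is injective because $\partial_{\Sigma,1}$ is onto: one solves $\partial_{\Sigma,1}\phi = \psi$ by pulling back along $\chi$ the explicit global solution of $\partial_{\Sigma_0}\phi_0 = (\rho^{-1})^*_{(1)}\psi$ supplied by the proposition after \cref{prop:config-covariance}.

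\emph{Naturality into $\mathfrak{P}_{\mathrm{on}}\circ\Pi_2$.} The key point, used already above, is that $\partial_{\Sigma,1}$ restricted to $\mathrm{Ker}\,P_{\mathcal{M}}$ coincides with the choice-independent $\partial_\Sigma$. Consequently any two admissible systems of choices in the construction give maps $\partial^*_{\Sigma,1}$ differing by a functional vanishing on-shell, i.e.\ lying in $\mathfrak{I}_S(\mathcal{M})$, so $(\Pi_{\mathrm{on}})_{\mathcal{M}}\circ\partial^*_{\Sigma,1}$ is well-defined. Given a $\mathsf{CCauchy}$ morphism $(\rho,\chi):(\Sigma,\mathcal{M})\to(\widetilde{\Sigma},\widetilde{\mathcal{M}})$, unwinding the definitions shows that $\partial^*_{\widetilde{\Sigma},1}\circ\mathfrak{P}_\ell(\rho,\chi)$ and $\mathfrak{P}(\chi)\circ\partial^*_{\Sigma,1}$ send $F$ to the functionals $\widetilde{\phi}\mapsto F[\rho^*_{(1)}\partial_{\widetilde{\Sigma},1}\widetilde{\phi}]$ and $\widetilde{\phi}\mapsto F[\partial_{\Sigma,1}\chi^*\widetilde{\phi}]$ respectively; on $\mathrm{Ker}\,P_{\widetilde{\mathcal{M}}}$ these agree, since there $\partial_{\bullet,1}=\partial_\bullet$ and $\rho^*_{(1)}\partial_{\widetilde{\Sigma}}=\partial_\Sigma\chi^*$ by \cref{prop:config-covariance}, so the two composites agree modulo $\mathfrak{I}_S(\widetilde{\mathcal{M}})$ and, using that $\Pi_{\mathrm{on}}$ is natural, the required square commutes. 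I expect the main obstacle to be the geometric bookkeeping of the first step — confirming that \cref{lem:dilation} really enlarges $\mathrm{Img}\,\chi$ enough for $\partial_{\Sigma,1}$ to be defined, and pinning down the cylindrical analogues — together with transporting the wavefront-set estimate of \cref{prop:classical-chiral-homo} cleanly through the embeddings.
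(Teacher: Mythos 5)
Your overall strategy --- build $\partial_{\Sigma,\epsilon}$ by composing a skeletal embedding from \cref{thm:cauchy-extension-skeletal} with the dilation of \cref{lem:dilation} and then transport \cref{prop:classical-chiral-homo} --- is essentially the paper's, and your arguments for the Poisson property (using that the kernel of $E_{\mathcal{M}}$ is a bi-solution, so the regularised derivative acts on it as the sharp $\partial_{\Sigma}$), for injectivity via surjectivity of $\partial_{\Sigma,\epsilon}$, and for naturality modulo $\mathfrak{I}_S$ all match the paper's proof.

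The genuine gap is in the cylindrical case, where you invoke an ``evident cylindrical analogue'' of \cref{lem:dilation} that ``descends from the planar statement along the covering $p_0$''. No such analogue exists. A $\mathsf{CLoc}$ morphism of a neighbourhood of $\Sigma_0 \subset \mathscr{E}$ into $\mathscr{E}$ fixing $\Sigma_0$ setwise is determined by a lift $\rho$ of a circle diffeomorphism, so $\int_0^{2\pi} \rho'\,\mathrm{d}x = 2\pi$; the condition \eqref{eq:dilation-cond} requires $\int_{x - t_+(x)}^{x + t_+(x)} \rho'\,\mathrm{d}x' > 2$ for every $x$, and if $t_+ \equiv \tau$ is a small constant (a legitimate Cauchy surface of the slab $|t| < 2\tau$ in $\mathscr{E}$), then summing these increments over a partition of a period into intervals of length $2\tau$ gives total $2\pi$, so their average is $2\tau < 2$ and the condition must fail somewhere. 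Concretely, the planar choice $\rho'(x) = 2/t_+(x)$ from the proof of \cref{lem:dilation} is not the derivative of a lift of any circle diffeomorphism unless $\int_0^{2\pi} 2/t_+ \,\mathrm{d}x = 2\pi$. The repair is the one the paper makes: for $\Sigma \simeq S^1$ no dilation is needed, because compactness forces $t_{\pm}$ to be bounded below by some $\epsilon > 0$, and one simply uses that $\epsilon$ in $\partial_{\Sigma_0,\epsilon}$ rather than insisting on $\epsilon = 1$. With that substitution (and dropping the false descent claim) your proof goes through.
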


\begin{proof}
  To first define the maps $\partial_{\Sigma, \epsilon}$,
  we must treat topologically planar spacetimes separately from cylindrical spacetimes.
  So let us first suppose that $\Sigma \simeq \mathbb{R}$.
  We can define a \emph{time function} on $\mathcal{M}$ such that $\Sigma = t^{-1}\{0\}$.
  Using this, we then specify a pair of Cauchy surfaces $\Sigma_{\pm} \subset \mathcal{M}$ by $t^{-1}\{\pm 1\}$.
  By taking a diffeomorphism $\rho_0: \Sigma \overset{\sim}{\rightarrow} \Sigma_0 \subset \mathbb{M}^2$,
  we then obtain a $\mathsf{CLoc}$ morphism $\mathcal{M} \rightarrow \mathbb{M}^2$ by~\cref{thm:cauchy-extension-skeletal}.
  Let $\Sigma_{t_0}$ denote the $t = t_0$ Cauchy surface in $\mathbb{M}^2$.
  If there exists some $\epsilon > 0$ such that $\mathcal{J}^+(\Sigma_{-\epsilon}) \cap \mathcal{J}^-(\Sigma_{\epsilon}) \subset \chi_0(\mathcal{M})$,
  then already we can define the map $\partial_{\Sigma_0, \epsilon} {\chi_0}_{*}$ by the argument preceding~\cref{lem:dilation},
  hence we can set $\partial_{\Sigma, \epsilon} := \rho^{*}_{(1)} \partial_{\Sigma_0, \epsilon} \chi_{*}: \mathfrak{E}(\mathcal{M}) \rightarrow \mathfrak{E}(\Sigma)$.

  If this is not the case, we simply apply~\cref{lem:dilation} to $\chi_0(\mathcal{M})$ with Cauchy surfaces $\chi_0(\Sigma_{\pm})$,
  to obtain a new embedding $\chi: \mathcal{M} \rightarrow \mathbb{M}^2$, whereupon we can use any value $\epsilon \in (0, 1)$.

  For $\Sigma \simeq S^1$, we proceed along similar lines,
  defining a pair of Cauchy surfaces $\Sigma_- \prec \Sigma \prec \Sigma_+$ and a diffeomorphism $\Sigma \overset{\sim}{\rightarrow} \Sigma_0 \subset \mathscr{E}$
  extending to a $\mathsf{CLoc}$ morphism $\chi: \mathcal{M} \rightarrow \mathscr{E}$.
  This time, however, due to the compactness of $S^1$ we can deduce that,
  if $\chi(\Sigma_{\pm}) = \{(\pm t_{\pm}(x), x)\}_{x \in \mathbb{R}}$,
  then both $t_\pm(x)$ are bounded from below, as $\Sigma \cap \Sigma_{\pm} = \emptyset$.
  In particular, there exists some $\epsilon > 0$ such that $\mathrm{Img}\, t_+ \cup \mathrm{Img}\, t_- \subset (\epsilon, \infty)$,
  hence $\chi(\Sigma_-) \prec \Sigma_{-\epsilon} \prec \Sigma_0 \prec \Sigma_{\epsilon} \prec \chi(\Sigma_+)$,
  and we may define $\partial_{\Sigma, \epsilon} := \rho^{*}_{(1)} \partial_{\Sigma_0, \epsilon} \chi_{*}: \mathfrak{E}(\mathcal{M}) \rightarrow \mathfrak{E}(\Sigma)$.

  Now that we have maps between configuration spaces,
  we must now show that they are Poisson algebra homomorphisms,
  and that they satisfy the desired naturality condition.
  The fact that $\partial_{\Sigma, \epsilon}^{*} F \in \mathfrak{F}_{\mu c}(\mathcal{M})$
  follows from the fact that $(\chi^{-1})_{*}: \mathfrak{F}_{\mu c}(\mathcal{M}) \rightarrow \mathfrak{F}_{\mu c}(\chi(\mathcal{M}))$
  To show the map is Poisson, by following similar arguments to~\cref{prop:cauchy-algebra-natural}
  it suffices to show that $(\partial_{\Sigma, \epsilon} \otimes \partial_{\Sigma, \epsilon}) E_{\mathcal{M}} = E_{\Sigma}$.
  This can be shown readily as
  \begin{align}
    (\partial_{\Sigma, \epsilon} \otimes \partial_{\Sigma, \epsilon}) E_{\mathcal{M}}
    &= (\rho^{*}_1 \otimes \rho^{*}_1)(\partial_{\Sigma_0, \epsilon} \otimes \partial_{\Sigma_0, \epsilon}) E_{\chi(\mathcal{M})} \nonumber \\
    &= (\rho^{*}_1 \otimes \rho^{*}_1)(\partial_{\Sigma_0, \epsilon} \otimes \partial_{\Sigma_0, \epsilon}) E_{\mathcal{M}_0} \nonumber \\
    &= (\rho^{*}_1 \otimes \rho^{*}_1)E_{\Sigma_0} \nonumber \\
    &= E_{\Sigma}.
  \end{align}
  where in the second line we have used the fact that
  the support of $\partial_{\Sigma_0, \epsilon}$ is within the image of $\chi(\mathcal{M})$,
  where $E_{\chi(\mathcal{M})}$ coincides with $E_{\mathcal{M}_0}$.

  Finally we consider the naturality.
  Let $(\rho, \chi): (\Sigma, \mathcal{M}) \rightarrow (\widetilde{\Sigma}, \widetilde{\mathcal{M}})$
  be a $\mathsf{CCauchy}$ morphism.
  Suppose that we have constructed $\partial_{\Sigma, \epsilon}$ and $\partial_{\widetilde{\Sigma}, \widetilde{\epsilon}}$.
  For simplicity we shall also use $\partial_{\Sigma, \epsilon}^{*}$ to denote the map
  $\mathfrak{P}_{\ell}(\Sigma, \mathcal{M}) \rightarrow \mathfrak{P}_{\mathrm{on}}(\mathcal{M})$.
  To show that
  \begin{equation}
    \begin{tikzcd}
      \mathfrak{P}_{\ell}(\Sigma, \mathcal{M})
        \ar[r, "\partial_{\Sigma, \epsilon}^{*}"]
        \ar[d, "\mathfrak{P}_{\ell}(\rho{,} \chi)"]
        &
      \mathfrak{P}_{\mathrm{on}}(\mathcal{M})
        \ar[d, "\mathfrak{P}_{\mathrm{on}} \chi"]
        \\
      \mathfrak{P}_{\ell}(\widetilde{\Sigma}, \widetilde{\mathcal{M}})
        \ar[r, "\partial_{\widetilde{\Sigma}, \widetilde{\epsilon}}^{*}"]
        &
      \mathfrak{P}_{\mathrm{on}}(\widetilde{\mathcal{M}})
    \end{tikzcd}
  \end{equation}
  commutes, we need only show that
  $F[\partial_{\Sigma, \epsilon} \chi^{*} \phi] = F[\rho^{*}_{(1)} \partial_{\widetilde{\Sigma}, \widetilde{\epsilon}} \phi]$,
  for every $F \in \mathfrak{P}_{\ell}(\Sigma, \mathcal{M})$ and $\phi \in \mathrm{Ker} \, P_{\widetilde{\mathcal{M}}}$.
  Immediately we have that $\partial_{\widetilde{\Sigma}, \widetilde{\epsilon}} \phi = \partial_{\widetilde{\Sigma}} \phi$,
  and also $\chi^{*}: \mathrm{Ker}\, P_{\widetilde{\mathcal{M}}} \rightarrow \mathrm{Ker}\, P_{\mathcal{M}}$,
  hence $\partial_{\Sigma, \epsilon} \chi^{*} \phi = \partial_{\Sigma} \chi^{*} \phi$.
  Thus, both functionals are equal to $F[\partial_{\Sigma} \chi^{*} \phi]$ and the diagram commutes.
\end{proof}

\begin{remark}
  Even though we also include the full spacetime $\mathcal{M}$ in our definition of the algebra
  $\mathfrak{P}_{\ell}(\Sigma, \mathcal{M})$, it is important to note that this is mostly a bookkeeping device.
  For instance, if $\widetilde{\Sigma} = \Sigma$,
  and $\chi$ is simply the inclusion for $\mathcal{M} \subset \widetilde{\mathcal{M}}$,
  then $\rho$ is the identity and $\mathfrak{P}_{\ell}(\rho, \chi)$ is the identity on $\mathfrak{F}_c(\Sigma)$.

  Thus we have found a theory of lower dimensionality which
  still embeds naturally into the full \emph{on-shell} spacetime algebra.
  Of course, we would not expect a sensible embedding into the off-shell algebra,
  as $\mathfrak{P}_{\ell}(\Sigma, \mathcal{M})$ is an algebra on the initial data for solutions in $\mathcal{M}$,
  hence it is intrinsically on-shell.

  Moreover, unlike the canonical algebra of Cauchy data which we discuss in the following section,
  this embedding is done without introducing any auxiliary field
  (namely the conjugate momentum in the canonical algebra).
This is because one is able to write a solution to the wave equation
  as a sum of two solutions to \emph{first order} \textsc{pde}s.
\end{remark}

\subsection{Comparison to Equal Time Commutation Relations}

An alternative way of deciding the form of the chiral bracket is
by comparison to the \emph{canonical} or \emph{equal-time} Poisson bracket.
In terms of integral kernels,
and with respect to a Cauchy surface $\Sigma \subset \mathcal{M}$,
this is typically written as
\begin{equation}
    \left\{ \Phi(\mathbf{x}), \Pi(\mathbf{y}) \right\}^\Sigma_{\mathrm{can}}
        =
    \delta_\Sigma(\mathbf{x}, \mathbf{y}),
\end{equation}
where $\mathbf{x}, \mathbf{y} \in \Sigma$
and $\delta_\Sigma \in \mathfrak{D}'(\Sigma^2)$
is the Dirac delta with support on the diagonal of $\Sigma^2$.

To make this more precise,
we define the space of Cauchy data on $\Sigma$ to be
$\mathscr{C}(\Sigma) = \mathfrak{D}(\Sigma, \mathbb{R}^2)$,
i.e.\ pairs of smooth functions on $\Sigma$ with compact support.
For each $f \in \mathfrak{D}(\Sigma)$,
we can then denote the regular linear observables on $\mathscr{C}(\Sigma)$ by
\begin{align}
    \label{eq:linear-can-obs}
    \Phi(f)[\psi, \pi] = \int_\Sigma f \psi \, \mathrm{d}V_\Sigma,
        \qquad
    \Pi(f) [\psi, \pi] = \int_\Sigma f \pi     \, \mathrm{d}V_\Sigma.
\end{align}
We define their canonical Poisson bracket as
\begin{equation}
    \left\{ \Phi(f), \Pi(g) \right\}^\Sigma_\mathrm{can}
        =
    \int_\Sigma fg \, \mathrm{d}V_\Sigma.
\end{equation}

For the massless scalar field,
if $(\varphi, \pi)$ are the Cauchy data for a solution $\phi$,
then $\varphi = \phi|_\Sigma$ and $\pi = \dot{\phi}|_\Sigma$,
where $\dot{\phi}$ is the derivative of $\phi$ along the
future-directed normal vector to $\Sigma$.
This suggests the map
$\rho_\pm : \mathscr{C}(\Sigma) \to \mathfrak{E}(\Sigma)$,
defined by
$\rho_\pm(\varphi, \pi) = \frac{1}{2}(\pi \mp *d\phi)$,
sends Cauchy data to the associated chiral configuration .

Considering the chiral boson $\Psi$ from \eqref{eq:chiral-boson},
the pullback of $\Psi_\Sigma(f)$ along $\rho_+$ is
\begin{equation*}
    (\rho_+^* \Psi_\Sigma(f))[\varphi, \pi]
        =
    \frac{1}{2} \int_\Sigma
        (f \pi - f *\!d\phi) \, \mathrm{d}V_\Sigma.
\end{equation*}
Noting that
$
    \int_\Sigma f (*d\phi) \mathrm{d}V_\Sigma =
    \int_\Sigma f d\phi =
    -\int_\Sigma \phi df =
    -\int_\Sigma \phi (*df) \mathrm{d}V_\Sigma,
$
we can write this pullback in terms of
the observables in \eqref{eq:linear-can-obs} as
\begin{equation}
    \rho^*_+ \Psi_\Sigma(f)
        =
    \frac{1}{2} \left( \Pi(f) + \Phi(*df) \right).
\end{equation}

We can then take the pullback of the canonical Poisson bracket along this map,
which we express as
\begin{align*}
    \left\{
        \Psi_\Sigma(f),
        \Psi_\Sigma(g)
    \right\}_\ell^\Sigma
        &=
    \left\{
            \rho_+^* \Psi_\Sigma(f),
            \rho_+^*\Psi_\Sigma(g)
    \right\}_\mathrm{can}^\Sigma \\
        &=
    \frac{1}{4}
    \left\{
        \Pi(f) + \Phi(*df),
        \Pi(g) + \Phi(*dg)
    \right\}_\mathrm{can}^\Sigma \\
        &=
    \frac{1}{2}
    \left\{
        \Phi(*df),
        \Pi(g)
    \right\}_\mathrm{can}^\Sigma \\
        &=
    -\frac{1}{2} \int_\Sigma f dg.
\end{align*}
This agrees exactly with~\cref{eq:geometric-chiral-bracket}.

\subsection{Chiral Primary Fields}
\label{sec:chiral-primary-fields}

Fields are, naturally, a central aspect of any approach to quantum field theory,
however the precise definition of what a field \emph{is} varies considerably.
Previously, in \cite{crawfordLorentzian2dCFT2021},
we gave a definition of fields rooted in the principle of local covariance,
where theories are functors $\mathsf{Loc} \rightarrow \mathsf{Obs}$ for some suitable category of observables,
and fields are natural transformations from the functor of test functions to the functor describing the theory.
By defining new spacetime categories which accounted for conformal isometries,
and suitably modifying the functor assigning to spaces their test functions,
we could say that a field was primary if the naturality condition held for this expanded set of morphisms.

In~\cite[Chapter~9]{schottenloherMathematicalIntroductionConformal2008},
Schottenloher provides a characterisation of primary fields in 2\textsc{d} Euclidean \textsc{cft} which we summarise below.
Firstly, similarly to Wightman \textsc{qft}s,
a field is defined as a tempered distribution over $\mathbb{C}$ with values in
the algebra of bounded operators on some Hilbert space,
i.e.\ a linear, continuous map $\Phi: \mathfrak{S}(\mathbb{C}) \rightarrow \mathfrak{B}(\mathcal{H})$.
The condition for such a field to be primary (with weight $\mu$)
can be \emph{formally} expressed as the condition that,
for every holomorphic map $z \mapsto w(z)$
\begin{align}
  \label{eq:schottenloher-primary}
  U(w)\Phi(z)U(w)^{-1} = \left( \frac{d w}{d z} \right)^{\mu} \Phi(w(z))\,,
\end{align}
where $U$ is a unitary representation of the holomorphic transformations on $\mathcal{H}$.
To obtain the precise definition, one considers the infinitesimal transformation $w(z) = z + \epsilon w_0(z)$,
for a holomorphic map $w_0$.
Differentiating each side with respect to $\epsilon$ and evaluating at $\epsilon = 0$ then generates the correct equation.
In particular, on the left, one obtains an action of holomorphic functions on $\mathfrak{B}(\mathcal{H})$ by derivations.
This is assumed (as one of the axioms) to be generated by brackets/commutators with the stress-energy tensor,
which we explore in~\cref{sec:chiral-set-generate}.
Recalling that $\Phi(x)$ is really the integral kernel of a distribution,
we can integrate both sides of~\eqref{eq:schottenloher-primary} with some $f \in \mathfrak{S}(\mathbb{C})$ to obtain
\begin{align}
  U(w)\Phi(f)U(w)^{-1} = \Phi(w_{*}^{(\mu - 1)} f),
\end{align}
where $\Phi(f) = \int_{\mathbb{C}} \Phi(z) f(z) \,\mathrm{d}z$ and
\begin{align}
  \label{eq:euclidean-pushforward}
  w_{*}^{(\mu - 1)}f := \left[\left( \frac{dw}{dz} \right)^{\mu - 1} \cdot f \right] \circ w^{-1}
\end{align}
is a map $\mathfrak{S}(\mathbb{C}) \rightarrow \mathfrak{S}(\mathbb{C})$.
(To make this more precise, for $w(z) = z + \epsilon w_0(z)$, we may take $w^{-1}(z) = z - \epsilon w_0(z)$
in order to generate the correct infinitesimal action.)

We can then use this relation to characterise primary fields as \emph{equivariant maps} between
two representations of the `conformal group' of holomorphic functions.
One representation acting on $\mathfrak{B}(\mathcal{H})$, the algebra of observables,
the other acting on $\mathfrak{S}(\mathbb{C})$, the space of test functions.


To make this precise in our framework, we begin by noting how the characterisation of primary fields as equivariant maps
is very close to the definition of locally covariant fields as natural transformations.

Recall that, given any object $c$ of a category $\mathcal{C}$,
there is a group $\mathrm{Aut}_{\mathcal{C}}(c)$ comprising the invertible morphisms $c \rightarrow c$.
If we restrict our attention to this subcategory,
a functor $F: \mathrm{Aut}_{\mathcal{C}}(c) \rightarrow \mathsf{Vec}$
is simply a representation of $\mathrm{Aut}_{\mathcal{C}}(c)$ on the space $F(c)$,
and a natural transformation $F \Rightarrow G$ between two functors is a single map $F(c) \rightarrow G(c)$ which is
equivariant with respect to the two representations of $\mathrm{Aut}_{\mathcal{C}}(c)$.
Hence if we consider just one framed spacetime $\mathscr{M} = (M, (e^{\ell}, e^r)) \in \mathsf{CFLoc}$
as in~\cite{crawfordLorentzian2dCFT2021},
and take the group $\mathrm{Aut}(\mathscr{M})$ of
conformally admissible diffeomorphisms $\mathscr{M} \rightarrow \mathscr{M}$,
then $\mathfrak{D}^{(\mu, \widetilde{\mu})}$ defines a representation of
$\mathrm{Aut}(\mathscr{M})$ on $\mathfrak{D}(M)$,
any functor $\mathfrak{A}: \mathsf{CFLoc} \rightarrow \mathsf{Obs}$
defines a representation of $\mathrm{Aut}(\mathscr{M})$ on $\mathfrak{A}(\mathscr{M})$,
and a primary field of weight $(\mu, \widetilde{\mu})$ defines an equivariant map between the two.

Notably, we had to go from $\mathsf{CLoc}$ to a new category,
$\mathsf{CFLoc}$ which assigns additional data to each spacetime in the form of a global frame.
This is so that tensor fields such as the stress-energy tensor may be separated into scalar components.
As we do not wish for the theory to depend on this additional data,
we have to assume that the functor $\mathsf{CFLoc} \rightarrow \mathsf{Obs}$
factorised into the composition of the surjective functor $\mathfrak{p}: \mathsf{CFLoc} \rightarrow \mathsf{CLoc}$,
which forgets about the additional frame data,
and a functor $\mathsf{CLoc} \rightarrow \mathsf{Obs}$ defining the theory.

In a similar vein, a full definition of a chiral primary field should involve a similar assignment of frame data by the construction of a new category
$\mathsf{CFCauchy}$ possessing surjective functors to both $\mathsf{CCauchy}$ and $\mathsf{CFLoc}$.
We would then need a functor $\mathfrak{D}^{(\mu)}_{\ell}: \mathsf{CFCauchy} \rightarrow \mathsf{Vec}$
which assigns to each object the space of test functions of the appropriate Cauchy surface,
and whose morphisms implement the weighted pushforward analogous to~\eqref{eq:euclidean-pushforward}.
A classical chiral primary field of weight $\mu$ would then be a natural transformation from $\mathfrak{D}_\ell^{(\mu)}$
to the lift of $\mathfrak{P}_{\ell}$ to $\mathsf{CFCauchy}$.
(A quantum chiral primary field would be defined in much the same manner,
using the functor $\mathfrak{A}_{\ell}$ from~\cref{def:quantum-chiral-algebra}.)

Whilst it is currently unclear what extra data a general object of $\mathsf{CFCauchy}$ should possess,
it is natural to assume that this data can be chosen canonically for objects arising from Minkowski space, so we focus on that first. 
More precisely, we consider the subcategory of $\mathsf{CFLoc}$
comprising open, causally convex subsets of $\mathbb{M}^2$,
moreover, the restriction of the surjective functor $\mathfrak{p}: \mathsf{CFLoc} \rightarrow \mathsf{CLoc}$
to this subcategory is also injective, hence the subcategory is isomorphic to its image.

Analogously, there must exist a wide subcategory of $\mathsf{CFCauchy}$ comprising pairs
$(\Sigma, U)$ where $\Sigma \subset U \subseteq \mathbb{M}^2$ is a Cauchy surface of an open, causally convex subset of $\mathbb{M}^2$.
which is isomorphic to the subcategory of $\mathsf{CCauchy}$ comprising the same objects and all the morphisms between them.

Let us denote this subcategory $\mathsf{Cauchy}(\mathbb{M}^2)$.
We can implement the weighted pushforwards as a family of functors
$\mathfrak{D}^{(\mu)}_{\ell}: \mathsf{Cauchy}(\mathbb{M}^2) \rightarrow \mathsf{TVec}$ such that
$\mathfrak{D}^{(\mu)}_{\ell}(\Sigma, U) = \mathfrak{D}(\Sigma)$,
and for a morphism $(\rho, \chi): (\Sigma, U) \rightarrow (\widetilde{\Sigma}, \widetilde{U})$
$\mathfrak{D}^{(\mu)}_{\ell}(\rho, \chi) (f) := \rho_{*}(\omega_{\ell}|_{\Sigma}^{\mu - 1} f)$,
where $\chi^{*} du = \omega_{\ell} du$.
Lastly, we define the functor $\mathfrak{U}: \mathsf{Cauchy}(\mathbb{M}^2) \rightarrow \mathsf{CFLoc}$
by $\mathfrak{U}(\Sigma, U) = U$, equipped with the canonical frame it inherits from $\mathbb{M}^2$
(which should be interpreted as the restriction of a forgetful functor $\mathsf{CFCauchy} \rightarrow \mathsf{CFLoc}$).
We can then relate our weighted representations using the following result

\begin{proposition}
  \label{prop:partial-naturality}
  For an object $(\Sigma, U) \in \mathsf{Cauchy}(\mathbb{M}^2)$,
  define the map $\eta_{(\Sigma, U)}: \mathfrak{D}(U) \rightarrow \mathfrak{D}(\Sigma)$ by
  \begin{align}
    (\eta_{(\Sigma, U)} h)(s) := \int_{\mathbb{R}} h(-s, v) \mathrm{d}v,
  \end{align}
  where the coordinate $s$ on $\Sigma$ is obtained as the restriction of the map $\mathbb{M}^2 \rightarrow \mathbb{R}; (u, v) \mapsto -u$.
  Then $\eta$ defines a natural transformation $\mathfrak{D}^{(\mu, 0)} \circ \mathfrak{U} \Rightarrow \mathfrak{D}^{(\mu)}_{\ell}$,
  i.e.\ for every morphism $(\rho, \chi): (\Sigma, U) \rightarrow (\widetilde{\Sigma}, \widetilde{U})$, the following diagram commutes.
  \begin{equation}\label{eq: weighted diagram commutes}
    \begin{tikzcd}
      \mathfrak{D}(U)
        \ar[r, "\eta_{(\Sigma, U)}"] \ar[d, "\mathfrak{D}^{(\mu, 0)}\chi"]
        &
      \mathfrak{D}(\Sigma)
        \ar[d, "\mathfrak{D}^{(\mu)}({\rho, \chi})"]
        \\
      \mathfrak{D}(\widetilde{U})
        \ar[r, "\eta_{(\widetilde{\Sigma}, \widetilde{U})}"]
        &
      \mathfrak{D}(\widetilde{\Sigma})
    \end{tikzcd}
  \end{equation}
\end{proposition}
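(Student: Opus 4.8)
The plan is to reduce the statement to an explicit computation in null coordinates, exploiting that a conformal embedding of causally convex subsets of $\mathbb{M}^2$ splits into its two null components. Given a $\mathsf{Cauchy}(\mathbb{M}^2)$ morphism $(\rho,\chi)\colon(\Sigma,U)\to(\widetilde\Sigma,\widetilde U)$, the embedding $\chi\colon U\to\widetilde U$ is conformal and preserves both the orientation and the time-orientation, hence it preserves each of the two families of null lines together with its co-orientation (cf.\ \cref{prop:null-projection-functorial}); in null coordinates this forces $\chi(u,v)=(\chi_\ell(u),\chi_r(v))$ for orientation-preserving diffeomorphisms $\chi_\ell,\chi_r$ onto their images. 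Thus $\omega_\ell(u,v)=\chi_\ell'(u)$ and $\omega_r(u,v)=\chi_r'(v)$ depend on $u$, resp.\ $v$, alone; with respect to the coordinates $s=-u|_\Sigma$ on $\Sigma$ and $\tilde s=-\tilde u|_{\widetilde\Sigma}$ the restriction $\rho$ is the embedding $s\mapsto-\chi_\ell(-s)$; and, by the convention for weighted pushforwards inherited from \cite{crawfordLorentzian2dCFT2021}, $\mathfrak{D}^{(\mu,0)}\chi(h)=\bigl[\,\omega_\ell^{\mu-1}\omega_r^{-1}h\,\bigr]\circ\chi^{-1}$, the factor $\omega_r^{-1}$ being present precisely because a $(\mu,0)$-weighted test function is a density of weight one in the $v$ direction.

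I would then dispatch the routine point that $\eta_{(\Sigma,U)}$ is a well-defined continuous map into $\mathfrak{D}(\Sigma)$: smoothness of $s\mapsto\int_{\mathbb R}h(-s,v)\,\mathrm{d}v$ is differentiation under the integral sign, and $\mathrm{supp}\,(\eta_{(\Sigma,U)}h)$ is contained in the negative of the $u$-projection of $\mathrm{supp}\,h$, which is a compact subset of the $u$-range of $U$; since $\Sigma$ is a Cauchy surface of $U$ it meets every maximal null line $\{u=\mathrm{const}\}\cap U$, so the $s$-range of $\Sigma$ equals that $u$-range and $\eta_{(\Sigma,U)}h\in\mathfrak{D}(\Sigma)$. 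Continuity is the usual seminorm estimate.

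The heart of the argument is to chase an arbitrary $h\in\mathfrak{D}(U)$ both ways around the square~\eqref{eq: weighted diagram commutes}. Going down then across, one applies $\eta_{(\widetilde\Sigma,\widetilde U)}$ to $\bigl[\,\omega_\ell^{\mu-1}\omega_r^{-1}h\,\bigr]\circ\chi^{-1}$, i.e.\ integrates it in $\tilde v$ along the line $\tilde u=-\tilde s$; the substitution $\tilde v=\chi_r(v)$ converts the factor $\omega_r^{-1}=\chi_r'(v)^{-1}$ into $\mathrm{d}v/\mathrm{d}\tilde v$, cancelling the Jacobian and leaving $\chi_\ell'\!\bigl(\chi_\ell^{-1}(-\tilde s)\bigr)^{\mu-1}\int_{\mathbb R}h\bigl(\chi_\ell^{-1}(-\tilde s),v\bigr)\,\mathrm{d}v$. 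Going across then down, one first forms $\eta_{(\Sigma,U)}h(s)=\int_{\mathbb R}h(-s,v)\,\mathrm{d}v$, multiplies by $\omega_\ell|_\Sigma^{\mu-1}$ and pushes forward along $\rho$; evaluating at $\tilde s$, where $\rho^{-1}(\tilde s)=-\chi_\ell^{-1}(-\tilde s)$ and hence $\omega_\ell|_\Sigma\bigl(\rho^{-1}(\tilde s)\bigr)=\chi_\ell'\!\bigl(\chi_\ell^{-1}(-\tilde s)\bigr)$, one recovers the same expression. (When $\rho$ is a proper embedding both sides vanish for $\tilde s$ outside $\rho(\Sigma)$, so nothing changes.) This proves commutativity, and with it naturality of $\eta$.

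I expect the only genuine subtlety to be the bookkeeping of conformal weights: one must be confident that $\mathfrak{D}^{(\mu,0)}$ carries exactly the factor $\omega_r^{-1}$, since it is the vanishing of the second chirality weight that makes integration over $v$ against $\mathrm{d}\tilde v$ the canonical invariant pairing, and this is precisely what makes $\eta$ an intertwiner rather than merely a linear map. The null-coordinate decomposition of $\chi$, the change of variables, and the support check are all routine.
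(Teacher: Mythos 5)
Your proof is correct and follows essentially the same route as the paper's: both express $\chi$ in null coordinates as a product of one-variable maps via~\eqref{eq:cauchy-extension}, observe that $\omega_{\ell}$ depends only on $u$ and $\omega_{r}$ only on $v$, and then verify commutativity of~\eqref{eq: weighted diagram commutes} by chasing $h$ both ways and letting the substitution $\tilde{v}=\chi_r(v)$ cancel the $\omega_r^{-1}$ factor against the Jacobian. The only differences are cosmetic: you justify the null splitting of $\chi$ via preservation of the two families of null geodesics rather than the explicit graph parametrisation $\gamma,\widetilde{\gamma}$ used in the paper, and you add a (welcome, if routine) check that $\eta_{(\Sigma,U)}h$ actually lands in $\mathfrak{D}(\Sigma)$, which the paper leaves implicit.
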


\begin{proof}
  Suppose that $\Sigma$ is sent to the interval $\mathcal{I} \subseteq \mathbb{R}$ under the map $(u, v) \mapsto -u$.
  We then express $\Sigma$ as the graph $\{(-s, \gamma(s)) \in U\}_{s \in \mathcal{I}}$ for some smooth embedding
  $\gamma: \mathcal{I} \hookrightarrow \mathbb{R}$.
  We similarly express $\widetilde{\Sigma}$ in terms of some
  $\widetilde{\gamma}: \widetilde{\mathcal{I}} \hookrightarrow \mathbb{R}$.
  Using~\eqref{eq:cauchy-extension}, we write $\chi$ explicitly in null-coordinates as
  \begin{align}
    \chi(u, v) = (-\rho(-u), \widetilde{\gamma} \rho \gamma^{-1}(v)),
  \end{align}
  where $\gamma^{-1}: \gamma(\mathcal{I}) \rightarrow \mathcal{I}$, and we have identified $\rho$ with a smooth embedding
  $\mathcal{I} \hookrightarrow \widetilde{\mathcal{I}}$ using our coordinate system.
  From this, we calculate the conformal factors of $\chi$ as
  \begin{align}
    \label{eq:conformal-factors-chart}
    \omega_\ell(u, v) = \omega_{\ell}(u) = \rho'(-u), \qquad
    \omega_{r}(u, v) = \omega_r(v) = \frac{\widetilde{\gamma}'(\rho\gamma^{-1}(v))}{\gamma'(\gamma^{-1}(v))} \rho'(\gamma^{-1}(v)).
  \end{align}
  In particular this means that $\omega_{\ell}(-s, v) = \omega_{\ell}|_{\Sigma}(s) = \rho'(s)$.

  With our conformal factors suitably equated,
  we may now compute each path around the above diagram.
  Let $h \in \mathfrak{D}(\mathbb{M}^2)$, then
  \begin{align}
    \begin{split}
    (\rho_{*}^{(1 - \mu)} \circ \eta (h)) (s) &= \omega_{\ell}|_{\Sigma}^{\mu - 1}(s) \int_{\mathbb{R}} h(- \rho^{-1}(s), v) \,\mathrm{d}v \\
    (\eta \circ \chi_{*}^{(1-\mu, 1)} (h)) (s) &= \\
    \int_{\mathbb{R}}
      \omega_{\ell}(-\rho^{-1}(s))^{\mu - 1} &\omega_r(\gamma \rho^{-1} \widetilde{\gamma}^{-1}(v'))^{-1}
      h(-\rho^{-1}(s), \gamma \rho^{-1} \widetilde{\gamma}^{-1}(v')) \,\mathrm{d}v'.
    \end{split}
  \end{align}
  We then make the change of variables for the second integral $v = \gamma \rho \widetilde{\gamma}^{-1} (v')$,
  noting that then $dv = \omega_r(v)^{-1}dv'$ hence the two integrals are in fact equal.
\end{proof}

\begin{remark}
  What made the diagram \eqref{eq: weighted diagram commutes} commute were the facts that, firstly $\omega_{\ell}$ depended only on $u$,
  and hence could be taken out of the integral and secondly that $\omega_r$ depended only on $v$
  and naturally accounted for the change of variables necessary to relate the integrals.
  Given that these two facts are not true for a general morphism of $\mathsf{CFLoc}$,
  this argument cannot be readily generalised.
\end{remark}

Given that $\mathsf{Cauchy}(\mathbb{M}^2)$ is a subcategory of $\mathsf{CCauchy}$,
we already have a classical theory defined on it,
$\mathfrak{P}_{\ell}|_{\mathbb{M}^2}: \mathsf{Cauchy}(\mathbb{M}^2) \rightarrow \mathsf{Poi}$.
Thus, for any natural transformation $\Psi: \mathfrak{D}^{(\mu)}_{\ell} \Rightarrow \mathfrak{P}_{\ell}|_{\mathbb{M}^2}$,
the maps $\eta$ from the~\cref{prop:partial-naturality}, along with the natural transformation
$\partial^{*}_{\epsilon}: \mathfrak{P}_{\ell} \Rightarrow \mathfrak{P}_{\mathrm{on}} \circ \Pi_2$ from~\cref{thm:classical-natural}
can be composed horizontally with $\Psi$ to define a natural transformation
\begin{align}
  \partial^{*}_{\epsilon} \circ \Psi \circ \eta: \mathfrak{D}^{(\mu, 0)} \circ \mathfrak{U} \Rightarrow \mathfrak{P}_{\mathrm{on}} \circ \mathfrak{p} \circ \mathfrak{U}.
\end{align}
Unpacking the definition, this means that, for every conformally admissible embedding $\chi: U \longrightarrow \widetilde{U}$
which restricts to a map $\rho: \Sigma \rightarrow \widetilde{\Sigma}$ between Cauchy surfaces,
the following diagram commutes.
\begin{equation}
  \label{eq:primary-field-correspondence}
  \begin{tikzcd}
    \mathfrak{D}(U)
      \ar[r, "\eta_{(\Sigma, U)}"] \ar[d, "\mathfrak{D}^{(\mu, 0)}\chi"]
      &
    \mathfrak{D}(\Sigma)
      \ar[r, "\Psi_{(\Sigma, U)}"] \ar[d, "\mathfrak{D}_{\ell}^{(\mu)}({\rho, \chi})"]
      &
    \mathfrak{P}_{\ell}(\Sigma, U)
      \ar[r, "\partial_{\Sigma, \epsilon}^{*}"] \ar[d, "\mathfrak{P}_{\ell}({\rho, \chi})"]
      &
    \mathfrak{P}_{\mathrm{on}}(U)
      \ar[d, "\mathfrak{P}_{\mathrm{on}}\chi"]
      \\
    \mathfrak{D}(\widetilde{U})
      \ar[r, "\eta_{(\widetilde{\Sigma}, \widetilde{U})}"]
      &
    \mathfrak{D}(\widetilde{\Sigma})
      \ar[r, "\Psi_{(\widetilde{\Sigma}, \widetilde{U})}"]
      &
    \mathfrak{P}_{\ell}(\widetilde{\Sigma}, \widetilde{U})
      \ar[r, "\partial_{\widetilde{\Sigma}, \widetilde{\epsilon}}^{*}"]
      &
    \mathfrak{P}_{\mathrm{on}}(\widetilde{U})
  \end{tikzcd}
\end{equation}

If one can then further show that, for any pair of Cauchy surfaces $\Sigma, \widetilde{\Sigma} \subset U$,
\begin{align}
  \label{eq:chiral-primary-independence}
  \partial_{\Sigma, \epsilon}^{*} \circ \Psi_{(\Sigma, U)} \circ \eta_{(\Sigma, U)} =
  \partial_{\widetilde{\Sigma}, \widetilde{\epsilon}}^{*} \circ
  \Psi_{(\widetilde{\Sigma}, U)} \circ
  \eta_{(\widetilde{\Sigma}, U)},
\end{align}
then it is clear that we in fact have a natural transformation
$\mathfrak{D}^{(\mu, 0)} \Rightarrow \mathfrak{P}_{\mathrm{on}} \circ \mathfrak{p}$,
i.e.\ a primary field of weight $(\mu, 0)$ in the sense of~\cite[\S 4.2]{crawfordLorentzian2dCFT2021}.

As an example where this is the case,
we can consider the \emph{chiral boson}
\begin{align}
  \Psi_{(\Sigma, U)}(f)[\psi] := \int_{\Sigma} f \psi \mathrm{d}V_{\Sigma}.
\end{align}
As we are working on-shell, \eqref{eq:chiral-primary-independence} is satisfied if,
$\forall \phi \in \mathrm{Ker}\, P_{\mathbb{M}^2}, h \in \mathfrak{D}(U)$
\begin{align}
  \Psi_{(\Sigma, U)}(\eta_{(\Sigma, U)} h)[\partial_{\Sigma, \epsilon} \phi]
  =
  \Psi_{(\widetilde{\Sigma}, U)}(\eta_{(\widetilde{\Sigma}, U)} h)[\partial_{\widetilde{\Sigma}, \widetilde{\epsilon}} \phi].
\end{align}
By expanding out each definition, one can eventually show that these maps indeed coincide,
and that in fact they are equal to $\partial\Phi_{\mathbb{M}^2}(h)[\phi]$,
the null derivative field from \cite[Example 4.2]{crawfordLorentzian2dCFT2021}.

With this, we formulate our definition of a chiral primary fields on Minkowski spacetime.

\begin{definition}
  Let $\mathfrak{A}_{\ell}: \mathsf{Cauchy}(\mathbb{M}^2) \rightarrow \mathsf{TVec}$
  be a functor describing some locally covariant theory.
  A \emph{chiral primary field of weight $\mu$ on $\mathbb{M}^2$ with values in} $\mathfrak{A}_{\ell}$
  is then defined as a natural transformation $\mathfrak{D}^{(\mu)}_{\ell} \Rightarrow \mathfrak{A}_{\ell}$.
\end{definition}

Matching \cite{crawfordLorentzian2dCFT2021}, we find a family of examples meeting this definition
by taking powers of the chiral boson.

\begin{example}
  \label{ex:monomial-fields}
  For $n \in \mathbb{N}$, and $(\Sigma, U) \in \mathsf{Cauchy}(\mathbb{M}^2)$ the maps
  $\Psi^n: \mathfrak{D}(\Sigma) \rightarrow \mathfrak{P}_{\ell}(\Sigma, U)$ defined by
  \begin{align}
    \Psi^n_{(\Sigma, U)}(f)[\psi] := \int_{\mathcal{I}} f(s) \psi^n(s) \gamma'(s)^{\tfrac{n}{2}} \mathrm{d}s
  \end{align}
  constitute a chiral primary field of weight $n$.
  Moreover $\partial^{*} \circ \Psi^n \circ \eta = \Pi_{\mathrm{on}} \circ \partial \Phi^n$.

  To see this we must show that, for every commuting square
  \begin{equation}
    \begin{tikzcd}
        \Sigma
          \ar[r, "\rho"]
          \ar[d]
          &
        \widetilde{\Sigma}
          \ar[d]
          &
        \\
        U
          \ar[r, "\chi"]
          &
        \widetilde{U}
    \end{tikzcd}
  \end{equation}
  and every pair $f \in \mathfrak{D}(\Sigma)$, $\psi \in \mathfrak{E}(\widetilde{\Sigma})$
  \begin{align}
    \Psi^n_{(\Sigma, U)}(f)[\rho^{*}_{(1)} \psi] =
    \Psi^n_{(\widetilde{\Sigma}, \widetilde{U})}(\rho_{*} (\omega_{\ell}|_{\Sigma}^{\mu - 1} f))[\psi].
  \end{align}

  Using the same coordinate system as above,
  we can express the conformal factors in terms of \eqref{eq:conformal-factors-chart} and hence
  write each side of this equation explicitly as
  \begin{align}
    \Psi^n_{(\Sigma, U)}(f)[\rho^{*}_{(1)} \psi]
    &=
    \int_{\mathcal{I}} f(s) (\rho^{*} \psi)^n (s) \left(\frac{\widetilde{\gamma}'(\rho(s))}{\gamma'(s)}\right)^{\tfrac{n}{2}} \rho'(s)^n \gamma'(s)^{\tfrac{n}{2}} \mathrm{d}s, \\
    \Psi^n_{(\widetilde{\Sigma}, \widetilde{U})}(\rho_{*} (\omega_{\ell}|_{\Sigma}^{n - 1} f) )[\psi]
    &=
    \int_{\widetilde{\mathcal{I}}} \rho_{*} ((\rho')^{n - 1} f) (\tilde{s}) \psi^n (\tilde{s}) \widetilde{\gamma}'(\tilde{s})^{\tfrac{n}{2}} \mathrm{d}\tilde{s},
  \end{align}
  from which we can clearly see the two expressions coincide.

  Finally, if we take $\phi \in \mathrm{Ker} P_U$, then
  \begin{align}
    \partial \Phi^n_{U} (f) [\phi]
    &=
    \int_U f(u, v) (\partial_u \phi)^n(u) \, \mathrm{d}u \mathrm{d}v \nonumber \\
    &=
    \int_{\mathcal{I}} \left( \int_{\mathbb{R}} f(u, v) \, \mathrm{d}v \right) (\partial_u \phi)^n(u) \, \mathrm{d}u \nonumber \\
    &=
    \Psi^n (\eta f) [\partial_{\Sigma} \phi]
  \end{align}
  where we have used the facts that
  $\partial_{\Sigma}\phi (s) = \frac{1}{\sqrt{\gamma'(s)}} (\partial_u \phi)(-s, \gamma(s))$,
  and $\mathrm{supp} \, (\eta f) \subseteq \mathcal{I}$.
  This demonstrates that $\partial \Phi^n$ and $\partial^{*} \circ \Psi^n \circ \eta$
  define the same on-shell observable as required.
\end{example}

However, we shall usually be able to work with a weaker property,
which is satisfied, for example, by the quantum stress energy tensor
as well as fields identified as \emph{quasiprimary} in the \textsc{cft} literature.
For these fields, it is often enough to require that they transform naturally
with respect to boosts and dilations.
We make this notion precise with the following definition.

\begin{definition}
  Define the subcategory $\mathsf{Cauchy}(\mathbb{M}^2)_0$ of $\mathsf{Cauchy}(\mathbb{M}^2)$
  which contains the same objects, but only those morphisms for which $\omega_{\ell}$ is constant.
  Let $\mathfrak{A}_{\ell}: \mathsf{Cauchy}(\mathbb{M}^2) \rightarrow \mathsf{TVec}$ as before.
  A \emph{homogeneously scaling locally covariant field of weight $\mu$ on
    $\mathbb{M}^2$ with values in} $\mathfrak{A}_{\ell}$
  is a natural transformation
  $
    \mathfrak{D}_{\ell}^{(\mu)}|_{\mathsf{Cauchy}(\mathbb{M}^2)_0}
    \Rightarrow \mathfrak{A}_{\ell}|_{\mathsf{Cauchy}(\mathbb{M}^2)_0}
  $
\end{definition}

\begin{remark}
  A difference between locally covariant fields in the sense we have been using and
  natural Lagrangians (as defined in, for example \cite[Definition 4.6]{rejznerPerturbativeAlgebraicQuantum2016}) is that the former
  are assumed to be both linear and continuous in $\mathfrak{D}(\mathcal{M})$%
  \footnote{
    In the latter case these assumptions are typically relaxed to the weaker notion of
    \emph{additivity}, as linearity is not preserved by the action of the renormalisation group.
  }.
  As we shall see in the following section,
  these additional properties will allow us to apply our algebraic operations,
  such as Poisson brackets, $\star$-products and commutators,
  to fields in order to produce ordinary, $\mathbb{C}$-valued distribution.
  The transformation properties of the fields will then descend to the level of these distributions,
  allowing us to impose tight constraints.
\end{remark}

To conclude this section,
we demonstrate a property of fields satisfying the above definition that shall be useful in later proofs.

\begin{lemma}
  \label{lem:hom-scale-support}
  Let
  $\Psi: \mathfrak{D}_{\ell}^{(\mu)}|_{\mathsf{Cauchy}(\mathbb{M})_0}
    \Rightarrow \mathfrak{P}_{\ell}|_{\mathsf{Cauchy}(\mathbb{M}^2)_0}$
  be a locally covariant field on $\mathbb{M}^2$ with values in
  $\mathfrak{P}_{\ell}|_{\mathsf{Cauchy}(\mathbb{M}^2)_0}$,
  then $\forall f \in \mathfrak{D}(\Sigma)$,
  $\mathrm{supp}\, \Psi_{(\Sigma, U)}(f) \subseteq \mathrm{supp}\, f$.
\end{lemma}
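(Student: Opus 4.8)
The plan is to run the standard {\smaller{p}}\textsc{aqft} argument that the support (locality) property of a locally covariant field is forced by its naturality under inclusions of subregions. The key observation is that the inclusion $\jmath \colon U' \hookrightarrow U$ of an open, causally convex $U' \subseteq U \subseteq \mathbb{M}^2$ is an \emph{isometric} embedding, so its conformal factor is $\omega_\ell \equiv 1$; hence, whenever $\Sigma' := \Sigma \cap U'$ is a Cauchy surface of $U'$, the pair $(\iota,\jmath) \colon (\Sigma',U') \to (\Sigma,U)$, with $\iota \colon \Sigma' \hookrightarrow \Sigma$ the inclusion, is a morphism of $\mathsf{Cauchy}(\mathbb{M}^2)_0$, so naturality of $\Psi$ applies to it. On such a morphism $\mathfrak{D}_\ell^{(\mu)}$ acts by extension-by-zero $\iota_*$ (since $\omega_\ell \equiv 1$), while by Proposition~\ref{prop:cauchy-algebra-natural} the map $\mathfrak{P}_\ell(\iota,\jmath)$ is $G \mapsto G \circ \iota^*_{(1)}$ with $\iota^*_{(1)}\psi = \psi|_{\Sigma'}$.

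Given this, fix $f \in \mathfrak{D}(\Sigma)$ and a point $x \notin \mathrm{supp}\, f$. Since $\mathrm{supp}\, f$ is compact, there is a relatively open interval $W \subset \Sigma$ with $x \in W$ and $\overline{W} \cap \mathrm{supp}\, f = \emptyset$. Put $\Sigma' := \Sigma \setminus \overline{W}$, which is relatively open in $\Sigma$, contains $\mathrm{supp}\, f$, and is disjoint from $W$, and let $U' := D_U(\Sigma')$ be its Cauchy development inside $U$. Then $U'$ is open, causally convex in $U$ — hence in $\mathbb{M}^2$, as causal convexity composes — and has $\Sigma'$ as a Cauchy surface, while $\Sigma \cap U' = \Sigma'$ because a timelike curve through any $p \in \overline{W}$ meeting $\Sigma$ only at $p$ certifies $p \notin D_U(\Sigma')$. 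Regarding $f$ as an element of $\mathfrak{D}(\Sigma') = \mathfrak{D}(\Sigma \cap U')$, we have $\mathfrak{D}_\ell^{(\mu)}(\iota,\jmath) f = f$, so naturality of $\Psi$ for the square associated to $(\iota,\jmath)$ gives
\[
  \Psi_{(\Sigma,U)}(f)[\psi]
    = \big( \mathfrak{P}_\ell(\iota,\jmath)\, \Psi_{(\Sigma',U')}(f) \big)[\psi]
    = \Psi_{(\Sigma',U')}(f)\big[\psi|_{\Sigma'}\big].
\]
The right-hand side is independent of $\psi|_W$, so $\Psi_{(\Sigma,U)}(f)[\psi + \psi'] = \Psi_{(\Sigma,U)}(f)[\psi]$ whenever $\mathrm{supp}\, \psi' \subseteq W$; hence $x \notin \mathrm{supp}\, \Psi_{(\Sigma,U)}(f)$. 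As $x$ was arbitrary outside $\mathrm{supp}\, f$, this yields $\mathrm{supp}\, \Psi_{(\Sigma,U)}(f) \subseteq \mathrm{supp}\, f$.

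The main obstacle is the geometric bookkeeping around $U' = D_U(\Sigma')$: one must check it is a genuine object of $\mathsf{Cauchy}(\mathbb{M}^2)$ with $\Sigma'$ a Cauchy surface and $\Sigma \cap U' = \Sigma'$, and here $\Sigma'$ is typically \emph{disconnected}, so the verification rests on the acausality of $\Sigma$ (no causal curve joins the two sides of $\overline{W}$). If the conventions in force require the spacetimes underlying $\mathsf{CCauchy}$ to be connected, one instead first reduces to $f$ supported in an arbitrarily small interval, using linearity of $\Psi_{(\Sigma,U)}$ in $f$ and a partition of unity $f = \sum_j \chi_j f$ subordinate to a fine cover of $\mathrm{supp}\, f$, bounds $\mathrm{supp}\, \Psi_{(\Sigma,U)}(\chi_j f) \subseteq \overline{I_j}$ by the argument above with $\Sigma'$ now a connected interval slightly enlarging $\overline{I_j}$, and then intersects over all open neighbourhoods of $\mathrm{supp}\, f$. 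A secondary, notational point is to confirm the explicit action of $\mathfrak{D}_\ell^{(\mu)}$ and of $\mathfrak{P}_\ell(\rho,\chi)$ on non-invertible morphisms, since the excerpt spells these formulas out mainly for diffeomorphisms.
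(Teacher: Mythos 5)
Your proposal is correct and uses essentially the same idea as the paper: naturality of $\Psi$ under inclusion morphisms of $\mathsf{Cauchy}(\mathbb{M}^2)_0$ (which have $\omega_\ell \equiv 1$), applied to a sub-region adapted to $\mathrm{supp}\, f$. The paper runs the argument in the form of your fallback paragraph — reduce to connected support by linearity, restrict to an open connected neighbourhood $\Sigma_f \supseteq \mathrm{supp}\, f$ inside a causally convex $U_f$, and intersect over all such neighbourhoods — rather than excising a window around a point outside the support, but the two are just dual phrasings of the same argument.
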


\begin{proof}
  Due to the linearity of $\Psi_{(\Sigma, U)}$, we may assume that $\mathrm{supp}\, f$ is connected,
  (otherwise $f$ is a finite sum of $f_i \in \mathfrak{D}(\Sigma)$ which have connected supports).
  Let $\Sigma_f \subseteq \Sigma$ be any open (in $\Sigma$), connected neighbourhood of $\mathrm{supp}\, f$,
  then there exists an open, causally-convex neighbourhood $\Sigma_f \subset U_f \subseteq U$
  (which can be seen by taking the intersection of
  any open, causally convex neighbourhood of $\Sigma_f$ with $U$).
  Let us denote by $(i, i_U)$ the inclusion morphism $(\Sigma_f, U_f) \rightarrow (\Sigma, U)$,
  then we can clearly see that $\mathfrak{D}^{(\mu)}_{\ell}(i, i_U): \mathfrak{D}(\Sigma_f) \rightarrow \mathfrak{D}(\Sigma)$
  is simply the pushforwards along the inclusion $i$,
  hence $f = \mathfrak{D}^{(\mu)} (i, i_U) (f|_{\Sigma}) = i_{*} f|_{\Sigma}$.

  Using the naturality of $\Psi$, we may then write that
  \begin{align}
    \Psi_{(\Sigma, U)}(f)
    &=
    \Psi_{(\Sigma, U)}\left( \mathfrak{D}^{(\mu)} (i, i_U) (f|_{\Sigma}) \right) \nonumber \\
    &=
    \mathfrak{P}_{\ell}(i, i_U) \left( \Psi_{(\Sigma_f, U_f)} (f|_{\Sigma}) \right).
  \end{align}
  From the definition of the morphism $\mathfrak{P}_{\ell}(i, i_U)$,
  it is then clear that $\mathrm{supp}\, \Psi_{(\Sigma, U)}(f) \subseteq \Sigma_f$.
  Given that this holds for any open, connected neighbourhood of $\mathrm{supp}\, f$,
  we must conclude that $\mathrm{supp}\, \Psi_{(\Sigma, U)}(f) \subseteq \mathrm{supp}\, f$.
\end{proof}

In the case of the cylinder,
recall that we can identify $\mathfrak{E}(\mathscr{E}) \simeq \mathfrak{E}(\mathbb{M}^2)^{\mathbb{Z}}$,
where the action of $\mathbb{Z}$ on $\mathbb{M}^2$ is given by $n \cdot (u, v) = (u + 2 \pi, v - 2 \pi)$.
Similarly, $\pi_{\ell}(\mathscr{E}) \simeq \mathbb{R} / 2 \pi \mathbb{Z}$, hence $\mathfrak{D}(\pi_{\ell}(\mathscr{E})) \simeq \mathfrak{E}(\mathbb{R})^{2 \pi \mathbb{Z}}$.
Let $h \in \mathfrak{D}(\mathscr{E})$, and denote by $h_{\mathbb{M}}$ the corresponding element of $\mathfrak{E}(\mathbb{M}^2)^{\mathbb{Z}}$.
Clearly $v \mapsto h_{\mathbb{M}}(u, v)$ is compactly supported for every $u \in \mathbb{R}$,
hence $h \mapsto \int_{\mathbb{R}} h_{\mathbb{M}}(u, v) \,\mathrm{d}v$ is a well-defined map
$\mathfrak{D}(\mathscr{E}) \rightarrow \mathfrak{E}(\mathbb{R})$.
In particular, the image of this map contains only $2 \pi \mathbb{Z}$ invariants,
hence we have a map $\int_{\mathbb{R}} \mathrm{d}v: \mathfrak{D}(\mathscr{E}) \rightarrow \mathfrak{D}(\pi_{\ell}(\mathscr{E}))$.

If $\mathrm{supp}\, h \subset U$ for some open, causally convex subset $U \subseteq \mathscr{E}$,
then we can also see that the support of $\int_{\mathbb{R}}\mathrm{d}v\, (h)$ is contained within $\pi_{\ell}(U)$.
Given that $\mathfrak{D}(\Sigma) \simeq \mathfrak{D}(\pi_{\ell}(U))$ for any Cauchy surface $\Sigma$ of $U$,
this means we can define a map $\eta_{(\Sigma, U)}: \mathfrak{D}(U) \rightarrow \mathfrak{D}(\Sigma)$.
One can then carry out a similar procedure to above to confirm that these maps constitute a natural transformation.

Considering diagram \eqref{eq:primary-field-correspondence},
it is clear that the limiting factor for establishing a correspondence between
chiral primary fields and primary fields in the sense of \cite[\S 4.2]{crawfordLorentzian2dCFT2021}
is in establishing the natural transformation $\eta$.
Given that we now have this natural transformation for
Minkowski space and the Einstein cylinder,
it follows from the results of \cite{beniniSkeletalModel2d2021}
that the correspondence should hold in general.

\subsection{General Form of Chiral Brackets}
\label{sec:constraints}

We have now seen several ways the Poisson bi-vector of the chiral algebra may be obtained.
In this section, we see that some generic assumptions about a conformal field theory
can yield tight constraints on the Poisson structure.

Other than the conformal covariance property,
the key feature we employ is \emph{Einstein causality}.
A classical theory $\mathfrak{P}: \mathsf{Loc} \rightarrow \mathsf{Poi}$
(or a quantum theory $\mathfrak{A}: \mathsf{Loc} \rightarrow \mathsf{Alg}$)
satisfies Einstein causality if,
for any pair $\mathcal{N}, \mathcal{N}' \in \mathcal{M}$ of casually convex
open sets which are spacelike separated,
the Poisson bracket (\emph{resp.} commutator) of any pair
$F \in \mathfrak{P}(\mathcal{N}), G \in \mathfrak{P}(\mathcal{N}')$ vanishes.

It is in this section that we take advantage of
our alternative definition of locally covariant fields in~\cref{sec:chiral-primary-fields},
as it enables us to use results from the theory of distributions in our analysis.
Define a \emph{field with values in $\mathfrak{P}_{\ell}(\Sigma, \mathcal{M})$}
(with no assumptions on covariance) as a linear continuous map
$\Psi^i: \mathfrak{D}(\Sigma) \rightarrow \mathfrak{P}_{\ell}(\Sigma, \mathcal{M})$
satisfying $\mathrm{supp}\, \Psi^i_{\Sigma}(f) \subseteq \mathrm{supp}\, f$.

\begin{proposition}
  \label{prop:poisson-bracket-is-distribution}
    Let $\Psi^i$ and $\Psi^j$ be a pair of fields
    with values in $\mathfrak{P}_{\ell}(\Sigma, \mathcal{M})$ such that,
    for any $\psi \in \mathfrak{E}(\Sigma)$, the Schwartz kernel $K^{i/j}_{\psi} \in \mathfrak{D}(\mathcal{M}^2)$ associated to the map
    $f \mapsto \Psi^{i/j}(f)^{(1)}[\psi]$ satisfies
    $\mathrm{WF}(K^{i/j}_{\psi}) \cap \left\{ (x, y; \xi, 0) \in \dot{T}^*\mathcal{M}^2 \right\} = \emptyset$.
    Then the map
    \begin{equation}
        f \otimes g
            \mapsto
        \left\{ \Psi^i(f), \Psi^j(g) \right\}_\ell^\Sigma[\psi]
    \end{equation}
    defines a distribution $E^{ij}_{\psi} \in \mathfrak{D}'(\Sigma^2)$.
\end{proposition}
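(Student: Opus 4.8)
The plan is to exhibit $E^{ij}_\psi$ explicitly as a composition of distributions and to verify Hörmander's wavefront-set criteria \cite{hormanderAnalysisLinearPartial2015} at each step. First, since $\Psi^i(f), \Psi^j(g) \in \mathfrak{F}_c(\Sigma)$, condition (2) of \cref{prop:chiral-poisson} with $n = 1$ — using, as noted there, that $\Xi^1_+ \cup \Xi^1_- = \dot{T}^*\Sigma$ — shows $\Psi^i(f)^{(1)}[\psi]$ and $\Psi^j(g)^{(1)}[\psi]$ are regular distributions, and the support hypothesis $\mathrm{supp}\,\Psi^{i/j}(f) \subseteq \mathrm{supp}\, f$ forces them to be compactly supported, so they lie in $\mathfrak{D}(\Sigma)$. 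Hence $\langle E_\Sigma, \Psi^i(f)^{(1)}[\psi] \otimes \Psi^j(g)^{(1)}[\psi]\rangle$ is a well-defined number for every $f, g$, and what must be shown is that the resulting bilinear form on $\mathfrak{D}(\Sigma)^2$ admits a Schwartz kernel.

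Next I would pass to kernels. The separately continuous bilinear maps $(f, h) \mapsto \langle \Psi^{i/j}(f)^{(1)}[\psi], h\rangle$ on $\mathfrak{D}(\Sigma)^2$ are given, by the Schwartz kernel theorem, by distributions $K^i_\psi, K^j_\psi \in \mathfrak{D}'(\Sigma^2)$. The support hypothesis says precisely that $\mathrm{supp}\, K^{i/j}_\psi$ lies in the diagonal $\Delta_\Sigma = \{(s,s)\}$, so locally $K^{i/j}_\psi$ is a finite sum $\sum_k u_k(s)\,\partial_t^k \delta(s - t)$ with distributional coefficients; the hypothesis of the proposition — that no wavefront covector of $K^{i/j}_\psi$ has vanishing second component — forces each $u_k$ to be smooth, hence $\mathrm{WF}(K^{i/j}_\psi) \subseteq N^*\Delta_\Sigma = \{(s, s; \xi, -\xi) : \xi \neq 0\}$.

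Now $E^{ij}_\psi$ is the distribution on $\Sigma^2$ with kernel $\int_{\Sigma^2} E_\Sigma(t_1, t_2)\, K^i_\psi(s_1, t_1)\, K^j_\psi(s_2, t_2)\, \mathrm{d}t_1\, \mathrm{d}t_2$, which I read as $\pi_*\big( (\iota^*E_\Sigma) \cdot (K^i_\psi \otimes K^j_\psi) \big)$, where $\iota, \pi \colon \Sigma^4 \to \Sigma^2$ are the projections onto the $(t_1, t_2)$- and $(s_1, s_2)$-factors. The pullback $\iota^* E_\Sigma$ is always defined ($\iota$ is a submersion), and since $\mathrm{WF}(E_\Sigma) = \{(r, r; \eta, -\eta) : \eta \neq 0\}$ — read off from the explicit form $E_{\Sigma_0} = \tfrac12 \delta'$, equivalently from $E_\Sigma = \tfrac12 {*}d_\Sigma$ — one has $\mathrm{WF}(\iota^*E_\Sigma) \subseteq \{(s_1, r, s_2, r; 0, \eta, 0, -\eta)\}$. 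The product with $K^i_\psi \otimes K^j_\psi$ then satisfies Hörmander's criterion: a covector of $\iota^*E_\Sigma$ of this shape would meet $-\mathrm{WF}(K^i_\psi \otimes K^j_\psi)$ only if $(s_1, r; 0, -\eta) \in \mathrm{WF}(K^i_\psi) \cup \underline{0}$ and $(s_2, r; 0, \eta) \in \mathrm{WF}(K^j_\psi) \cup \underline{0}$ with at least one nonzero, which the conormal bound forbids unless $\eta = 0$. Finally the diagonal supports of $E_\Sigma$, $K^i_\psi$ and $K^j_\psi$ make $\pi$ proper on the support of the product, so the fibre integration $\pi_*$ is legitimate and produces $E^{ij}_\psi \in \mathfrak{D}'(\Sigma^2)$; collecting the wavefront bounds through the composition moreover gives $\mathrm{WF}(E^{ij}_\psi) \subseteq \{(s,s;\xi,-\xi)\}$, which is what one wants for the sequel.

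The one genuinely delicate point is the product step, since $E_\Sigma$, $K^i_\psi$ and $K^j_\psi$ are each conormal to a diagonal and all three diagonals meet along the small diagonal of $\Sigma^4$; the wavefront bookkeeping is tight, and it is exactly the hypothesis on $\mathrm{WF}(K^{i/j}_\psi)$ that removes the dangerous covectors. One could also bypass Hörmander entirely: a continuous linear operator $\mathfrak{D}(\Sigma) \to \mathfrak{D}(\Sigma)$ with kernel supported on the diagonal is a differential operator with smooth coefficients, so locally $\Psi^{i/j}(f)^{(1)}[\psi] = \sum_k c^{i/j}_k \partial^k f$ and therefore $E^{ij}_\psi = \sum_{k,l} (\partial^k \otimes \partial^l)\big( (c^i_k \otimes c^j_l) E_\Sigma \big)$ is manifestly a distribution; but I would still carry out the wavefront computation above, since it also delivers $\mathrm{WF}(E^{ij}_\psi)$.
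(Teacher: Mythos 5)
Your proposal is correct, and its core strategy is the same as the paper's: realise $E^{ij}_{\psi}$ as the composition of $E_{\Sigma}$ with the Schwartz kernels $K^{i/j}_{\psi}$ and verify the two hypotheses of H\"{o}rmander's Theorem~8.2.14 --- properness of the projection restricted to the support, which you and the paper both extract from $\mathrm{supp}\,\Psi^{i/j}(f)\subseteq\mathrm{supp}\,f$, and the wavefront condition. Your pullback--product--pushforward presentation is only cosmetically different from the paper's direct appeal to the composition theorem. Where you genuinely add something is in the wavefront step. The covectors of $K^{i/j}_{\psi}$ that obstruct the composition are those vanishing in the slot paired with the test function and nonvanishing in the slot paired with $E_{\Sigma}$, while the hypothesis as written excludes covectors vanishing in the \emph{other} slot; the paper's proof declares the implication ``straightforward'' without bridging the two. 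Your intermediate observation --- that the diagonal support together with the hypothesis forces $\mathrm{WF}(K^{i/j}_{\psi})$ into the conormal set $\left\{(s,s;\xi,-\xi)\right\}$ of the diagonal --- supplies exactly the needed bridge, is insensitive to which slot the hypothesis refers to, and delivers $\mathrm{WF}(E^{ij}_{\psi})\subseteq\left\{(s,s;\xi,-\xi)\right\}$ as a useful by-product. The one step to tighten is the claim that a singular coefficient $u_k$ in the transversal expansion $\sum_k u_k(s)\,\partial_t^k\delta(s-t)$ necessarily contributes a forbidden covector: this relies on the lower bound $\mathrm{WF}(u)\times(\mathrm{supp}\,v\times\{0\})\subseteq\mathrm{WF}(u\otimes v)$ and on the fact that terms of distinct transversal order cannot cancel in the wavefront set; both are true but each deserves a sentence or a citation. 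Your closing remark, that a continuous kernel supported on the diagonal is a differential operator with smooth coefficients so that the proposition becomes immediate, is a legitimate shortcut the paper does not exploit.
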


\begin{proof}
  We equip the underlying space $\mathfrak{F}_c(\Sigma)$ of $\mathfrak{P}_{\ell}(\Sigma, \mathcal{M})$
  with the topology $\tau_{BDF}$, which is the initial topology with respect to the maps
  \begin{align*}
    \mathfrak{F}_c(\Sigma^n) &\rightarrow \mathfrak{E}'_{\Xi_n}(\Sigma^n)\\
                          F &\mapsto F^{(n)}[\psi]
  \end{align*}
  where the topology on
  $\mathfrak{E}'_{\Xi_n}(\Sigma^n) = \left\{ u \in \mathfrak{E}(\Sigma^n) \,|\, \mathrm{WF}(u) \in \Xi_n \right\}$
  is the \emph{Hörmander topology} \cite[p2]{brouderContinuityFundamentalOperations2016},
  and the cones $\Xi_n = \Xi^n_+ \cup \Xi^n_-$ are defined in \cref{prop:chiral-poisson}.
  Because we assume by definition that fields are linear and continuous in $f$,
  it follows that
  \begin{equation}
    \label{eq:primary-derivative}
      f \mapsto \Psi^i(f)^{(1)}[\psi]
  \end{equation}
  is a linear, continuous map $\mathfrak{D}(\Sigma) \rightarrow \mathfrak{D}(\Sigma) \subset \mathfrak{E}'(\Sigma)$.
  In particular, this means that the Schwartz kernel $K^{i/j}_{\psi}$ in the statement of the proposition is well-defined.

  We then claim that the desired distribution has the integral kernel
  \begin{align}
    \left\{ \Psi^i(z_1), \Psi^j(z_2) \right\}[\psi] :=
    \int_{\Sigma^2} E(y_1, y_2) K^i_{\psi}(z_1, y_1) K^j_{\psi}(z_2, y_2) \,\mathrm{d}y_1 \mathrm{d}y_2
  \end{align}
  where $K^j_{\psi}$ the corresponding distribution from $\Psi^j_{\Sigma}$.
  To show that this integral kernel is well defined,
  we use \cite[Theorem 8.2.14]{hormanderAnalysisLinearPartial2015}, for which the necessary conditions are
  \begin{enumerate}
    \item The map $\mathrm{supp}\, (K^i_{\psi} \otimes K^j_{\psi}) \ni (z_1, y_1, z_2, y_2) \mapsto (z_1, z_2)$ is proper,
          i.e.\ the pre-image of any compact set is compact.
    \item $\left\{ (y_1, y_2; -\eta_1, -\eta_2) \in \mathrm{WF}(E) \,|\, \exists (z_1, y_1, z_2, y_2; 0, \eta_1, 0, \eta_2) \in \mathrm{WF}(K^i_{\psi} \otimes K^j_{\psi}) \right\} = \emptyset$.
  \end{enumerate}

  The first of these follows from the fact that $\mathrm{supp}\, \Psi^{i}(f) \subseteq \mathrm{supp}\, f$,
  from which we may deduce that
  $\mathrm{supp}\, (K^i_{\psi} \otimes K^j_{\psi}) \subseteq \left\{ (z_1, z_1, z_2, z_2) \in \Sigma^4 \right\}_{(z_1, z_2) \in \Sigma^2}$,
  hence the projection map is clearly proper.
  The second is then a straightforward consequence of the restriction we imposed on $\mathrm{WF}(K^{i/j}_{\psi})$ by hypothesis
\end{proof}

\begin{remark}
  The technical condition on $\mathrm{WF}(K^{i/j}_{\psi})$ may appear restrictive.
  However, if one considers the motivating example of such fields,
  \begin{align}
    \Psi^{P}(f)[\psi] = \int_{\mathbb{R}} f(x) P(\phi(x), \partial_x \phi(x), \ldots, \partial_x^n \phi(x)) \,\mathrm{d}x,
  \end{align}
  where $P$ is some polynomial with coefficients in $\mathfrak{E}(\mathbb{R})$,
  then $K^P_{\psi}(x, y)$ is a polynomial in $\delta(x - y)$ and its derivatives
  (with coefficients in $\mathfrak{E}(\mathbb{R})$).
  This means $\mathrm{WF}(K^P_{\psi})$ is orthogonal to the tangent bundle of $\Delta_2 \subset \mathcal{M}^2$,
  hence, in particular it satisfies the condition set out in \cref{prop:poisson-bracket-is-distribution}.
\end{remark}

We can think of the map $\mathfrak{P}_{\ell}(\Sigma, \mathcal{M}) \rightarrow \mathbb{R}$ given by
evaluation at a fixed $\psi \in \mathfrak{E}(\Sigma)$ as a classical state.
One of these states, namely $\psi \equiv 0$ is special in that it is invariant under
the action of $\mathrm{Diff}_+(\Sigma) \simeq \mathrm{Aut}_{\mathsf{CCauchy}}(\Sigma, \mathcal{M})$ on $\mathfrak{P}_{\ell}(\Sigma, \mathcal{M})$.
This is the classical version of the statement that
the vacuum state is invariant under conformal transformations.
As such, one can consider the following results as statements about the
`vacuum expectation values' of the corresponding observables.

For the remainder of this section, we assume that
$\Sigma = \Sigma_0$, the $t=0$ Cauchy surface of $\mathbb{M}^2$.
This allows us to use both translation as well as dilation morphisms.
We do this primarily for convenience,
as we can then easily formulate the condition of homogeneous scaling.
To generalise, we may either use the embedding results such at \cref{thm:cauchy-extension-skeletal},
or we could adopt a more geometric approach using the \emph{microlocal scaling degree}~%
\cite[\S 6]{brunettiMicrolocalAnalysisInteracting2000}
of the relevant distributions.
Note that, if we consider only conformal automorphisms of $\mathbb{M}^2$ preserving $\Sigma_0$,
then the only relevant data of a classical chiral primary field is a map $\mathfrak{D}(\Sigma_0) \rightarrow \mathfrak{P}_{\ell}(\Sigma_0, \mathbb{M}^2)$
which is equivariant with respect to the aforementioned action of $\mathrm{Diff}_+(\Sigma_0)$ on $\mathfrak{P}_{\ell}(\Sigma_0, \mathbb{M}^2)$
and the appropriately weighted action on $\mathfrak{D}(\Sigma_0)$.
Moreover, the subgroup of $\mathrm{Diff}_+(\Sigma_0)$ for which each of these weighted actions coincide
is precisely the group of translations of $\Sigma_0$, which motivates the additional generality in the following statement.

\begin{proposition}
  Let $\Psi^{i}, \Psi^{j}: \mathfrak{D}(\Sigma_0) \rightarrow \mathfrak{P}_{\ell}(\Sigma_0, \mathbb{M}^2)$ be a pair of linear, continuous maps
  which are equivariant with respect to translations on $\Sigma_0$.
  Then the distribution with integral kernel
  $E^{ij}_{0}(x, x') = \left\{ \Psi^{i}(x),  \Psi^{j}(x')\right\}_{\ell}^{\Sigma_0}[0]$
  is translation invariant,
  i.e.\ there exists a distribution in $\mathfrak{D}(\Sigma_0)$
  (which we shall also denote $E^{ij}_0$ by an abuse of notation)
  such that $E^{ij}_{0}(x, x') = E^{ij}_{0}(x - x')$.
\end{proposition}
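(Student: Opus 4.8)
The plan is to promote every translation of $\Sigma_0$ to a conformal automorphism of $\mathbb{M}^2$, use the functoriality of $\mathfrak{P}_\ell$ established in \cref{prop:cauchy-algebra-natural} to move the bracket around, and then appeal to the standard characterisation of translation-invariant distributions. First I would fix $a \in \mathbb{R}$, let $\tau_a\colon x \mapsto x+a$ be the corresponding element of $\mathrm{Diff}_+(\Sigma_0)$, and let $\chi_a$ be the spacetime translation of $\mathbb{M}^2$ extending it. Since $\chi_a$ is an isometry, $(\tau_a,\chi_a)$ is an automorphism of $(\Sigma_0,\mathbb{M}^2)$ in $\mathsf{CCauchy}$ whose conformal factor is identically $1$; consequently the weighted pullback $\rho^{*}_{(1)}$ reduces to the ordinary pullback $\tau_a^{*}$ on configurations, and $\mathfrak{D}^{(\mu)}_{\ell}(\tau_a,\chi_a)$ reduces to the ordinary pushforward $\tau_{a*}$ on test functions, for every weight $\mu$. (This is exactly why the translation subgroup of $\mathrm{Diff}_+(\Sigma_0)$ is singled out in the discussion preceding the statement.) I would abbreviate $\alpha_a := \mathfrak{P}_\ell(\tau_a,\chi_a)$, which by \cref{prop:cauchy-algebra-natural} is a Poisson algebra automorphism of $\mathfrak{P}_\ell(\Sigma_0,\mathbb{M}^2)$ acting on functionals by $(\alpha_a F)[\psi] = F[\tau_a^{*}\psi]$.

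The core of the argument is then a short computation. Translation-equivariance of $\Psi^{i}$ and $\Psi^{j}$ reads $\Psi^{i/j}(\tau_{a*}f) = \alpha_a\Psi^{i/j}(f)$; combining this with the facts that $\alpha_a$ preserves $\{\cdot,\cdot\}_\ell^{\Sigma_0}$ and that the zero configuration is fixed by the linear map $\tau_a^{*}$, one obtains
\begin{align*}
  \langle E^{ij}_0,\, \tau_{a*}f \otimes \tau_{a*}g \rangle
  &= \bigl\{ \Psi^{i}(\tau_{a*}f),\, \Psi^{j}(\tau_{a*}g) \bigr\}_\ell^{\Sigma_0}[0]
   = \bigl\{ \alpha_a\Psi^{i}(f),\, \alpha_a\Psi^{j}(g) \bigr\}_\ell^{\Sigma_0}[0] \\
  &= \alpha_a\!\left( \bigl\{ \Psi^{i}(f),\, \Psi^{j}(g) \bigr\}_\ell^{\Sigma_0} \right)\![0]
   = \bigl\{ \Psi^{i}(f),\, \Psi^{j}(g) \bigr\}_\ell^{\Sigma_0}[\tau_a^{*}0]
   = \langle E^{ij}_0,\, f \otimes g \rangle .
\end{align*}
Since $\tau_{a*}$ is translation of test functions, this says precisely that $E^{ij}_0 \in \mathfrak{D}'(\Sigma_0^2)$ is invariant under the diagonal action of the translation group on $\Sigma_0^2 \cong \mathbb{R}^2$. (That $E^{ij}_0$ is a genuine distribution in the first place is \cref{prop:poisson-bracket-is-distribution}.)

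To finish I would invoke the standard structure theorem for such distributions: passing to coordinates $(y,\sigma) = (x-x',\,x')$, diagonal invariance becomes independence of $\sigma$, so $E^{ij}_0 = u \otimes 1_\sigma$ for a unique $u \in \mathfrak{D}'(\mathbb{R})$, i.e.\ $E^{ij}_0(x,x') = u(x-x')$. I do not expect a genuine obstacle here: the whole argument is bookkeeping once the translation automorphisms are in place, and the only step that really needs care — the closest thing to a difficulty — is checking that the twisted pullback and pushforward appearing in the definitions of $\mathfrak{D}^{(\mu)}_{\ell}$ and $\mathfrak{P}_\ell$ collapse to the untwisted operations on the translation subgroup, which is precisely the point where the hypothesis (translation-equivariance, rather than full $\mathrm{Diff}_+(\Sigma_0)$-equivariance) and the choice of the fixed configuration $\psi \equiv 0$ are both essential.
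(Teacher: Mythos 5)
Your proposal is correct and follows essentially the same route as the paper's own proof: promote the translation to a $\mathsf{CCauchy}$ automorphism, use that $\mathfrak{P}_{\ell}t_c$ is a Poisson algebra homomorphism together with the equivariance hypothesis, and conclude via invariance of the configuration $\psi\equiv 0$ under the (trivially weighted) pullback. The only additions are your explicit remarks that the conformal factor of a translation is $1$ and the final coordinate change $(y,\sigma)=(x-x',x')$, both of which the paper leaves implicit.
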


\begin{proof}
  Let $t_c : x \mapsto x + c$ be a translation operator on $\Sigma_0$,
  it suffices to show that
  \begin{equation}
    \label{eq:translation-1}
      \left\langle E^{ij}_0, {t_{c}}_{*}f \otimes  {t_{c}}_{*}g \right\rangle = \left\langle E^{ij}_0, f \otimes  g \right\rangle
  \end{equation}
  for all $f, g \in \mathfrak{D}(\Sigma_0)$.
  The hypothesised equivariance of our fields amounts to the statement that
  \begin{equation}
      \Psi^i({t_c}_{*}f) = \mathfrak{P}_{\ell}t_c \Psi^i(f).
  \end{equation}
  Expanding out the left-hand side of~\eqref{eq:translation-1},
  and making use of the fact that $\mathfrak{P}_{\ell}{t_c}_{*}$ is a Poisson algebra homomorphism, we find
  \begin{equation}
      \left\langle E^{ij}_0,  {t_{c}}_{*}f \otimes  {t_{c}}_{*}g \right\rangle
      =
      \left( \mathfrak{P}_{\ell} t_c \left\{ \Psi^i(f),  \Psi^j(g) \right\}_{\ell}^{\Sigma_0} \right)[0].
  \end{equation}
  Recall that these homomorphisms were defined by
  $\left( \mathfrak{P}_{\ell} \rho F \right)[\psi] := F[\rho^{*}_{(1)} \psi]$.
  However, in this case our choice of configuration is invariant under the action of all such morphisms,
  hence we arrive at the desired equation.
\end{proof}

\begin{proposition}
    $E^{ij}_{0}$ is supported on the diagonal
    $\left\{ (x, x) \right\}_{x \in \Sigma_0} \subset \Sigma_0^2$,
    hence it is of the form
    \begin{equation}
      \label{eq:constraint-diagonal}
        E^{ij}_{0}(x, x')
            =
        \sum_{k = 0}^n
            a_k \left( \frac{\partial}{\partial x} \right)^k \delta(x - x'),
    \end{equation}
    for some $n \in \mathbb{N}$ and $a_k \in \mathbb{R}$.
\end{proposition}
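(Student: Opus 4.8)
The plan is to prove the two assertions of the statement separately: first that $E^{ij}_0$ is supported on the diagonal $\Delta = \{(x,x)\}_{x \in \Sigma_0}$ of $\Sigma_0^2$, and then that a translation-invariant distribution supported on $\Delta$ is necessarily a finite sum of derivatives of the delta function on the diagonal. The first step is a form of Einstein causality for the chiral theory, and is really driven by the fact that the chiral commutator function $E_{\Sigma_0}$ is itself supported on the diagonal; the second step is a standard application of the structure theorem for distributions supported at a point, once translation invariance (established in the preceding proposition) is used to reduce to a distribution in a single variable.

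First I would establish the diagonal support. By \eqref{eq:chiral-commutator-minkowski}, $E_{\Sigma_0}(s,s') = \tfrac{1}{2}\delta'(s - s')$, so $\mathrm{supp}\, E_{\Sigma_0} \subseteq \Delta$. Since $\Psi^i,\Psi^j$ are fields with values in $\mathfrak{P}_\ell(\Sigma_0,\mathbb{M}^2)$, they satisfy $\mathrm{supp}\,\Psi^{i/j}(f) \subseteq \mathrm{supp}\, f$; because the support of a functional contains the support of each of its derivatives, this gives $\mathrm{supp}\,\Psi^{i/j}(f)^{(1)}[0] \subseteq \mathrm{supp}\, f$. Hence, if $f,g \in \mathfrak{D}(\Sigma_0)$ have disjoint supports, then $\mathrm{supp}\bigl(\Psi^i(f)^{(1)}[0] \otimes \Psi^j(g)^{(1)}[0]\bigr) \subseteq \mathrm{supp}\, f \times \mathrm{supp}\, g$ is disjoint from $\Delta$, so by \eqref{eq:chiral-bracket-def} the pairing $\langle E^{ij}_0, f\otimes g\rangle = \{\Psi^i(f),\Psi^j(g)\}_\ell^{\Sigma_0}[0] = 0$. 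A point $(x,x') \in \Sigma_0^2$ with $x \ne x'$ admits disjoint open neighbourhoods $U \ni x$, $U' \ni x'$; since finite linear combinations of tensor products $f \otimes g$ with $\mathrm{supp}\, f \subseteq U$ and $\mathrm{supp}\, g \subseteq U'$ are dense in $\mathfrak{D}(U\times U')$, and $E^{ij}_0$ is continuous, $E^{ij}_0$ vanishes on all of $\mathfrak{D}(U \times U')$, so $(x,x') \notin \mathrm{supp}\, E^{ij}_0$. Therefore $\mathrm{supp}\, E^{ij}_0 \subseteq \Delta$.

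Next I would invoke the preceding proposition, by which $E^{ij}_0$ is translation invariant, i.e.\ descends to a distribution in the single variable $x - x'$ on $\Sigma_0 \simeq \mathbb{R}$; diagonal support then says exactly that this one-variable distribution is supported at the origin. By the structure theorem for distributions supported at a point \cite[Theorem~2.3.4]{hormanderAnalysisLinearPartial2015}, it is a \emph{finite} linear combination $\sum_{k=0}^n a_k \delta^{(k)}$ with $a_k \in \mathbb{R}$ and $n < \infty$; rewriting $\delta^{(k)}(x-x') = (\partial/\partial x)^k\delta(x-x')$ yields precisely \eqref{eq:constraint-diagonal}, with the finiteness of $n$ supplied automatically by the point-support structure theorem. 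The argument is essentially bookkeeping; the only places that call for mild care are the density-and-continuity step that passes from vanishing on separated tensor products to vanishing on all test functions supported off the diagonal, and the verification that the support-propagation property of the fields descends to their functional derivatives — neither of which constitutes a real obstacle.
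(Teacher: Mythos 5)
Your proof is correct, but the route to the diagonal-support claim differs from the paper's. The paper deliberately does \emph{not} prove diagonal support here: it defers it to \cref{thm:chiral-causality}, where chiral causality is derived structurally from Einstein causality of the full spacetime theory, combined with the support-propagation property $\mathrm{supp}\,\Psi^i(f)\subseteq\mathrm{supp}\,f$ of \cref{lem:hom-scale-support}; the point of that detour is that the argument is model-independent and works for any theory satisfying the Einstein causality axiom, which is the stated aim of the section (``generic assumptions\dots yield tight constraints''). You instead read the diagonal support directly off the explicit form \eqref{eq:chiral-commutator-minkowski} of the chiral commutator function, $E_{\Sigma_0}(s,s')=\tfrac12\delta'(s-s')$, pair it against $\Psi^i(f)^{(1)}[0]\otimes\Psi^j(g)^{(1)}[0]$ via \eqref{eq:chiral-bracket-def}, and use support propagation plus the density of $\mathfrak{D}(U)\otimes\mathfrak{D}(U')$ in $\mathfrak{D}(U\times U')$. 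This is perfectly valid for the proposition as stated (the bracket in question \emph{is} the concrete one built from $E_{\Sigma_0}$), and is more elementary and self-contained, but it is model-specific and would not survive the generalisation the paper has in mind. A genuine improvement on your side: you make explicit the reduction, via translation invariance from the preceding proposition, to a one-variable distribution supported at the origin before invoking the structure theorem \cite[Theorem~2.3.4]{hormanderAnalysisLinearPartial2015} (which also supplies finiteness of $n$ via compact support implying finite order); the paper cites that theorem for ``a distribution on the diagonal'' without spelling out this step.
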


The first statement is actually a consequence of Einstein causality in the full theory,
as well as the fact that $\mathrm{supp} \, \Psi^i(f) \subseteq \mathrm{supp} \, f$,
as was shown in \cref{lem:hom-scale-support}.
As such, we shall save the proof of this until~\cref{thm:chiral-causality}
The fact that a distribution on the diagonal is necessarily of the form~\eqref{eq:constraint-diagonal}
is~\cite[Theorem~2.3.4]{hormanderAnalysisLinearPartial2015}.

Now we have taken full advantage of the translation morphisms,
we introduce the dilation morphisms, for $\Lambda>0$,
$m_\Lambda: \mathbb{M}^2 \to \mathbb{M}^2; x \mapsto \Lambda \cdot x$.
Clearly these preserve $\Sigma_0$,
and we shall denote their restriction/co-restriction to $\Sigma_0$ also by $m_\Lambda$.
The next result is the first that utilises \emph{conformal} covariance,
which is why it is only now relevant whether or not the fields $\Psi^{i/j}$ are \emph{primary}.

First, we need to briefly introduce a new definition:

\begin{definition}
  A distribution $u \in \mathfrak{D}'(\mathbb{R}^n)$
  \emph{scales homogeneously with degree} $\mu \in \mathbb{R}$ if,
  $\forall \Lambda > 0$,
  $m_{\Lambda}^{*}u = \Lambda^{-\mu}u$ or, in terms of integral kernels,
  $u(\Lambda x) = \Lambda^{-\mu} u(x)$.
\end{definition}

\begin{proposition}
  \label{prop:full-constraint}
    If $\Psi^i$ and $\Psi^j$ are homogeneously scaling with weights $\mu_i, \mu_j \in \mathbb{N}$ respectively,
    then $E^{ij}_{0}$ scales homogeneously with degree $\mu_i + \mu_j$,
    hence
    \begin{equation}
      \label{eq:final-constraint}
        E^{ij}_0(x, x')
            =
        a \left( \frac{\partial}{\partial x} \right)^{\mu_i + \mu_j - 1}
        \delta(x - x').
    \end{equation}
\end{proposition}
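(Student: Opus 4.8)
The strategy is to combine the previous two propositions. We already know from the translation-invariance proposition that $E^{ij}_0(x,x') = E^{ij}_0(x-x')$ for a single-variable distribution, and from the support proposition that $E^{ij}_0$ is supported at the origin, so that $E^{ij}_0 = \sum_{k=0}^n a_k \delta^{(k)}$ for some finite $n$ and real constants $a_k$. It therefore remains to (i) establish the homogeneous scaling of $E^{ij}_0$ with degree $\mu_i+\mu_j$, and (ii) observe that this pins down exactly one term in the sum. Step (ii) is immediate once (i) is in hand: the distribution $\delta^{(k)}$ on $\mathbb{R}$ scales homogeneously with degree $k+1$ (since $\delta^{(k)}(\Lambda x) = \Lambda^{-(k+1)}\delta^{(k)}(x)$), and distinct $k$ give linearly independent scaling behaviours, so demanding degree $\mu_i+\mu_j$ forces $a_k = 0$ for all $k \neq \mu_i+\mu_j-1$, leaving $E^{ij}_0(x,x') = a\,(\partial/\partial x)^{\mu_i+\mu_j-1}\delta(x-x')$ as claimed.

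For step (i) I would argue exactly as in the translation-invariance proof, replacing $t_c$ by the dilation $m_\Lambda$ restricted to $\Sigma_0$. The homogeneous-scaling hypothesis on $\Psi^i$ means that under the $\mathsf{Cauchy}(\mathbb{M}^2)_0$-morphism given by $m_\Lambda$ (whose $\omega_\ell$ is the constant $\Lambda$), naturality reads $\Psi^i((m_\Lambda)_* f) = \mathfrak{P}_\ell m_\Lambda\,\Psi^i\big((m_\Lambda)^{(\mu_i-1)}_{*} \text{-type weighting}\big)$; more precisely, since $\mathfrak{D}^{(\mu_i)}_\ell(m_\Lambda)(f) = \Lambda^{\mu_i-1}(m_\Lambda)_* f$, equivariance gives $\Psi^i(\Lambda^{\mu_i-1}(m_\Lambda)_*f) = \mathfrak{P}_\ell m_\Lambda\,\Psi^i(f)$. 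Pairing $E^{ij}_0$ against $(m_\Lambda)_*f \otimes (m_\Lambda)_*g$, using that $\mathfrak{P}_\ell m_\Lambda$ is a Poisson homomorphism and that the configuration $\psi \equiv 0$ is fixed by every $\mathfrak{P}_\ell\rho$ (as in the previous two proofs), one extracts the scaling factors: each field contributes $\Lambda^{-(\mu_i-1)}$ from inverting the weight, and the Jacobian of the change of variables $x \mapsto \Lambda x$ in the two-dimensional pairing contributes a further $\Lambda^{-2}$, while $(m_\Lambda)_*$ on test functions contributes $\Lambda^{+1}$ per slot. Bookkeeping these powers (the cleanest check is to test against the monomial fields $\Psi^n$ of \cref{ex:monomial-fields}, or simply to track the weighted-pushforward definition carefully) yields $\langle E^{ij}_0, (m_\Lambda)_*f\otimes(m_\Lambda)_*g\rangle = \Lambda^{-(\mu_i+\mu_j)}\langle E^{ij}_0, f\otimes g\rangle$, which is precisely the statement that $E^{ij}_0$ scales homogeneously with degree $\mu_i+\mu_j$.

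The main obstacle is purely the scaling-weight bookkeeping in step (i): one must be careful to account consistently for three separate sources of powers of $\Lambda$ — the conformal weight $\mu-1$ built into the weighted pushforward $\mathfrak{D}^{(\mu)}_\ell$, the density/Jacobian factors coming from the change of variables in the integral pairing over $\Sigma_0^2$, and the normalisation convention for how $m_\Lambda$ acts on $\mathfrak{D}(\Sigma_0)$ versus $\mathfrak{E}(\Sigma_0)$. Getting the exponent to come out as exactly $\mu_i+\mu_j$ (rather than, say, $\mu_i+\mu_j-2$ or $\mu_i+\mu_j+2$) is the one place an error could slip in, so I would double-check it against the explicit commutator $E_{\Sigma_0}(s,s') = \tfrac12\delta'(s-s')$ of \eqref{eq:chiral-commutator-minkowski}, which corresponds to $\mu_i=\mu_j=1$ and indeed has degree $2 = \mu_i+\mu_j$ (as $\delta'$ scales with degree $2$), confirming the convention. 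Everything else — the support statement, translation invariance, and the extraction of a single term — is either already proved or a direct appeal to \cite[Theorem 2.3.4]{hormanderAnalysisLinearPartial2015} together with the elementary scaling of $\delta^{(k)}$.
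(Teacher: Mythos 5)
Your proposal follows essentially the same route as the paper: homogeneous scaling of $E^{ij}_0$ is extracted from the dilation equivariance of the fields together with the invariance of the evaluation-at-zero ``state'', and is then compared against the scaling degree $k+1$ of $\delta^{(k)}$ to kill all but one term of the already-established diagonal form. The only blemish is your final displayed scaling identity, where the Jacobian factor is absorbed incorrectly (it should read $\langle E^{ij}_0, (m_\Lambda)_* f\otimes (m_\Lambda)_* g\rangle = \Lambda^{2-(\mu_i+\mu_j)}\langle E^{ij}_0, f\otimes g\rangle$, the extra $\Lambda^{2}$ cancelling against the $\Lambda^{-2}$ from $m_\Lambda^*$ on a bidistribution), but you explicitly flag this bookkeeping as the point to verify and your proposed check against $E_{\Sigma_0}=\tfrac{1}{2}\delta'$ would catch it.
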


\begin{proof}
  First we consider the claim of homogeneous scaling.
  Similarly to the translation invariance, it will suffice to show that
  \begin{equation}
      \left\langle m_{\Lambda}^{*} E^{ij}_0, f \otimes g \right\rangle
      \equiv
      \Lambda^{- 2} \left\langle E^{ij}_0, ({m_{\Lambda}}_*f) \otimes ({m_{\Lambda}}_*g) \right\rangle
      =
      \left\langle E^{ij}_0, f \otimes g \right\rangle
  \end{equation}
  For the dilation morphisms introduced above,
  the conformal factor $\omega_{\ell}$ is the constant $\Lambda^{-1}$,
  hence the naturality condition implies
  \begin{equation}
    \Lambda^{\mu_i - 1} \Psi^i \left( {m_{\Lambda}}_{*} \left( f \right) \right)
    =
    \mathfrak{P}_{\ell}m_{\Lambda} \Psi^i (f),
  \end{equation}
  where, since $\Lambda^{\mu_i - 1}$ is constant, we can simply pull it outside $\Psi^i$.
  Bringing these factors over to the right-hand side, we find
  \begin{equation}
      \left\langle E^{ij}_0, ({m_{\Lambda}}_*f) \otimes ({m_{\Lambda}}_*g) \right\rangle
      =
      \Lambda^{2 - \mu_i + \mu_j}
      \left( \mathfrak{P}_{\ell}m_{\Lambda} \left\{ \Psi^i(f), \Psi^j(g) \right\}_{\ell}^{\Sigma_0} \right)[0].
  \end{equation}
  Again, noting that our `state'' $F \mapsto F[0]$ is invariant under
  the $\mathfrak{P}_{\ell} m_{\Lambda}$ morphisms, we arrive at the desired equation.

  A quick calculation shows that that the distribution $(\partial/\partial x)^k \delta(x)$
  scales homogeneously with degree $1 + k$.
  \footnote{
    The general result is that for $\delta \in \mathfrak{D}'(\mathbb{R}^n)$,
    and $\alpha \in \mathbb{N}^n$ a multi-index,
    the distribution $\partial^{\alpha} \delta$ scales homogeneously with degree $n + |\alpha|$
  }
  As we have already established $E^{ij}_0$ to be of the form~\eqref{eq:constraint-diagonal},
  we see that all but one of these terms must vanish, leaving us with~\eqref{eq:final-constraint}.
\end{proof}

Finally, applying this result when $\Psi^i = \Psi^j$ is the chiral boson, we get.

\begin{corollary}
    The commutator of the chiral boson on $\Sigma_0$ is proportional to
    $\delta'(x - x')$.
\end{corollary}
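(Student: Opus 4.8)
The plan is to identify the chiral boson with a field already analysed in the text and then read off the statement from \cref{prop:full-constraint}. The chiral boson on the $t=0$ surface,
\begin{equation*}
  \Psi_{(\Sigma_0, \mathbb{M}^2)}(f)[\psi] = \int_{\Sigma_0} f\,\psi\,\mathrm{d}V_{\Sigma_0},
\end{equation*}
is exactly the $n=1$ member of the family $\Psi^n$ of \cref{ex:monomial-fields} (recall $\mathrm{d}V_\Sigma = \sqrt{\gamma'}\,\mathrm{d}s$ in the coordinates used there, so the volume factor is $\gamma'^{1/2} = \gamma'^{n/2}$ for $n=1$), which was shown there to be a chiral primary field of weight $1$. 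Restricting, it is in particular a homogeneously scaling locally covariant field of weight $\mu = 1$, hence equivariant under the translations and the dilations $m_\Lambda$ of $\Sigma_0$ as required by the results of \cref{sec:constraints}.

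First I would check that $\{\Psi_{(\Sigma_0,\mathbb{M}^2)}(f), \Psi_{(\Sigma_0,\mathbb{M}^2)}(g)\}_\ell^{\Sigma_0}[0]$ genuinely defines a distribution, i.e.\ verify the hypothesis of \cref{prop:poisson-bracket-is-distribution}. Since $\Psi_{(\Sigma_0,\mathbb{M}^2)}(f)$ is linear in $f$ with $\psi$-independent first functional derivative equal to $f\,\mathrm{d}V_{\Sigma_0}$, the associated Schwartz kernel is $\delta(x-y)$ up to the volume factor, whose wavefront set is the conormal bundle of the diagonal and therefore contains no covector of the form $(x,y;\xi,0)$. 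Thus \cref{prop:poisson-bracket-is-distribution} applies and $E^{11}_0 \in \mathfrak{D}'(\Sigma_0^2)$ is well defined.

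With this in place, the chain of results in \cref{sec:constraints} runs with $\Psi^i = \Psi^j = \Psi_{(\Sigma_0,\mathbb{M}^2)}$ and weights $\mu_i = \mu_j = 1 \in \mathbb{N}$: the translation-invariance result together with the diagonal-support statement reduce $E^{11}_0$ to a finite sum $\sum_k a_k\,\partial_x^k\delta(x-x')$, and \cref{prop:full-constraint} collapses this to
\begin{equation*}
  E^{11}_0(x, x') = a\left(\frac{\partial}{\partial x}\right)^{\mu_i + \mu_j - 1}\!\delta(x - x') = a\,\delta'(x - x'),
\end{equation*}
which is the claim; comparison with \eqref{eq:chiral-commutator-minkowski} identifies $a = \tfrac12$, although only proportionality is asserted.

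The only real obstacle is that the diagonal-support input, although stated earlier, is not yet proved at this point in the text. For the present corollary the full argument is not needed: it suffices to observe that for $f,g \in \mathfrak{D}(\Sigma_0)$ with disjoint supports — hence, since $\Sigma_0$ is spacelike, spacelike-separated subsets of $\mathbb{M}^2$ — one may take $\epsilon$ small enough that the lifts $\partial_{\Sigma_0,\epsilon}^{*}\Psi_{(\Sigma_0,\mathbb{M}^2)}(f)$ and $\partial_{\Sigma_0,\epsilon}^{*}\Psi_{(\Sigma_0,\mathbb{M}^2)}(g)$ provided by \cref{prop:classical-chiral-homo} are supported in spacelike-separated regions; Einstein causality of the full theory $\mathfrak{P}(\mathbb{M}^2)$ then forces their Peierls bracket to vanish, so $\langle E^{11}_0, f\otimes g\rangle = 0$ whenever $\mathrm{supp}\,f \cap \mathrm{supp}\,g = \emptyset$. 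Everything else is bookkeeping already carried out in \cref{sec:constraints}.
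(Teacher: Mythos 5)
Your proposal is correct and follows the paper's own route: the corollary is simply \cref{prop:full-constraint} specialised to $\Psi^i = \Psi^j$ the chiral boson, a (homogeneously scaling) primary of weight $\mu_i = \mu_j = 1$, so that $E^{11}_0 \propto \delta^{(\mu_i+\mu_j-1)}(x-x') = \delta'(x-x')$. The only place you deviate is in supplying a direct argument for the diagonal support of $E^{11}_0$ --- shrinking $\epsilon$ so the lifts under $\partial^{*}_{\Sigma_0,\epsilon}$ of disjointly supported test functions land in spacelike-separated regions and invoking Einstein causality together with the injective homomorphism of \cref{prop:classical-chiral-homo} --- whereas the paper defers this to the general \cref{thm:chiral-causality}; both rest on the same mechanism, and your verification of the remaining hypotheses is sound.
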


Naturally, we already know this to be the case,
but it is nevertheless significant that conformal covariance alone determines everything except the constant of proportionality.
We shall see in~\cref{sec:boundary-term} that,
for the chiral boson, even this constant can determined
using partial knowledge of the \textsc{ope} of $\Psi$ with itself.

\begin{remark}
  Note that $\delta^{(\mu_i + \mu_j - 1)}$ is skew-symmetric precisely when $\mu_i + \mu_j$ is even.
  This is clearly satisfied for the bracket of a field with itself given that its weight is a natural number.
  Moreover, we can easily see how a similar result would look for fermionic fields.
  If the Poisson bracket was suitably graded,
  then the bracket of a fermionic field with itself
  would instead be a symmetric distribution supported on the diagonal
  and would hence vanish unless the weight $\mu$ was a half-integer.
\end{remark}

\subsection{Chiral Stress-Energy Tensor and Conformal Symmetry}
\label{sec:chiral-set-generate}

Another important example of a chiral primary field is the
(chiral component of the) stress-energy tensor.
This is simply half the $n=2$ case from \cref{ex:monomial-fields}
\begin{equation}
    T_\Sigma(f)[\psi] := \frac{1}{2}\int_\mathcal{I} f(s) \psi^2(s) \sqrt{\gamma'(s)} \mathrm{d}V_\Sigma(s).
\end{equation}
Note that the corresponding observable in the full algebra,
given as an integral kernel, is then
$\partial_\Sigma^*T_\Sigma(x)[\phi] = \tfrac{1}{2}(\partial_\Sigma \phi)^2(x)$,
so this is indeed the left-moving component of
the stress-energy tensor for the massless scalar field.

It is well-understood in the physical literature that
spacetime symmetries are generated infinitesimally by the stress-energy tensor:
either with the Poisson bracket in the classical theory,
or the commutator in the quantum theory.
In the framework of locally covariant \textsc{qft},
this fact is encapsulated by the principle of \emph{relative Cauchy evolution}.
The concept is a little more subtle than in the Wightman framework,
as a generic spacetime does not posess translation symmetries
(which by Noether's theorem would then be associated to momentum operators).

In relative Cauchy evolution,
rather than considering infinitesimal translations,
one instead perturbs the spacetime metric slightly,
$g \mapsto g + \epsilon h$ for some \emph{compactly supported}
symmetric tensor $h$.
One then compares how the time-evolution of an observable $\mathcal{O}$
localised in the past of $\mathrm{supp} \, h$ proceeds in
the perturbed and unperturbed spacetimes,
the discrepancy is what we call the relative Cauchy evolution of $\mathcal{O}$
with respect to $\epsilon h$.
For more details, see \cite[\S4.1]{brunettiGenerallyCovariantLocality2003}.
Notably, in many examples, it has been shown that relative Cauchy evolution
is generated infinitesimally by the stress-energy tensor.

In the present framework, we can demonstrate this explicitly with the following result

\begin{proposition}
    Let $\Sigma \subset \mathscr{E}$ be a Cauchy surface of the Einstein cylinder
    $\mathscr{E}$
    and let $h \in \mathfrak{D}(\Sigma)$ such that the flow
    $\rho^{(t)} \in \mathrm{Diff}(\Sigma)$ generated by the vector field
    $h \, \mathrm{d}/\mathrm{d}x$
    is orientation preserving at every $t$ where it is defined.
    Let $\Psi$ be the chiral boson \eqref{eq:chiral-boson}
    and $T$ be the chiral stress energy tensor.
    Then
    \begin{equation}
      \label{eq:set-conformal}
        \left\{ T_\Sigma(h), \Psi_\Sigma(f) \right\}_\ell^\Sigma
            =
        - \Psi_\Sigma(h f')
            =
        \frac{d}{dt} \left( \mathfrak{P}_{\ell} \rho^{(t)} \Psi_\Sigma(f) \right)\big|_{t = 0}.
    \end{equation}
    In other words, the $\mathrm{Diff}_+(\Sigma)$ covariance of $\Psi$
    is generated by taking the Peierls bracket with $T$.
\end{proposition}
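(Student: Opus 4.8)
The plan is to verify each of the two equalities in~\eqref{eq:set-conformal} separately, treating the middle term $-\Psi_\Sigma(hf')$ as the hub. For the first equality, I would simply compute the Peierls bracket using the definition~\eqref{eq:chiral-bracket-def}, namely $\left\{ T_\Sigma(h), \Psi_\Sigma(f) \right\}_\ell^\Sigma[\psi] = \left\langle E_\Sigma, T_\Sigma(h)^{(1)}[\psi] \otimes \Psi_\Sigma(f)^{(1)}[\psi] \right\rangle$. Since $T_\Sigma(h)[\psi] = \tfrac12\int h\,\psi^2\,\mathrm{d}V_\Sigma$, its first functional derivative is $T_\Sigma(h)^{(1)}[\psi] = h\psi$ (as a density/distribution on $\Sigma$), while $\Psi_\Sigma(f)^{(1)}[\psi] = f$ is constant in $\psi$. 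Using the coordinate form $E_{\Sigma_0}(s,s') = \tfrac12\delta'(s-s')$ from~\eqref{eq:chiral-commutator-minkowski} — which by the $\mathrm{Diff}_+(\Sigma)$-invariance of $E_\Sigma$ noted in the remark after~\cref{prop:chiral-poisson} holds (up to the volume-form bookkeeping) on any $\Sigma$ — the pairing becomes $\tfrac12\int \delta'(s-s')\,(h\psi)(s)\,f(s')\,\mathrm{d}s\,\mathrm{d}s'$. Integrating the derivative by parts onto $f$ and then reorganising gives a term proportional to $\int h\psi f'$ plus a term with $h'$; the geometric formulation~\eqref{eq:geometric-chiral-bracket}, $\tfrac12\!\int_\Sigma a (*d_\Sigma b)\,\mathrm{d}V_\Sigma = -\tfrac12\!\int_\Sigma a\,d_\Sigma b$, is the cleaner way to organise this and should yield exactly $-\int_\Sigma (h f')\,\psi\,\mathrm{d}V_\Sigma = -\Psi_\Sigma(hf')[\psi]$ after the derivative lands entirely on $f$ (the chiral boson being weight $1$ is what makes the Jacobian factors cancel). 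This is a short, essentially one-line distributional computation once the kernel is in hand.

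For the second equality I would unwind the right-hand side using the functoriality established in~\cref{prop:cauchy-algebra-natural}: $\mathfrak{P}_\ell\rho^{(t)}$ acts on the linear observable $\Psi_\Sigma(f)$ by $\Psi_\Sigma(f)\mapsto \Psi_\Sigma(f)\circ (\rho^{(t)})^*_{(1)}$, i.e. evaluated at $\psi$ it is $\int_\Sigma f\cdot (\rho^{(t)})^*_{(1)}\psi\,\mathrm{d}V_\Sigma$. Pushing the reparametrisation onto $f$ via the change of variables $s\mapsto \rho^{(t)}(s)$ and using that the weighted pullback $(\rho^{(t)})^*_{(1)}\psi = \omega_t\,\rho^{(t)*}\psi$ with conformal factor $\omega_t^2 = (\rho^{(t)\prime})$ recombines with the Jacobian of the substitution exactly so that the result is $\int_\Sigma (\rho^{(t)}_* {}^{(-1)}f)\,\psi\,\mathrm{d}V_\Sigma$ for an appropriately weighted pushforward — this is precisely the content of $\Psi$ being a chiral primary field of weight $1$. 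Then I would differentiate in $t$ at $t=0$: the flow $\rho^{(t)}$ is generated by $h\,d/dx$, so $\frac{d}{dt}\big|_0 \rho^{(t)}(s) = h(s)$, and the infinitesimal version of the weighted action on the test function $f$ is the Lie-derivative-type expression $-(hf)' + (\text{weight term})$, which for weight $1$ collapses to $-h f' $ up to the same bookkeeping; hence $\frac{d}{dt}\big|_0\left(\mathfrak{P}_\ell\rho^{(t)}\Psi_\Sigma(f)\right) = -\Psi_\Sigma(hf')$. (Justifying that one may differentiate under the pairing is routine: the $t$-dependence enters smoothly through $\rho^{(t)}$, which depends smoothly on $t$ by standard ODE theory, and everything is a finite-dimensional integral of smooth compactly supported data.)

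The main obstacle — and the only genuinely delicate point — is keeping the volume-form and conformal-weight bookkeeping consistent across the three terms, since $E_\Sigma$, the definition of $\Psi_\Sigma$, and the weighted pullback $(\cdot)^*_{(1)}$ all carry $\sqrt{\gamma'}$-type factors that must conspire to cancel; getting a sign or a power wrong here would spoil the identity. I would sidestep most of this by working in the invariant formulation~\eqref{eq:geometric-chiral-bracket} and with the observation that $\Psi$ is a weight-$1$ primary field, so that the problem reduces to the flat model $\Sigma_0\subset\mathscr{E}$ with $E_{\Sigma_0} = \tfrac12\delta'$, where the computation is transparent, and then transporting the conclusion to general $\Sigma$ by~\cref{prop:cauchy-algebra-natural}. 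A secondary, very minor point worth a sentence is why the hypotheses on $h$ (that the generated flow stays orientation-preserving) guarantee $\rho^{(t)}\in\mathrm{Diff}_+(\Sigma)$ for $t$ near $0$, so that $\mathfrak{P}_\ell\rho^{(t)}$ is defined and the $t$-derivative makes sense — but this is immediate from continuity of the flow in $t$ together with $\rho^{(0)}=\mathrm{id}$.
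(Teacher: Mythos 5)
Your proposal matches the paper's proof in essentials: the first equality is obtained by the same direct evaluation of the bracket with the $\tfrac{1}{2}\delta'$ kernel (equivalently the geometric form \eqref{eq:geometric-chiral-bracket}), landing the derivative on $f$ to produce $-\Psi_\Sigma(hf')$, and the second equality is obtained, exactly as in the paper, by noting that $\mathfrak{P}_{\ell}\rho^{(t)}$ acts on the linear observable as the (weight-$1$) pushforward of the test function and differentiating the flow at $t=0$ via the Lie derivative $\mathcal{L}_{h\,d/dx}f = hf'$. The only cosmetic difference is that you reduce to $\Sigma_0$ and transport by naturality, whereas the paper computes directly on a general graph-parametrised $\Sigma$; both handle the $\sqrt{\gamma'}$ bookkeeping the same way.
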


\begin{proof}
  For the flow $\{\rho^{(t)} \in \mathrm{Diff}_+(\Sigma)\}_{t \in (-\epsilon, \epsilon)}$ of a vector field $X \in \mathfrak{X}(\Sigma)$,
  we can write $\rho^{(t)}_{*} f = (\rho^{(-t)})^{*} f = f - t \mathcal{L}_X f + \mathcal{O}(t^2)$
  where $\mathcal{L}_X$ denotes the Lie derivative along $X$.
  By linearity of $\Psi_{\Sigma}$ in the test function, we then have that
  \begin{equation}
    \mathfrak{P}_{\ell} \rho^{(t)} \Psi_{\Sigma}(f)
    =
    \Psi_{\Sigma}(\rho^{(t)}_{*}f)
    =
    \Psi_{\Sigma}(f) - t \Psi_{\Sigma}(\mathcal{L}_X f) + \mathcal{O}(t^2).
  \end{equation}
  setting $X = h \tfrac{d}{dx}$, we see that
  the second and third terms of~\eqref{eq:set-conformal} are equal.

To establish the first equation we just
compute the chiral Poisson bracket explicitly:
  \begin{align*}
    \left\{ T(h), \Psi(f) \right\}_\ell^\Sigma [\varphi]
        & =
    -\int_\mathcal{I}
        T(h)^{(1)}[\varphi](s)
        \frac{d \Psi(f)^{(1)}[\phi](s)}{ds}
        \, \mathrm{d}s \\
        &=
    -\int_\mathcal{I}
        h(s) \varphi(s) \sqrt{\gamma'(s)}
        \frac{df}{ds}(s)
        \, \mathrm{d}s \\
        &=
    -\Psi \left( h f' \right) [\varphi].
  \end{align*}
\end{proof}

\section{Quantisation of the Chiral Algebra}

Now that we have studied the classical algebra in detail,
we shall now see that many of our constructions require
only minimal adjustments to obtain the analogous quantum constructions.
As one might expect, we start by finding a deformation of
the classical algebras $\mathfrak{P}_{\ell}(\Sigma, \mathcal{M})$.
We then show how these algebras embed naturally into
$\mathfrak{A}(\mathcal{M})$, the algebra of quantum observables for the massless scalar field
(as constructed in \cite{crawfordLorentzian2dCFT2021}).

Finally, we discuss how this chiral algebra can compute
the \emph{operator product expansions} of both the chiral boson and the stress energy tensor,
and comment on how the form of these \textsc{ope}s is constrained by scaling invariance.

\subsection{The Quantum Chiral Algebra}
We shall start as we did when constructing the classical chiral algebra,
namely by considering a pair of linear functionals
$\Psi(f), \Psi(g)$ for $f, g \in \mathfrak{D}_{\ell / r}(\Sigma)$.
Firstly, ignoring wavefront sets,
$\partial_\Sigma^* \Psi(f)$ is a linear observable on $\mathcal{M}$,
thus we may attempt to compute its $\star_H$ product, resulting in
\begin{equation}
    \partial_\Sigma^* \Psi(f) \star_H \partial_\Sigma^* \Psi(g)
        =
    \partial_\Sigma^* \Psi(f) \cdot \partial_\Sigma^* \Psi(g) +
    \hbar \left\langle (\partial_\Sigma \otimes \partial_\Sigma)[\tfrac{i}{2}E + H], f \otimes g \right\rangle_{\Sigma^2}.
\end{equation}
Thus, similarly to the case of the chiral Poisson bracket,
the product of linear observables is computed by the bi-distribution
$(\partial_{\Sigma} \otimes \partial_{\Sigma}) [\tfrac{i}{2}E + H] =: W_{\Sigma} \in \mathfrak{D}'(\Sigma^2)$,
where again we can verify this distribution is well-defined by combining
\cite[proposition 4.4]{crawfordLorentzian2dCFT2021} and \cref{prop:config-covariance}
to obtain $W_{\Sigma} = (\rho^{*}_{(1)} \otimes \rho^{*}_{(1)}) W_{\Sigma_0}$,
for a suitable choice $\Sigma \overset{\sim}{\rightarrow} \Sigma_0 \subset \mathcal{M}_0 \in \{\mathbb{M}^2, \mathscr{E}\}$.
Moreover, precisely the same wavefront set condition that caused $\mathfrak{F}_c(\Sigma)$
to be closed as a Poisson algebra allows us to define
a deformation quantisation of that algebra via the following proposition:

\begin{proposition}
  Let $\Sigma \subset \mathcal{M}$ be a Cauchy surface of some globally hyperbolic spacetime,
  and let $\mathfrak{F}_c(\Sigma)$ denote the space of functionals defined in~\cref{prop:chiral-poisson}.
  Then, for any $H \in \mathrm{Had}(\mathcal{M})$,
  the space $\mathfrak{F}_c(\Sigma)[[\hbar]]$ equipped with the product
    \begin{equation}
      \label{eq:chiral-star-product}
    F \star_{H, \ell} G [\psi]
    =
    \sum_{n = 0}^{\infty} \frac{\hbar^{n}}{n!}
    \left\langle W_{\Sigma}^{\otimes n}, F^{(n)}[\psi] \otimes G^{(n)}[\psi] \right\rangle
    \end{equation}
is a closed $*$-algebra, with involution given by pointwise complex conjugation.
\end{proposition}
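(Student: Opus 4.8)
The plan is to check in turn that (i) every coefficient of the formal series \eqref{eq:chiral-star-product} is a well-defined element of $\mathfrak{F}_c(\Sigma)$, so that $\star_{H,\ell}$ is indeed a product on $\mathfrak{F}_c(\Sigma)[[\hbar]]$ (which we take to be implicitly complexified); (ii) this product is associative; and (iii) pointwise complex conjugation is an involution for it. The structural input used throughout is the wavefront set of the quantum commutator function $W_\Sigma = (\partial_\Sigma \otimes \partial_\Sigma)[\tfrac i2 E + H]$. As already observed, $W_\Sigma = (\rho^*_{(1)} \otimes \rho^*_{(1)}) W_{\Sigma_0}$ for any oriented diffeomorphism $\rho\colon \Sigma \overset{\sim}{\rightarrow} \Sigma_0 \subset \mathcal{M}_0 \in \{\mathbb{M}^2,\mathscr{E}\}$, and on $\mathbb{M}^2$ the microlocal spectrum condition for the Hadamard two-point function of $\phi$ shows that $W_{\Sigma_0}$ has the same (diagonal) singular support as $E_{\Sigma_0}$ but with the covectors confined to the positive-frequency half-ray; in particular $\mathrm{WF}(W_\Sigma) \subseteq \mathrm{WF}(E_\Sigma) = \{(r,r;\eta,-\eta) \in \dot T^*\Sigma^2\}$.

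For (i) I would run the wavefront estimate of \cref{prop:chiral-poisson} --- equation \eqref{eq:classical-testimate} --- essentially verbatim, now with $W_\Sigma^{\otimes n}$ in place of a single $E_\Sigma$. Since $F^{(k+n)}[\psi]$ and $G^{(m+n)}[\psi]$ are compactly supported, the relevant composition $\mathrm{WF}(F^{(k+n)}[\psi]\otimes G^{(m+n)}[\psi]) \circ \overline{\mathrm{WF}(W_\Sigma^{\otimes n})}$ has the following shape: an element of $\overline{\mathrm{WF}(W_\Sigma^{\otimes n})}$ carries, in each of its $n$ pairs, either a covector $(\eta_i,-\eta_i)$ with $\eta_i$ on the fixed half-ray or the zero covector, so that the covector fed into $F^{(k+n)}[\psi]$ has all of its nonzero components on that same half-ray, hence lies in $\Xi_+^{k+n}$ or $\Xi_-^{k+n}$ and --- unless it vanishes entirely --- cannot belong to $\mathrm{WF}(F^{(k+n)}[\psi])$; the matching covector fed into $G^{(m+n)}[\psi]$ then forces that to vanish as well, exactly as in the proof of \cref{prop:chiral-poisson}. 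This shows $\tfrac1{n!}\langle W_\Sigma^{\otimes n}, F^{(k+n)}[\psi]\otimes G^{(m+n)}[\psi]\rangle$ is a well-defined distribution on $\Sigma^{k+m}$ with wavefront set disjoint from $\Xi_+^{k+m}\cup\Xi_-^{k+m}$. Bastiani smoothness of $\psi\mapsto (F\star_{H,\ell}G)[\psi]$ and continuity with respect to the topology $\tau_{BDF}$ then follow from the arguments that establish the same properties for the full-spacetime $\star_H$-product in \cite{crawfordLorentzian2dCFT2021}, since those use only the wavefront and support structure of the commutator function, both of which match here.

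For (ii) I would not check associativity directly but instead follow the device used for the Jacobi identity in \cref{prop:chiral-poisson}: postpone it to the construction, later in this section, of an injective $*$-homomorphism $\partial^*_{\Sigma,\epsilon}\colon \mathfrak{F}_c(\Sigma)[[\hbar]] \to \mathfrak{A}(\mathcal{M})$, the quantum analogue of the classical embedding of \cref{thm:classical-natural}, from which associativity is inherited. Alternatively, one may invoke the standard fact that a product of the form $F\star_W G = m\circ\exp\!\big(\hbar\,\langle W,\tfrac{\delta}{\delta\psi_1}\otimes\tfrac{\delta}{\delta\psi_2}\rangle\big)(F\otimes G)$ built from a fixed bi-distribution $W$ is associative wherever all iterated products make sense, and the latter follows by iterating the estimate of step (i).

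For (iii): any $H\in\mathrm{Had}(\mathcal{M})$ is real and symmetric, so $H_\Sigma$ is too, while $E_\Sigma$ is real and antisymmetric; hence $\overline{W_\Sigma} = -\tfrac i2 E_\Sigma + H_\Sigma = W_\Sigma^{T}$. Writing $\overline{F}[\psi] := \overline{F[\psi]}$ --- which again lies in $\mathfrak{F}_c(\Sigma)$, the wavefront condition being conjugation-invariant --- one has $\overline{F^{(n)}[\psi]} = \overline{F}^{(n)}[\psi]$ for real $\psi$, together with $\langle W_\Sigma^{T}, a\otimes b\rangle = \langle W_\Sigma, b\otimes a\rangle$ applied factor by factor to the tensor powers; conjugating \eqref{eq:chiral-star-product} term by term (with $\hbar$ real) then gives $\overline{F\star_{H,\ell}G} = \overline{G}\star_{H,\ell}\overline{F}$, so pointwise conjugation is an antilinear antiautomorphism squaring to the identity. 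The main point to flag is that none of this presents a genuinely new obstacle: the analytic content is a transcription of \cref{prop:chiral-poisson} and of the established theory of $\star_H$ on $\mathfrak{F}_{\mu c}(\mathcal{M})$, and the Hadamard hypothesis on $H$ is used only --- and more than is strictly necessary --- to place $\mathrm{WF}(W_\Sigma)$ inside $\mathrm{WF}(E_\Sigma)$.
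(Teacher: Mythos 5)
Your proposal is correct and follows essentially the same route as the paper: the well-definedness and closure are obtained by rerunning the wavefront-set composition argument of \cref{prop:chiral-poisson} with $W_{\Sigma}^{\otimes n}$ in place of $E_{\Sigma}$, associativity is delegated to the general form of exponential star products (the paper cites \cite[Proposition~4.5]{hawkinsStarProductInteracting2019}), and the elementary $*$-algebra axioms are left as routine, exactly as in the paper. One correction to your closing remark, though: the Hadamard condition is \emph{not} used merely to place $\mathrm{WF}(W_{\Sigma})$ inside $\mathrm{WF}(E_{\Sigma})$, and it is not ``more than is strictly necessary''. For the terms with $n \geq 2$ contractions the argument genuinely needs the one-sidedness $\mathrm{WF}(W_{\Sigma}) = \{(r,r;\xi,-\xi) \,|\, \xi > 0\}$, since only then do all the internal covectors $\eta_1,\dots,\eta_n$ attached to the $F$-slot lie on a single half-ray and combine with $\underline{\xi}_F$ into an element of $\Xi^{k+n}_{\pm}$; with the full two-sided wavefront set of $E_{\Sigma}$ the mixed-sign configurations would defeat the argument. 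Your step (i) actually uses this half-ray confinement correctly --- it is only the final sentence that misattributes what the hypothesis is doing, and the paper itself flags this as the point where the smaller wavefront set of $W_{\Sigma}$ becomes relevant.
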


\begin{proof}
  The proof is comparable to that of~\Cref{prop:chiral-poisson}.
  We won't prove the more elementary properties of a $*$-algebra,
  though we shall point out that associativity follows from the particular form of~%
  \eqref{eq:chiral-star-product}~\cite[Proposition~4.5]{hawkinsStarProductInteracting2019}.

  For well-definedness and closure we must show that,
  for every $n,m > k \in \mathbb{N}$, the map
    $\mathfrak{D}(\Sigma^n) \otimes \mathfrak{D}(\Sigma^m) \to \mathfrak{D}'(\Sigma^{n + m - 2k})$
    defined by
    \begin{equation}
      \label{eq:quantum-unextended}
        \begin{split}
            f \otimes g
                \mapsto
            \int_{\Sigma^{2k}}
                &
                W_\Sigma(u_1, u_{k + 1})
                    \cdots
                W_\Sigma(u_k, u_{2k}) \\
                &
                f(u_1, \ldots, u_k, u'_1, \ldots u'_{n - k})
                g(u_{k+1}, \ldots, u_{2k}, u''_1, \ldots u''_{m - k})
            \, \mathrm{d}u_1 \cdots \, \mathrm{d}u_{2k}
        \end{split}
    \end{equation}
    extends to a map
    \begin{equation}
      \label{eq:quantum-extension}
        \mathfrak{E}'_{\Gamma_n}(\Sigma^n)
            \otimes
        \mathfrak{E}'_{\Gamma_m}(\Sigma^m)
            \to
        \mathfrak{D}'_{\Gamma_{n+m-2k}}(\Sigma^{n + m - 2k})
    \end{equation}
    where $\Gamma_n = T^{*} \Sigma^n \setminus (\Xi^n_+ \cup \Xi^n_-)$
    is the cone of allowable wavefronts from~\Cref{prop:chiral-poisson}.

    Looking in particular at $W_{\Sigma_0}$, or in general by applying Hörmander's pullback theorem
    to $(\Pi_{\ell}d \otimes \Pi_{\ell}d)W$ along the embedding $\Sigma \times \Sigma \hookrightarrow \mathcal{M} \times \mathcal{M}$,
    we see that
    \begin{equation}
      \label{eq:chiral-hadamard-condition}
      \mathrm{WF}(W_{\Sigma}) =
      \left\{ (r, r; \xi, -\xi) \in \dot{T}^{*}\Sigma^2 \,|\, \xi > 0 \right\},
    \end{equation}
    where the sign of $\xi$ is defined with respect to an arbitrary oriented chart on $\Sigma$.
    We must now consider the set
    \begin{equation*}
    \begin{split}
      (\overline{\Gamma}_{n+k} \times \overline{\Gamma}_{m+k} \setminus \underline{0}_{\Sigma^{n+m+2k}}) \circ \overline{\mathrm{WF}(W_{\Sigma}^{\otimes k})} &:=
      \big\{
        (\underline{s}_F, \underline{s}_G; \underline{\xi}_F, \underline{\xi}_G) \in T^{*}\Sigma^{n + m}
        \,|\,\\
        \exists (\underline{r}_1,\underline{r}_2; \underline{\eta}, -\underline{\eta}) \in T^{*}\Sigma^{2k},
        (r_{11}, &r_{12}, \ldots, r_{k1}, r_{k2} ; \eta_1, -\eta_1, \ldots , \eta_n, -\eta_n) \in \overline{\mathrm{WF}(W_{\Sigma}^{\otimes k})},\\
        (\underline{r}_1, \underline{s}_F&, \underline{r}_2, \underline{s}_G;
        \underline{\eta}, \underline{\xi}_F, -\underline{\eta} \underline{\xi}_G) \in
        (\overline{\Gamma}_{n+m} \times \overline{\Gamma}_{m+k} \setminus \underline{0}_{\Sigma^{n+m+2k}})
      \big\}.
    \end{split}
    \end{equation*}
    If this set is disjoint from $\underline{0}_{\Sigma^{n+m}}$,
    then the domain of~\eqref{eq:quantum-unextended} can be extended to the set in~\eqref{eq:quantum-extension},
    and if the set is disjoint from $\Xi^{n+m}_{\pm}$, then its codomain is also where we need it.
    Once again, it is in practise sufficient to consider the case
    $(\underline{s}_F, \underline{s}_G; \underline{\xi}_{F}, \underline{\xi}_{G}) \in \Xi^{n+m}_+$.
    For this covector to also belong to the set defined above, we must have
    $(\underline{r}_1, \underline{s}_F; \underline{\eta}, \underline{\xi}_F) \in \overline{\Gamma}_{n+k}$.
    By a similar method to that used for~\cref{prop:chiral-poisson},
    we can then show that this implies $\underline{\eta}$ and $\underline{\xi}_F$ both vanish,
    thus in turn $\underline{\xi}_G$ also vanishes, leading to a contradiction.
    (Here, the smaller wavefront set of $W_{\Sigma}$ compared to $E_{\Sigma}$ becomes relevant.)
\end{proof}

A routine calculation then shows that, for $H, H' \in \mathrm{Had}$,
one can define maps $\beta_{H' - H}: \mathfrak{F}_c(\Sigma)[[\hbar]] \rightarrow \mathfrak{F}_c(\Sigma)[[\hbar]]$
analogously to \cite[\S 5.1]{rejznerPerturbativeAlgebraicQuantum2016},
\begin{align}
  \label{eq:beta-H-Hp-relation}
  \beta_{H' - H} F = \sum_{n=0}^{\infty} \frac{\hbar^n}{2^n n!} \left\langle (H'_{\Sigma} - H_{\Sigma})^{\otimes n}, F^{(2n)} \right\rangle
\end{align}
which intertwine $\star_{H, \ell}$ with $\star_{H', \ell}$.
We then define the quantum chiral algebra as follows:

\begin{definition}[Quantum Chiral Algebra]
  \label{def:quantum-chiral-algebra}
  Let $\Sigma$ be a Cauchy surface of some globally hyperbolic spacetime $\mathcal{M}$,
  the \emph{quantum chiral algebra} on $\Sigma$ is the $*$-algebra defined by
  \begin{equation}
    \mathfrak{A}_{\ell}(\Sigma, \mathcal{M}) = \left\{
      (F_H)_{H \in \mathrm{Had}(\mathcal{M})} \subset \mathfrak{F}_c(\Sigma)[[\hbar]]
      \,|\,
      \beta_{H' - H} F_H = F_{H'}
    \right\}
  \end{equation}
  with product
  \begin{equation}
      (F_H)_{H \in \mathrm{Had}(\mathcal{M})} \star_{\ell}
      (G_H)_{H \in \mathrm{Had}(\mathcal{M})} =
      (F_H \star_{H, \ell} G_H)_{H \in \mathrm{Had}(\mathcal{M})}.
  \end{equation}
  To a given $\mathsf{CCauchy}$ morphism $(\rho, \chi): (\Sigma, \mathcal{M}) \rightarrow (\widetilde{\Sigma}, \widetilde{\mathcal{M}})$,
  we assign the map defined, for $\widetilde{H} \in \mathrm{Had}(\widetilde{\mathcal{M}})$ by
  \begin{align}
    \label{eq:quantum-chiral-hom}
    (\mathfrak{A}_{\ell}(\rho, \chi) F)_{ \widetilde{H} } = F_{\chi^{*} \widetilde{H}} \circ \rho^{*}_{(1)}.
  \end{align}
\end{definition}

\begin{remark}
  Note that the map \eqref{eq:quantum-chiral-hom} is well-defined because:
  firstly, we have already seen that $F \mapsto F \circ \rho^{*}_{(1)}$ is a well-defined map
  $\mathfrak{F}_c(\Sigma) \mapsto \mathfrak{F}_c(\widetilde{\Sigma})$;
  secondly, the consistency condition is satisfied, because
  \begin{align*}
    \left\langle
      (\widetilde{H}'_{\Sigma} - \widetilde{H}_{\Sigma})^{\otimes n},
      (F_{\chi^{*} \widetilde{H}} \circ \rho^{*}_{(1)})^{(2n)}
    \right\rangle
    &= \left\langle
      (\rho^{*}_{(1)})^{\otimes 2n} (\widetilde{H}'_{\Sigma} - \widetilde{H}_{\Sigma}),
      F_{\chi^{*} \widetilde{H}}^{(2n)} \circ \rho^{*}_{(1)}
    \right\rangle, \\
    &= 2^n\left( \frac{d^{n}}{d \hbar^{n}} \beta_{\chi^* \widetilde{H}' - \chi^* \widetilde{H}} F |_{\hbar = 0} \right) \circ \rho^{*}_{(1)};
  \end{align*}
  and lastly, using a similar equation to the above we can verify that $\mathfrak{A}_{\ell}(\rho, \chi)$
  is a homomorphism with respect to $\star_{\ell}$.
\end{remark}

Having defined the quantum chiral algebra,
we must ask both how these algebras vary as we change Cauchy surfaces,
and how they relate to the full algebras $\mathfrak{A}(\mathcal{M})$.
Both of these are addressed by the following theorem.

\begin{theorem}
  For an object $(\Sigma, \mathcal{M}) \in \mathsf{CCauchy}$,
  any choice of map $\partial_{\Sigma, \epsilon}: \mathfrak{E}(\mathcal{M}) \rightarrow \mathfrak{E}(\Sigma)$ as defined in \cref{thm:classical-natural}
  defines a map $\theta_{\Sigma, \epsilon}: \mathfrak{A}_{\ell}(\Sigma, \mathcal{M}) \rightarrow \mathfrak{A}(\mathcal{M})$
  by
  \begin{align}
    \theta_{\Sigma, \epsilon} \left( F_H \right)_{H \in \mathrm{Had}(\mathcal{M})}
    := \left( \partial^{*}_{\Sigma, \epsilon} F_{H} \right)_{H \in \mathrm{Had}(\mathcal{M})}
  \end{align}

  Moreover, choosing such a map for every pair $(\Sigma, \mathcal{M})$ yields a natural transformation
  $\theta: \mathfrak{A}_{\ell} \Rightarrow \mathfrak{A}_{\mathrm{on}} \circ \Pi_2$.
\end{theorem}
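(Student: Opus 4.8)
The plan is to reduce the statement to the classical result \cref{thm:classical-natural} together with a single observation about how the regularised chiral derivatives act on bisolutions. First I would check that each $\theta_{\Sigma,\epsilon}$ is a well-defined map \emph{into} $\mathfrak{A}(\mathcal{M})$. This requires two things: (i) that every component $\partial^{*}_{\Sigma,\epsilon}F_H$ lies in $\mathfrak{F}_{\mu c}(\mathcal{M})[[\hbar]]$, and (ii) that the family $(\partial^{*}_{\Sigma,\epsilon}F_H)_{H\in\mathrm{Had}(\mathcal{M})}$ obeys the consistency relation $\beta_{H'-H}(\partial^{*}_{\Sigma,\epsilon}F_H)=\partial^{*}_{\Sigma,\epsilon}F_{H'}$. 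For (i) one simply invokes the wavefront-set estimate already carried out in the proof of \cref{prop:classical-chiral-homo} and extended to general $(\Sigma,\mathcal{M})$ in \cref{thm:classical-natural}: the Schwartz kernel of $\partial_{\Sigma,\epsilon}$ has the same wavefront set in every case, and the defining condition of $\mathfrak{F}_c(\Sigma)$ excludes covectors whose components all share a sign, which is exactly what forces the image into $\mathfrak{F}_{\mu c}$. This check is $\hbar$-independent, being term-by-term in the power series, so nothing new is needed.

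For (ii), and simultaneously for the homomorphism property, the key observation is that the bidistributions $E_{\mathcal{M}}$, $W_{\mathcal{M}}=\tfrac{i}{2}E_{\mathcal{M}}+H$, and $H'_{\mathcal{M}}-H_{\mathcal{M}}$ (for $H,H'\in\mathrm{Had}(\mathcal{M})$, the latter a \emph{smooth} bisolution) are all bisolutions of the wave equation, so that, regarded as maps $\mathfrak{D}(\mathcal{M})\to\mathfrak{E}(\mathcal{M})$, they take values in $\mathrm{Ker}\,P_{\mathcal{M}}$ in each argument. Since $\partial_{\Sigma,\epsilon}$ agrees with $\partial_{\Sigma}$ on $\mathrm{Ker}\,P_{\mathcal{M}}$ (as used in the proof of \cref{thm:classical-natural}), this yields
\begin{equation}
  (\partial_{\Sigma,\epsilon}\otimes\partial_{\Sigma,\epsilon})W_{\mathcal{M}}=W_{\Sigma},
  \qquad
  (\partial_{\Sigma,\epsilon}\otimes\partial_{\Sigma,\epsilon})(H'_{\mathcal{M}}-H_{\mathcal{M}})=H'_{\Sigma}-H_{\Sigma}.
\end{equation}
Writing $\partial^{*}_{\Sigma,\epsilon}F[\phi]=F[\partial_{\Sigma,\epsilon}\phi]$ and moving the Schwartz kernel of $\partial_{\Sigma,\epsilon}$ through each pairing, the first identity turns the definition \eqref{eq:chiral-star-product} of $\star_{H,\ell}$ into the product on $\mathfrak{A}(\mathcal{M})$ built from $W_{\mathcal{M}}$ — exactly as in the classical computation of \cref{prop:cauchy-algebra-natural} — so $\partial^{*}_{\Sigma,\epsilon}(F\star_{H,\ell}G)=(\partial^{*}_{\Sigma,\epsilon}F)\star_{H}(\partial^{*}_{\Sigma,\epsilon}G)$. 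The second identity, together with the series \eqref{eq:beta-H-Hp-relation}, shows that $\partial^{*}_{\Sigma,\epsilon}$ intertwines $\beta^{\Sigma}_{H'-H}$ with $\beta^{\mathcal{M}}_{H'-H}$, which is precisely (ii). Finally $\partial^{*}_{\Sigma,\epsilon}$ commutes with pointwise complex conjugation because $\partial_{\Sigma,\epsilon}$ is real, so $\theta_{\Sigma,\epsilon}$ is a $*$-homomorphism, and it is injective because $\partial_{\Sigma,\epsilon}$ is surjective (the latter following from surjectivity of $\partial_{\Sigma}$ on solutions, as in \cref{thm:classical-natural}).

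For naturality, given a $\mathsf{CCauchy}$ morphism $(\rho,\chi):(\Sigma,\mathcal{M})\to(\widetilde{\Sigma},\widetilde{\mathcal{M}})$, I would verify that $\Pi_{\mathrm{on}}\circ\theta_{\Sigma,\epsilon}$ fills in the naturality square against $\mathfrak{A}_{\ell}(\rho,\chi)$ and $\mathfrak{A}_{\mathrm{on}}\chi$. Working component-wise, the two composites around the square, evaluated on a configuration $\phi$, are $F_{\chi^{*}\widetilde{H}}[\rho^{*}_{(1)}\partial_{\widetilde{\Sigma},\widetilde{\epsilon}}\phi]$ and $F_{\chi^{*}\widetilde{H}}[\partial_{\Sigma,\epsilon}\chi^{*}\phi]$, using \eqref{eq:quantum-chiral-hom} and the definition of $\mathfrak{A}\chi$. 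For $\phi\in\mathrm{Ker}\,P_{\widetilde{\mathcal{M}}}$ one has $\partial_{\widetilde{\Sigma},\widetilde{\epsilon}}\phi=\partial_{\widetilde{\Sigma}}\phi$ and $\chi^{*}\phi\in\mathrm{Ker}\,P_{\mathcal{M}}$, whence $\partial_{\Sigma,\epsilon}\chi^{*}\phi=\partial_{\Sigma}\chi^{*}\phi$; then \cref{prop:config-covariance} gives $\partial_{\Sigma}\chi^{*}=\rho^{*}_{(1)}\partial_{\widetilde{\Sigma}}$, so the two expressions coincide on shell. Hence the two composites differ by an element of the kernel of $\Pi_{\mathrm{on}}$ and are equalised after projection, which is the exact quantum counterpart of the final paragraph of the proof of \cref{thm:classical-natural}.

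The step I expect to be the main obstacle is (ii): ensuring $\theta_{\Sigma,\epsilon}$ respects the Hadamard-family structure. Everything there hinges on $H'-H$ being a smooth bisolution, which is what allows the smeared-out $\partial_{\Sigma,\epsilon}$ to be replaced by the sharp $\partial_{\Sigma}$ and reduces the check to the single displayed bidistribution identity. The microcausality estimate (i) is technically the heaviest ingredient, but it is genuinely identical to the one already done classically; and, as in the classical case, naturality is only available after passing to the on-shell quotient, because of the choice of $\epsilon$ and of the auxiliary dilation from \cref{lem:dilation} entering the construction of $\partial_{\Sigma,\epsilon}$.
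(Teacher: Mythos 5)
Your proposal is correct and follows essentially the same route as the paper: reduce everything to \cref{thm:classical-natural}, use the fact that $E$ and the Hadamard distributions are bisolutions to replace $\partial_{\Sigma,\epsilon}$ by $\partial_{\Sigma}$ in the relevant bidistribution pairings (giving both the $\star$-homomorphism property and the intertwining of the $\beta_{H'-H}$ maps), and verify naturality on shell exactly as in the classical case. The extra remarks on injectivity and compatibility with the $*$-involution go slightly beyond what the paper records but are harmless.
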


\begin{proof}
  Firstly, this map is well-defined because
  \begin{align}
    \alpha_{H' - H} \partial_{\Sigma, \epsilon}^{*} = \partial_{\Sigma, \epsilon}^{*} \beta_{H' - H},
  \end{align}
  as can be verified by a direct computation using property \textbf{H2} from
  \cite[\S 5.1]{rejznerPerturbativeAlgebraicQuantum2016} of Hadamard distributions to conclude that
  $(\partial_{\Sigma, \epsilon} \otimes \partial_{\Sigma, \epsilon}) H = (\partial_{\Sigma} \otimes \partial_{\Sigma}) H$.
  Given this, most of what remains to be shown follows directly from
  the corresponding classical result \cref{thm:classical-natural}.
  For the first statement all that we really need to check is that
  each $\theta_{\Sigma, \epsilon}$ is a homomorphism of $\star$ products.
  Looking at the coefficient of $\hbar^n$, this requires one to show that
  for every $H \in \mathrm{Had}(\mathcal{M})$ and $n \in \mathbb{N}$
  \begin{align}
  \begin{split}
    \big\langle (\tfrac{i}{2}E + &H)^{\otimes n}, (\partial_{\Sigma, \epsilon} F)^{(n)}[\phi] \otimes (\partial_{\Sigma, \epsilon} G)^{(n)}[\phi] \big\rangle\\
    &= \left\langle [(\partial_{\Sigma} \otimes \partial_{\Sigma})^{\otimes n} (\tfrac{i}{2}E + H)]^{\otimes n}, F^{(n)}[\partial_{\Sigma, \epsilon} \phi] \otimes G^{(n)}[\partial_{\Sigma, \epsilon} \phi] \right\rangle.
  \end{split}
  \end{align}
  Similarly to before, we can show that, for $u \in \mathfrak{D}'(\mathcal{M}^{2n})$ for any $n \in \mathbb{N}$,
  \begin{align}
    \begin{split}
      \left\langle u, (\partial_{\Sigma, \epsilon} F)^{(n)}[\phi] \otimes (\partial_{\Sigma, \epsilon} G)^{(n)}[\phi] \right\rangle \\
      = \left\langle (\partial_{\Sigma, \epsilon} \otimes \partial_{\Sigma, \epsilon} )^{\otimes n} u, F^{(n)}[\partial_{\Sigma, \epsilon} \phi] \otimes G^{(n)}[\partial_{\Sigma, \epsilon} \phi] \right\rangle.
    \end{split}
  \end{align}
  which once again follows from the fact that both $E$ and $H$ are bi-solutions to the equations of motion.

  Naturality is also essentially unchanged from the classical result.
  Let $(\rho, \chi): (\Sigma, \mathcal{M}) \rightarrow (\widetilde{\Sigma}, \widetilde{\mathcal{M}})$ be a $\mathsf{CCauchy}$ morphism.
  For the appropriate \emph{on-shell} diagram to commute, we need to show that,
  $\forall (F_H) \in \mathfrak{A}_{\ell}(\Sigma, \mathcal{M})$
  \begin{align}
    \mathfrak{A} \chi \theta_{\Sigma, \epsilon} (F_H)
    =
    \theta_{\widetilde{\Sigma}, \widetilde{\epsilon}} \mathfrak{A}_{\ell} (\rho, \chi) (F_H)
      + \mathfrak{I}_S(\widetilde{\mathcal{M}}),
  \end{align}
  where we have suppressed the index set $\mathrm{Had}(\mathcal{M})$,
  and we have used the fact that the ideal $\mathfrak{I}_S(\mathcal{M})$ can
  be identified unambiguously as a subspace of $\mathfrak{A}(\widetilde{\mathcal{M}})$.
  Using the definition of each of these maps, this means we need that,
  for every $\widetilde{H} \in \mathrm{Had}(\widetilde{\mathcal{M}})$,
  $\phi \in \mathrm{Ker}\, P_{\widetilde{\mathcal{M}}}$
  \begin{align*}
    F_{H}[\partial_{\Sigma, \epsilon} \chi^{*} \phi] = F_{H}[\rho^{*}_{(1)} \partial_{\widetilde{\Sigma}, \widetilde{\epsilon}} \phi],
  \end{align*}
  where $H = \chi^{*} \widetilde{H}$,
  which is precisely the same equation as in~\cref{thm:classical-natural}.
\end{proof}

\subsection{Operator Product Expansions}

We now turn our attention to one of the central features of 2\textsc{dcft}:
the \emph{operator product expansion}.
We shall begin by summarising how they arise in Euclidean signature,
then continue to demonstrate how several of these equations arise in the Lorentzian setting.

There exists a very powerful axiomatisation of the chiral sector of a conformally invariant Wightman \textsc{qft}
build upon the mathematical structure of \emph{vertex operator algebras}.
For a comprehensive account of this framework, we refer the reader to
\cite{kacVertexAlgebrasBeginners1998} and \cite{frenkelVertexAlgebrasAlgebraic2004}.
However, we shall give a brief account here of the concepts of fields and their \textsc{ope}s,
which are of central significance in this approach.

A \emph{field} in a vertex operator algebra is
defined as formal power-series $a(z) = \sum_{n \in \mathbb{Z}} a_n z^{-n}$,
where $a_n \in \mathrm{End}(V)$ for some complex vector space $V$
(the space of states of the theory)
and subject to the condition that, for every $v \in V$, there exists $N \in \mathbb{Z}$ such that
$a_n v = 0 \forall n \ge N$.

This definition can be connected to our previous notions of a field.
If we express a Laurent \emph{polynomial} $f \in \mathbb{C}[z, z^{-1}]$ as
$\sum_{n \in \mathbb{Z}} f_n z^{-n}$ where only finitely many $f_n \in \mathbb{C}$ are non-zero,
then one can think of $a(z)$ as a distribution on this space
with values in $\mathrm{End}(V)$, where the pairing is given by
\begin{align}
  \left\langle a, f \right\rangle := \sum_{n \in \mathbb{Z}} a_n f_{-n}.
\end{align}

Given a pair, $a(z), b(w)$ of such fields,
one is often interested in whether they are \emph{mutually local}.
We can define the composition of $a(z)$ with $b(w)$ as an element of
$\mathrm{End}(V)[[z, z^{-1}, w, w^{-1}]]$ by taking
the appropriate compositions of each coefficient $a_n b_m$.
The fields are then said to be mutually local if, for some $N \in \mathbb{N}$
\begin{equation}
  \label{eq:voa-locality-def}
    {(z - w)}^N[a(z), b(w)] = 0.
\end{equation}

One can then show~\cite[Corollary~2.2]{kacVertexAlgebrasBeginners1998} that
this commutator is of the form
\begin{equation}
  \label{eq:voa-local-commutator}
  [a(z), b(w)] = \sum_{j=0}^{N-1} \partial_w^{(j)}\delta(z-w)c^j(w),
\end{equation}
for some $c^j(w) \in \mathrm{End}(V)[[w, w^{-1}]]$.
The distributions $\partial_w^{(j)}$ are then expressed as the difference of boundary values of
two holomorphic functions on open subsets of $\mathbb{C}^2$.
Both functions are written as $(z - w)^{-(j+1)}$,
however the domain of the first instance is taken to be the region $|z| > |w|$,
whereas the domain of the second is $|w| < |z|$.
The \emph{OPE} of $a$ and $b$ in either of these domains is then obtained
by replacing each $\partial^{(j)}_w \delta(z - w)$ with
the appropriate holomorphic function in~\eqref{eq:voa-local-commutator}.
This is typically written
\begin{equation}
  \label{eq:voa-ope}
  a(z)b(w) \sim
  \sum_{j=0}^{N-1} \frac{c^j(w)}{(z-w)^{j+1}}.
\end{equation}
This series differs from the \emph{actual} product $a(z)b(w)$
by a term referred to as the \emph{normally ordered product} $\nord{a(z)b(w)}$,
which can be computed by a decomposition of $a(z)$ and $b(z)$ into
positive and negative frequency modes.

We have already seen a property similar to~\eqref{eq:voa-locality-def} in the classical theory.
Recall that the chiral bracket of the chiral boson with itself
may be considered a bi-distribution on $\Sigma^2$ for any Cauchy surface $\Sigma$.
In the theory of distributions used in {\smaller{p}}\textsc{aqft}, the analogue of this equation
is the combined statement that a distribution $u \in \mathfrak{D}'(\mathbb{R}^2)$
is supported on the diagonal $\left\{ x, x \right\}_{x \in \mathbb{R}} \subset \mathbb{R}^2$,
and that it has a scaling degree $N$ with respect to the diagonal.
Even generically, the axiom of Einstein causality,
when restricted to a chiral algebra on a Cauchy surface,
implies that the Peierls bracket/commutator of two fields has this property
in a Lorentzian \textsc{aqft}.

We now turn our attention to the problem of finding an analogue of~\eqref{eq:voa-ope}
in our current framework.
We shall discuss the general case shortly,
however we shall begin by computing several examples.

In order to be concrete, we restrict once more to spacetimes which are
open, causally convex subsets of Minkowski space.
Suppose that $\Sigma$ is a Cauchy surface of such a subset $U \subseteq \mathbb{M}^2$.
From now on, we shall always assume our Hadamard distribution $W$ is the restriction of
\begin{align}
  W_{\mathbb{M}^2}(u, v, u', v') = \lim_{\epsilon \searrow 0} \frac{- 1}{4\pi} \ln\left(\frac{-(u - u')(v - v') + i \epsilon t}{\Lambda^2}\right).
\end{align}
For a Cauchy surface $\Sigma = \{(-s, \gamma(s))\}_{s \in \mathbb{R}}$ of $\mathbb{M}^2$,
we can write the chiral derivative of $W_{\mathbb{M}^2}$ as
\begin{align}
  W_{\Sigma}(s, s') = \frac{-1}{4 \pi} \frac{1}{\sqrt{\gamma'(s) \gamma'(s')}} \left( \frac{1}{(s - s')^2} + i \delta'(s - s') \right).
\end{align}
From this we can also get the formula for $\Sigma \subset U \subseteq \mathbb{M}^2$ by a suitable restriction,
as every Cauchy surface of $U$ can be extended to a Cauchy surface of $\mathbb{M}^2$.
This Hadamard distribution gives us a concrete realisation of $\mathfrak{A}_{\ell}(\Sigma, U)$
as $(\mathfrak{F}_c(\Sigma)[[\hbar]], \star_{H, \ell})$
Let us take a pair $f, g \in \mathfrak{D}(\Sigma)$ with disjoint support.
Using this realisation, the $\star$ product of the chiral boson evaluated on each test function is
\begin{align}
  (\Psi_{\Sigma}(f) \star_{H, \ell} \Psi_{\Sigma}(g))[\psi]
  =
  (\Psi_{\Sigma}(f) \cdot \Psi_{\Sigma}(g))[\psi] - \frac{\hbar}{4 \pi} \int_{\mathbb{R}^{2}} \frac{f(s) g(s')}{(s - s')^2} \,\mathrm{d}s\,\mathrm{d}s',
\end{align}
where the disjoint support allows us to drop the imaginary part of the $\mathcal{O}(\hbar)$ term.
In the above expression there is nothing to stop us from taking a limit where
$f$ and $g$ converge to Dirac deltas with supports at a pair of fixed points $s, s'$.
Doing so, we can write
\begin{align}
  (\Psi_{\Sigma}(s) \star_{H, \ell} \Psi_{\Sigma}(s'))[\psi]
  =
  \psi(s)\psi(s') - \frac{\hbar}{4 \pi} \frac{1}{(s - s')^2}.
\end{align}
This is a smooth function on $\Sigma^2 \setminus \Delta$.
If we restrict our attention to the term which is singular as we approach the diagonal,
we could then write
\begin{align}
  (\Psi_{\Sigma}(s) \star_{H, \ell} \Psi_{\Sigma}(s'))[\psi] \sim -\frac{\hbar}{4 \pi} \frac{1}{(s - s')^2}
\end{align}

As another example, we may define the chiral stress energy tensor by
\begin{equation}
    T_\Sigma(f)[\psi]
        :=
    \frac{1}{2} \int_\Sigma f(s) \psi{(s)}^2 \sqrt{\gamma'(s)} \,\mathrm{d}s.
\end{equation}
Applying the same procedure to this field, we find
\begin{align}
    \begin{split}
        \left(T_{\Sigma}(f) \star_{H, \ell} T_{\Sigma}(g)\right)[\psi]
            =
        T_{\Sigma}(f) \cdot T_{\Sigma}(g) [\psi]
            +
        \frac{\hbar}{4 \pi} \int_{\mathbb{R}^2} \frac{
            f(s) \psi(s) g(s') \psi(s')
        }{
            (s - s')^2
        }   \, \mathrm{d}s \, \mathrm{d}s' \\
            +
        \frac{\hbar^2}{32 \pi^2} \int_{\mathbb{R}^4}
            \frac{
                f(s_1) \delta(s_1 - s_2) g(s_3) \delta(s_3 - s_4)
            }{(s_1 - s_3)^2 (s_2 - s_4)^2}
        \, \mathrm{d}s_1 \cdots \, \mathrm{d}s_4.
    \end{split}
\end{align}
Where we once again can allow $f, g$ to approach deltas to obtain
\begin{equation}
    \label{eq:TT-OPE}
    \left( T_{\Sigma}(s) \star_{H, \ell} T_{\Sigma}(s') \right)[\psi]
        =
    \frac{1}{4} \psi{(s)}^2 \psi{(s')}^2
        +
    \frac{\hbar}{4 \pi}
    \frac{\psi(s) \psi(s')}{{(s - s')}^2}
        +
    \frac{\hbar^2}{32 \pi^2}
    \frac{1}{{(s - s')}^4}.
\end{equation}

Taylor expanding $\phi(s)$ around $s'$, we then obtain a well-known \textsc{ope}
\begin{equation}
    \omega_{H_{\mathbb{M}^2}, \psi} \left( T_{\Sigma}(s) \star_{H, \ell} T_{\Sigma}(s') \right)
    \sim
    \frac{\hbar^2}{32 \pi^2} \frac{1}{{(s - s')}^4} +
    \frac{\hbar}{4 \pi} \frac{2 T_{\Sigma}(s')[\psi]}{{(s - s')}^2} +
    \frac{\hbar}{4 \pi} \frac{T'_{\Sigma}(s')[\psi]}{s - s'},
  \end{equation}
where $T'_{\Sigma}(g) := - T_{\Sigma}(*_{\Sigma} d_{\Sigma} g)$.

Now that we have seen some explicit examples,
we can make some comments about the general connection.
In the \textsc{voa} formalism, the analysis of the products of fields
is enabled by the existence of a product
\begin{align}
  \mathrm{End}(V)[[z, z^{-1}]] \otimes
  \mathrm{End}(V)[[w, w^{-1}]] \rightarrow
  \mathrm{End}(V)[[z, z^{-1}, w, w^{-1}]]
\end{align}
from operator-valued distributions to operator valued \emph{bi}-distributions.
Given that our construction creates algebras first and fields second,
it is not immediately obvious that such a result holds.

Recall that in the classical theory we had a topology on our algebra
$\mathfrak{P}_{\ell}(\Sigma, \mathcal{M})$ which was initial with respect to the differentiation maps
$\mathfrak{F}_c(\Sigma) \rightarrow \mathfrak{E}'_{\Xi_n}(\Sigma^n)$.
This topology naturally extends to each of the spaces
$\mathfrak{F}_c(\Sigma)[[\hbar]]$,
and we can further show that the maps
$\beta_{H' - H}: \mathfrak{F}_c(\Sigma)[[\hbar]] \rightarrow \mathfrak{F}_c(\Sigma)[[\hbar]]$
are homeomorphisms with respect to this topology \cite[\S 3.1]{brunettiPerturbativeAlgebraicQuantum2009},
hence if we equip $\mathfrak{A}_{\ell}(\Sigma, \mathcal{M})$ with
the initial topology with respect to the maps
\begin{align*}
  (F_H)_{H \in \mathrm{Had}(\mathcal{M})} \mapsto F_{H_0}
\end{align*}
for each $H_0 \in \mathrm{Had}(\mathcal{M})$,
then these maps are also homeomorphisms.
In other words, to check that a map into or out of $\mathfrak{A}_{\ell}(\Sigma, \mathcal{M})$
is continuous, one need only show that this is the case for any of its concrete realisations.
This fact is particularly convenient when considering states of the form we define below.

\begin{definition}
  A \emph{Gaussian state} is a map $\omega_{H_0, \psi}: \mathfrak{A}_{\ell}(\Sigma, \mathcal{M}) \rightarrow \mathbb{C}[[\hbar]]$,
  where $H_0 \in \mathrm{Had}(\mathcal{M})$, $\psi \in \mathfrak{E}(\Sigma)$,
  defined by
  \begin{align}
    \omega_{H_0, \psi}(F_{H})_{H \in \mathrm{Had}(\mathcal{M})} \mapsto F_{H_0}[\psi].
  \end{align}
\end{definition}

\begin{remark}
  As we have already noted, evaluation maps are continuous in the Bastiani topology,
  hence the map $\omega_{H_0, \psi}: \mathfrak{A}_{\ell}(\Sigma, \mathcal{M}) \rightarrow \mathbb{C}[[\hbar]]$
  is also continuous.
\end{remark}

Now that we have identified a suitable topology,
we may define a \emph{field} on $\Sigma$ (without any mention of local covariance)
as any continuous, linear map $\Psi^i: \mathfrak{D}(\Sigma) \rightarrow \mathfrak{A}_{\ell}(\Sigma, \mathcal{M})$.
For reasons we shall see in the following proposition,
we shall also add the constraint that
\begin{align}
  \label{eq:field-support-condition}
  \mathrm{supp}\, \Psi^i(f) \subseteq \mathrm{supp}\, f,
\end{align}
where the support of an observable $(F_H)_{H \in \mathrm{Had}(\mathcal{M})} \in \mathfrak{A}_{\ell}(\Sigma, \mathcal{M})$
is defined as the union of the supports of each coefficient of $\hbar^n$ of $F_H$ for any $H \in \mathrm{Had}(\mathcal{M})$.
(This definition makes sense as $\beta_{H' - H}$ does not affect the support.)

\begin{proposition}
  Let $\omega_{H, \psi}: \mathfrak{A}_{\ell}(\Sigma, \mathcal{M}) \rightarrow \mathbb{C}$ be a Gaussian state,
  and let $\Psi^i, \Psi^j: \mathfrak{D}(\Sigma) \rightarrow \mathfrak{A}_{\ell}(\Sigma, \mathcal{M})$ be a pair of fields.
  Then the map
  \begin{align}
    f \otimes g \mapsto \omega_{H, \psi}\left( \Psi^i(f) \star \Psi^j(g) \right)
  \end{align}
  defines a distribution on $\Sigma^2$.
\end{proposition}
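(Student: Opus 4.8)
The quantity $\omega_{H,\psi}(\Psi^{i}(f)\star\Psi^{j}(g))$ lies in $\mathbb{C}[[\hbar]]$, so by a ``distribution on $\Sigma^{2}$'' we mean an element of $\mathfrak{D}'(\Sigma^{2})[[\hbar]]$, and it suffices to treat each coefficient of $\hbar^{n}$ separately. Since continuity of maps into or out of $\mathfrak{A}_{\ell}(\Sigma,\mathcal{M})$ may be tested in any concrete realisation, I would fix the Hadamard distribution $H$ appearing in the state and identify $\mathfrak{A}_{\ell}(\Sigma,\mathcal{M})$ with $(\mathfrak{F}_{c}(\Sigma)[[\hbar]],\star_{H,\ell})$, so that the Gaussian state becomes evaluation at $\psi$. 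By \eqref{eq:chiral-star-product} the coefficient of $\hbar^{n}$ of $\omega_{H,\psi}(\Psi^{i}(f)\star\Psi^{j}(g))$ is then $\tfrac{1}{n!}\langle W_{\Sigma}^{\otimes n},(\Psi^{i}(f))^{(n)}[\psi]\otimes(\Psi^{j}(g))^{(n)}[\psi]\rangle$, with the contractions of \eqref{eq:chiral-star-product}, where a derivative of a field means the functional derivative of the relevant $\hbar$-coefficient.

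For $n=0$ the coefficient is $\Psi^{i}(f)[\psi]\,\Psi^{j}(g)[\psi]$. As $\Psi^{i},\Psi^{j}$ are linear and continuous into $\mathfrak{A}_{\ell}(\Sigma,\mathcal{M})$ and evaluation at $\psi$ is linear and continuous (as already noted for Gaussian states), the maps $f\mapsto\Psi^{i}(f)[\psi]$ and $g\mapsto\Psi^{j}(g)[\psi]$ are continuous linear functionals on $\mathfrak{D}(\Sigma)$, hence distributions $u^{i},u^{j}\in\mathfrak{D}'(\Sigma)$, and $(f,g)\mapsto\langle u^{i},f\rangle\langle u^{j},g\rangle=\langle u^{i}\otimes u^{j},f\otimes g\rangle$ is a distribution on $\Sigma^{2}$.

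For $n\ge1$ I would mimic the proof of \cref{prop:poisson-bracket-is-distribution}. Composing the continuous fields with the continuous map $F\mapsto F^{(n)}[\psi]$ into $\mathfrak{E}'_{\Gamma_{n}}(\Sigma^{n})$ (Hörmander topology, $\Gamma_{n}=T^{*}\Sigma^{n}\setminus(\Xi^{n}_{+}\cup\Xi^{n}_{-})$) exhibits $f\mapsto(\Psi^{i}(f))^{(n)}[\psi]$ as a continuous linear map $\mathfrak{D}(\Sigma)\to\mathfrak{D}'(\Sigma^{n})$, which therefore has a Schwartz kernel $K^{i}_{\psi,n}\in\mathfrak{D}'(\Sigma\times\Sigma^{n})$, and likewise $K^{j}_{\psi,n}$. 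The $\hbar^{n}$-coefficient then equals, up to the factor $1/n!$, the candidate kernel $\langle W_{\Sigma}(y_{1},y'_{1})\cdots W_{\Sigma}(y_{n},y'_{n}),\,K^{i}_{\psi,n}(z_{1},y_{1},\ldots,y_{n})\,K^{j}_{\psi,n}(z_{2},y'_{1},\ldots,y'_{n})\rangle$ in $(z_{1},z_{2})$, whose legitimacy I would establish via \cite[Theorem~8.2.14]{hormanderAnalysisLinearPartial2015}. The properness hypothesis follows from \eqref{eq:field-support-condition}, which, exactly as in \cref{prop:poisson-bracket-is-distribution}, forces $\mathrm{supp}\,K^{i/j}_{\psi,n}\subseteq\{(z,z,\ldots,z)\in\Sigma^{n+1}\}$, so the projection of $\mathrm{supp}(K^{i}_{\psi,n}\otimes K^{j}_{\psi,n})$ onto $\Sigma^{2}$ is proper. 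The wavefront-set hypothesis is where the quantum input enters: by \eqref{eq:chiral-hadamard-condition} the cone $\mathrm{WF}(W_{\Sigma})=\{(r,r;\xi,-\xi):\xi>0\}$ is one-sided, while $(\Psi^{i/j}(f))^{(n)}[\psi]\in\mathfrak{E}'_{\Gamma_{n}}(\Sigma^{n})$ constrains $\mathrm{WF}(K^{i/j}_{\psi,n})$ in the $\Sigma^{n}$-variables to avoid $\Xi^{n}_{+}\cup\Xi^{n}_{-}$; the sign chase combining these is identical to the one in the proof that $\star_{H,\ell}$ makes $\mathfrak{F}_{c}(\Sigma)[[\hbar]]$ a closed $*$-algebra, and shows that the potentially obstructing covectors all vanish. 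Crucially, one-sidedness of $\mathrm{WF}(W_{\Sigma})$ means that no extra hypothesis on the fields --- of the kind needed in \cref{prop:poisson-bracket-is-distribution} --- is required here. Assembling over $n$ gives an element of $\mathfrak{D}'(\Sigma^{2})[[\hbar]]$, with no convergence issue since $\hbar$ is formal.

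I expect the real work to be the wavefront-set bookkeeping for $n\ge1$: one must carry the extra external variables $z_{1},z_{2}$ through Hörmander's composition (the kernels $K^{i/j}_{\psi,n}$ need not be smooth in $z$, unlike the corresponding kernels in \cref{prop:poisson-bracket-is-distribution}) and confirm that the one-sidedness of $\mathrm{WF}(W_{\Sigma})$ genuinely replaces the missing hypothesis. A cleaner but less self-contained route bypasses the kernel entirely: show that $\star_{H,\ell}$ is separately continuous on $(\mathfrak{F}_{c}(\Sigma)[[\hbar]],\tau_{BDF})$ --- which reduces, order by order in $\hbar$, to continuity of the Hörmander product restricted to fixed wavefront-set cones \cite{brouderContinuityFundamentalOperations2016} --- so that $(f,g)\mapsto\omega_{H,\psi}(\Psi^{i}(f)\star\Psi^{j}(g))$ is a separately continuous $\mathbb{C}[[\hbar]]$-valued bilinear form on $\mathfrak{D}(\Sigma)\times\mathfrak{D}(\Sigma)$, whence the Schwartz kernel theorem applied coefficient-wise produces the distribution.
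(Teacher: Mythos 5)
Your proposal follows essentially the same route as the paper: reduce to each coefficient of $\hbar^{n}$ in a fixed realisation, form the Schwartz kernels $K^{i/j}_{n,\psi}$ of $f\mapsto\Psi^{i/j}(f)^{(n)}[\psi]$, get properness of the relevant projection from $\mathrm{supp}\,\Psi^{i}(f)\subseteq\mathrm{supp}\,f$, and verify the wavefront-set hypothesis of H\"ormander's composition theorem exactly as in the closure proof for $\star_{H,\ell}$. One caveat: your claim that the \emph{one-sidedness} of $\mathrm{WF}(W_{\Sigma})$ is what removes the extra kernel hypothesis of \cref{prop:poisson-bracket-is-distribution} is not really the mechanism at work --- what matters is the anti-diagonal form of $\mathrm{WF}(W_{\Sigma}^{\otimes n})$ (shared by $E_{\Sigma}$), which forces all $K^{i}$-side covectors into $\Xi^{n}_{-}$ and all $K^{j}$-side covectors into $\Xi^{n}_{+}$, together with the (asserted) fact that continuity of $f\mapsto\Psi^{i}(f)^{(n)}[\psi]$ into the cone-restricted distribution space controls $\mathrm{WF}(K^{i}_{n,\psi})$ over the zero section in the external variable; the latter is precisely the point the classical proposition handles by hypothesis, and it deserves the same care here.
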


\begin{proof}
  Given the topology we have defined on $\mathfrak{A}_{\ell}(\Sigma, \mathcal{M})$
  all we must show is that, for each $n \in \mathbb{N}$, the map
  \begin{align}
    \label{eq:star-distribution-on-tensor}
    f \otimes g \mapsto
    \left\langle (\tfrac{i}{2}E_{\Sigma} + H_{\Sigma})^{\otimes n}, \Psi^i(f)^{(n)}[\psi] \otimes \Psi^j(g)^{(n)}[\psi] \right\rangle
  \end{align}
  defines a distribution in $\mathfrak{D}'(\Sigma^2)$.

  The argument follows a similar line to the classical case in \cref{sec:constraints}.
  Firstly, we see that $\forall H \in \mathrm{Had}(\mathcal{M})$ the map
  $f \mapsto (\Psi^i(f))_H^{(n)}[\psi]$ is a linear, continuous map
  $\mathfrak{D}(\Sigma) \rightarrow \mathfrak{E}'_{\Xi_n}(\Sigma^n)$.
  Hence, there is associated to it by the Schwartz kernel theorem
  a distribution $K^i_{n, \psi} \in \mathfrak{D}'(\Sigma^{n + 1})$.
  The distribution corresponding to the coefficient of $\hbar^n$ in the $\star$ product is then
  \begin{align*}
    \begin{split}
      \frac{d^{n}}{d \hbar^n} \omega_{H, \psi}(\Psi^i(z_1) \star \Psi^j(z_2))|_{\hbar = 0} =
      \int_{\Sigma^{2n}} \big[ W_{\Sigma}(y_1, y_2) \cdots W_{\Sigma}(y_{2n-1}, y_{2n})\\
      K^i_{n, \psi}(z, y_1, \ldots, y_{2n-1}) K^j_{n, \psi}(z, y_2, \ldots, y_{2n}) \big]\,\mathrm{d}y_1 \cdots \mathrm{d}y_{2n}.
    \end{split}
  \end{align*}
  where $W_{\Sigma} = \tfrac{i}{2}E_{\Sigma} + H_{\Sigma}$.

  We must then check the same conditions for this composition as we did in \cref{sec:constraints}.
  From $\mathrm{supp}\, \Psi^i(f) \subseteq \mathrm{supp}\, f$,
  we know $\Psi^i(f)^{(n)}[\psi] \subseteq (\mathrm{supp}\, f)^{\times n}$ and hence
  \begin{align*}
    \left\{ (z, y_1, \ldots y_n) \in \mathrm{supp}\, K^i_{n, \psi} \,|\, z \in U  \right\}
    \subseteq U^{n+1},
  \end{align*}
  which shows that the required projection is proper.

  For the wavefront sets, we have that
  \begin{align*}
    (z, y_1, \ldots, y_n; 0, \eta_1, \ldots, \eta_n) \in \mathrm{WF}(K^i_{n, \psi}) \Rightarrow
    (y_1, \ldots, y_n; \eta_1, \ldots, \eta_n) \in \Xi_n,
  \end{align*}
  which in turn implies
  \begin{align*}
    (z_1, z_2, y_1, \ldots, y_{2n}; 0, 0, \eta_1, \ldots \eta_{2n}) \in \mathrm{WF}(K^i_{n, \psi} \otimes K^j_{n, \psi}) \Rightarrow\\
    (y_1, \ldots, y_{2n}; \eta_1, \ldots \eta_{2n}) \in \Xi_{2n} \subseteq \dot{T}^{*}\Sigma^{2n} \setminus -\mathrm{WF}(W_{\Sigma}^{\otimes n}).
  \end{align*}
  From which we may conclude that each composition is a well-defined distribution coinciding with
  the coefficient of $\hbar^n$ in \eqref{eq:star-distribution-on-tensor}.
\end{proof}

We shall denote this distribution using its integral kernel
\begin{align*}
  \omega_{H, \psi}\left( \Psi^i(s) \star \Psi^j(s') \right).
\end{align*}

\begin{proposition}
  The anti-symmetric part of the above distribution can be written as
  \begin{align*}
    \omega_{H, \psi}\left( \left[ \Psi^i(s), \Psi^j(s') \right]_{\star} \right)
  \end{align*}
  and is supported on the diagonal $\Delta \subset \Sigma^2$.
\end{proposition}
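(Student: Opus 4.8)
The plan is to prove the two claims in turn: the first is essentially formal, and the second rests on Einstein causality of the full theory. For the first claim, the preceding proposition applied to the ordered pair $(\Psi^j,\Psi^i)$ shows that $f\otimes g\mapsto\omega_{H,\psi}(\Psi^j(f)\star\Psi^i(g))$ is a distribution on $\Sigma^2$, whose kernel I write $D^{ji}(s,s')$; hence $(s,s')\mapsto D^{ji}(s',s)$ is also a distribution. Writing $D^{ij}$ for the kernel $\omega_{H,\psi}(\Psi^i(s)\star\Psi^j(s'))$ of the preceding proposition, I would observe that, by bilinearity of $\star_{\ell}$ and linearity of $\omega_{H,\psi}$, the antisymmetric part of $D^{ij}$ under the exchange $(i,s)\leftrightarrow(j,s')$ has kernel
\begin{equation*}
  D^{ij}(s,s')-D^{ji}(s',s)=\omega_{H,\psi}\big([\Psi^i(s),\Psi^j(s')]_{\star}\big).
\end{equation*}
It then remains to show that this distribution is supported on $\Delta\subset\Sigma^2$, i.e.\ that $\omega_{H,\psi}([\Psi^i(f),\Psi^j(g)]_{\star})=0$ whenever $f,g\in\mathfrak{D}(\Sigma)$ have disjoint supports.

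For this, the idea is to transport the commutator into the full spacetime algebra along the embedding $\theta$ and apply Einstein causality there. Given such $f$ and $g$, I would first use the surjectivity of $\partial_{\Sigma}$ with solutions to pick $\phi\in\mathrm{Ker}\,P_{\mathcal{M}}$ with $\partial_{\Sigma}\phi=\psi$; since $\partial_{\Sigma,\epsilon}\phi=\partial_{\Sigma}\phi=\psi$ for every admissible $\epsilon$, this gives a factorisation $\omega_{H,\psi}=\omega^{\mathcal{M}}_{H,\phi}\circ\theta_{\Sigma,\epsilon}$, where $\omega^{\mathcal{M}}_{H,\phi}$ denotes evaluation at $\phi$ on $\mathfrak{A}_{\mathrm{on}}(\mathcal{M})$. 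Because $\theta_{\Sigma,\epsilon}$ is a $\star$-homomorphism, it would then suffice to prove $[\theta_{\Sigma,\epsilon}\Psi^i(f),\theta_{\Sigma,\epsilon}\Psi^j(g)]_{\star}=0$ in $\mathfrak{A}_{\mathrm{on}}(\mathcal{M})$ for one suitable choice of $\epsilon$, the quantity we are after being independent of that choice.

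Next I would analyse supports. From $\mathrm{supp}\,\Psi^i(f)\subseteq\mathrm{supp}\,f$ together with the explicit Schwartz kernel of $\partial_{\Sigma_0,\epsilon}$ — which, over the point of $\Sigma_0$ with chiral coordinate $s$, is supported on the short null segment $\{(t,s+t):|t|\le\epsilon\}$ transverse to $\Sigma_0$ — one deduces that $\theta_{\Sigma,\epsilon}\Psi^i(f)$ is supported inside the compact set obtained by applying $\chi^{-1}$ to the ``thickened null strip'' over $\rho(\mathrm{supp}\,f)$, and likewise for $g$. A short causal-geometry computation in $\mathcal{M}_0\in\{\mathbb{M}^2,\mathscr{E}\}$ shows that, since $\rho(\mathrm{supp}\,f)$ and $\rho(\mathrm{supp}\,g)$ are disjoint compacta in the Cauchy surface and the strips grow in a \emph{null} direction, these two sets become spacelike separated once $\epsilon$ is small enough; I would invoke \cref{lem:dilation} to ensure $\Sigma_{\pm\epsilon}\subset\mathrm{Img}\,\chi$ and then fix such an $\epsilon$. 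Passing to disjoint, spacelike separated, causally convex open neighbourhoods $\mathcal{N}_f$ and $\mathcal{N}_g$ of these sets, the functionals $\theta_{\Sigma,\epsilon}\Psi^i(f)$ and $\theta_{\Sigma,\epsilon}\Psi^j(g)$ lie in $\mathfrak{A}_{\mathrm{on}}(\mathcal{N}_f)$ and $\mathfrak{A}_{\mathrm{on}}(\mathcal{N}_g)$ respectively, so their $\star$-commutator vanishes by Einstein causality of the full theory; applying $\omega^{\mathcal{M}}_{H,\phi}$ then gives $\omega_{H,\psi}([\Psi^i(f),\Psi^j(g)]_{\star})=0$, as required.

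I expect the main obstacle to be precisely this support analysis: one has to verify carefully that the regularised chiral derivative $\partial_{\Sigma,\epsilon}$ spreads the support of $\Psi^i(f)$ only along short null segments, and that strips of this shape sitting over disjoint pieces of a spacelike Cauchy surface really are spacelike separated for small $\epsilon$ — the key geometric point being that they extend in a null, not timelike, direction, so that shrinking $\epsilon$ eventually separates them. The remaining ingredients — the factorisation $\omega_{H,\psi}=\omega^{\mathcal{M}}_{H,\phi}\circ\theta_{\Sigma,\epsilon}$, the homomorphism property of $\theta$, and the identification of a microcausal functional supported in a causally convex region with an element of the corresponding local subalgebra — should all be routine.
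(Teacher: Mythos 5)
Your proposal is correct, and the first half (antisymmetric part equals the expectation of the $\star$-commutator, by bilinearity and linearity of the state, plus the standard density of sums $f \otimes g$ with disjoint supports in $\mathfrak{D}(\Sigma^2 \setminus \Delta)$) matches the paper's argument. Where you genuinely diverge is in the localisation step. The paper does not re-enter the full spacetime algebra by hand: it simply cites the condition $\mathrm{supp}\,\Psi^i(f) \subseteq \mathrm{supp}\,f$ and then invokes the model-independent chiral causality result (\cref{thm:chiral-causality}), whose proof localises $\Psi^i(f)$ and $\Psi^j(g)$ in the \emph{Cauchy developments} $D(\Sigma_1)$, $D(\Sigma_2)$ of disjoint pieces of the Cauchy surface; these are causally disjoint for the soft reason that an inextensible causal curve meets a Cauchy surface exactly once, so no $\epsilon$-estimate or choice of coordinates is needed and the argument applies to any chiral subalgebra of any Einstein-causal theory. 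You instead transport the commutator along $\theta_{\Sigma,\epsilon}$ and verify spacelike separation of the thickened null strips directly in null coordinates on $\mathbb{M}^2$, shrinking $\epsilon$ (via \cref{lem:dilation}) until the strips separate. Your computation is right — the strips over $s \in \mathrm{supp}\,f$ are in fact contained in the Cauchy development of a $2\epsilon$-neighbourhood of $\mathrm{supp}\,f$ in $\Sigma_0$, which is why the estimate closes — but it is tied to the explicit form of $\partial_{\Sigma_0,\epsilon}$ and to the skeletal spacetimes, whereas the paper's route buys generality and avoids the $\epsilon$-bookkeeping entirely. The ingredients you flag as routine (the factorisation $\omega_{H,\psi} = \omega^{\mathcal{M}}_{H,\phi} \circ \theta_{\Sigma,\epsilon}$ using surjectivity of $\partial_\Sigma$ onto solutions, and the homomorphism property of $\theta_{\Sigma,\epsilon}$) are indeed available from the preceding results, so there is no gap.
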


\begin{proof}
  The fact that we can express the anti-symmetric part of the distribution in terms of the commutator
  is just a consequence of the linearity of $\omega_{H, \psi}$.
  To show that it is supported on the diagonal,
  we use the condition~\eqref{eq:field-support-condition} and note that any $h \in \mathfrak{D}(\Sigma^2 \setminus \Delta)$
  can be approximated by a series $f \otimes g$ such that $\mathrm{supp}\, f \cap \mathrm{supp}\, g = \emptyset$.
  We can then use the chiral version of Einstein causality
  (which we prove in a general setting in~\cref{thm:chiral-causality})
  to show that $[\Psi^i(f), \Psi^j(g)]_{\star} = 0$ for each such $f, g$,
  hence the distribution must send $h$ to 0.
\end{proof}

We can now provide an analysis similar to that of~\cref{sec:constraints}
by limiting our attention to homogeneously scaling chiral fields on $\mathbb{M}^2$.
As before, we shall consider the $\Sigma_0$ Cauchy surface,
and the action of the dilation operators $\{m_{\Lambda}: \Sigma_0 \rightarrow \Sigma_0\}_{\Lambda > 0}$
and translation operators $\{t_c: \Sigma_0 \rightarrow \Sigma_0\}_{c \in \mathbb{R}}$.
Once again, we shall use $t_c, m_{\Lambda}$ to refer to both the automorphisms of $\Sigma_0$
as well as the full spacetime $\mathbb{M}^2$.

If
$
  \Psi^i: \mathfrak{D}^{(\mu)}_{\ell}|_{\mathsf{Cauchy}(\mathbb{M}^2)_0}
  \Rightarrow
  \mathfrak{A}_{\ell}|_{\mathsf{Cauchy}(\mathbb{M}^2)_0}
$
be a homogeneously scaling locally covariant field on $\mathbb{M}^2$ with values in $\mathfrak{A}_{\ell}$,
then in particular it responds to scalings and translations as
\begin{align}
  \Psi^i_{\Sigma_0}(\Lambda s) = \Lambda^{\mu - 1} \mathfrak{A}_{\ell}m_{\Lambda}(\Psi^i_{\Sigma_0}(s)), \\
  \Psi^i_{\Sigma_0}(s + c) = \mathfrak{A}_{\ell}t_c(\Psi^i_{\Sigma_0}(s)).
\end{align}

\begin{proposition}
  \label{prop:quantum-scale-invariance}
  The Gaussian state $\omega_{H_{\mathbb{M}^2}, 0}: \mathfrak{A}_{\ell}(\Sigma_0, \mathbb{M}^2) \rightarrow \mathbb{C}[[\hbar]]$
  is invariant with respect to the action of the scaling and translation morphisms.
\end{proposition}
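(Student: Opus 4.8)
The plan is to unwind the two definitions involved — the Gaussian state $\omega_{H_0,\psi}$ and the $\mathsf{CCauchy}$-action on $\mathfrak{A}_{\ell}$ from \cref{def:quantum-chiral-algebra} — and to reduce invariance to a single statement about the distinguished Hadamard distribution $H_{\mathbb{M}^2}$: that it is unchanged by translations and changes only by an additive constant under dilations. Concretely, write $(\rho,\chi)$ for the $\mathsf{CCauchy}$ automorphism of $(\Sigma_0,\mathbb{M}^2)$ induced by a translation $t_c$ or a dilation $m_{\Lambda}$, with $\chi$ the corresponding conformal automorphism of $\mathbb{M}^2$ and $\rho=\chi|_{\Sigma_0}$. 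Then for any $F=(F_H)_{H\in\mathrm{Had}(\mathbb{M}^2)}\in\mathfrak{A}_{\ell}(\Sigma_0,\mathbb{M}^2)$, by \eqref{eq:quantum-chiral-hom} and the linearity of $\rho^{*}_{(1)}$,
\begin{equation*}
  \omega_{H_{\mathbb{M}^2},0}\bigl(\mathfrak{A}_{\ell}(\rho,\chi)F\bigr)
  =\bigl(F_{\chi^{*}H_{\mathbb{M}^2}}\circ\rho^{*}_{(1)}\bigr)[0]
  =F_{\chi^{*}H_{\mathbb{M}^2}}[0].
\end{equation*}
So it suffices to prove $F_{\chi^{*}H_{\mathbb{M}^2}}=F_{H_{\mathbb{M}^2}}$; by the consistency condition defining $\mathfrak{A}_{\ell}(\Sigma_0,\mathbb{M}^2)$ the left side equals $\beta_{\chi^{*}H_{\mathbb{M}^2}-H_{\mathbb{M}^2}}F_{H_{\mathbb{M}^2}}$, and from the explicit form \eqref{eq:beta-H-Hp-relation} of $\beta$ — which depends on the two Hadamard distributions only through their chiral restrictions $(\partial_{\Sigma_0}\otimes\partial_{\Sigma_0})(\cdot)$ — it is enough to show that these restrictions agree, in which case $\beta_{\chi^{*}H_{\mathbb{M}^2}-H_{\mathbb{M}^2}}$ is the identity.

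For this last point I would argue directly from the explicit kernel of $H_{\mathbb{M}^2}$, i.e.\ the symmetric part of the $W_{\mathbb{M}^2}$ used in the $\star$-product section. In null coordinates $H_{\mathbb{M}^2}$ is a function of $u-u'$ and $v-v'$ only; a translation acts on $\mathbb{M}^2$ by $(u,v)\mapsto(u-c,v+c)$ and hence fixes $H_{\mathbb{M}^2}$, whereas a dilation acts by $(u,v)\mapsto(\Lambda u,\Lambda v)$ and — since $\Lambda>0$, so the $i\epsilon$ prescription is untouched in the $\epsilon\searrow 0$ limit — rescales the argument of the logarithm by a positive constant, so that $\chi^{*}H_{\mathbb{M}^2}-H_{\mathbb{M}^2}$ is a constant function on $\mathbb{M}^2\times\mathbb{M}^2$. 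Since $\partial_{\Sigma_0}$ is built from the exterior derivative $d$ in \eqref{eq:del-sigma-derivative}, it annihilates constants, so $(\partial_{\Sigma_0}\otimes\partial_{\Sigma_0})(\chi^{*}H_{\mathbb{M}^2})=(\partial_{\Sigma_0}\otimes\partial_{\Sigma_0})H_{\mathbb{M}^2}$ as wanted. One should also record that $\chi^{*}H_{\mathbb{M}^2}\in\mathrm{Had}(\mathbb{M}^2)$ — it differs from $H_{\mathbb{M}^2}$ by a smooth bisolution — so that every indexed object above is legitimate. Combining the steps gives $F_{\chi^{*}H_{\mathbb{M}^2}}=F_{H_{\mathbb{M}^2}}$, hence $\omega_{H_{\mathbb{M}^2},0}\circ\mathfrak{A}_{\ell}(\rho,\chi)=\omega_{H_{\mathbb{M}^2},0}$ for both families of morphisms.

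The only genuine content is the transformation behaviour of $H_{\mathbb{M}^2}$ under dilations, and I expect this — together with the bookkeeping observation that $\beta_{H'-H}$ sees $H',H$ only through their chiral restrictions — to be the main, though essentially routine, obstacle. It is worth stressing that the argument uses the specific morphisms $t_c,m_{\Lambda}$ precisely because for a general conformal automorphism of $\mathbb{M}^2$ the difference $\chi^{*}H_{\mathbb{M}^2}-H_{\mathbb{M}^2}$ is a non-constant smooth function and $\beta$ is then no longer trivial; this is the quantum counterpart of the earlier classical remark that only the $\psi\equiv 0$ state is distinguished, and that in the quantum setting it remains invariant under the translation and dilation subgroup rather than all of $\mathrm{Diff}_{+}(\Sigma_0)$.
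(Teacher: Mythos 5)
Your argument is correct and is essentially the paper's own proof: reduce invariance to $F_{\chi^{*}H_{\mathbb{M}^2}}[0]=F_{H_{\mathbb{M}^2}}[0]$, note that $\beta_{H'-H}$ sees the Hadamard distributions only through their chiral derivatives, and observe that $\chi^{*}H_{\mathbb{M}^2}-H_{\mathbb{M}^2}$ vanishes for translations and equals the constant $-\tfrac{1}{2\pi}\ln\Lambda$ for dilations, so $\beta$ acts as the identity. Your version just spells out the bookkeeping (the evaluation at $\psi=0$, the consistency condition, and the $i\epsilon$ prescription) that the paper leaves implicit.
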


\begin{proof}
  Translation invariance is trivial.
  Explicitly, the equation for scaling invariance is satisfied if
  \begin{align}
    F_{m_{\Lambda}^{*}H_{\mathbb{M}^2}}[0]
    =
    F_{H_{\mathbb{M}^2}}[0].
  \end{align}
  This is satisfied for every $F$ because $\beta_{H' - H}$ depends on
  the Hadamard distributions only through their chiral derivatives.
  Given that $m_{\Lambda}^{*}H_{\mathbb{M}^2} - H_{\mathbb{M}^2} = -\tfrac{1}{2\pi} \ln(\Lambda)$,
  the chiral derivative of this term vanishes, hence $\beta_{m_{\Lambda}^{*}H_{\mathbb{M}^2} - H_{\mathbb{M}^2}}$
  acts as the identity.
\end{proof}

\begin{remark}
  It is worth noting that it was necessary for us to use the fact that
  chiral observables are all defined in terms of the derivative field $\partial_{\Sigma} \phi$.

\end{remark}

We may now state the quantum analogue of~\cref{prop:full-constraint},
the proof of which can be directly adapted from that of the classical result.

\begin{proposition}
  Let $\Psi^i, \Psi^j$ be a pair of homogeneously scaling locally covariant fields of weights $\mu_i, \mu_j$
  on $\mathbb{M}^2$ with values in $\mathfrak{A}_{\ell}$.
  Then
  \begin{align}
    \label{eq:quantum-commutator-constraint}
    \omega_{H_{\mathbb{M}^2}, 0}
    \left(
      \left[
        \Psi^i_{\Sigma_0}(s), \Psi^j_{\Sigma_0}(s')
      \right]_{\star}
    \right)
    \propto \delta^{(\mu_i + \mu_j - 1)}(s - s').
  \end{align}
\end{proposition}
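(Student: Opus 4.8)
The plan is to adapt the classical argument of~\cref{prop:full-constraint} essentially verbatim, working coefficient by coefficient in $\hbar$. Note first that a homogeneously scaling locally covariant field satisfies the support condition~\eqref{eq:field-support-condition} by the argument of~\cref{lem:hom-scale-support}, so the two propositions immediately preceding the present one apply: the map $f \otimes g \mapsto \omega_{H_{\mathbb{M}^2}, 0}\left( \left[ \Psi^i(f), \Psi^j(g) \right]_{\star} \right)$ is already a well-defined $\mathbb{C}[[\hbar]]$-valued distribution on $\Sigma_0^2$, supported on the diagonal $\Delta$. What remains is to establish that it is translation invariant and scales homogeneously with degree $\mu_i + \mu_j$; the stated form~\eqref{eq:quantum-commutator-constraint} is then forced by elementary distribution theory.

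For translation invariance, let $t_c$ denote translation of $\Sigma_0$ by $c$. Since $\mathfrak{A}_{\ell} t_c$ is a homomorphism of $\star$-algebras it commutes with the $\star$-commutator, so the equivariance $\Psi^{i/j}_{\Sigma_0}(\,\cdot\, + c) = \mathfrak{A}_{\ell} t_c \circ \Psi^{i/j}_{\Sigma_0}$ yields, after smearing,
\begin{equation*}
  \left[ \Psi^i({t_c}_* f), \Psi^j({t_c}_* g) \right]_{\star} = \mathfrak{A}_{\ell} t_c \left[ \Psi^i(f), \Psi^j(g) \right]_{\star}.
\end{equation*}
Applying $\omega_{H_{\mathbb{M}^2}, 0}$ and using its translation invariance from~\cref{prop:quantum-scale-invariance}, the right-hand side equals $\omega_{H_{\mathbb{M}^2}, 0}\left( \left[ \Psi^i(f), \Psi^j(g) \right]_{\star} \right)$, exactly as in the classical case. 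Hence the distribution depends only on $s - s'$, and by~\cite[Theorem~2.3.4]{hormanderAnalysisLinearPartial2015} it is of the form $\sum_{k=0}^{n} a_k \,\partial_s^k \delta(s - s')$ for some $n \in \mathbb{N}$ and $a_k \in \mathbb{C}[[\hbar]]$.

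For the scaling, repeat the computation with the dilation morphisms $m_{\Lambda}$, whose conformal factor $\omega_{\ell}$ is the constant $\Lambda^{-1}$. Homogeneous scaling of $\Psi^i$ reads $\Lambda^{\mu_i - 1} \Psi^i({m_{\Lambda}}_* f) = \mathfrak{A}_{\ell} m_{\Lambda} \Psi^i(f)$, and likewise for $\Psi^j$; since $\mathfrak{A}_{\ell} m_{\Lambda}$ is again a $\star$-homomorphism this passes through the commutator. Evaluating in the state $\omega_{H_{\mathbb{M}^2}, 0}$, which is scaling invariant by~\cref{prop:quantum-scale-invariance}, the same bookkeeping of constant prefactors and change-of-variables Jacobians as in the proof of~\cref{prop:full-constraint} shows that the distribution scales homogeneously with degree $\mu_i + \mu_j$. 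Since $\partial_s^k \delta(s - s')$ scales homogeneously with degree $1 + k$ (the one-dimensional case of the identity recalled in the proof of~\cref{prop:full-constraint}), all terms in the expansion above vanish except the one with $k = \mu_i + \mu_j - 1$, which gives~\eqref{eq:quantum-commutator-constraint}.

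There is no real obstacle here beyond careful bookkeeping; the single point at which quantumness is genuinely used — and the step I would double-check most carefully — is that the whole argument must be run order by order in $\hbar$. This is legitimate precisely because the translation and dilation morphisms act $\mathbb{C}[[\hbar]]$-linearly and preserve the $\hbar$-grading, so each coefficient distribution carries the transformation laws above separately, and because the state $\omega_{H_{\mathbb{M}^2}, 0}$ is invariant at every order — which, as in~\cref{prop:quantum-scale-invariance}, rests on the fact that $\beta_{H' - H}$, and hence the dilation action on $\mathfrak{A}_{\ell}(\Sigma_0, \mathbb{M}^2)$, depends on the Hadamard distribution only through its chiral derivative, so that the $\ln \Lambda$ shift in $H_{\mathbb{M}^2}$ under $m_{\Lambda}$ is invisible.
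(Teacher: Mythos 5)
Your proposal is correct and is exactly what the paper intends: the paper itself gives no separate proof, stating only that the argument ``can be directly adapted from that of the classical result,'' and your adaptation (diagonal support and well-definedness from the two preceding propositions, translation invariance and homogeneous scaling of degree $\mu_i+\mu_j$ from equivariance plus the invariance of $\omega_{H_{\mathbb{M}^2},0}$ established in the scale-invariance proposition, then the uniqueness of homogeneously scaling distributions on the diagonal) is the intended one. Your closing observation about running the argument order by order in $\hbar$, resting on the fact that $\beta_{H'-H}$ sees only the chiral derivative of the Hadamard distribution, is a useful explicit record of the one genuinely quantum ingredient.
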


For the chiral boson, this reduces to the classical result we have already seen.
By combining \cref{ex:monomial-fields} with the argument in the proof of \cref{prop:quantum-scale-invariance},
we can see that
\begin{align}
  \mathfrak{A}_{\ell}(m_{\Lambda}) \circ \nord{T}_{(\Sigma_0, \mathbb{M}^2)} = \nord{T}_{(\Sigma_0, \mathbb{M}^2)} \circ \mathfrak{D}_{\ell}^{(2)}(m_{\Lambda}),
\end{align}
where $(\nord{T}_{(\Sigma_0, \mathbb{M}^2)})_{H_{\mathbb{M}^2}} := T_{(\Sigma_0, \mathbb{M}^2)}$.
Hence \eqref{eq:quantum-commutator-constraint} also holds for $\Psi^i = \Psi^j = T$,
where we find that the expectation value of the commutator is proportional to
$\delta'''(s - s')$.
This is a well-known result in the \textsc{voa} formalism,
where the constant of proportionality is known as the \emph{central charge},
and is a vital parameter in the characterisation of a 2\textsc{dcft}.

As before, none of this tells us anything we did not already know.
Instead, it is the style of argument we wish to emphasise.
There is a generalisation of the notion of a distribution scaling homogeneously with degree $\mu$.
The \emph{Steinmann scaling degree} of a distribution $u \in \mathfrak{D}'(U)$,
where $U \subseteq \mathbb{R}^n$ is an open subset such that $\lambda x \in U \forall x \in U, \lambda > 0$,
is defined as
\begin{align}
  \mathrm{sd}(u) = \inf \left\{ \delta \in \mathbb{R} \,|\, \lim_{\lambda \rightarrow 0} \lambda^{\delta} u(\lambda x) = 0 \right\}.
\end{align}
Note that if $u$ scales homogeneously with degree $\mu$, then $\mathrm{sd}(u) = \mu$,
and if $u$ is a regular distribution, then by Taylor expanding $u$ around $x = 0$ one finds that
$\mathrm{sd}(u) \le 0$.
Whilst we found in our two examples that the highest order term in $\hbar$
(equivalently the term which did not vanish when evaluating at $\psi = 0$)
was homogeneously scaling with degree given by the degrees of our fields
each term in the expansion of the \textsc{opes} has a well-defined scaling degree
which counts the power of $(s - s')$ appearing in the denominator.

This suggest that in the future it may be possible to decompose the product of two fields
into a series of bi-distributions with decreasing scaling degrees.
Truncating this series at $\mathrm{sd} = 0$ would then give the usual form of the \textsc{ope}
modulo smooth terms.

\subsection{Reconstructing the boundary term}
\label{sec:boundary-term}

To close out this section,
we shall briefly comment on how the constants of proportionality
that appeared in our arguments from homogeneous scaling can be fixed by constraints on wavefront sets.

\begin{proposition}
    Let $u_n \in \mathfrak{D}'(\mathbb{R} \setminus \{0\})$ be
    the regular distribution defined by the function $\tfrac{1}{x^{n+1}}$.
    If $\overline{u}_n \in \mathfrak{D}'(\mathbb{R})^{\mathbb{C}}$ is
    an extension of $u_n$ which also scales homogeneously with degree $n + 1$, then
    \begin{equation}
        \overline{u}_n(x)
        =
        \frac{(-1)^n}{n!} \frac{d^n}{dx^n}
        \left[ \alpha \delta(x) + \mathrm{PV}\left( \frac{1}{x} \right) \right]
    \end{equation}
    for some $\alpha \in \mathbb{C}$, where $\mathrm{PV}$ denotes the Cauchy principle value defined by
    \begin{align*}
      \left\langle \mathrm{PV}\left( \tfrac{1}{x} \right), f \right\rangle
      =
      \lim_{\epsilon \searrow 0} \int_{\mathbb{R} \setminus (-\epsilon, \epsilon)} \frac{f(x)}{x} \,\mathrm{d}x.
    \end{align*}

    Moreover, the wavefront set of $\overline{u}_n$ is
    \begin{equation}
      \label{eq:extension-wavefront-sets}
      \mathrm{WF}(\overline{u}_n)
      =
      \left\{
    \begin{array}{l@{\quad:\quad}l}
    \{0\} \times \mathbb{R}_{<0} & \alpha = i\pi\\
    \{0\} \times \mathbb{R}_{>0} & \alpha = -i\pi\\
    \{0\} \times \mathbb{R} \setminus \{0\} & \text{else}
    \end{array}\right.
    \end{equation}
    and for the values $\alpha = \pm i\pi$, we may write $\overline{u}_n$ as
    \begin{equation}
        \overline{u}_n(x)
        =
        \frac{(-1)^n}{n!} \frac{d^n}{dx^n}
        \lim_{\epsilon \searrow 0} \frac{1}{x \mp i\epsilon}.
    \end{equation}
\end{proposition}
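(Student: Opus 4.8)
The plan is to prove the three assertions --- the general form of $\overline{u}_n$, the wavefront set dichotomy, and the $x \mp i\epsilon$ representation in the borderline cases --- in that order, reducing everything to the case $n = 0$ by differentiation.

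First I would dispose of the structural claim. Since $u_n = \tfrac{1}{x^{n+1}}$ on $\mathbb{R}\setminus\{0\}$ and $\tfrac{(-1)^n}{n!}\tfrac{d^n}{dx^n}\tfrac1x = \tfrac1{x^{n+1}}$ there, any extension $\overline{u}_n$ differs from $\tfrac{(-1)^n}{n!}\tfrac{d^n}{dx^n}\mathrm{PV}(\tfrac1x)$ by a distribution supported at $\{0\}$, i.e.\ by a finite sum $\sum_k c_k \delta^{(k)}(x)$ (by~\cite[Theorem~2.3.4]{hormanderAnalysisLinearPartial2015}, already cited in the excerpt). Now impose homogeneous scaling of degree $n+1$: since $\mathrm{PV}(\tfrac1x)$ and hence its $n$-th derivative scales homogeneously with degree $n+1$, and $\delta^{(k)}$ scales with degree $1+k$ (the footnote fact in~\cref{prop:full-constraint}), matching degrees forces $c_k = 0$ unless $k = n$. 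Writing $c_n = \tfrac{(-1)^n}{n!}\alpha$ gives the stated form. (One should check that $\mathrm{PV}(\tfrac1x)$ is genuinely a homogeneous extension of $\tfrac1x$ --- a one-line scaling computation on the defining integral.)

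Next, the wavefront set. I would compute $\mathrm{WF}(\overline{u}_0)$ first, where $\overline{u}_0 = \alpha\delta + \mathrm{PV}(\tfrac1x)$, and then invoke the fact that differentiation does not enlarge (and here does not shrink) the wavefront set over the point $0$, so $\mathrm{WF}(\overline{u}_n) = \mathrm{WF}(\overline{u}_0)$. The singular support is $\{0\}$ in all cases, so only the covector directions are at issue. The standard Fourier-transform computations give $\widehat{\delta}(\xi) = 1$, $\widehat{\mathrm{PV}(1/x)}(\xi) = -i\pi\,\mathrm{sgn}(\xi)$, hence $\widehat{\overline{u}_0}(\xi) = \alpha - i\pi\,\mathrm{sgn}(\xi)$, which equals $\alpha - i\pi$ for $\xi > 0$ and $\alpha + i\pi$ for $\xi < 0$. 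A direction $\xi_0 \neq 0$ lies in $\mathrm{WF}$ iff $\widehat{\overline{u}_0}$ fails to decay rapidly in the conic neighbourhood of $\xi_0$; since it is constant on each half-line, this fails to decay precisely when the relevant constant is nonzero. So $\mathbb{R}_{>0}$ is excluded iff $\alpha = i\pi$, and $\mathbb{R}_{<0}$ is excluded iff $\alpha = -i\pi$; otherwise both are present. That is exactly~\eqref{eq:extension-wavefront-sets}.

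Finally, for $\alpha = \mp i\pi$, I would identify $\overline{u}_0$ with $\lim_{\epsilon\searrow 0}\tfrac{1}{x\mp i\epsilon}$ via the Sokhotski--Plemelj formula $\lim_{\epsilon\searrow 0}\tfrac1{x\mp i\epsilon} = \mathrm{PV}(\tfrac1x) \pm i\pi\delta(x)$, matching the two signs to $\alpha = \pm i\pi$ (taking care over the sign convention), and then differentiate $n$ times, using that $\tfrac{d^n}{dx^n}(x\mp i\epsilon)^{-1} = (-1)^n n!\,(x\mp i\epsilon)^{-(n+1)}$ and that these limits commute with $\tfrac{d^n}{dx^n}$ in $\mathfrak{D}'$. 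I expect the main obstacle to be purely bookkeeping: keeping the signs in Sokhotski--Plemelj, in the Fourier transform of $\mathrm{PV}$, and in the $\mp$ of the integral representation mutually consistent, and making sure the wavefront-set direction convention (sign of $\xi$) matches the one fixed earlier in the paper. The analytic content is entirely standard; the care is all in the constants.
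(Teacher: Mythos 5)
Your proposal is correct and follows essentially the same route as the paper's proof: identify $\tfrac{(-1)^n}{n!}\tfrac{d^n}{dx^n}\mathrm{PV}(\tfrac1x)$ as a homogeneous extension, use the support-at-the-origin classification plus degree matching to isolate the $\delta^{(n)}$ term, and read off the wavefront set from the Fourier transform $\tfrac{(-i\xi)^n}{n!}\left[\alpha - i\pi\,\mathrm{sgn}(\xi)\right]$. The only cosmetic difference is that you reduce to $n=0$ and differentiate (which requires the small observation that multiplying the Fourier transform by $(-i\xi)^n$ does not alter rapid decay in either half-line), whereas the paper computes $\widehat{\overline{u}}_n$ directly; both are fine.
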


\begin{proof}
  Firstly, note that
  $\frac{(-1)^n}{n!} \frac{d^n}{dx^n} \mathrm{PV}\left( \frac{1}{x} \right)$
  is a well defined distribution on all of $\mathbb{R}$ which
  coincides with $u_n$ on the complement of $\{0\}$.
  It also scales homogeneously with degree $n+1$,
  hence the difference between this distribution and any other extension of $u_n$
  must be proportional to $\delta^{(n)}(x)$.

  As $\bar{u}_n$ is a tempered distribution, we can compute its wavefront set by simply taking its Fourier transform,
  which is
  \begin{align}
    \hat{\overline{u}}_n(\xi) = \frac{(-i \xi)^n}{n!} \left[ \alpha - i \pi \, \mathrm{sgn}(\xi) \right].
  \end{align}
  From this we can clearly see which values of $\alpha$ correspond to which wavefront set in~\eqref{eq:extension-wavefront-sets}.
\end{proof}

Similarly to our embedding $\partial_{\Sigma_0, \epsilon}^{*}: \mathfrak{F}_c(\Sigma_0) \rightarrow \mathfrak{F}_{\mu c}(\mathbb{M}^2)$,
we can use $\partial_{\Sigma_0, \epsilon}^{*}$ to map $\overline{u}_n(s - s') \in \mathfrak{D}'(\Sigma_0^2)$
to a distribution $(\partial_{\Sigma_0, \epsilon}^{*})^{\otimes 2} \overline{u}_n \in \mathfrak{D}'((\mathbb{M}^2)^2)$.
Again, we can compute the wavefront set of this new distribution by considering its Fourier transform
as a Schwartz distribution, which is
\begin{align}
  \widehat{(\partial_{\Sigma_0, \epsilon}^{*})^{\otimes 2} \overline{u}_n}(\xi, \eta)
  =
  -4 \xi_u \eta_u \hat{\delta}_{\epsilon}(2 \xi_v) \hat{\delta}_{\epsilon}(2 \eta_v)
  \int_{\mathbb{R}^{2}} \overline{u}_n(s - s')
  e^{i(\xi_u - \xi_v)s} e^{i(\eta_u - \eta_v)s} \, \mathrm{d}s \, \mathrm{d}s'.
\end{align}
The presence of the $\hat{\delta}_{\epsilon}$ terms means that this function decays rapidly for
all directions in which $\xi_v, \eta_v$ are non-zero.
Restricting to the remaining directions, we then compute the Fourier transform as
\begin{align}
  \widehat{(\partial_{\Sigma_0, \epsilon}^{*})^{\otimes 2} \overline{u}_n}(\xi, \eta)
  &=
  -4 \xi_u \eta_u \delta(\xi_u + \eta_u)\widehat{\overline{u}}(\xi_u) \nonumber \\
  &=
  \frac{(-i)^{n} \xi_u^{n + 2}}{n!}[\alpha - i \pi \mathrm{sgn}(\xi_u)].
\end{align}
From this, we may define a coordinate-free condition to uniquely determine $\alpha$,
namely
\begin{align}
  \mathrm{WF}((\partial_{\Sigma_0, \epsilon}^{*})^{\otimes 2} \overline{u}_n)
  \subseteq \mathrm{WF}(W)
\end{align}
where $W$ is a Hadamard distribution, if and only if $\alpha = - i \pi$.

Going the other way, the Schwartz kernel $K \in \mathfrak{D}'(\Sigma_0 \times \mathbb{M}^2)$
associated to $\partial_{\Sigma, \epsilon}$ allows us to define a correspondence between
subsets of $\dot{T}^{*} \mathbb{M}^2$ and subsets of $\dot{T}^{*} \Sigma_0$
\begin{align}
  \begin{split}
    \Gamma \mapsto
    \big\{ (s, \xi_s) &\in \dot{T}^{*} \Sigma_0
      \,|\, \\
    &\exists (u, v; \xi_u, \xi_v) \in \Gamma, (s, u, v; \xi_s, \xi_u, \xi_v,) \in \mathrm{WF}(K) \big\}
  \end{split}
\end{align}
In two dimensions, the wavefront set of a Hadamard distribution factorises into two distinct parts
\begin{align}
  \mathrm{WF}(W) = \Gamma_{\ell} \sqcup \Gamma_r,
\end{align}
where, in $\mathbb{M}^2$ we can write $\Gamma_{\ell} = \left\{ (u, v, u, v'; \xi_u, 0, -\xi_u, 0) \,|\, \xi_u > 0 \right\}$
and a similar expression for $\Gamma_r$.
Under the correspondence associated to $K$, $\Gamma_{\ell}$ leads to a spectral condition on $\Sigma_0$
\begin{align}
  \mathrm{WF}(u) = \left\{ (s, s; \xi_s, -\xi_s) \in \dot{T}^{*} \Sigma_0^2 \,|\, \xi_s > 0 \right\}.
\end{align}
It is this wavefront set condition which uniquely determines the constant of proportionality $\alpha$.

It is interesting to note that the wavefront set spectral condition may be interpreted as a
`positivity of energy' condition which is valid even on curved spacetimes.
What we have seen is that the remnant of this condition for states in the chiral algebra
is enough to uniquely determine the extension of homogeneously scaling distributions on $\mathfrak{D}'(\Sigma_0^2 \setminus \Delta)$.
One might argue that this is a somewhat artificial scenario.
However, in the formulation of \textsc{qft}s using \emph{factorisation algebras} \cite{costelloFactorizationAlgebrasQuantum2016},
the product of observables (known as the \emph{factorisation product}) is defined only for
observables localised to disjoint regions,
hence one would expect such distributions to arise naturally in this setting.

\section[Model Independent Chiral Algebras]{Towards a Model Independent Definition of Chiral Algebras}

To conclude, we shall discuss how one might in general define the `chiral sector' of a generic \textsc{aqft}.
For an on-shell algebra, we are able to provide a precise definition.
We shall see that our classical algebra fits this definition,
and also that locality in the sense of commutativity of chiral algebras localised in disjoint regions
is implied by Einstein causality in the full algebra.

\subsection{Definition}
\label{sec:model-indep-definition}

We need a category whose objects are connected sets of null geodesics of spacetimes.
It is similar in definition and purpose to the category $\mathsf{CCauchy}$ introduced earlier.
However, this time we will be able to give a simpler definition by realising this structure as a \emph{comma category}.

\begin{definition}
  Let $\mathcal{A} \overset{S}{\longrightarrow} \mathcal{B} \overset{T}{\longleftarrow} \mathcal{C}$ be a pair of functors with common target.
  The \emph{comma category} $(S \downarrow T)$ is the category such that
  \begin{itemize}
    \item Objects are triples $(a, h, c)$ such that $h \in \mathcal{B}(S(a), T(c))$.
    \item Morphisms $(a, h, c) \rightarrow (a', h', c')$ are pairs $(f, g) \in \mathcal{A}(a, a') \times \mathcal{C}(c, c')$ such that
          the following diagram commutes.
          \begin{equation}
            \begin{tikzcd}
              S(a)
                \ar[r, "Sf"]
                \ar[d, "h"]
                &
              S(a')
                \ar[d, "h'"]
                \\
              T(c)
                \ar[r, "Tg"]
                &
              T(c')
            \end{tikzcd}
          \end{equation}
  \end{itemize}
\end{definition}

Recall that in \cref{prop:null-projection-functorial} we established that the map $\pi_{\ell}: \mathcal{M} \rightarrow \mathcal{M}_{\ell}$
projecting a spacetime $\mathcal{M}$ onto its space of \emph{right-moving} null geodesics was `functorial'
in the sense that, for any $\chi \in \mathsf{CLoc}(\mathcal{M}, \widetilde{\mathcal{M}})$, there was a map $\chi_{\ell}$ such that
$\widetilde{\pi}_{\ell} \chi = \chi_{\ell} \pi_{\ell}$, where $\widetilde{\pi}_{\ell}$ is the corresponding projection on $\widetilde{\mathcal{M}}$.
In particular $\pi_{\ell}$ defines a functor $\mathsf{CLoc} \rightarrow \mathsf{Man}_1^+$ where
the latter category comprises connected, oriented $1$-manifolds with orientation-preserving smooth embeddings as morphisms.

One of the categories we shall use to define chiral algebras in a more generic setting is the comma category
$(\mathrm{Id}_{\mathsf{Man}_1^+} \downarrow \pi_{\ell})$ where $\mathrm{Id}_{\mathsf{Man}_1^+}$ denotes the identity functor.
Unpacking the definition, this means that an object of this category is defined by a choice $\mathcal{I} \in \mathsf{Man}_1^+$,
a spacetime $\mathcal{M} \in \mathsf{CLoc}$ and an embedding $i: \mathcal{I} \hookrightarrow \pi_{\ell}(\mathcal{M})$.
Similarly, a morphism is a pair, $\mathcal{I} \overset{\rho}{\longrightarrow} \mathcal{J}$,
$\mathcal{M} \overset{\chi}{\longrightarrow} \mathcal{N}$,
of a smooth embedding and admissible embedding respectively
such that the following diagram commutes.
\begin{equation}
  \label{eq:comma-cat-square}
  \begin{tikzcd}
    \mathcal{I}
      \ar[r, "i"] \ar[d, "\rho"]
      &
    \pi_{\ell}\left( \mathcal{M} \right)
      \ar[d, "\chi"]
    \\
    \mathcal{J}
      \ar[r, "j"]
      &
    \pi_{\ell}\left( \mathcal{N} \right)
  \end{tikzcd}
\end{equation}

\begin{definition}
  \label{def:model-indep-chiral-sub}
  Let $\mathfrak{A}: \mathsf{CLoc} \rightarrow \mathsf{Obs}$ be an \textsc{aqft} with values in some suitable category $\mathsf{Obs}$.
  A \emph{chiral subalgebra} of $\mathfrak{A}$ is a functor $\mathfrak{A}_c: \mathsf{Man}_1^+ \rightarrow *\text{-}\mathsf{Alg}$
  and a functor $N: (\mathrm{Id}_{\mathsf{Man}_1^+} \downarrow \pi_{\ell}) \rightarrow (\mathfrak{A}_c \downarrow \mathfrak{A})$.

  Explicitly, this means that for every square of the form \eqref{eq:comma-cat-square},
  there is a commuting diagram in $\mathsf{Obs}$ as follows.
  \begin{equation}
    \begin{tikzcd}
      \mathfrak{A}_c(\mathcal{I})
      \ar[d, "\mathfrak{A}_c \rho"] \ar[r, "N i"]
      &
      \mathfrak{A}(\mathcal{M})
      \ar[d, "\mathfrak{A} \chi"]
      \\
      \mathfrak{A}_c(\mathcal{J})
      \ar[r, "N j"]
      &
      \mathfrak{A}(\mathcal{N})
    \end{tikzcd}
  \end{equation}
\end{definition}

\begin{remark}
  There is a degree of redundancy in this definition.
  If we asked only for a functor
  $(\mathrm{Id}_{\mathsf{Man}_1^+} \downarrow \pi_{\ell}) \rightarrow (\mathrm{Id}_{\mathsf{Obs}} \downarrow \mathfrak{A})$,
  then this would automatically contain all the information necessary to describe $\mathfrak{A}_c$.

  For a comma category $(S \downarrow T)$ we define
  the projection functors $\Pi_{\mathcal{A}}: (S \downarrow T) \rightarrow \mathcal{A}$, $\Pi_{\mathcal{C}}: (S \downarrow T) \rightarrow \mathcal{C}$.
  Given a functor $N: (\mathrm{Id}_{\mathsf{Man}_1^+} \downarrow \pi_{\ell}) \rightarrow (\mathrm{Id}_{\mathsf{Obs}} \downarrow \mathfrak{A})$,
  we can then define $\mathfrak{A}_c := \Pi_{\mathsf{Obs}} \circ N$.

  There, is another equivalent characterisation of this definition.
  Given a diagram
  \begin{equation*}
    \begin{tikzcd}
      & \mathcal{B} & \\
      \mathcal{A} \ar[ur, "S"] \ar[dr, swap, "F"] & &
      \mathcal{C} \ar[ul, swap, "T"] \ar[dl, "G"] \\
      & \mathcal{D} &
    \end{tikzcd}
  \end{equation*}
  there is a one-to-one correspondence between natural transformations
  $F \circ \Pi_{\mathcal{A}} \Rightarrow G \circ \Pi_{\mathcal{C}}$ and functors
  $(S \downarrow T) \rightarrow (F \downarrow G)$.
  Expanding out the above definition in terms of natural transformations yields
  a set of conditions a lot closer to our original formulation in terms of the category $\mathsf{CCauchy}$.

  Secondly, this definition always admits a trivial subalgebra by taking
  $\mathfrak{A}_c(\mathcal{I}) = 0$ for every $\mathcal{I}$.
  In principle, one might define a chiral subalgebra as \emph{maximal} if, in addition to the above,
  it possesses the universal property that,
  for every alternative choice $N': (\mathrm{Id}_{\mathsf{Man}_1^+} \downarrow \pi_{\ell}) \rightarrow (\mathfrak{A}'_c \downarrow \mathfrak{A})$
  there exists a unique functor $I: (\mathfrak{A}'_c \downarrow \mathfrak{A}) \rightarrow (\mathfrak{A}_c \downarrow \mathfrak{A})$ such that
  $N' = N \circ I$.
  However, we shall not explore this idea further.
\end{remark}

\subsection{Causality in Chiral Subalgebras}

We now make use of this definition by proving that causality in the chiral sense is guaranteed for any chiral subalgebra of an \textsc{aqft} satisfying Einstein causality.
To make this precise, we define each of these properties as follows.

\begin{definition}
  \label{def:Einstein-causality}
  An \textsc{aqft} $\mathfrak{A}: \mathsf{CLoc} \rightarrow *\text{-}\mathsf{Alg}$ satisfies the \emph{Einstein causality} axiom if,
  for every diagram of $\mathsf{CLoc}$ morphisms $\mathcal{M}_1 \overset{\chi_1}{\longrightarrow} \mathcal{M} \overset{\chi_2}{\longleftarrow} \mathcal{M}_2$
  such that $\chi_1(\mathcal{M}_1)$ is spacelike separated from $\chi_2(\mathcal{M}_2)$,
  \begin{align*}
    \left[
      \mathfrak{A}\chi_1 \left( \mathfrak{A}(\mathcal{M}_1) \right),
      \mathfrak{A}\chi_2 \left( \mathfrak{A}(\mathcal{M}_2) \right)
    \right]
    = 0.
  \end{align*}
\end{definition}

\begin{definition}
  A functor $\mathfrak{A}_c: \mathsf{Man}_1^+ \rightarrow *\text{-}\mathsf{Alg}$ obeys \emph{chiral causality} if
  for every $\mathcal{I}_1 \overset{i_1}{\longrightarrow} \mathcal{I} \overset{i_2}{\longleftarrow} \mathcal{I}_2$ such that
  $i_1(\mathcal{I}_1) \cap i_2(\mathcal{I}_2) = \emptyset$,
  \begin{align*}
    \left[
      \mathfrak{A}_c i_1 (\mathfrak{A}_c(\mathcal{I}_1)),
      \mathfrak{A}_c i_2 (\mathfrak{A}_c(\mathcal{I}_2))
    \right]
    = 0.
  \end{align*}
\end{definition}

\begin{theorem}
  \label{thm:chiral-causality}
  Let $\mathfrak{A}: \mathsf{CLoc} \rightarrow *\text{-}\mathsf{Alg}$ be an \textsc{aqft} with a chiral subalgebra
  $N: (\mathrm{Id}_{\mathsf{Man}_1^+} \downarrow \pi_{\ell}) \rightarrow (\mathfrak{A}_c \downarrow \mathfrak{A})$,
  then $\mathfrak{A}_c$ obeys chiral causality if $\mathfrak{A}$ satisfies the Einstein causality axiom.
\end{theorem}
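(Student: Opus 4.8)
The plan is to reduce chiral causality for $\mathfrak{A}_c$ to Einstein causality for $\mathfrak{A}$, by realising the disjoint regions $i_1(\mathcal{I}_1)$ and $i_2(\mathcal{I}_2)$ of $\mathcal{I}$ as the $\pi_\ell$-images of a pair of spacelike separated, causally convex subregions of a single ambient spacetime, and then transporting the commutator along the functor $N$. First I would note that the objects of $\mathsf{Man}_1^+$ are connected, so each $\mathcal{I}_a$ is diffeomorphic to $\mathbb{R}$ or $S^1$ and $i_a(\mathcal{I}_a)$ is a connected open subset of $\mathcal{I}$; discarding the cases in which the statement is vacuous (one region empty, or $\mathcal{I}_a \cong \mathcal{I} \cong S^1$), the sets $i_1(\mathcal{I}_1)$ and $i_2(\mathcal{I}_2)$ are two disjoint open intervals in $\mathcal{I}$ when $\mathcal{I} \cong \mathbb{R}$, or two disjoint open arcs when $\mathcal{I} \cong S^1$. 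Since $\mathbb{R} \cong \pi_\ell(\mathbb{M}^2)$ and $S^1 \cong \pi_\ell(\mathscr{E})$, I fix the corresponding $\mathcal{M}_0 \in \{\mathbb{M}^2, \mathscr{E}\}$ and the identification $\iota \colon \mathcal{I} \xrightarrow{\sim} \pi_\ell(\mathcal{M}_0)$, making $(\mathcal{I}, \iota, \mathcal{M}_0)$ an object of $(\mathrm{Id}_{\mathsf{Man}_1^+} \downarrow \pi_\ell)$.

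The geometric heart of the argument is the construction of the two subregions. Recall that the right-moving null geodesics are the level sets of the null coordinate $u$, so $\iota(i_a(\mathcal{I}_a))$ corresponds to a set $\tilde J_a$ of $u$-values, and two disjoint intervals in $\mathbb{R}$ (respectively, disjoint arcs in $S^1$) can always be placed side by side; relabel so that $\tilde J_1$ lies to the left of $\tilde J_2$. I would then take $\mathcal{M}_1 := \{(u,v) : u \in \tilde J_1,\ v \in (0,1)\}$ and $\mathcal{M}_2 := \{(u,v) : u \in \tilde J_2,\ v \in (-1,0)\}$ (for $\mathcal{M}_0 = \mathscr{E}$, the images of such rectangles, whose $v$-widths are small and whose $u$-widths are less than $2\pi$, so that they embed). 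Each $\mathcal{M}_a$ is open, connected and causally convex, being a product of null-coordinate intervals, and satisfies $\pi_\ell(\mathcal{M}_a) = \iota(i_a(\mathcal{I}_a))$. A short sign check — for $p \in \mathcal{M}_1$, $q \in \mathcal{M}_2$ one has $u_q - u_p > 0$ while $v_q - v_p < 0$, so $p$ and $q$ are spacelike separated, and on the cylinder one additionally verifies that no future causal curve can wind around from $\mathcal{M}_1$ to $\mathcal{M}_2$ or back — shows $\mathcal{M}_1$ and $\mathcal{M}_2$ are spacelike separated. Letting $i_a' \colon \mathcal{I}_a \xrightarrow{\sim} \pi_\ell(\mathcal{M}_a)$ be the natural identifications and $\chi_a \colon \mathcal{M}_a \hookrightarrow \mathcal{M}_0$ the inclusions, diagram \eqref{eq:comma-cat-square} commutes by construction, so $(i_a, \chi_a)$ is a morphism $(\mathcal{I}_a, i_a', \mathcal{M}_a) \to (\mathcal{I}, \iota, \mathcal{M}_0)$ in $(\mathrm{Id}_{\mathsf{Man}_1^+} \downarrow \pi_\ell)$.

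Applying $N$ and using functoriality gives, for $a = 1, 2$, the relation $(N\iota) \circ (\mathfrak{A}_c i_a) = (\mathfrak{A}\chi_a) \circ (N i_a')$, whence for any $X_a \in \mathfrak{A}_c(\mathcal{I}_a)$
\[
  (N\iota)\bigl[\mathfrak{A}_c i_1(X_1),\ \mathfrak{A}_c i_2(X_2)\bigr]
  = \bigl[(\mathfrak{A}\chi_1)(N i_1'(X_1)),\ (\mathfrak{A}\chi_2)(N i_2'(X_2))\bigr] = 0,
\]
the last equality by Einstein causality applied to the spacelike separated images $\chi_1(\mathcal{M}_1)$ and $\chi_2(\mathcal{M}_2)$. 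Since $N\iota$ is injective — the structure maps of a chiral subalgebra being genuine embeddings — this forces $[\mathfrak{A}_c i_1(X_1), \mathfrak{A}_c i_2(X_2)] = 0$ in $\mathfrak{A}_c(\mathcal{I})$, i.e.\ chiral causality. The main obstacle is the geometric step: producing causally convex neighbourhoods that see \emph{exactly} the prescribed families of null geodesics yet remain spacelike separated — which becomes transparent once one works in null coordinates and stacks the two regions in the $v$-direction, but which requires the extra winding check on $\mathscr{E}$. The one point at which the abstract framework must cooperate is the faithfulness of $N\iota$; without it, one can only conclude that the chiral commutator maps to zero in $\mathfrak{A}(\mathcal{M}_0)$. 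When this theorem is specialised to the massless scalar field, with $\mathfrak{A}_c \rightsquigarrow \mathfrak{A}_\ell$ and $N\iota \rightsquigarrow \theta_{\Sigma_0,\epsilon}$, faithfulness holds by the injectivity established in \cref{thm:classical-natural}, which in particular completes the deferred proof that $E^{ij}_0$ is supported on the diagonal.
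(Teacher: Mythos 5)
Your proposal is correct and follows the same overall reduction as the paper --- realise the disjoint regions $i_1(\mathcal{I}_1)$, $i_2(\mathcal{I}_2)$ as the $\pi_{\ell}$-images of spacelike separated, causally convex subspacetimes of a common ambient spacetime, push the commutator through $N$ via the defining commuting squares, invoke Einstein causality, and finish with injectivity --- but the geometric construction at its heart is genuinely different. The paper avoids your classification of $\mathcal{I}$ entirely: it manufactures the ambient spacetime intrinsically as $\mathcal{M} = \mathcal{I} \times \mathbb{R}$ with metric $du \odot dv$ (so that $\pi_{\ell}(\mathcal{M}) \cong \mathcal{I}$ canonically), picks an arbitrary Cauchy surface $\Sigma \subset \mathcal{M}$, sets $\Sigma_a = \Sigma \cap \pi_{\ell}^{-1}(\rho \circ i_a(\mathcal{I}_a))$, and takes $\mathcal{M}_a$ to be the Cauchy development $D(\Sigma_a)$. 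Causal disjointness is then a one-line consequence of the fact that an inextensible causal curve meets $\Sigma$ exactly once, so it cannot meet both $\Sigma_1$ and $\Sigma_2$; this argument is uniform in the topology of $\mathcal{I}$ and needs no coordinates. Your null rectangles stacked in the $v$-direction buy an explicit, checkable sign computation for spacelike separation, but at the cost of the $\mathbb{R}$-versus-$S^1$ case split and the winding verification on $\mathscr{E}$, which you flag but do not carry out (it does go through: choosing lifts with $0 < u_q - u_p < 2\pi$ and $v_q - v_p \in (-2, 0)$, every deck-translated displacement remains spacelike). Both proofs close identically, and your observation that faithfulness of $N\iota$ is the one non-geometric input matches the paper's appeal to the convention that morphisms in $*\text{-}\mathsf{Alg}$ are injective.
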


\begin{proof}
  Given $\mathcal{I}_1 \overset{i_1}{\longrightarrow} \mathcal{I} \overset{i_2}{\longleftarrow} \mathcal{I}_2$ as above,
  we shall construct a diagram $\mathcal{M}_1 \overset{\chi_1}{\longrightarrow} \mathcal{M} \overset{\chi_2}{\longleftarrow} \mathcal{M}_2$
  satisfying the conditions of~\cref{def:Einstein-causality} along with maps
  $\rho: \mathcal{I} \rightarrow \pi_{\ell}(\mathcal{M})$ and $\rho_i: \mathcal{I} \rightarrow \pi_{\ell}(\mathcal{M}_i)$ such that
  the following commutes.
  \begin{equation}
    \begin{tikzcd}
      \mathcal{I}_1
        \ar[r, "i_1"]
        \ar[d, "\rho_1"]
        &
      \mathcal{I}
        \ar[d, "\rho"]
        &
      \mathcal{I}_2
        \ar[l, swap, "i_2"]
        \ar[d, "\rho_2"]
        \\
      \pi_{\ell}(\mathcal{M}_1)
        \ar[r, swap, "\pi_{\ell} \chi_1"]
        &
      \pi_{\ell}(\mathcal{M})
        &
      \pi_{\ell}(\mathcal{M}_2)
        \ar[l, "\pi_{\ell} \chi_2"]
    \end{tikzcd}
  \end{equation}
  We first construct $\mathcal{M}$ by choosing an arbitrary metric on
  $\mathcal{I}$ compatible with the pre-existing orientation.
  $\mathcal{M}$ is then given by the manifold $\mathcal{I} \times \mathbb{R}$
  with metric $g_{\mathcal{M}} = du \odot dv$, where $du$ is the volume form on $\mathcal{I}$ induced by the metric,
  and $dv$ is the canonical one-form on $\mathbb{R}$.
  By construction, there is a canonical isomorphism $\rho: \pi_{\ell}(\mathcal{M}) \overset{\simeq}{\rightarrow} \mathcal{I}$
  given by $u \mapsto [ \left\{ (u, v) \right\}_{v \in \mathbb{R}} ]$.

  Next, we take an arbitrary Cauchy surface $\Sigma \subset \mathcal{M}$ and define
  $\Sigma_i = \Sigma \cap \pi^{-1}_{\ell}(\rho \circ i_i (\mathcal{I}_i))$ for $i \in \{1, 2\}$,
  where by $\pi^{-1}_{\ell}$ we mean in particular the preimage under the map
  $\pi_{\ell}: \mathcal{M} \rightarrow \rho(\mathcal{I})$.
  The requisite spacetime $\mathcal{M}_i$ may then be obtained as the \emph{Cauchy development} $D(\Sigma_i)$
  (with the embedding $\chi_i$ being simply the inclusion map).
  This is the set of events $x \in \mathcal{M}$ such that
  every inextensible causal curve intersecting $x$ also intersects $\Sigma_i$.
  A sketch of this situation is provided in~\cref{fig:cauchy-development}.

  Note that $\Sigma_1 \cap \Sigma_2 = \emptyset$ means that $D(\Sigma_1)$ and $D(\Sigma_2)$ must be causally disjoint:
  suppose we have $x_1 \in D(\Sigma_1)$, $x_2 \in D(\Sigma_2)$ such that there is a causal curve $\gamma$ connecting the two.
  If we maximally extend $\gamma$, then it must intersect both $\Sigma_1$ and $\Sigma_2$ by definition of the Cauchy developments.
  However, as $\Sigma$ is a Cauchy surface, each inextensible causal curve intersects it \emph{precisely} once,
  thus $x_1$ and $x_2$ cannot be connected by any causal curve.

  We can also see that $\pi_{\ell}(\mathcal{M}_i) = \pi_{\ell}(\Sigma_i) = \rho \circ i_i(\mathcal{I}_i)$.
  This means that if $\rho_i = \rho \circ i_i |^{\pi_{\ell}(\mathcal{M}_i)}$,
  then $\pi_{\ell} \chi_i \circ \rho_i = \rho \circ i_i$ as required, as $\pi_{\ell} \chi_i$ is simply the inclusion map of
  $\pi_{\ell}(\mathcal{M}_i) \subseteq \pi_{\ell}(\mathcal{M})$.

  We then apply the functor $N$ to the above diagram, which gives
  \begin{equation}
    \begin{tikzcd}
      \mathfrak{A}_c(\mathcal{I}_1)
        \ar[r, "\mathfrak{A}_c i_1"]
        \ar[d, "N \rho_1"]
        &
      \mathfrak{A}_c(\mathcal{I})
        \ar[d, "N \rho"]
        &
      \mathfrak{A}_c(\mathcal{I}_2)
        \ar[l, swap, "\mathfrak{A}_c i_2"]
        \ar[d, "N \rho_2"]
        \\
      \mathfrak{A}(\mathcal{M}_1)
        \ar[r, swap, "\mathfrak{A} \chi_1"]
        &
      \mathfrak{A}(\mathcal{M})
        &
      \mathfrak{A}(\mathcal{M}_2)
        \ar[l, "\mathfrak{A} \chi_2"]
    \end{tikzcd}
  \end{equation}

  Finally, from Einstein causality, we see that,
  for $F \in \mathfrak{A}_c(\mathcal{I}_1)$, $G \in \mathfrak{A}_c(\mathcal{I}_2)$,
  \begin{align*}
    N\rho ([\mathfrak{A}_c i_1 (F), \mathfrak{A}_c i_2 (G)]_{\mathfrak{A}_c(\mathcal{I})})
    &= [\mathfrak{A} \chi_1 \circ N \rho_1 (F), \mathfrak{A} \chi_2 \circ N \rho_2 (G)]_{\mathfrak{A}(\mathcal{M})} \\
    &= 0.
  \end{align*}
  Given that the morphisms in $*\text{-}\mathsf{Alg}$ are defined to be injective,
  we may also conclude that $[\mathfrak{A}_c i_1 (F), \mathfrak{A}_c i_2 (G)]_{\mathfrak{A}_c(\mathcal{I})} = 0$
  as required.
\end{proof}

\begin{remark}
  This definition does not actually depend on the conformal symmetry of $\mathfrak{A}$.
  In fact, a similar argument may be used in arbitrary dimensions to show that
  the canonical algebra associated to a Cauchy surface $\Sigma$,
  defined as the limit of the inverse system of neighbourhoods of $\Sigma$,
  is local in much the same way, provided such limits exist and can be assigned functorially.

  It is also worth noting that the analogous proof for $\mathfrak{A}_{\ell}: \mathsf{CCauchy} \rightarrow \mathsf{Obs}$
  is even simpler, as the necessary hypothesis amounts to the existence of a diagram of the form
  \begin{equation}
    \begin{tikzcd}
      \Sigma_1
        \ar[r, "\rho_1"] \ar[d, hookrightarrow]
        &
      \Sigma
        \ar[d, hookrightarrow]
        &
      \Sigma_2
        \ar[l, "\rho_2", swap] \ar[d, hookrightarrow]
        \\
      \mathcal{M}_1
        \ar[r, "\chi_1"]
        &
      \mathcal{M}
        &
      \mathcal{M}_2
        \ar[l, "\chi_2", swap]
    \end{tikzcd}
  \end{equation}
  such that $\rho_1(\Sigma_1) \cap \rho_2(\Sigma_2) = \emptyset$,
  which already implies that $\chi_1(\mathcal{M}_{1})$ is causally disjoint from $\chi_2(\mathcal{M}_{2})$.
\end{remark}

\begin{figure}
  \centering
  \begin{tikzpicture}[scale=2]
    \draw (0,0) .. controls (1,1) and (3,-1)  .. (4,0);
    \node[anchor=west] at (4,0) {$\Sigma$};
    \draw (0.5,1) -- (2.5,-1);
    \draw (0,0.5) -- (1.5,-1);
    \node (1) at (0.25, -0.6) {$\pi_{\ell}^{-1}(\rho \circ i_1(\mathcal{I}_1))$};
    \draw[->] (1) .. controls (0.9,-0.9) .. (1.5, -.6);
    \draw[dashed] (0.3,0.2) -- (0.8,0.7);
    \draw[dashed] (0.8,-0.3) -- (1.3,0.2);
    \node (2) at (1.4, 0.6) {$D(\Sigma_1)$};
    \draw[->] (2) .. controls (1, 0.6).. (0.8, 0.4);
    \draw (1.5,1) -- (3.5,-1);
    \draw (2.5,1) -- (4,-0.5);
    \node (3) at (3.45, 0.85) {$\pi_{\ell}^{-1}(\rho \circ i_2(\mathcal{I}_2))$};
    \draw[->] (3) .. controls (3.2,0.6) .. (2.5, .6);
    \draw[dashed] (2.7,-0.2) -- (3.2, 0.3);
    \draw[dashed] (3.2,-0.7) -- (3.7,-0.2);
    \node (4) at (2.5,-0.5) {$D(\Sigma_2)$};
    \draw[->] (4) .. controls (3, -0.5) .. (3.2, -0.35);
  \end{tikzpicture}
  \caption{A sketch of $\Sigma_1$, $\Sigma_2$ and their Cauchy developments.}
  \label{fig:cauchy-development}
\end{figure}
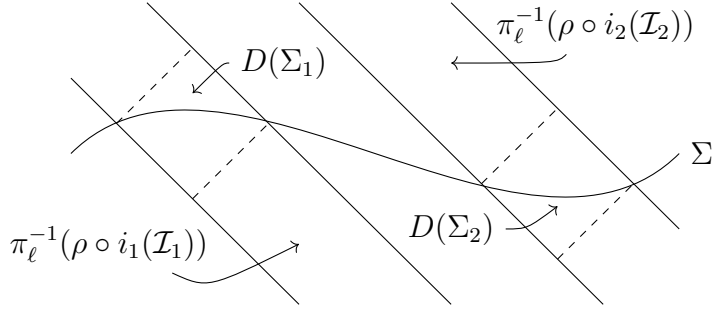

\subsection{Comparison to Earlier Definition}

From our discussion in \Cref{sec:model-indep-definition},
we can generalise the relationship between $\mathfrak{A}_{\ell}$ and $\mathfrak{A}_{\mathrm{on}}$
by introducing a notion of \emph{chiral subalgebras on Cauchy surfaces}.
For a full algebra $\mathfrak{A}: \mathsf{CLoc} \Rightarrow \mathsf{Obs}$,
this is given by a pair comprising a functor
$\mathfrak{A}_{\ell}: \mathsf{CCauchy} \rightarrow \mathsf{Obs}$ and a natural transformation
$\eta: \mathfrak{A}_{\ell} \Rightarrow \mathfrak{A} \circ \Pi_2$,
recalling $\Pi_2: \mathsf{CCauchy} \rightarrow \mathsf{CLoc}$ is the obvious projection functor,
and that $\mathfrak{A}$ should be an on-shell algebra, for reasons discussed earlier.

In this section we address the question of how this data can be used to generate
a chiral subalgebra in the sense of \cref{def:model-indep-chiral-sub}.

On $\mathcal{M}$ and $\Sigma$, our algebraic structures were determined by
distributions, or families of distributions,
which in turn defined bilinear operators on a suitable class of functionals.
In each case, the pairing between distributions and functional derivatives
was aided by the existence of a preferred volume form induced by the metric.
As we now wish to construct an algebra on a manifold with no preferred volume form,
a degree of care is required when defining these pairings.

For both the classical and quantum algebras,
the algebraic structures we need to define are determined
by bi-distributions, i.e. elements of $\mathfrak{D}'(\mathcal{I}^2)$ for
$\mathcal{I} \in \mathsf{Man}_1^+$.
In each case, candidate distributions may be obtained by
choosing an oriented diffeomorphism
$\rho: \mathcal{I} \overset{\sim}{\rightarrow} \Sigma \subset \mathcal{M}$.
We have already seen how $\partial_{\Sigma}$ can be used to define a map
$\mathfrak{D}'_{\Gamma_2}(\mathcal{M}^2) \rightarrow \mathfrak{D}'(\Sigma^2)$.

Without a preferred choice of volume form,
one must take care in the identification of functional derivatives and distributions.
For example, if we take for our configuration space $\Omega^p(X)$,
the space of $p$-forms on some $d$-dimensional, orientable manifold $X$ then,
for a Bastiani smooth map $F: \Omega^{p}(X) \rightarrow \mathbb{C}$,
the functional derivatives $F^{(n)}[\phi]$ (after completion) belong to
$\mathfrak{D}'(X^n, (\bigwedge^p T^{*}X)^{\boxtimes n})$.
This means that the derivatives of \emph{regular} functionals
should belong to $\mathfrak{D}(X^n, (\bigwedge^{d - p} T^{*}X)^{\boxtimes n})$.
For a spacetime $\mathcal{M}$ and configuration space $\Omega^p(\mathcal{M})$,
the second Euler-Lagrange derivative of an action
belongs to $\mathfrak{D}'(\mathcal{M}^2; (\bigwedge^p T^{*} \mathcal{M}^2)^{\boxtimes 2})$.
By the Schwartz kernel theorem, this then induces, for every $\phi \in \Omega^p(\mathcal{M})$
a linear operator $P_{\phi}: \Omega^p_c(\mathcal{M}) \rightarrow \mathfrak{D}'(\mathcal{M}, \bigwedge^p T^{*} \mathcal{M})$.
For physically reasonable actions, $P_{\phi}$ is a Green hyperbolic differential operator,
hence its image is $\Omega^{d-p}_c(\mathcal{M})$.
The advanced and retarded propagators are then maps
$\Omega^{d-p}_c(\mathcal{M}) \rightarrow \Omega^p(\mathcal{M})$ which, by Schwartz kernel theorem,
may then be identified with an element of $\mathfrak{D}'(\mathcal{M}^2, (\bigwedge^{d - p} T^{*} \mathcal{M})^{\boxtimes 2})$.
Hence, for the Pauli-Jordan function of the massless scalar field,
we have $E \in \mathfrak{D}'(\mathcal{M}^2, (\bigwedge^2 T^{*} \mathcal{M})^{\boxtimes 2})$.

Physically, $\partial_{\Sigma} \phi$ should be interpreted as the chiral component
of the conserved current of $\phi$, i.e. it is more naturally a one-form.
Hence, the corresponding variant of the diagram \eqref{prop:config-covariance} is
\begin{equation}
    \begin{tikzcd}
    \Omega^1(\Sigma)
        &
    \Omega^1(\widetilde{\Sigma})
        \ar[l, "\rho^{*}"]
        \\
    \mathfrak{E}(\mathcal{M})
    \ar[u, "\partial_{\Sigma}"]
    &
    \mathfrak{E}(\widetilde{\mathcal{M}})
    \ar[u, "\partial_{\widetilde{\Sigma}}"]
    \ar[l, "\chi^{*}"]
    \end{tikzcd}
\end{equation}
where $\partial_{\Sigma}$ is defined as in \eqref{eq:del-sigma-minkowski} but with the Hodge star omitted.
The corresponding extension to suitable distributions is then
$\partial_{\Sigma}^{\otimes2}: \mathfrak{D}'_{\Gamma_2}(\mathcal{M}^2, (\bigwedge^2 T^{*} \mathcal{M})^{\boxtimes 2}) \rightarrow \mathfrak{D}'(\Sigma^2)$,
where $\Gamma$ is the complement of the conormal bundle corresponding to the embedding $\Sigma^2 \hookrightarrow \mathcal{M}^2$.
Pullback along $\rho$ then defines a map
$\mathfrak{D}'(\Sigma^2) \rightarrow \mathfrak{D}'(\mathcal{I}^2)$.
Putting these two maps together means that $(\rho^{*} \partial_{\Sigma})^{\otimes 2} E_{\mathcal{M}} \in \mathfrak{D}'(\mathcal{I}^2)$

To show that $(\rho^{*}\partial_{\Sigma})^{\otimes 2} E_{\mathcal{M}}$ is
independent of the choice $\mathcal{I} \overset{\sim}{\rightarrow} \Sigma \subset \mathcal{M}$,
note that if we take a pair of such maps $\rho_i: \mathcal{I} \overset{\sim}{\rightarrow} \Sigma_i \subset \mathcal{M}_i$,
then we can find open, causally convex neighbourhoods, $\mathcal{N}_i$, of $\Sigma_i$ in $\mathcal{M}_i$
such that we can construct a commutative diagram
\begin{equation}
  \begin{tikzcd}
    &
    \mathcal{I}
      \ar[dl, "\sim", "\rho_{1}"']
      \ar[dr, "\sim"', "\rho_{2}"]
    & \\
    \Sigma_1
      \ar[rr, "\sim", "\rho"']
      \ar[d, hookrightarrow]
    &&
    \Sigma_2
      \ar[d, hookrightarrow]
    \\
    \mathcal{N}_1
      \ar[d, hookrightarrow, "i_1"]
      \ar[rr, "\sim", "\chi"']
    &&
    \mathcal{N}_2
      \ar[d, hookrightarrow, "i_2"]
    \\
    \mathcal{M}_1
    &&
    \mathcal{M}_2
  \end{tikzcd}
\end{equation}

Dualising, we then obtain a diagram for distributions
\begin{equation}
  \label{eq:sections-everywhere}
  \begin{tikzcd}[column sep=1em]
    &
    \mathfrak{D}'(\mathcal{I}^2)
      \ar[<-, dl, "\sim", "\rho_{1}^{* \otimes 2}"']
      \ar[<-,dr, "\sim"', "\rho_{2}^{* \otimes 2}"]
    & \\
    \mathfrak{D}'(\Sigma_1^2)
      \ar[<-, rr, "\sim", "\rho^{* \otimes 2}"']
      \ar[<-, d, "\partial_{\Sigma_1}^{\otimes 2}"]
    &&
    \mathfrak{D}'(\Sigma_2^2)
      \ar[<-, d, "\partial_{\Sigma_2}^{\otimes 2}"]
    \\
    \mathfrak{D}'_{\Gamma_2}(\mathcal{N}_1^2, (\bigwedge^2 T^{*} \mathcal{N}_1)^{\boxtimes 2})
      \ar[<-, d, "i_1^{* \otimes 2}"]
      \ar[<-, rr, "\sim", "\chi^{* \otimes 2}"']
    &&
    \mathfrak{D}'_{\Gamma_2}(\mathcal{N}_2^2, (\bigwedge^2 T^{*} \mathcal{N}_2)^{\boxtimes 2})
      \ar[<-, d, "i_2^{* \otimes 2}"]
    \\
    \mathfrak{D}'_{\Gamma_2}(\mathcal{M}_1^2, (\bigwedge^2 T^{*} \mathcal{M}_1)^{\boxtimes 2})
    &&
    \mathfrak{D}'_{\Gamma_2}(\mathcal{M}_2^2, (\bigwedge^2 T^{*} \mathcal{M}_2)^{\boxtimes 2})
  \end{tikzcd}
\end{equation}

From this diagram we can deduce that
\begin{align*}
  (\rho_1^{*} \partial_{\Sigma_1})^{\otimes 2} E_{\mathcal{M}_1}
  &= (\rho_1^{*} \partial_{\Sigma_1} i_{1}^{*})^{\otimes 2} E_{\mathcal{M}_1} \\
  &= (\rho_1^{*} \partial_{\Sigma_1})^{\otimes 2} E_{\mathcal{N}_1} \\
  &= (\rho_1^{*} \partial_{\Sigma_1})^{\otimes 2} E_{\mathcal{N}_1} \\
  &= (\rho_1^{*} \partial_{\Sigma_1} \chi^{*})^{\otimes 2} E_{\mathcal{N}_2} \\
  &= (\rho_1^{*} \rho^{*} \partial_{\Sigma_2})^{\otimes 2} E_{\mathcal{N}_2} \\
  &= (\rho_1^{*} \rho^{*} \partial_{\Sigma_2})^{\otimes 2} E_{\mathcal{M}_2} \\
  &= (\rho_2^{*} \partial_{\Sigma_2})^{\otimes 2} E_{\mathcal{M}_2}.
\end{align*}

By taking in particular some arbitrary choice
$\rho_0: I \overset{\sim}{\rightarrow} \Sigma_0 \subset \mathcal{M}_0 \in \left\{ \mathbb{M}^2, \mathscr{E} \right\}$,
we can deduce that
\begin{align}
  \left\langle (\rho_0^{*} \partial_{\Sigma_0})^{\otimes 2} E_{\mathcal{M}_0}, f \otimes g \right\rangle
  =
  \int_{\mathcal{I}} f \wedge dg
\end{align}

From this point, we define a Poisson algebra associated to $\mathcal{I}$
in almost the same way as $\mathfrak{P}_{\ell}(\Sigma, \mathcal{M})$.
The identification of $\Omega^1(\mathcal{I})$ with $\mathfrak{E}(\mathcal{I})$
induced by any choice of oriented chart allows the wavefront condition
from \cref{prop:chiral-poisson} to be imposed on Bastiani smooth maps
$\Omega^1(\mathcal{I}) \rightarrow \mathbb{C}$ unambiguously.
We denote the space of such maps satisfying this condition $\mathfrak{F}^1_c(\Sigma)$.
The bi-distribution $(\rho^{*} \partial_{\Sigma})^{\otimes 2} E_{\mathcal{M}}$,
then also defines the structure of a Poisson algebra on this space
in a manner directly analogous to $\partial_{\Sigma}^{\otimes 2} E_{\mathcal{M}}$ on $\mathfrak{F}_c(\Sigma)$.

Functoriality can then be established by a simple argument:
if we have an oriented embedding $\mathcal{I} \overset{f}{\rightarrow} \mathcal{J}$,
then any sequence $\rho_{\mathcal{J}}: \mathcal{J} \overset{\sim}{\rightarrow} \Sigma_{\mathcal{J}} \subset \mathcal{M}_{\mathcal{J}}$
yields a corresponding sequence
$\rho_{\mathcal{I}}: \mathcal{I} \overset{\sim}{\rightarrow} \Sigma_{\mathcal{I}} \subset \mathcal{M}_{\mathcal{I}}$
where $\rho_{\mathcal{I}} := \rho_{J} \circ f$, $\Sigma_{\mathcal{I}} := \rho_{\mathcal{I}}(\mathcal{I})$ and
$\mathcal{M}_{\mathcal{I}} := D(\Sigma_{\mathcal{I}})$.
By studying the resulting system of inclusions and embeddings, one can quickly verify that
$
    (\rho_{\mathcal{I}} \partial_{\Sigma_{\mathcal{I}}})^{\otimes 2} E_{\mathcal{M}_{\mathcal{I}}}
        =
    (f^{*} \rho_{\mathcal{J}} \partial_{\Sigma_{\mathcal{J}}})^{\otimes 2} E_{\mathcal{M}_{\mathcal{J}}}
$.

If we denote the resulting functor on $\mathsf{Man}_1^+$ by $\mathfrak{P}_c$,
then we can establish the following result.

\begin{proposition}
  There is a chiral subalgebra $(\mathfrak{P}_c, N)$ of $\mathfrak{P}$
  such that $N$ factors through $\partial^{*}: \mathfrak{P}_{\ell} \Rightarrow \mathfrak{P}_{\mathrm{on}}$
\end{proposition}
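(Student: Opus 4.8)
The functor $\mathfrak{P}_c\colon \mathsf{Man}_1^+\to\mathsf{Poi}$ has already been assembled above, so the task is to produce the functor $N\colon (\mathrm{Id}_{\mathsf{Man}_1^+}\downarrow\pi_\ell)\to(\mathfrak{P}_c\downarrow\mathfrak{P}_{\mathrm{on}})$ and to exhibit the factorisation. The strategy is to send each object of the comma category to an object of $\mathsf{CCauchy}$, and then transport the natural transformation $\partial^{*}_{\epsilon}$ of \cref{thm:classical-natural}. Concretely: given $(\mathcal{I},i,\mathcal{M})$, choose a Cauchy surface $\Sigma\subset\mathcal{M}$; the diffeomorphism $\pi_\ell|_\Sigma\colon\Sigma\xrightarrow{\sim}\pi_\ell(\mathcal{M})$ turns $i$ into an oriented open embedding $\rho:=(\pi_\ell|_\Sigma)^{-1}\circ i\colon\mathcal{I}\hookrightarrow\Sigma':=(\pi_\ell|_\Sigma)^{-1}(i(\mathcal{I}))\subseteq\Sigma$, and the Cauchy development $\mathcal{N}:=D_{\mathcal{M}}(\Sigma')$ is an open causally convex sub-spacetime with $\Sigma'$ as a Cauchy surface, together with an inclusion $\iota\colon\mathcal{N}\hookrightarrow\mathcal{M}$ in $\mathsf{CLoc}$. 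Pullback along $\rho$ composed with the Hodge star on $\Sigma'$ yields a Poisson isomorphism $\lambda_\rho\colon\mathfrak{P}_c(\mathcal{I})\xrightarrow{\sim}\mathfrak{P}_\ell(\Sigma',\mathcal{N})$ — this is precisely the content of the identification $\langle(\rho^{*}\partial_\Sigma)^{\otimes2}E_{\mathcal{M}},f\otimes g\rangle=\int_{\mathcal{I}}f\wedge dg$ established just above, which says the two chiral Poisson bi-vectors correspond. I would then set
\[
  N i := \mathfrak{P}_{\mathrm{on}}(\iota)\circ(\partial^{*}_{\epsilon})_{(\Sigma',\mathcal{N})}\circ\lambda_\rho\colon\ \mathfrak{P}_c(\mathcal{I})\longrightarrow\mathfrak{P}_{\mathrm{on}}(\mathcal{M}).
\]
By construction $N i$ is the composite of the chosen isomorphism $\lambda_\rho$ with a component of $\partial^{*}_{\epsilon}$ and an $\mathfrak{P}_{\mathrm{on}}$-morphism, which is exactly the sense in which $N$ factors through $\partial^{*}\colon\mathfrak{P}_\ell\Rightarrow\mathfrak{P}_{\mathrm{on}}$.

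The remaining work is to show $N$ is well defined and functorial. For independence of the auxiliary choice of $\Sigma$, two choices $\Sigma_1,\Sigma_2$ give $\Sigma'_1,\Sigma'_2$ with $\pi_\ell(\Sigma'_1)=\pi_\ell(\Sigma'_2)=i(\mathcal{I})$, and \cref{cor:embedding-extension} applied to the diffeomorphism $\rho_2\rho_1^{-1}\colon\Sigma'_1\xrightarrow{\sim}\Sigma'_2$ produces a $\mathsf{CCauchy}$ morphism relating $(\Sigma'_1,\mathcal{N}_1)$ and $(\Sigma'_2,\mathcal{N}_2)$ over common causally convex neighbourhoods inside $\mathcal{M}$; naturality of $\partial^{*}_{\epsilon}$ (\cref{thm:classical-natural}) and of $\mathfrak{P}_\ell$ (\cref{prop:cauchy-algebra-natural}), together with functoriality of $\mathfrak{P}_{\mathrm{on}}$ on the resulting inclusions into $\mathcal{M}$, then force the two candidate maps $N i$ to coincide — this is the same argument already used to show $(\rho^{*}\partial_\Sigma)^{\otimes2}E_{\mathcal{M}}$ is choice-free, now promoted from the bi-vector to the whole embedding. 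For a comma morphism $(\rho,\chi)\colon(\mathcal{I},i,\mathcal{M})\to(\mathcal{J},j,\mathcal{P})$, the commuting square \eqref{eq:comma-cat-square} guarantees that $\chi$ carries a neighbourhood of $\Sigma'_{\mathcal{M}}$ compatibly with $\Sigma'_{\mathcal{P}}$, so after shrinking to causally convex sub-spacetimes one obtains a $\mathsf{CCauchy}$ morphism $(\Sigma'_{\mathcal{M}},\mathcal{N}_{\mathcal{M}})\to(\Sigma'_{\mathcal{P}},\mathcal{N}_{\mathcal{P}})$ refining $(\rho,\chi)$; pasting the naturality squares for $\lambda$, for $\partial^{*}_{\epsilon}$, and for $\mathfrak{P}_{\mathrm{on}}$ applied to the inclusions yields the square required by \cref{def:model-indep-chiral-sub}, so $N$ is a functor into $(\mathfrak{P}_c\downarrow\mathfrak{P}_{\mathrm{on}})$.

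The main obstacle is the bookkeeping in this last step. The Cauchy development $D(\Sigma')$ depends on the particular surface $\Sigma'$ and not merely on the set of geodesics $i(\mathcal{I})$, so a given comma morphism does not literally restrict to a morphism of the chosen developments; reconciling this requires passing repeatedly to smaller causally convex neighbourhoods and invoking \cref{cor:embedding-extension} to extend the relevant Cauchy-surface maps, exactly as in the proof of choice-independence of $(\rho^{*}\partial_\Sigma)^{\otimes2}E_{\mathcal{M}}$. Once the diagram is arranged so that every arrow is either a $\mathsf{CCauchy}$ morphism, a $\mathsf{CLoc}$ inclusion, or a component of an already-established natural transformation, commutativity is automatic; laying out that diagram coherently is the only real content, and the locality statement of \cref{thm:chiral-causality} then applies verbatim to $\mathfrak{P}_c$.
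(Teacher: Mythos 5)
Your construction is correct and ultimately produces the same map as the paper, but the decomposition is genuinely different, and the difference is exactly what creates the bookkeeping you flag as the main obstacle. You convert $i$ into a diffeomorphism $\rho\colon\mathcal{I}\xrightarrow{\sim}\Sigma'\subseteq\Sigma$ and route through the Cauchy development $\mathcal{N}=D(\Sigma')$, so that $\lambda_\rho$ becomes an isomorphism onto $\mathfrak{P}_\ell(\Sigma',\mathcal{N})$; the price is that $D(\Sigma')$ depends on the representative surface $\Sigma'$ rather than on $i(\mathcal{I})$ alone, which forces the repeated shrinking of neighbourhoods and the appeals to \cref{cor:embedding-extension} in both your well-definedness and functoriality arguments (and note that the extension of $\rho_2\rho_1^{-1}$ supplied by that corollary only matches the $\pi_\ell$-data, so it is \emph{not} the restriction of $\mathrm{id}_{\mathcal{M}}$ — the final comparison of the two candidates for $Ni$ still needs the on-shell argument, not just formal pasting). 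The paper avoids all of this by observing that the full Cauchy surface $\Sigma$ is itself an object of $\mathsf{Man}_1^+$, so the generally non-surjective embedding $\mathcal{I}\hookrightarrow\Sigma$ is already a morphism that $\mathfrak{P}_c$ knows how to handle; its map is the composite
\begin{equation*}
  \mathfrak{P}_c(\mathcal{I})\longrightarrow\mathfrak{P}_c(\Sigma)\longrightarrow\mathfrak{P}_\ell(\Sigma,\mathcal{M})\xrightarrow{\;\partial^{*}_{\Sigma,\epsilon}\;}\mathfrak{P}_{\mathrm{on}}(\mathcal{M}),
\end{equation*}
with the middle arrow the volume-form identification $F[j]\mapsto(\psi\mapsto F[\psi\,\mathrm{d}V_\Sigma])$. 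No Cauchy development appears, the only auxiliary choice is $\Sigma$ itself (disposed of by a single diagram whose outer triangles commute by functoriality of $\mathfrak{P}_c$ and by \cref{thm:classical-natural}), and functoriality on a comma morphism is immediate because $\chi$ is already a $\mathsf{CLoc}$ morphism between the ambient spacetimes and need not first be restricted to developments. Your version can be completed — the quickest route is to use naturality of $\partial^{*}_{\epsilon}$ for the inclusion $(\Sigma',\mathcal{N})\hookrightarrow(\Sigma,\mathcal{M})$ to rewrite your $Ni$ in the paper's form, after which your remaining checks collapse onto the paper's single diagram — but as written it buys nothing over the shorter argument and leaves the hardest step (reconciling the developments) only sketched.
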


\begin{proof}

  %
  All we need to do is to define the functor
  $(\mathrm{Id}_{\mathsf{Man}_1^+} \downarrow \pi_{\ell}) \rightarrow (\mathfrak{P}_c \downarrow \mathfrak{P}_{\mathrm{on}})$.
  For $\mathcal{I} \overset{\rho}{\rightarrow} \mathcal{M}$, we take an arbitrary Cauchy surface $\Sigma \subset \mathcal{M}$,
  and a regularised chiral derivative $\partial_{\Sigma, \epsilon}: \mathfrak{E}(\mathcal{M}) \rightarrow \mathfrak{E}(\Sigma)$.
  Note that each $\Sigma$ is also an object in $\mathsf{Man}_1^+$, thus we can produce a sequence of maps
  \begin{equation}
    \begin{tikzcd}
      \mathfrak{P}_c(\mathcal{I}) \ar[r, "\mathfrak{P}_c (\pi_{\ell}|_{\Sigma} \circ \rho)"] &
      \mathfrak{P}_c(\Sigma) \ar[r] &
      \mathfrak{P}_{\ell}(\Sigma, \mathcal{M}) \ar[r, "\partial_{\Sigma, \epsilon}^{*}"] &
      \mathfrak{P}_{\mathrm{on}}(\mathcal{M}),
    \end{tikzcd}
  \end{equation}
  where the central map is defined by $F[j] \mapsto (\psi \mapsto F[\psi \mathrm{d}V_{\Sigma}])$.
  The composition is overall independent of our choices $\Sigma$ and $\partial_{\Sigma, \epsilon}$ if,
  for every other choice $\Sigma'$, $\partial_{\Sigma', \epsilon'}$, the diagram
  \begin{equation}
    \label{eq:independent-to-dependent}
    \begin{tikzcd}
        &
      \mathfrak{P}_c(\Sigma)
        \ar[r]
        \ar[dd]
        &
      \mathfrak{P}_\ell(\Sigma, \mathcal{M})
        \ar[rd]
        \ar[dd]
        & \\
      \mathfrak{P}_c(\mathcal{I})
        \ar[ur]
        \ar[dr]
        &&&
      \mathfrak{P}_{\mathrm{on}}(\mathcal{M})
        \\
        &
      \mathfrak{P}_c(\Sigma')
        \ar[r]
        &
      \mathfrak{P}_\ell(\Sigma', \mathcal{M})
        \ar[ur]
        &
    \end{tikzcd}
  \end{equation}
  commutes, where the vertical arrows are the respective homomorphisms corresponding to
  $(\pi_{\ell}|_{\Sigma'}^{-1} \circ \pi_{\ell}|_{\Sigma}): \Sigma \overset{\sim}{\rightarrow} \Sigma'$.
  This is readily verified:
  the left-hand triangle commutes by functoriality of $\mathfrak{P}_c$,
  the right-hand triangle commutes by \cref{thm:classical-natural}
  and the commutativity of the inner square is obvious once the definitions are unpacked.
\end{proof}

\begin{remark}
  To be more precise, when we say that $N$ `factors through' $\partial_{\epsilon}^{*}$,
  there is in fact yet another comma category corresponding to the diagram
  $( \mathrm{Id}_{\mathsf{Man}_1^+} \downarrow \pi_{\ell} ) \longrightarrow \mathsf{CLoc} \longleftarrow \mathsf{CCauchy}$,
  where the functors in each case forget all data other than the underlying spacetime.
  By composing $\mathfrak{P}_c$ and $\mathfrak{P}_{\ell}$ with the appropriate functors out of this new comma category,
  the horizontal arrows in \eqref{eq:independent-to-dependent} then form a natural transformation between these two which,
  as the same diagram shows, then defines a functor $N: (\mathrm{Id}_{\mathsf{Man}_1^+} \downarrow \pi_{\ell}) \rightarrow (\mathfrak{P}_c, \mathfrak{P}_{\mathrm{on}})$ independent of the auxiliary choices of Cauchy surfaces.
\end{remark}

The quantisation of this algebra yet again follows similar lines.
In this case, rather than taking a single distribution
$E_{\mathcal{I}} := (\rho^{*}\partial_{\Sigma})^{\otimes 2}E_{\mathcal{M}}$
which is defined unambiguously on $\mathcal{I}$,
we instead use the fact that the \emph{set} of distributions
$(\rho^{*}\partial_{\Sigma})^{\otimes 2}\mathrm{Had}({\mathcal{M}})$ is independent of the choice
$\rho: I \overset{\sim}{\rightarrow} \Sigma \subset \mathcal{M}$.
This follows by considering the diagram \eqref{eq:sections-everywhere}
combined with the non-trivial fact that,
if $\mathcal{N}$ is a neighbourhood of a Cauchy surface $\Sigma \subset \mathcal{M}$,
then $\chi^{*}: \mathrm{Had}(\mathcal{M}) \overset{\sim}{\rightarrow} \mathrm{Had}(\mathcal{N})$,
as one can reconstruct a Hadamard distribution $W$ from $\chi^{*} W$ by Cauchy evolution \cite{fullingSingularityStructureTwopoint1978}.
If we denote the resulting set by $\mathrm{Def}(\mathcal{I})$,
then we can define,
for each $H_{c} \in \mathrm{Def}(\mathcal{I})$ a $*$-algebra $\mathfrak{A}_c^{H_c}(\mathcal{I})$.
As one might expect, the underlying vector space is
$\mathfrak{F}_c^1(\mathcal{I})[[\hbar]]$
and the product is defined by
\begin{align}
  F \star G [j] :=
  \sum_{n = 0}^{\infty} \frac{\hbar^n}{n!}
  \left\langle
    (\tfrac{i}{2} E_{\mathcal{I}} + H_c)^{\otimes n},
    F^{(n)}[j] \otimes G^{(n)}[j]
  \right\rangle.
\end{align}
We then have isomorphisms $\mathfrak{A}_c^{H_c}(\mathcal{I}) \simeq \mathfrak{A}_c^{H_c'}(\mathcal{I})$
for exactly the same reasons as in \cref{eq:beta-H-Hp-relation}.
In fact, given $\rho: \mathcal{I} \overset{\sim}{\rightarrow} \Sigma \subset \mathcal{M}$,
one can show that the map $F[j] \mapsto F[\rho^{*} \psi \,\mathrm{dV}_{\Sigma}]$
yields a homomorphism
$\mathfrak{A}_c^{H_c}(\mathcal{I}) \simeq \mathfrak{A}_{\ell}^{H_{\Sigma}}(\Sigma, \mathcal{M})$
where $H_{\Sigma}$ and $H_c$ are the distributions induced on $\Sigma$ and $\mathcal{I}$ respectively by
some $H \in \mathrm{Had}(\mathcal{M})$.
From this, one can then once again establish that there is a functor
$N: (\mathrm{Id}_{\mathsf{Man}_1^+} \downarrow \pi_{\ell}) \rightarrow (\mathfrak{A}_c \downarrow \mathfrak{A}_{\mathrm{on}})$
so we indeed have a chiral subalgebra according to \cref{def:model-indep-chiral-sub}.

\section*{Conclusions and Outlook}

We have seen how the tools of {\smaller{p}}\textsc{aqft}
can be adapted in order to construct
a natural subalgebra of chiral observables for a 2\textsc{d}, conformally covariant field theory.
In future work we aim to use the approach presented here as the basis
of a systematic study of various models of 2\textsc{dcft},
with the goal of gaining a deeper understanding of the role of integrability in {\smaller{p}}\textsc{aqft}.
From there, one may be able to expand upon the work done in \cite{bahnsLocalNetsNeumann2017, bahnsQuantumSineGordon2018}
obtaining non-perturbative physics from perturbative constructions.

Our analysis also raises the possibility of constructing 2\textsc{d} {\smaller{p}}\textsc{aqft}s
out of chiral sectors.
However, it should be noted that this process is complicated by the fact that,
when a spacetime is not simply connected,
only the exact parts of the left/right-moving solutions are completely independent,
as demonstrated in \eqref{eq:cylinder-gen-soln}.
On the level of algebras, this interdependence is demonstrated by the fact that, on-shell,
there is an overlap of the linear chiral and anti-chiral observables,
$\partial_{\Sigma, \epsilon}^{*}(\mathfrak{D}(\Sigma)) \cap \overline{\partial}_{\Sigma, \epsilon}^{*}(\mathfrak{D}(\Sigma)) \simeq \mathbb{R}$,
corresponding to the constant functions on $\Sigma \simeq S^1$.
In the case of the massless scalar field, one can take a
fibred tensor product of the chiral and anti-chiral algebras,
which avoids double counting the common observables.
However, this construction only works due to the fact that such observables are central,
which is not guaranteed in a general model.

Another possibility presented by this work is that new connections may be found
between {\smaller{p}}\textsc{aqft} and \emph{factorisation algebras}.
The chiral subalgebra we have constructed can be trivially extended to a dg-algebra
which is concentrated in degree $0$,
reflecting the fact that the dynamics are trivial.
This would likely require a Lorentzian analogue of
the translation-invariant holomorphic factorisation algebras of \cite[Chapter 5]{costelloFactorizationAlgebrasQuantum2016}.
The dg perspective may also prove useful
in treating less trivial theories such as the principal chiral model or the Wess-Zumino-Witten model.

\bibliographystyle{alpha}
\bibliography{bibtex}

\end{document}